\documentclass[
 reprint,
 superscriptaddress,
 amsmath,amssymb,
 aps,physrev
]{revtex4-2}
\usepackage[dvipsnames]{xcolor}
\usepackage[
    colorlinks=true,
    linkcolor=NavyBlue,
    citecolor=NavyBlue,
    urlcolor=NavyBlue
]{hyperref}
\usepackage{graphicx}
\usepackage{dcolumn}
\usepackage{bm}
\usepackage{braket}
\usepackage{amsthm}
\usepackage{amsmath}
\newtheorem*{theorem*}{Theorem} 
\newtheorem{fact}{Fact}
\newtheorem{theorem}{Theorem}

\newtheorem{lemma}{Lemma}
\newtheorem{proposition}{Proposition}
\newtheorem{corollary}{Corollary}
\theoremstyle{definition}
\newtheorem{definition}{Definition}
\usepackage{mathtools}
\theoremstyle{remark}
\newtheorem{remark}{Remark}
\usepackage{kotex}
\usepackage{booktabs}
\usepackage[colorinlistoftodos]{todonotes}

\allowdisplaybreaks

\begin{document}

\title{Quantum state learning beyond approximate unitary designs}

\author{Gyungmin Cho}
 \email{km950501@snu.ac.kr}
 \affiliation{NextQuantum Center, Department of Physics and Astronomy, and Institute of Applied Physics, Seoul National University, Seoul 08826, Korea}

\author{Changhun Oh}
\email{changhun0218@gmail.com}
\affiliation{Department of Physics, Korea Advanced Institute of Science and Technology, Daejeon 34141, Korea}

\author{Dohun Kim}
\email{dohunkim@snu.ac.kr}
\affiliation{NextQuantum Center, Department of Physics and Astronomy, and Institute of Applied Physics, Seoul National University, Seoul 08826, Korea}
 
             
\begin{abstract}
    Approximate unitary designs reproduce the statistics of Haar-random unitaries to a given order and accuracy. Recent constructions realize such designs with logarithmic-depth circuits, enabling shallow measurement protocols for various quantum state-learning tasks while preserving performance. These results raise the question of whether approximate designs can replace exact designs more generally. We show that even exponentially small design error need not preserve the learning guarantees of exact designs. This motivates deriving learning guarantees directly from the measurement circuit structure. For observable estimation with classical shadows, we prove that logarithmic-depth two-layer Clifford circuits yield an unbiased estimator matching the global Clifford variance scaling for every state and Hermitian observable. Beyond observable estimation, this bound allows shallow measurements to retain global Clifford guarantees for other tasks, including setting-efficient tomography, mixed-state metrology, and stabilizer structure learning. We complement these statistical guarantees with a compact, exact tensor-network representation of the inverse shadow channel. When measurement bases are reused, approximate unitary designs of arbitrarily high order need not uniformly match the Haar variance in the many-shot limit. For the all-to-all random two-local circuits considered, doing so requires nearly linear depth. Together, our results reveal the capabilities and limitations of shallow quantum state learning, highlighting the distinction between approximating Haar randomness and reproducing its learning guarantees.
\end{abstract}

\maketitle

\section{introduction}

\begin{figure*}[t]
    \centering
    \includegraphics[width=1\linewidth, trim=2.5cm 19.7cm 2cm 1cm, clip]{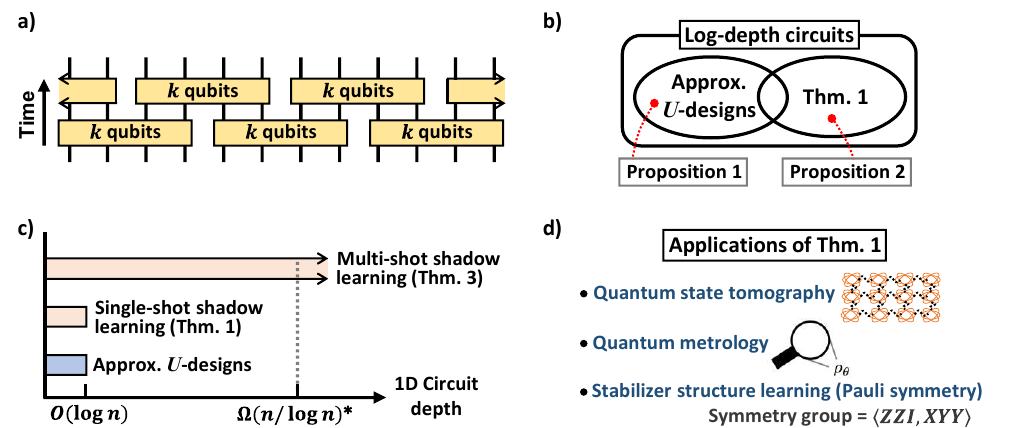}
    \caption{
    \textbf{Overview of the two-layer architecture and main results.}
    (a) Two-layer block unitary ensemble consisting of two staggered layers of non-overlapping $k$-qubit blocks with periodic boundary conditions.
    (b) The approximate unitary design condition is neither sufficient nor necessary for the variance guarantee of Theorem~\ref{main:thm:1}, as shown by Proposition~\ref{main:prop:not_sufficient} and~\ref{main:prop:not_necessary}, respectively.
    (c) Circuit-depth requirements in an ancilla-free architecture. Approximate unitary designs in relative error and the variance scaling of Theorem~\ref{main:thm:1} are attainable at depth $O(\log n)$ in a 1D ancilla-free architecture. For multi-shot shadow learning with circuits of independent Haar-random two-qubit gates, matching the global exact unitary $4$-design variance floor within a constant factor uniformly over all states and observables requires depth $\Omega(n/\log n)$. The asterisk indicates that this lower bound remains valid even under all-to-all connectivity.
    (d) Applications of Theorem~\ref{main:thm:1}, where global Clifford measurements can be replaced by shallow two-layer Clifford measurements.
    }
    \label{main:fig:1}
\end{figure*}

Quantum state learning aims to infer properties of an unknown state $\rho$ from measurement data. Examples include estimating observable expectation values~\cite{huang2020predicting} and entanglement entropies~\cite{satzinger2021realizing,brydges2019probing}, reconstructing $\rho$ through quantum state tomography~\cite{odonnell2016tomography, haah2017tomography}, verifying or certifying its closeness to a target state~\cite{du2025certifying, huang2025certifying}, and estimating parameters $\theta$ encoded in $\rho_\theta$ through quantum metrology~\cite{zhou2026randomized}. Quantum state learning thus arises in a wide range of problems across quantum information and many-body physics.

A single measurement basis generally provides only partial information about $\rho$. For example, computational basis measurements reveal only its diagonal elements. Recovering additional information therefore requires measurements in different bases. Randomized measurements~(RMs) do so by sampling a unitary $U$ from an ensemble before each measurement~\cite{elben2023randomized}. With suitable ensembles and estimators, RMs can achieve optimal sample complexity for predicting many observable expectation values~\cite{huang2020predicting} and can also estimate nonlinear properties~\cite{satzinger2021realizing,brydges2019probing}, with only modest hardware overhead.

Each measurement produces a bit string $b\in\{0,1\}^n$, and the pair $(U,b)$ forms a measurement record. An RM learning protocol applies a map $\mathcal{A}$ to these data to estimate a property $y(\rho)$. Schematically,
\begin{equation}\label{main:eq:algo}
    \rho
    \xrightarrow{\mathrm{RM}}
    (U,b)
    \xrightarrow{\mathcal{A}}
    \hat{y}(\rho).
\end{equation}
Here, $\hat{y}(\rho)$ denotes an estimate of the target quantity $y(\rho)$, which depends on the learning task. 
In classical shadows, the post-processing is constructed from the shadow channel associated with the measurement ensemble $\mathcal E$~\cite{huang2020predicting}, defined by
\begin{equation}
    \mathcal{M}_{\mathcal{E}}(\rho)
    =
    \mathbb{E}_{U\sim\mathcal{E}}
    \sum_b
    \langle b|U\rho U^\dagger|b\rangle
    U^\dagger\ket{b}\!\bra{b}U.
\end{equation}
If $\mathcal{M}_{\mathcal{E}}$ is invertible, each record $(U,b)$ gives the shadow estimator
\begin{equation}
    \hat{\rho}
    =
    \mathcal{M}_{\mathcal{E}}^{-1}
    \left(
        U^\dagger\ket{b}\!\bra{b}U
    \right).
\end{equation}
By construction, $\mathbb{E}_{U\sim \mathcal{E},b}[\hat{\rho}]=\rho$, so $\hat{\rho}$ is unbiased. Its statistical properties depend on the unitary ensemble $\mathcal{E}$. 

Haar-random unitaries provide a natural reference ensemble for RMs, but generally require exponentially many gates to implement. The corresponding shadow channel is $\mathcal M_{\mathrm H}(A)=(A+\operatorname{Tr}(A)I)/(2^n+1)$, giving the estimator $\hat{\rho}_{\mathrm H}=(2^n+1)U^\dagger\ket b\!\bra bU-I$. Nonetheless, many learning guarantees depend only on moments up to a fixed order $t$. 
For these guarantees, Haar randomness can be replaced by a unitary $t$-design~\cite{gross2007evenly}, which reproduces the Haar moments up to the order $t$. Formally, for an ensemble $\mathcal{E}$, its $t$-th moment channel is $\Phi_{\mathcal{E}}^{(t)}(X)=\mathbb{E}_{U\sim\mathcal{E}}[U^{\otimes t}X(U^\dagger)^{\otimes t}]$. The ensemble is a unitary $t$-design if $\Phi_{\mathcal{E}}^{(t)}(X)=\Phi_{\mathrm{H}}^{(t)}(X)$ for every operator $X$, where $\Phi_{\mathrm{H}}^{(t)}$ is the Haar moment channel~\cite{collins2006integration}.

For example, the $n$-qubit Clifford group $\mathrm{Cl}(n)$ forms an exact unitary $3$-design~\cite{webb2015clifford}, whereas no general efficient construction is known for $t\geq 4$. 
To address this limitation, approximate unitary designs relax exact moment matching while retaining controllable errors, as expressed by
\begin{equation}\label{main:eq:cp}
    (1-\epsilon_{\mathrm{des}})
    \Phi_{\mathrm{H}}^{(t)}
    \preceq
    \Phi_{\mathcal{E}}^{(t)}
    \preceq
    (1+\epsilon_{\mathrm{des}})
    \Phi_{\mathrm{H}}^{(t)},
\end{equation}
where $\Phi_1 \preceq \Phi_2$ denotes the completely positive (CP) order, i.e., $\Phi_2-\Phi_1$ is completely positive. Recently, it has been proven that such $\epsilon_{\mathrm{des}}$-approximate unitary $t$-designs can be realized by 1D circuits without ancillary qubits with depth $O\!\left(t\,\mathrm{poly}(\log t)\,\log(n/\epsilon_{\mathrm{des}})\right)$~\cite{schuster2025random, laracuente2026approximate}.

Building on these developments, the CP-order condition in Eq.~\eqref{main:eq:cp} has been used to extend learning guarantees from exact to approximate designs for state overlap estimation~\cite{cioli2025approximate,schuster2025random}, nonlinear property estimation~\cite{zheng2025distributed,li2026quantum}, and quantum metrology~\cite{du2026complexity}. This motivates a broader question: does the approximate-design condition in Eq.~\eqref{main:eq:cp} suffice to reproduce the learning guarantees of exact designs?

We show that it does not. CP-order bounds do not directly control post-processing maps that are generally not completely positive, such as the inverse shadow channel. 
In existing analyses~\cite{schuster2025random, li2026quantum}, the condition in Eq.~\eqref{main:eq:cp} is used to bound the bias and variance of the Haar-inverse estimator $\hat{\rho}_{\mathrm H}$.
However, when the measurements are drawn from an approximate design $\mathcal E$, this estimator is generally biased: $\mathbb E_{U\sim\mathcal E,b}[\hat{\rho}_{\mathrm H}]\neq\rho$. Averaging more samples does not reduce this bias, which can limit the extension of these guarantees beyond the settings covered above.

More fundamentally, this limitation is not specific to the Haar-inverse estimator: the approximate-design condition alone does not ensure the learning guarantees of exact designs.
For example, exact unitary $3$-design measurements allow quantum state tomography with near-optimal sample complexity~\cite{guctua2020fast, Lowe2025lower}. In contrast, we construct approximate unitary $3$-designs with exponentially small design error for which tomography at constant trace-norm accuracy requires arbitrarily many independent randomized measurements, regardless of the estimator.

One way to overcome this limitation is to impose a stronger condition, provided by approximate strong unitary designs in relative error, which control mixed moments of $U$ and $U^*$~\cite{schuster2025strong}.
This notion of approximation allows us to recover the relevant Haar variance bounds for the learning tasks considered here (see Appendix~\ref{app:sec:strong_design_learning}). However, circuits composed of independent Haar-random two-qubit gates require depth $\Omega(n)$ to satisfy this condition, even with all-to-all connectivity~\cite{schuster2025strong}. 
These limitations motivate a direct analysis of the measurement circuit. We focus on the following two-layer block ensemble.

\begin{definition}[Two-layer block unitary ensemble~\cite{schuster2025random}]
    The two-layer block unitary ensemble $\mathcal{E}_{2\mathrm{L}}$ consists of two staggered layers of $k$-qubit unitaries. Each layer is a tensor product of unitaries acting on non-overlapping contiguous blocks, with periodic boundary conditions~[Fig.~\ref{main:fig:1}a]. The block unitaries are sampled independently from the ensembles specified below.
\end{definition}

We denote the Haar-random unitary ensemble on $n$ qubits as $\mathrm{Haar}(n)$. 
The same two-layer architecture also appears in shallow constructions of approximate unitary designs~\cite{schuster2025random, laracuente2026approximate}. 
Instead of inferring learning guarantees from the approximate-design condition in Eq.~\eqref{main:eq:cp}, we derive them directly from the staggered circuit structure.
Specifically, the moments governing the learning statistics can be expressed as transfer matrix contractions, providing a direct route from the circuit structure to variance bounds.

Using this approach, we show that logarithmic-depth two-layer Clifford circuits yield an unbiased single-shot shadow estimator that matches the $\mathrm{Cl}(n)$ variance scaling for every state and Hermitian observable. The same bound transfers global-Clifford guarantees to several learning tasks beyond observable estimation, including QST, quantum metrology, and stabilizer learning. Moreover, the inverse shadow channel admits an exact matrix product operator~(MPO) representation with bond dimension $n/k$, which enables exact and efficient evaluation of shadow estimates for observables that also admit MPO representations with polynomial bond dimension.
We then examine whether this performance extends to multi-shot shadow estimation, where each measurement basis is reused and the variance also depends on fourth-order moments.
Even when every $k$-qubit block is Haar random, the two-layer architecture cannot reproduce the worst-case variance scaling of a global exact unitary $4$-design unless $k=\Omega(n)$. 
Together, these results show that shallow circuits can retain strong learning guarantees beyond those established by the approximate-design condition alone, while the attainable performance depends on the learning task and the statistical quantity being estimated.

Throughout, we set $d=2^n$ and use $P=\bigotimes_{i=1}^n P_i\in\{I,X,Y,Z\}^{\otimes n}$ to denote an $n$-qubit Pauli operator. Its weight is $|P|=|\{i\in[n]:P_i\neq I\}|$, and we write $\mathcal Z=\{I,Z\}^{\otimes n}$. We write $\|A\|_p=(\operatorname{Tr}|A|^p)^{1/p}$ for the Schatten $p$-norm. We use $N_U$ for the number of independently sampled measurement unitaries and $N_S$ for the number of measurement shots performed with each unitary, with the total number of samples $T=N_U N_S$. For an observable $O$, we denote its traceless part by $O_0=O-\operatorname{Tr}(O)I/d$.


\section{Results}
\subsection{Statistical guarantees and limitations for learning with shallow measurements}

We begin with a single-shot shadow estimation guarantee for the two-layer Clifford ensemble.

\begin{theorem}[Single-shot shadow estimation, Informal]\label{main:thm:1}
    For the two-layer Clifford ensemble with block
    size $k$ satisfying $k2^{k/2}=\Omega(n)$, and for every state $\rho$ and
    Hermitian observable $O$, the unbiased estimator
    $\hat{o}=\operatorname{Tr}(O\hat{\rho})$ satisfies
    \begin{equation}
        \operatorname{Var}[\hat{o}]
        =
        O\!\left(\lVert O_0\rVert_2^2\right).
    \end{equation}
\end{theorem}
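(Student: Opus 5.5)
Two-layer Clifford ensemble: each block is a random Clifford on $k$ qubits, i.e., a 3-design on the block. The plan is to work in the Pauli basis and reduce both the shadow channel and the variance to transfer-matrix contractions over the $n/k$ blocks.

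\textbf{Step 1: diagonalize the shadow channel.} Twirling by block Cliffords makes $\mathcal M$ diagonal in the Pauli basis, $\mathcal M(P)=\lambda_P P$. The eigenvalue depends only on the support pattern of $P$ across the blocks. Write $U=U_2U_1$, with layer 1 the unstaggered blocks and layer 2 the staggered blocks. A random $k$-qubit Clifford maps a non-identity Pauli on its block to a uniformly random non-identity Pauli there. Hence
\begin{equation}
\lambda_P=\Pr\big[U P U^\dagger\in\mathcal Z\big],
\end{equation}
computed sequentially, layer by layer. Summing over intermediate support patterns on the half-blocks where the two layers overlap gives $\lambda_P$ as a product of $2\times 2$ (or $4\times4$) transfer matrices around the ring. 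The same structure gives the MPO for $\mathcal M^{-1}$ with bond dimension $n/k$, which I note but do not need here. Each block acting on a non-identity input lands in $\{I,Z\}^{\otimes k}\setminus\{I\}$ with probability $(2^k-1)/(4^k-1)=1/(2^k+1)$. I expect the estimate
\begin{equation}
\lambda_P\ge c\,(2^k+1)^{-m(P)}\cdot 2^{-\Theta(\cdot)},
\end{equation}
where $m(P)$ is the number of touched blocks. In particular $\lambda_P>0$, so $\mathcal M$ is invertible and the estimator is unbiased.

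\textbf{Step 2: bound the variance.} Use
\begin{equation}
\operatorname{Var}[\hat o]\le \mathbb E\big[\operatorname{Tr}(O_0\hat\rho)^2\big]=\sum_{P,Q}\frac{\alpha_P\alpha_Q}{\lambda_P\lambda_Q}\,\mathbb E_{U,b}\big[\langle b|UPU^\dagger|b\rangle\langle b|UQU^\dagger|b\rangle\,\langle b|U\rho U^\dagger|b\rangle\big],
\end{equation}
where $O_0=\sum_P\alpha_P P/d$. This is a third moment, and block Cliffords are exact 3-designs on each block. The Clifford third moment is then known exactly, so the expectation again becomes a transfer-matrix contraction with $\rho$ inserted.

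The target is to show the resulting quadratic form in $\alpha$ is dominated, up to a constant, by the global-Clifford form. That form gives $\le 3\|O_0\|_2^2$ via $\operatorname{Tr}(O_0^2)$-type bounds. Concretely, bound the largest eigenvalue of the ``variance operator''
\begin{equation}
V=\mathbb E\big[\mathcal M^{-1}(U^\dagger|b\rangle\langle b|U)^{\otimes 2}\ \text{weighted by }\rho\big]
\end{equation}
restricted to the traceless sector, in units of $d$ relative to the Hilbert–Schmidt norm.

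\textbf{Main obstacle.} The hard step is controlling the ratio between the two-layer moments and the inverse eigenvalues $1/\lambda_P\lambda_Q$ uniformly over all $P,Q$ and all $\rho$. Pointwise, $1/\lambda_P$ can be about $(2^k+1)^{n/k}\cdot$(overhead) rather than $d+1$, because the staggered structure makes $\lambda_P$ deviate from the global value $1/(d+1)$ by a product of per-boundary factors $(1+O(2^{-k}))$ or $(1+O(2^{-k/2}))$ over $n/k$ boundaries. My first attempt is to show that each transfer matrix equals the Haar-like rank-one projector plus a perturbation of norm $O(2^{-k/2})$. The total multiplicative deviation around the ring is then
\begin{equation}
\big(1+O(2^{-k/2})\big)^{n/k}=e^{O(n/(k2^{k/2}))}=O(1),
\end{equation}
exactly under $k2^{k/2}=\Omega(n)$.

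Applying the same perturbative bound to the third-moment contraction should yield $\operatorname{Var}[\hat o]\le C\cdot\mathrm{Var}_{\mathrm{Cl}(n)}$-type bound. This gives $O(\|O_0\|_2^2)$. Positivity of $\rho$ is used to bound $\rho$-dependent terms by their maximum over basis states.
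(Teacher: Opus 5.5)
\textbf{There is a genuine gap: the step meant to carry your proof rests on a false premise and has no mechanism behind it.} Your setup agrees with the paper: the shadow channel is Pauli-diagonal, visibilities are transfer-matrix contractions, and the variance is $\sum_{P,Q}\alpha_P\alpha_Q\,\tau(P,Q)\operatorname{Tr}(\rho PQ)/(d^2\lambda_P\lambda_Q)$, with $\tau(P,Q)$ the probability that $P$ and $Q$ are simultaneously mapped into $\pm\mathcal Z$.

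The false premise is that the two-layer visibilities lie within $(1+O(2^{-k/2}))^{n/k}$ of $1/(d+1)$, or of any product of per-block Clifford values. Take $D=2^k$. A Pauli supported on one block has $\lambda_P=\alpha_1=(3D+1)/(D+1)^3\approx 3D^{-2}$. Compare $1/(D+1)$ for a single random block Clifford and $1/(d+1)$ globally. In general, for $P$ not supported on every block, $\lambda_P=\prod_j\alpha_{\ell_j}$ over the maximal runs of active blocks, with $\alpha_\ell=\Theta(D^{-\ell-1})$. So each run carries an extra factor $\Theta(1/D)$, not $1+O(D^{-1/2})$. For low-weight $P$ the deviation from $1/(d+1)$ is therefore exponential in $n$. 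These factors cancel only against the matching suppression of $\tau(P,Q)$ and the dimension factors of blocks where both $P$ and $Q$ are the identity. The paper tracks this cancellation through the rank-one reset structure of $\mathbb K_0$ and $\mathbb B_{00}$ and a run-by-run case analysis. The final form $(1+O(2^{-k/2}))^{n/k}$ you guessed is correct. In the paper, however, it comes from $\Lambda_D=1+4/\sqrt D$ in $\mathbb B_{11}g\le_{\mathrm{ew}}D^{-2}\Lambda_D g$ and from a factor $q_D=40/\sqrt D$ per block where the supports of $P$ and $Q$ differ. It does not come from closeness to Haar.

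More fundamentally, your claim that the form is ``dominated up to a constant by the global-Clifford form'' has no mechanism, and an entrywise comparison cannot supply one. The coefficients are sign-indefinite, and the global bound $3\lVert O_0\rVert_2^2$ comes from an exact resummation, not from entrywise estimates. For the global Clifford group, $\tau(P,Q)/(\lambda_P\lambda_Q)$ is constant on distinct commuting pairs, and $\tfrac12(PQ+QP)$ vanishes on anticommuting pairs. The sum therefore collapses to $O_0^2$, and $\operatorname{Tr}(\rho O_0^2)$ appears. In the two-layer ensemble, $\tau(P,Q)$ depends block by block on whether $(P_i,Q_i)$ are equal, distinct commuting, or anticommuting. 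No global resummation exists, and bounding terms in absolute value loses large factors. The paper notes that even taking entrywise absolute values of the local transfer matrices in the naive basis already produces a $2^{n/k}$ factor.

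The missing idea is to perform the resummation \emph{blockwise}:
\begin{itemize}
\item Bound $\operatorname{Var}[\hat o]\le\lVert\Gamma(O_0)\rVert_\infty$ and split $O_0=\sum_{\boldsymbol u}O_{\boldsymbol u}$ by block support.
\item On every block where both operators are active, use $(\mathbb T_t)_{aa'}PQ=(\mathbb T_{\texttt E}-\mathbb T_{\texttt C})_{aa'}\mathbf 1\{P=Q\}PQ+\tfrac12(\mathbb T_{\texttt C}+\mathbb T_{\texttt A})_{aa'}PQ+\tfrac12(\mathbb T_{\texttt C}-\mathbb T_{\texttt A})_{aa'}QP$. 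Its coefficient matrices do not depend on the type $t$.
\item Each $G_{\boldsymbol u\boldsymbol v}$ then becomes a sum of type-independent transfer-matrix traces times ordered products $\operatorname{Ord}_{\boldsymbol\kappa}(O_{\boldsymbol u},O_{\boldsymbol v})$. These are operator products with partial transposes on the reversed blocks, each with operator norm at most $F(\boldsymbol u,\boldsymbol v,\boldsymbol\kappa)\lVert O_{\boldsymbol u}\rVert_2\lVert O_{\boldsymbol v}\rVert_2$.
\end{itemize}
On top of this you need three further ingredients:
\begin{itemize}
\item the change of basis $H$ on the $\{\texttt C,\texttt A\}$ bond states before taking absolute values;
\item the Collatz--Wielandt propagation bounds giving $(\widetilde K)_{\boldsymbol u\boldsymbol v}\le 8e^{6m/D}\gamma_D^m q_D^{|\boldsymbol u\oplus\boldsymbol v|}$;
\item the norm $(1+q_D)^m$ of the Hamming matrix, to sum over pairs of different support patterns.
\end{itemize}
None of these appears in your plan, so as written it does not yield $O(\lVert O_0\rVert_2^2)$.
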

The estimator uses the exact inverse shadow channel of the two-layer ensemble, ensuring unbiasedness. Establishing the variance bound requires controlling both the inverse Pauli visibilities arising from this channel and the joint probabilities that pairs of Pauli operators are mapped to $Z$-type operators. We exploit the staggered block structure to express these quantities as transfer matrix contractions and bound their combined contribution. 

Theorem~\ref{main:thm:1} resolves an open question posed in~\cite{bertoni2024shallow} and partially addressed for low-rank observables in~\cite{schuster2025random}: whether shallow shadows can achieve the same Frobenius-norm variance scaling as global Clifford measurements. 
This bound has applications beyond observable estimation: the performance guarantees of several quantum state learning protocols based on $\mathrm{Cl}(n)$ can be derived from the same variance bound.

For a $\gamma$-local observable $O=\sum_{P:|P|\leq\gamma}\operatorname{Tr}(OP)P/d$ with $\gamma=O(1)$ and bounded $\|O_0\|_\infty$, shallow Clifford measurements can offer an advantage over global Clifford measurements. Accounting for the locality of $O$ gives a sharper bound than the general variance guarantee in Theorem~\ref{main:thm:1}: with $k=O(\log n)$ satisfying the same block-size condition, the two-layer Clifford ensemble satisfies
\begin{equation}
    \operatorname{Var}[\hat o]
    \leq
    2^{O(k\gamma)}\|O_0\|_\infty^2
    =
    \operatorname{poly}(n),
\end{equation}
whereas the global Clifford bound can grow exponentially with $n$.


We focus on a 1D ancilla-free implementation to keep the hardware assumptions minimal. In this setting, each $k$-qubit Clifford block can be implemented in depth $O(k)$, so choosing $k=O(\log n)$ to satisfy Theorem~\ref{main:thm:1} yields an overall depth of $O(\log n)$. If all-to-all connectivity and $\widetilde O(n)$ ancillary qubits are available, the same ensemble can instead be implemented in depth $O(\log\log n)$~\cite{jiang2020optimal}. Here, $\widetilde O(\cdot)$ suppresses polylogarithmic factors in $n$.

For comparison, Ref.~\cite{schuster2025random} uses the Haar inverse shadow channel rather than the exact inverse channel of the measurement ensemble. The resulting estimator $\hat{o}_{\mathrm H}=\operatorname{Tr}(O\hat{\rho}_{\mathrm H})$ is generally biased, but its bias and variance can be bounded using the approximate-design condition in Eq.~\eqref{main:eq:cp}. For an $\epsilon_{\mathrm{des}}$-approximate unitary $3$-design,
\begin{align}\label{main:eq:prev_est}
    \left|
        \mathbb E[\hat{o}_{\mathrm H}]
        -\operatorname{Tr}(O\rho)
    \right|
    &=
    O\!\left(
        \epsilon_{\mathrm{des}}\|O_0\|_1
    \right),
    \nonumber\\
    \operatorname{Var}[\hat{o}_{\mathrm H}]
    &=
    O\!\left(
        \|O_0\|_2^2
        +\epsilon_{\mathrm{des}}\|O_0\|_1^2
    \right).
\end{align}
While averaging more samples reduces the statistical fluctuations, it does not reduce the bias.
The above bound guarantees a bias of at most $\epsilon$ when, for $O_0\neq0$,
\begin{equation}\label{main:eq:err_des}
    \epsilon_{\mathrm{des}}
    =
    O\!\left(
        \frac{\epsilon}{\|O_0\|_1}
    \right).
\end{equation}
Since the circuit-depth bound scales as $O(\log(n/\epsilon_{\mathrm{des}}))$, the depth required by this construction depends on the design accuracy needed for the learning task.
For learning protocols based on $\hat{\rho}_{\mathrm H}$, such as quantum state tomography or mixed-state quantum metrology, certifying even constant accuracy in the task's natural error measure can require exponentially small $\epsilon_{\mathrm{des}}$. In such cases, the resulting circuit-depth bound no longer guarantees logarithmic depth.

In contrast, Theorem~\ref{main:thm:1} gives an unbiased estimator with the Frobenius-norm variance bound. The measurement depth remains $O(\log n)$ independently of the target error, so higher precision can be achieved by collecting more samples at the same circuit depth. This raises the question of whether the same guarantee follows from the approximate design condition in Eq.~\eqref{main:eq:cp}, without any additional information about the circuit structure. The following proposition shows that this implication does not hold.

\begin{proposition}[Approximate designs are not sufficient (informal)]\label{main:prop:not_sufficient}
    For any $\epsilon_{\mathrm{des}}>14/2^n$, $\alpha>0$, and $0\leq\beta<1$, there exists an $\epsilon_{\mathrm{des}}$-approximate unitary $3$-design $\mathcal E$ satisfying Eq.~\eqref{main:eq:cp} such that, 
    for every estimator $\hat f_O(U,b)$ with relative bias at most $\beta$, there exist a state $\rho$ and a Hermitian observable $O$ with $O_0\neq0$ for which
    \begin{equation}
        \operatorname{Var}_{U,b}[\hat f_O(U,b)]
        \geq
        \alpha\lVert O_0\rVert_2^2\label{main:eq:var_lower}.
    \end{equation}
\end{proposition}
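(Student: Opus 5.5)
We plan to build $\mathcal E$ by removing from $\mathrm{Cl}(n)$ a vanishing fraction of elements chosen so that one fixed Pauli direction becomes (nearly) invisible to the measurement. Fix a non-identity Pauli $P$ and let $B\subset\mathrm{Cl}(n)$ be the set of Cliffords $C$ with $CPC^\dagger\in\pm\mathcal Z$. These are exactly the Cliffords whose computational-basis outcome statistics depend on $\operatorname{Tr}(P\rho)$. Since $\mathrm{Cl}(n)$ acts transitively on non-identity Paulis up to sign, $|B|/|\mathrm{Cl}(n)|=f:=(d-1)/(d^2-1)=1/(d+1)$. We take
\begin{equation*}
\mathcal E=(1-q)\,\mathrm{Unif}(\mathrm{Cl}(n)\setminus B)+q\,\mathrm{Unif}(\mathrm{Cl}(n)),
\end{equation*}
where $q\in(0,1]$ is a free parameter to be sent to zero at the end.

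\emph{Step 1 (design condition).} Write $\Phi_B^{(3)}$ for the third moment channel of $\mathrm{Unif}(B)$. Because $\mathrm{Cl}(n)$ is an exact $3$-design, $\Phi_{\mathrm{Cl}}^{(3)}=\Phi_{\mathrm H}^{(3)}$, and hence
\begin{equation*}
\Phi_{\mathcal E}^{(3)}=\frac{1-q}{1-f}\bigl(\Phi_{\mathrm H}^{(3)}-f\,\Phi_B^{(3)}\bigr)+q\,\Phi_{\mathrm H}^{(3)} .
\end{equation*}
The upper bound in Eq.~\eqref{main:eq:cp} is immediate, since $\Phi_B^{(3)}$ is CP and $(1-q)/(1-f)+q\le 1+f$. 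The lower bound needs an inequality of the form $f\,\Phi_B^{(3)}\preceq c f\,\Phi_{\mathrm H}^{(3)}$ with an explicit constant $c$. The resulting error is $\epsilon_{\mathrm{des}}=O(cf)=O(c/2^n)$, which is how a threshold like $14/2^n$ should arise.

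We expect this CP-order bound to be the main obstacle. Our first attempt will use that $B$ is a union of left cosets $C_0\,\mathrm{Stab}(P)$ permuted by the local $Z$-type symmetries. Then $\Phi_B^{(3)}$ is a twirl over the Pauli-stabilizer subgroup composed with a Clifford, and we can compare Choi operators. The Haar Choi operator is a combination of permutation projectors on the $3$-copy space with known minimal weights, while the stabilizer-subgroup twirl decomposes into commutant elements (permutations plus Pauli-$P$ insertions). We will bound each term by a constant multiple of the Haar Choi operator.

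\emph{Step 2 (statistical indistinguishability).} Take $\rho_\pm=(I\pm P)/d$ and $O=P$, so that $\operatorname{Tr}(O\rho_\pm)=\pm1$ and $\|O_0\|_2^2=d$. For $U\notin B$ the outcome distributions of $\rho_+$ and $\rho_-$ coincide. For $U\in B$ they differ, and a direct computation gives $\chi^2(p_+\|p_-)=O(1)$ on those unitaries. Since $B$ is sampled with probability $qf$, the record $(U,b)$ satisfies $\chi^2\le C\,q f=O(q/d)$.

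\emph{Step 3 (variance lower bound).} Let $\hat f_O$ have relative bias at most $\beta$, so that $\mathbb E_{\rho_+}\hat f-\mathbb E_{\rho_-}\hat f\ge 2(1-\beta)$. The Hammersley--Chapman--Robbins inequality then gives
\begin{equation*}
\operatorname{Var}_{\rho_-}[\hat f]\ge\frac{4(1-\beta)^2}{\chi^2}\ge\frac{4(1-\beta)^2 d}{C q}.
\end{equation*}
Choosing $q\le 4(1-\beta)^2/(C\alpha)$ yields $\operatorname{Var}\ge\alpha\|O_0\|_2^2$.

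If $q=0$ is allowed, no estimator with $\beta<1$ exists and the statement holds vacuously. The choice $q>0$ keeps the construction nontrivial and makes $\mathcal E$ informationally complete.
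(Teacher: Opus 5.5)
Your ensemble is a reasonable choice, but the argument as written has a genuine gap: the witness state $\rho_-=(I-P)/d$ in Steps 2--3 cannot work. Since $\rho_+$ and $\rho_-$ live on the two orthogonal eigenspaces of $P$, for every $U\in B$ the conditional laws $p_\pm(b\mid U)=(1\pm\langle b|UPU^\dagger|b\rangle)/d$ take values in $\{0,2/d\}$ with complementary supports. Hence $\chi^2(p_+\Vert p_-)=\infty$ as soon as $q>0$. The absolute-continuity hypothesis of the Hammersley--Chapman--Robbins inequality fails, and you get no bound.

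This is not a technicality: the claimed lower bound on $\operatorname{Var}_{\rho_-}$ is false. With $x=\operatorname{Tr}(\rho P)$, one has $\Pr_\rho[U\in B,\ \langle b|UPU^\dagger|b\rangle=+1]=qf(1+x)/2$. Therefore
\[
\hat f_P(U,b)=-1+\frac{2}{qf}\,\mathbf 1\{U\in B,\ \langle b|UPU^\dagger|b\rangle=+1\}
\]
(extended, say, by the exact shadow estimator for other $O$) is exactly unbiased for every state. Yet it equals $-1$ almost surely under $\rho_-$, so its variance there is zero.

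The variance has to be witnessed at a state with weight on both eigenspaces; the paper uses $\rho_0=I/d$. Then $p_+\ll p_0$, and the relative-bias condition forces $\mathbb E_{\rho_0}\hat f=0$ exactly. Only $U\in B$ contributes to the divergence, giving $\chi^2(p_+\Vert p_0)=qf\cdot\frac1d\sum_b\langle b|UPU^\dagger|b\rangle^2=q/(d+1)$. Hammersley--Chapman--Robbins then gives $\operatorname{Var}_{\rho_0}[\hat f]\ge(1-\beta)^2(d+1)/q\ge\alpha\|P\|_2^2$ whenever $q\le(1-\beta)^2/\alpha$.

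Step 1 also leaves its essential inequality unproved. The lower CP bound needs $\Phi_B^{(3)}\preceq c\,\Phi_{\mathrm H}^{(3)}$ with $c=O(1)$. ``Bounding each term'' of a commutant expansion is not meaningful in the Loewner order, since those terms are not positive. What is needed is a spectral argument: using $\operatorname{Im}J\subseteq\operatorname{Im}J_{\mathrm H}$, one has $J\preceq\|J\|_\infty\Pi_{\mathrm H}\preceq(\|J\|_\infty/\lambda^+_{\min}(J_{\mathrm H}))J_{\mathrm H}$ with $\lambda^+_{\min}(J_{\mathrm H})=6/(d(d+1)(d+2))$. So you must show $\|J\|_\infty=O(d^{-3})$ for the third-moment Choi operator of the Clifford stabilizer of $P$.

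This is true. For $P=Z_1$, the Cliffords commuting with $P$ contain $\mathrm{Cl}(n-1)$, controlled Paulis and $S_1$, and their third-moment commutant turns out to equal that of $\mathrm U(d/2)\times\mathrm U(d/2)$. Proving it, however, takes a real representation-theoretic argument, which your proposal only announces as a plan.

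The paper sidesteps this by right-multiplying by $h\sim\operatorname{Haar}(\mathrm H_P)$ with $\mathrm H_P=\{h:[h,P]=0\}$. This does not change whether $P$ is mapped into $\pm\mathcal Z$. The bound $\|J_{\mathrm H_P}\|_\infty=6/(r(r-1)(r-2))$, $r=d/2$, then follows directly from the independent Haar blocks $u\oplus v$. It yields $\Phi_{\mathcal E_P\mathrm H_P}\preceq15\Phi_{\mathrm H}$ and hence the $14/d$ threshold.

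A minor point: $(1-q)/(1-f)+q=1+(1-q)/d$, which exceeds $1+f$ when $q<f$. The upper bound still holds with error $1/d$.
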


Some control of estimation accuracy is essential for a variance lower bound: an estimator that always outputs a constant has zero variance, regardless of the measurement ensemble. Here, we impose the uniform relative bias condition $|\mathbb E_{U,b}[\hat f_O(U,b)]-\operatorname{Tr}(\rho O)|\leq\beta|\operatorname{Tr}(\rho O)|$ for every state $\rho$ and Hermitian observable $O$. The case $\beta=0$ includes every unbiased estimator.

Since $\alpha$ can be arbitrarily large in Eq.~\eqref{main:eq:var_lower}, the approximate design condition in Eq.~\eqref{main:eq:cp} alone cannot guarantee the Frobenius-norm variance bound of global Clifford measurements. The same construction also rules out uniform sample complexity guarantees for certain quantum state learning problems; see Appendix~\ref{app:subsec:approximate_designs_not_sufficient}. We next ask whether the approximate design condition is necessary for the variance bound.

\begin{proposition}[Approximate designs are not necessary]\label{main:prop:not_necessary}
    Under the same block size condition $k2^{k/2}=\Omega(n)$, the variance bound in Theorem~\ref{main:thm:1} remains valid for a two-layer Clifford circuit with $\ell=O(1)$ disconnected components. For $\ell\ge2$, the resulting tensor product ensemble does not satisfy the approximate unitary 3-design condition in Eq.~\eqref{main:eq:cp} for any $\epsilon_{\mathrm{des}}<1$.
\end{proposition}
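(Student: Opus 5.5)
The proof has two independent parts: the variance bound for the disconnected circuit, and the failure of the design condition.

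\emph{Variance bound.} Write the $n$ qubits as $\ell$ disjoint contiguous segments of sizes $n_1,\dots,n_\ell$. On each segment $j$ place its own two-layer Clifford circuit with periodic boundary conditions within the segment; we may equally use open boundaries, since the transfer-matrix argument behind Theorem~\ref{main:thm:1} handles both. The ensemble is then a tensor product $\mathcal E=\bigotimes_j \mathcal E_{2\mathrm L}^{(j)}$. The shadow channel factorizes as $\mathcal M_{\mathcal E}=\bigotimes_j \mathcal M_j$, so its inverse factorizes too, and the estimator $\hat\rho=\bigotimes_j\hat\rho_j$ is unbiased.

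Expand $O_0=\sum_{P\neq I} o_P P$ with $o_P=\operatorname{Tr}(OP)/d$. Then
\[
\mathbb E[\hat o^2]=\sum_{P,Q} o_P o_Q\,\mathbb E\big[\operatorname{Tr}(P\hat\rho)\operatorname{Tr}(Q\hat\rho)\big].
\]
Each term is a product over segments of the single-segment second-moment quantities, namely the inverse Pauli visibilities and the joint probabilities of mapping $(P_j,Q_j)$ to $Z$-type operators. These are exactly the objects bounded in the proof of Theorem~\ref{main:thm:1}.

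The key step is to use the segment-level bound from that proof, in kernel form rather than just its final statement. The proof of Theorem~\ref{main:thm:1} shows that the second-moment kernel on $n_j$ qubits is dominated, up to a factor $C$, by the global-Clifford kernel, whose variance is $O(\|O_0\|_2^2)$. This holds once $k2^{k/2}=\Omega(n_j)$, which follows from $k2^{k/2}=\Omega(n)$ since $n_j\le n$.

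The kernel of the product ensemble is then bounded entrywise by $C^\ell$ times a tensor product of global-Clifford-like kernels. The remaining task is to check that a tensor product of kernels, each giving a Frobenius-type bound, still gives a Frobenius-type bound. For stabilizer-type kernels this amounts to $\sum_{P}o_P^2\prod_j(\text{const})$ plus off-diagonal terms. The off-diagonal terms are controlled by the same Cauchy--Schwarz or operator-norm argument applied segmentwise, giving $\operatorname{Var}[\hat o]\le C^{\ell}\,O(\|O_0\|_2^2)=O(\|O_0\|_2^2)$ for $\ell=O(1)$.

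\emph{Main obstacle.} The hard step is making this factorization rigorous. Theorem~\ref{main:thm:1} bounds a quadratic form, and a bound on each factor does not automatically give a bound on the tensor product. If the single-segment bound is an operator inequality, with the second-moment operator dominated by $C$ times a positive reference operator, then tensoring positive operator inequalities gives the product bound directly. So I would first try to extract such an operator inequality from the transfer-matrix proof. If that fails, the fallback is to redo the transfer-matrix contraction with the chain cut into $\ell$ pieces. Each piece contributes a boundary factor bounded by a constant, so the total constant is $C^\ell$.

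\emph{Failure of the design condition.} Take $t=1$ is not enough, so test the CP order at $t=2$ or $t=3$; the simplest route is via $t=2$ on the swap-type test operator, which suffices because the $3$-design condition implies the $2$-design condition by partial trace. Let $A$ and $B$ be two different segments. Consider the frame-potential-like quantity
\[
\mathbb E_{U}\,\big|\langle 0|U|0\rangle\big|^4,
\]
or more cleanly the probability that $U^{\otimes 2}$ maps $\ket{00}$ onto the subspace where both copies agree on segment $A$ but are orthogonal to a chosen state elsewhere.

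A cleaner witness is the Pauli $P=Z_A\otimes Z_B$, with single-qubit factors in different segments, and the operator $X=P^{\otimes 2}$. The paired twirl of a Pauli cannot move it between segments, so its weight-support pattern is preserved: with probability $1$ the twirled Pauli is nontrivial on both $A$ and $B$. Under Haar, the twirled Pauli is uniform over nonidentity Paulis, so the probability that it is the identity on segment $B$ is about $4^{-|B|}$, which is nonzero.

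Now take the positive operator $\Pi\otimes\Pi$, where $\Pi$ projects onto Paulis that act as identity on $B$ via the $2$-copy Bell-basis representation. Its expectation under $\Phi^{(2)}_{\mathcal E}$ applied to the positive operator $(P\otimes P+I)/2$-type input is $0$ for $\mathcal E$, but strictly positive for Haar. This contradicts the lower bound $(1-\epsilon_{\mathrm{des}})\Phi_{\mathrm H}^{(2)}\preceq\Phi_{\mathcal E}^{(2)}$ for every $\epsilon_{\mathrm{des}}<1$.

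For the positive input I would use $X=\ket{\Phi^+}\!\bra{\Phi^+}$-adapted operators, chosen so that $\operatorname{Tr}[Y\,\Phi^{(2)}(X)]$ is the probability that the twirled Pauli lies in the support-$A$ set. I would verify positivity of both $X$ and $Y$ explicitly. Then the inequality $0\ge(1-\epsilon_{\mathrm{des}})c$ with $c>0$ forces $\epsilon_{\mathrm{des}}\ge1$.
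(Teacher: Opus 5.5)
Both halves of your proposal have a step that fails.

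\textbf{Variance bound.} Your primary route assumes that the proof of Theorem~\ref{main:thm:1} shows the Pauli-level kernel $\tau(P,Q)/(m_Pm_Q)$ of a two-layer component is dominated, up to a constant, by the global-Clifford kernel. The proof does not show this, and it is false. Take $P$ nontrivial only on one first-layer block, and $Q$ equal to $P$ there and also nontrivial on the adjacent block. Then $m_P=\alpha_1\approx 3/D^2$, $m_Q=\alpha_2\approx 4/D^3$, and $\tau(P,Q)=\Theta(D^{-4})$. The ratio is therefore $\Theta(D)=\Theta(2^k)$, whereas $\mathrm{Cl}(n)$ gives about $2$ for distinct commuting pairs.

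Even an entrywise bound would not help, because the weights $o_Po_Q\operatorname{Tr}(\rho PQ)$ are signed. Appendix~\ref{app:sec:role_H} shows that taking absolute values naively costs a factor $2^{n/k}$.

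What tensorizes is the coarse-grained bound $\|\Gamma(O_0)\|_\infty\le\sum_{\boldsymbol u,\boldsymbol v}\widetilde K_{\boldsymbol u\boldsymbol v}\|O_{\boldsymbol u}\|_2\|O_{\boldsymbol v}\|_2$. For $\ell$ periodic components, every ingredient splits over components: the $\boldsymbol\kappa$-expansion, the factor $F$, the $H$-conjugation, the visibilities, and the cyclic traces. Hence
\[
\widetilde K_{\boldsymbol u\boldsymbol v}=\prod_j\widetilde K^{(j)}_{\boldsymbol u^j\boldsymbol v^j}\le 8^\ell e^{6m/D}\gamma_D^m q_D^{|\boldsymbol u\oplus\boldsymbol v|}.
\]
The Hamming matrix is still $\begin{pmatrix}1&q_D\\ q_D&1\end{pmatrix}^{\otimes m}$, which gives $8^\ell e^{45/C}\|O_0\|_2^2$. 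This is the paper's argument, and your fallback points toward it. No boundary factors appear, because each component is itself a ring. Your remark that open boundaries are also covered is not supported, since the MPS and bounds in the paper rely on periodicity.

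\textbf{Failure of the design condition.} Your reduction from $t=3$ to $t=2$ is valid: $\Phi^{(3)}(X\otimes I)=\Phi^{(2)}(X)\otimes I$, and partial traces preserve the CP order. Refuting one side of Eq.~\eqref{main:eq:cp} also suffices.

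The Pauli-support witness, however, fails. All operators $Q\otimes Q$ commute and are diagonal in the Bell basis with eigenvalues $\pm1$. So $\Phi^{(2)}_{\mathcal E}\bigl(\tfrac12(I+P\otimes P)\bigr)=\tfrac12(I+A\otimes B)$, where $A=\Phi^{(2)}_{\mathcal E_A}(P_A\otimes P_A)$ is a convex combination of such operators. For two-layer components, the conjugated Pauli is spread over all nonidentity Paulis of some block, and on these the Bell-basis signs take both values. Hence $\|A\otimes B\|_\infty<1$ and the output is full rank, so no nonzero positive test operator can give $0$. Relatedly, ``weight on Paulis with $Q_B=I$'' corresponds to the indefinite operator $(d_A\,\mathrm{SWAP}_A-I)\otimes I$, which is not a positive test.

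The correct witness uses the symmetric subspace. Take the input $(|0\rangle\langle 0|_A\otimes|0\rangle\langle 0|_B)^{\otimes t}$ with $t\ge2$. Every $U_A\otimes U_B$ keeps the output in $\mathrm{Sym}^t(\mathcal H_A)\otimes\mathrm{Sym}^t(\mathcal H_B)$. The Haar output is maximally mixed on the strictly larger space $\mathrm{Sym}^t(\mathcal H_A\otimes\mathcal H_B)$; for $t=2$ the surplus is $\wedge^2\mathcal H_A\otimes\wedge^2\mathcal H_B$. Testing against $\Pi_{AB}-\Pi_A\otimes\Pi_B$ gives $0$ for the product ensemble and a positive Haar value, which rules out every $\epsilon_{\mathrm{des}}<1$. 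The paper does this at $t=3$. It also uses $\Pi_A\otimes\Pi_B$, with value $1$ versus at most $1/2$ under Haar, to break the upper bound.
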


Together with Proposition~\ref{main:prop:not_sufficient}, this shows that the approximate unitary design condition in Eq.~\eqref{main:eq:cp} is neither necessary nor sufficient for the guarantee in Theorem~\ref{main:thm:1} [Fig.~\ref{main:fig:1}b]. 

We next ask whether the two-layer architecture can also reproduce exact-design performance in multi-shot observable estimation~\cite{zhou2023performance,helsen2023thrifty}, where each measurement unitary is reused for multiple shots and the variance depends on fourth-order moments.
Given $\{U_i\}_{i=1}^{N_U}$ independently sampled unitaries, we use the multi-shot shadow estimator
\begin{equation}
    \hat{\rho}_{\mathrm{mul}}
    =
    \frac{1}{N_U N_S}
    \sum_{i=1}^{N_U}\sum_{j=1}^{N_S}
    \mathcal{M}^{-1}
    \left(
    U_i^{\dagger}\ket{b_{i,j}}\!\bra{b_{i,j}}U_i
    \right)\label{main:eq:est_mul}.
\end{equation}
Here, $\{b_{i,j}\}_{j=1}^{N_S}$ are measurement outcomes obtained from $U_i\rho U_i^\dagger$, and the observable $O$ is estimated as $\hat{o}=\operatorname{Tr}(O\hat{\rho}_{\mathrm{mul}})$. Reusing the same measurement unitary introduces an additional variance term governed by fourth-order moments. 
At fixed $N_U$, we refer to the variance that remains as $N_S\to\infty$ as the multi-shot \textit{variance floor}.
For an exact unitary $4$-design, this variance floor is $O(2^{-n}\|O_0\|_2^2/N_U)$.
For the two-layer architecture, however, this worst-case scaling cannot be recovered with a sublinear block size.

\begin{theorem}[Informal]\label{main:thm:2}
    Even when every block unitary in the two-layer circuit is sampled independently from the Haar measure, recovering the worst-case multi-shot variance-floor scaling of a global exact unitary $4$-design requires a block size $k=\Omega(n)$.
\end{theorem}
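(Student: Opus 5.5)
The plan is to exhibit one explicit state--observable pair for which the multi-shot variance floor of the two-layer Haar-block ensemble exceeds the global 4-design floor $O(2^{-n}\|O_0\|_2^2/N_U)$ by a factor of $2^{\Omega(n)}/2^{O(k)}$. A constant-factor match then forces $k=\Omega(n)$.

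\textbf{Step 1: isolate the floor.} I would first write the variance floor explicitly. As $N_S\to\infty$, the inner average over shots of $\mathcal M^{-1}(U_i^\dagger\ket{b}\!\bra{b}U_i)$ converges to
\[
\mathcal M^{-1}\Bigl(\textstyle\sum_b \langle b|U_i\rho U_i^\dagger|b\rangle\, U_i^\dagger\ket b\!\bra b U_i\Bigr).
\]
The floor is therefore $\frac{1}{N_U}\operatorname{Var}_U[f(U)]$, where
\[
f(U)=\sum_b \langle b|U\rho U^\dagger|b\rangle\,\langle b|U\,\mathcal M^{-1\dagger}(O)\,U^\dagger|b\rangle .
\]
For Haar blocks the shadow channel is Pauli-diagonal, $\mathcal M(P)=\lambda_P P$. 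Hence $\mathcal M^{-1\dagger}(O)=\sum_P \lambda_P^{-1}\operatorname{Tr}(OP)P/d$, and $f(U)$ becomes a sum over pairs of Paulis $P,Q$ of coefficients times $\frac1d\sum_{Z\in\mathcal Z}\operatorname{Tr}(UPU^\dagger Z)\operatorname{Tr}(UQU^\dagger Z)/d$. Consequently $\mathbb E[f^2]$ is a fourth-moment quantity. Because the two layers consist of independent Haar blocks, it can be evaluated with the same transfer-matrix contraction used for Theorem~\ref{main:thm:1}, now with the local Weingarten calculus on four copies over each $k$-qubit block.

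\textbf{Step 2: choose a witness.} I would take $\rho=\ket{0^n}\!\bra{0^n}$ or a Pauli eigenstate, and $O=P$ a single Pauli of weight $\Theta(n)$, for example $Z^{\otimes n}$ or a string spanning all blocks. Then $\|O_0\|_2^2=d$, so the global 4-design floor is $O(1/N_U)$.

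For the two-layer ensemble, $f(U)=\lambda_P^{-1}\,\mathbb E_b[\ldots]$. The key observation is that $\mathbb E_U[f]=\operatorname{Tr}(\rho P)$ is $O(1)$, whereas $\mathbb E_U[f^2]$ carries a factor $\lambda_P^{-2}$ times the probability that $UPU^\dagger$ and $U\rho U^\dagger$ jointly have $Z$-type support. I expect $\lambda_P^{-1}$ to scale like $\prod_{\text{blocks}}(\text{const}\cdot 2^{k})$, roughly $c^{n}$ for some constant $c>1$ after counting blocks, since each Haar block maps a Pauli to a $Z$-string with probability about $2^{-k}$ per block.

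\textbf{Step 3: compare with the square of the mean.} The second moment $\mathbb E[f^2]$ should scale like $\lambda_P^{-2}\,\mathbb E[\Pr(\text{both }Z)]$, roughly $\lambda_P^{-2}\lambda_P\,\gamma^{n/k}$. Hence
\[
\operatorname{Var}_U[f]\gtrsim \lambda_P^{-1}\gamma^{n/k}-O(1),
\]
where $\gamma$ is a ratio of per-block fourth-moment to second-moment weights. I would show that $\gamma$ exceeds $1$ by a constant factor per boundary, so that the product over $n/k$ blocks gives $\exp(\Omega(n/k))$, or even $2^{\Omega(n)}$ from $\lambda_P^{-1}$. Matching $O(1)$ then requires $n/k=O(1)$, i.e.\ $k=\Omega(n)$.

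\textbf{Main obstacle.} The hardest step is the fourth-moment transfer-matrix computation. Concretely, one needs a lower bound on the leading eigenvalue of the 4-copy block transfer matrix restricted to the relevant permutation sector, showing that it strictly exceeds the square of the 2-copy eigenvalue by a constant factor independent of $k$, up to $1+2^{-\Theta(k)}$ corrections that do not cancel the gain. I would first attempt this by keeping only the identity and swap-type permutation contributions, which are positive for the chosen $Z$-type witness, and lower-bounding the remaining Weingarten terms via positivity of the second moment of $f$. This reduces the proof to a $2\times2$ transfer matrix whose eigenvalue comparison is explicit.
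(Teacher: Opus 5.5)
Your proposal has a genuine gap. The witness $\rho=|0^n\rangle\langle 0^n|$, $O=Z^{\otimes n}$ does not produce the claimed separation for Haar blocks, and the mechanism in Steps 2--3 is a Clifford mechanism.

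\textbf{Why the mechanism fails.} The estimate $\mathbb E[f^2]\approx\lambda_P^{-2}\Pr(\text{both }Z)$ assumes $\langle b|UPU^\dagger|b\rangle\in\{0,\pm1\}$. For Clifford blocks your witness does give $f=m_P^{-1}\mathbf 1\{UPU^\dagger\in\pm\mathcal Z\}$ and a floor $\approx m_P^{-1}\approx 2^n$. But global $\mathrm{Cl}(n)$ gives floor $=d$ for the same pair. So this only detects that Clifford is not a $4$-design, which is exactly what the hypothesis ``even when every block is Haar'' rules out.

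\textbf{What actually happens with Haar blocks.} A leading-order Weingarten count on the four copies $(\rho,O,\rho,O)$ shows that $\mathbb E[f^2]$ is dominated by three spatially uniform permutation labels, $(12)(34)$, $(24)$ and $(13)(24)$. Each contributes $\approx\mathbb E[f]^2=1$. The staggered second layer multiplies the weight of every domain wall between labels by at most $2^{-k/2}$ per half-block. Hence $\mathbb E[f^2]=3+o(1)$ once $k2^{k/2}\gg n$, so $V_{\mathrm{floor}}\approx 2$, which is also the global-Haar value for this pair.

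The per-block gain you want ($\mathbb E[f^2]\approx 3^{n/k}$) exists only for a single layer of blocks, where the labels are independent; the second layer aligns them. Consequently the claimed constant-factor gap between the leading $4$-copy eigenvalue and the squared $2$-copy eigenvalue is false. Separately, truncating a signed Weingarten sum to a few permutations is not a valid lower bound. Since a Pauli has $\lVert O_0\rVert_2^2=d$, the $4$-design benchmark is $\Theta(1)$, so this witness can at best force $k2^{k/2}=\Omega(n)$, not $k=\Omega(n)$.

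\textbf{The missing idea.} Make the benchmark small rather than the floor large, and decouple regions exactly. The paper takes $\rho=\Pi_A\otimes\frac{I}{D}\otimes\Pi_B\otimes\frac{I}{D}$ and $O=\Pi_A\otimes I\otimes\Pi_B\otimes I$ with rank-one projectors. Then $\lVert O_0\rVert_2^2\le D^2$, and the exact $4$-design floor is at most $4D^{2-m}$.
\begin{itemize}
\item The maximally mixed separator blocks make $\mathcal M^{-1}(O)$ factorize, via the rank-one $\mathbb K_0$.
\item They also turn $g_U$ into a product $X_{\mathsf{eA}}X_{\mathsf{eB}}$ of independent unit-mean factors, so $V_{\mathrm{floor}}\ge\operatorname{Var}[X_{\mathsf{eA}}]$.
\item Conditioning on the first-layer gate $W$ and averaging the two adjacent second-layer Haar gates makes $\mathbb E[X_{\mathsf{eA}}\mid W]$ affine in the half-block reduced purity $p_W$, with slope $\Theta(\sqrt D)$.
\item Since $\operatorname{Var}[p_W]=\Theta(D^{-2})$, the law of total variance gives $V^{2\mathrm L}_{\mathrm{floor}}>1/(7D)$.
\end{itemize}
This yields the ratio $\Omega(2^{n-3k})$ with no $n$-dependent transfer-matrix analysis at all.
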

In contrast to the single-shot result, Theorem~\ref{main:thm:2} shows that the unbiased shadow estimator in Eq.~\eqref{main:eq:est_mul} can have a variance floor exponentially larger than the global exact unitary $4$-design value for the same state and observable. This separation can occur even when the ensemble is an approximate unitary $4$-design with exponentially small error $\epsilon_{\mathrm{des}}=2^{-\Omega(n)}$~\cite{schuster2025random}. Exact $4$-designs also satisfy the same CP-order condition, so this condition alone establishes neither uniform matching of the variance floor scaling nor its failure. The failure for the two-layer ensemble is established through our analysis of the circuit structure and its exact inverse shadow channel.



Beyond the two-layer architecture, we next consider an all-to-all random two-qubit circuit, in which each layer applies independent Haar-random two-qubit gates to arbitrary, disjoint pairs of qubits. Despite its all-to-all connectivity, we show that this model cannot match the multi-shot variance floor of a global exact unitary $4$-design within a constant factor uniformly over states and observables at shallow depth.

\begin{theorem}[Informal]
\label{main:thm:3}
    For the ancilla-free all-to-all random two-qubit circuit, matching the multi-shot variance floor of a global exact unitary $4$-design within a constant factor for every state--observable pair requires depth $\Omega(n/\log n)$.
\end{theorem}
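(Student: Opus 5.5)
The plan is to exhibit one state--observable pair whose multi-shot variance floor under the depth-$D$ all-to-all random two-qubit circuit exceeds the exact $4$-design value $O(2^{-n}\|O_0\|_2^2/N_U)$ by a factor that is superconstant unless $D=\Omega(n/\log n)$.

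\textbf{Step 1: write down the floor.} As $N_S\to\infty$, the estimator $\hat{\rho}_{\mathrm{mul}}$ in Eq.~\eqref{main:eq:est_mul} converges, for each fixed $U_i$, to its conditional mean
\begin{equation}
    \mathcal M^{-1}\Big(\sum_b \langle b|U_i\rho U_i^\dagger|b\rangle\, U_i^\dagger|b\rangle\!\langle b|U_i\Big).
\end{equation}
Hence the floor equals $\operatorname{Var}_U[g(U)]/N_U$, where
\begin{equation}
    g(U)=\sum_b \langle b|U\rho U^\dagger|b\rangle\,\langle b|U\,\mathcal M^{-\dagger}(O)\,U^\dagger|b\rangle .
\end{equation}
The second moment of $g$ is a fourth-moment contraction. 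For the brickwork-free all-to-all model, $\mathcal M$ is Pauli-diagonal, with eigenvalue $\lambda_P$ equal to the probability that $UPU^\dagger$ lands in $\mathcal Z$ up to sign. Therefore $\mathcal M^{-\dagger}(O)=\sum_P \lambda_P^{-1} o_P P$.

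\textbf{Step 2: choose the witness.} I take $O$ to be a single Pauli $P$ of weight $w$, or a sum of many such Paulis with equal weight so that $\|O_0\|_2^2=d\cdot(\#\text{terms})$. I take $\rho$ to be a stabilizer state containing $P$ in its stabilizer group, e.g.\ $\rho=(I+P)/2\otimes\cdots$. Then
\begin{equation}
    \mathbb E_U g=1, \qquad g(U)=\lambda_P^{-1}\,\mathbb 1[UPU^\dagger\in\pm\mathcal Z]+\text{(cross terms)}.
\end{equation}
So $\operatorname{Var}[g]\gtrsim \lambda_P^{-1}-1$, whereas the $4$-design value is $O(2^{-n}\cdot d)=O(1)$. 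The goal is thus to show that some Pauli has $\lambda_P^{-1}$ superconstantly larger than the $4$-design prediction, or, more precisely, that the anticoncentration-type quantity $\mathbb E[(\Pr_b)^2]/(\mathbb E\Pr_b)^2$ blows up. It may be cleaner to use instead the collision-type statistic
\begin{equation}
    \mathbb E_U\Big[\sum_b\langle b|U\rho U^\dagger|b\rangle^2\Big]
\end{equation}
with $O$ a projector; that would be my fallback witness.

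\textbf{Step 3: lower-bound the moment.} I use the standard mapping of Haar two-qubit gate averages to a stochastic Pauli-weight process. Each gate sends a nonidentity two-qubit Pauli to a uniformly random one of the $15$ nonidentity Paulis, and the fourth-moment analogue gives a process on pairs. The key point is a light-cone or rare-event argument. With probability at least $c^{wD}$, all gates touching the support of $P$ keep it confined to a small set of qubits (only Pauli $Z$'s, or never growing). On that event the relevant moment picks up a contribution of order $2^{\Theta(n)}\cdot c^{O(D\log(\cdot))}$ relative to the design value, because the global $4$-design normalization involves $d$ while the confined operator only sees $2^{O(1)}$. Making this event weight dominate requires $n\gtrsim D\log n$. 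Here the $\log n$ enters from the union/entropy over which partners the $O(\text{support})$ qubits are paired with in all-to-all layers, or from choosing $w\sim\log n$ to amplify.

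\textbf{Main obstacle.} The delicate step is Step 3. I must ensure that the exact inverse $\lambda_P^{-1}$ does not cancel the rare-event enhancement, since the single-shot result Theorem~\ref{main:thm:1} shows such cancellations happen at second order. I would handle this by computing $\lambda_P$ exactly via the same Pauli-weight Markov chain (it is a product-like transfer quantity) and comparing against a fourth-moment path sum restricted to the confinement event. The required inequality is that the ratio is at least $\exp(\Omega(n)-O(D\log n))$, which forces $D=\Omega(n/\log n)$ for the floor to be within a constant factor.
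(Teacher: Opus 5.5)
There is a genuine gap in Steps 2--3.

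\textbf{Step 2 treats the circuit as Clifford.} The Pauli-path (Clifford-twirl) picture is exact for the shadow channel, which is a second-moment object, so it does give the right $\lambda_P$. But $\operatorname{Var}_U[g]$ is a fourth moment, and there Haar two-qubit gates and Cliffords differ. For Haar gates $UPU^\dagger$ is almost surely not a Pauli, so $g(U)=\lambda_P^{-1}\mathbf 1\{UPU^\dagger\in\pm\mathcal Z\}$ is false.

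Your resulting bound $\operatorname{Var}[g]\gtrsim\lambda_P^{-1}-1$ cannot hold. The same reasoning applied to a global Haar ensemble with $O=P$ gives about $d$, whereas the exact $4$-design bound is $V_{\mathrm{floor}}\le 4\|O_0\|_2^2/d=4$. So making $\lambda_P^{-1}$ large is the wrong target. What matters is how much the per-circuit visibility $m_P(U)=\frac1d\sum_b\langle b|UPU^\dagger|b\rangle^2$ fluctuates around its mean $m_P$.

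The paper makes this exact by taking $O=P=Z_1$ and $\rho=(I+P)/d$. Since $\sum_b\langle b|UPU^\dagger|b\rangle=0$, one gets $g_U=m_P(U)/m_P$ and $V_{\mathrm{floor}}=m_P^{-2}\operatorname{Var}_U[m_P(U)]$. The maximally mixed completion is also what produces the factor $d$. For this pair the Haar floor is $2(d+2)/(d(d+3))\le 2/d$, not the generic $O(1)$ you compare against. With a pure stabilizer completion the Haar floor is $\Theta(1)$, and a localization event of probability $e^{-O(L\log L)}$ no longer gives a superconstant ratio.

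\textbf{Step 3 has the right intuition, but the event and the source of the logarithm are wrong.} Exact confinement of $Z_1$ has probability zero under Haar gates. A constant per-gate probability like your $c^{wD}$ is a weight of Pauli paths in the second-moment picture, not a probability over gate realizations. If it really were the latter, the argument would give depth $\Omega(n)$, not $\Omega(n/\log n)$.

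The paper instead uses the event $E$ that each of the at most $L$ gates touching qubit $1$ (depth $L$, your $D$) satisfies $\|U_i-I\|_\infty\le R$. The Haar small-ball estimate gives $\Pr(E)\ge(aR^{15})^L$. Let $U_0$ be the circuit with those gates removed, so $m_P(U_0)=1$. On $E$, telescoping gives $\|U-U_0\|_\infty\le LR$. The Lipschitz bound $|m_P(U)-m_P(V)|\le4\|U-V\|_\infty$ then gives $m_P(U)\ge3/4$ once $R=1/(16L)$. This forced scaling $R\sim1/L$ is the source of $L\log L$, not partner entropy or $w\sim\log n$.

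Your ``main obstacle'' also disappears, because no exact computation of $\lambda_P$ is needed. It suffices that $m_P\le1/3$. This follows from isotropy of the gate on qubit $1$, which gives $m_{X_1}=m_{Y_1}=m_{Z_1}$. It also uses $\operatorname{Tr}(\sigma X)^2+\operatorname{Tr}(\sigma Y)^2+\operatorname{Tr}(\sigma Z)^2=2\operatorname{Tr}(\sigma^2)-1\le1$ for the reduced one-qubit state $\sigma$ of $U^\dagger|b\rangle\langle b|U$. Then $V_{\mathrm{floor}}\ge m_P^{-2}\Pr(E)(3/4-m_P)^2\ge\frac{25}{16}\Pr(E)$. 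Dividing by the Haar value $2/d$ gives a ratio of at least $d\,(c_1/L)^{15L}$.
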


These two no-go results are specific to multi-shot shadow estimation and do not preclude the above ensembles from achieving favorable guarantees for other learning tasks. They nevertheless show that closeness to Haar randomness alone does not guarantee Haar-level learning performance, motivating direct analysis for each task and target quantity.

\subsection{Exact and efficient classical post-processing}

Practical use of the estimator in Theorem~\ref{main:thm:1} also depends on the efficiency of classical post-processing. We now show that the exact inverse shadow channel of the two-layer Clifford ensemble admits an efficient classical representation.
For a Clifford ensemble $\mathcal{E}\subset\mathrm{Cl}(n)$, the shadow channel is diagonal in the Pauli basis, $\mathcal{M}(P)=m_P P$, where $m_P=\mathbb{E}_{U\sim\mathcal{E}}[\mathbf{1}\{UPU^{\dagger}\in\pm\mathcal{Z}\}]$ denotes the visibility of $P$~\cite{cho2025shallow}. The estimator $\hat{o}=\operatorname{Tr}(O\hat{\rho})$, with $\hat{\rho}=\mathcal{M}^{-1}(U^{\dagger}\ket{b}\!\bra{b}U)$, can therefore be written as
\begin{equation}
    \hat{o}
    =
    \frac{1}{2^n}\sum_P
    m_P^{-1}
    \operatorname{Tr}(OP)
    \operatorname{Tr}(PU^{\dagger}\ket{b}\!\bra{b}U).
\end{equation}

If the observable $O$ has a Pauli decomposition containing $\operatorname{poly}(n)$ terms, one can process each Pauli operator $P$ separately by computing its visibility $m_P$. Since the function $P\mapsto m_P$ admits a matrix product state (MPS) representation with bond dimension $\chi=\operatorname{poly}(n)$, each $m_P$ can be evaluated by a polynomial-time tensor-network contraction~\cite{bertoni2024shallow, akhtar2023scalable}.

For an observable given as a matrix product operator~(MPO), however, its Pauli decomposition generally contains exponentially many terms, making term-by-term evaluation impractical. Two alternative approaches are commonly used. The first numerically approximates the reciprocal visibility function $P\mapsto m_P^{-1}$ as an MPS~\cite{bertoni2024shallow, akhtar2023scalable}. This approach generally lacks rigorous error control and requires the approximation to be recomputed when the system size or circuit depth changes. The second replaces the exact inverse shadow channel with the Haar inverse shadow channel~\cite{schuster2025random, cioli2025approximate}. This replacement can provide controlled bias in certain settings, but no such guarantee holds in general. In particular, previous work has shown that the bias can remain large at shallow depth for tasks such as single-shot purity estimation~\cite{cioli2025approximate} and quantum state tomography~\cite{cho2025sample}.
Consequently, approximate classical post-processing can compromise the sample-complexity advantage. 

In contrast, the inverse shadow channel of the two-layer Clifford ensemble admits an exact and efficient tensor network representation.
\begin{theorem}[Exact inverse shadow channel]
    For the two-layer block Clifford ensemble with block size $k$, the inverse shadow channel $\mathcal{M}_{\mathcal{E}_{2\mathrm{L}}}^{-1}$ admits an exact MPO representation with bond dimension $\chi=n/k$, where each site corresponds to a block of $k$ qubits.
\end{theorem}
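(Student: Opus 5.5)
The plan is to exploit that $\mathcal{M}_{\mathcal{E}_{2\mathrm{L}}}$ is diagonal in the Pauli basis, $\mathcal{M}(P)=m_PP$, so that $\mathcal{M}^{-1}(P)=m_P^{-1}P$. It then suffices to show two things. First, $m_P$ depends only on a coarse-grained support pattern of $P$. Second, $P\mapsto m_P^{-1}$ is generated by a small finite-state automaton running over the $L=n/k$ blocks of the first layer. I would label the first-layer blocks $B_1,\dots,B_L$ and the staggered second-layer blocks $C_1,\dots,C_L$, with $C_j$ covering the right half of $B_j$ and the left half of $B_{j+1}$ (indices mod $L$). I would then set $s_i=\mathbf 1\{P|_{B_i}\neq I\}$.

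The first step computes $m_P$. A uniformly random $k$-qubit Clifford maps every non-identity Pauli to a uniformly random non-identity Pauli, up to sign. Hence after layer one the restriction to each active block $B_i$ is uniform on $\{I,X,Y,Z\}^{\otimes k}\setminus\{I\}$, independently over blocks, and inactive blocks stay $I$. For layer two, a block $C_j$ receiving the identity contributes factor $1$, while one receiving a non-identity Pauli maps it to $\pm\mathcal Z$ with probability $(2^k-1)/(4^k-1)=1/(2^k+1)$. Each active $B_i$ splits into a pair of half-restrictions (left, right), each identity or not, and the joint law of this pair is explicit (e.g.\ both halves identity has probability $0$, only the left half identity has probability $(2^{k/2}-1)\cdot$ proportional weight, and so on). Consequently $m_P=m(s)$ is a cyclic $2\times 2$ transfer-matrix contraction over the half-patterns.

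The key observation is that an inactive block $B_i$ forces both adjacent halves to be identity. So whenever $s\neq(1,\dots,1)$, the contraction factorizes over maximal cyclic runs of consecutive active blocks: $m(s)=\prod_{\text{runs}}g(r)$, where $g(r)$ is an open-boundary transfer-matrix product depending only on the run length $r$. There is also one exceptional value $m(1,\dots,1)$ given by the periodic trace. Therefore $m(s)^{-1}=\prod_{\text{runs}}g(r)^{-1}$, again a product of run-length-dependent weights.

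The second step writes this as an MPO whose bond index is a run-length counter $c\in\{0,\dots,L-1\}$. The local superoperator at block $i$ acts as follows:
\begin{itemize}
\item on the identity component of $P|_{B_i}$ it sends $c\mapsto 0$ with weight $g(c)^{-1}$, or weight $1$ if $c=0$;
\item on the non-identity component it sends $c\mapsto c+1$ with weight $1$.
\end{itemize}
These two pieces are the local superoperators $X\mapsto \operatorname{Tr}(X)I/2^k$ and its complement, both diagonal in the block Pauli basis. Taking the trace over the bond closes the periodic boundary. Consistency of the trace forces the initial counter to equal the length of the final, wrapping portion of a run, so a run crossing the boundary receives the correct total length. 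The all-active pattern is the one configuration in which the counter never resets, and it must be corrected to carry weight $m(1,\dots,1)^{-1}$.

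The main obstacle I expect is this bookkeeping at dimension exactly $n/k$. It has two parts: absorbing the all-active correction, which is a rank-one term $\bigotimes_i(\mathrm{id}-\text{depolarizing})$, without increasing the bond dimension beyond $L$, and checking that counter values up to $L-1$ suffice. My first attempt would be to cap the counter at $L-1$, so that the saturating counter encodes the all-active branch. I would then choose the weight on that saturated trace cycle so that it reproduces $m(1,\dots,1)^{-1}$ rather than an inverse of an open $g(L)$, verifying the count of distinct states against $\chi=n/k$.
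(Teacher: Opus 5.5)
Your proposal is correct and follows the paper's proof. The rank-one structure of $\mathbb K_0$ factorizes $m_P$ into run-length factors $\alpha_\ell$ (your $g(\ell)$), and a run-length counter of bond dimension $n/k$, placed on the identity and non-identity components of each block, reproduces $m_P^{-1}$; your increment-then-pay automaton is just the transpose of the paper's $\mathbb L_0,\mathbb L_1$. The only difference is the all-active fix, where your saturating self-loop at $L-1$ with weight $m(1^L)^{-1/L}$ works, since no trace-consistent path of a non-all-active pattern ever uses it, whereas the paper instead closes $\mathbb L_1$ into a cyclic shift with corner entry $[\operatorname{Tr}(\mathbb K_1^m)]^{-1}/m$, so that $\mathbb L_1^m$ is proportional to the identity.
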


Although the visibility function $P\mapsto m_P$ admits an MPS representation with polynomial bond dimension, its entrywise reciprocal $P\mapsto m_P^{-1}$ can require an exponentially large bond dimension. The two-layer ensemble avoids this growth: the reset structure of its visibility yields an exact MPS representation of $m_P^{-1}$ with bond dimension $\chi=n/k$. This provides an explicit example related to a question left open in~\cite{akhtar2023scalable}: whether exact reconstruction coefficients of shallow shadows can be represented efficiently as an MPS.

Consequently, when $O$ is given as an MPO with bond dimension $\operatorname{poly}(n)$, the estimator $\hat{o}=\operatorname{Tr}(O\hat{\rho})$ can be evaluated by a polynomial-time tensor network contraction for the block size $k=O(\log n)$ considered here [see Appendix~\ref{app:subsec:eff_mpo}]. 

\subsection{Applications}
The fourth-order no-go result above identifies a limitation of the two-layer architecture in the multi-shot setting. Nevertheless, Theorem~\ref{main:thm:1} can be applied to several state-learning problems for which global Clifford measurements were previously used, allowing them to be replaced by shallow Clifford measurements.

Theorem~\ref{main:thm:1} is formulated as an observable estimation result: for an unbiased estimator $\hat{\rho}$, it controls the variance of $\operatorname{Tr}(O\hat{\rho})$ for a Hermitian observable $O$. Many state-learning problems, however, are not directly formulated as observable estimation. We show that the statistical quantities governing their performance can nevertheless be reduced to observable estimation for suitably chosen observables $O$ and states $\rho$, allowing Theorem~\ref{main:thm:1} to replace the global Clifford ensemble by the two-layer Clifford ensemble.

The choices of $O$ and $\rho$ depend on the learning problem and, in some cases, the state $\rho$ appearing in this reduction differs from the state being learned. This flexibility allows us to apply Theorem~\ref{main:thm:1} to a broad range of learning tasks. In the following, we describe the resulting guarantees. The explicit choices of $O$ and $\rho$ and the corresponding proofs are given in Appendix~\ref{app:sec:state_learning_applications}.

\subsubsection{Setting-efficient quantum state tomography}

\begin{table}[t]
    \centering
    \begin{tabular}{lcccc}
        \toprule
        
        & Measurement
        & $N_U$
        & $N_S$
        & $T$
        \\
        \midrule

        Voroninski~\cite{voroninski2013quantum}
        & Haar($n$)
        & $r$
        & ---
        & ---
        \\

        Gu\c{t}\u{a} et al.~\cite{guctua2020fast}
        & $2$-design
        & $d$
        & $r^2\epsilon^{-2}$
        & $dr^2\epsilon^{-2}$
        \\

        Fran\c{c}a et al.~\cite{brandao2020fast}
        & Cl($n$)
        & $r^3\epsilon^{-2}$
        & $dr\epsilon^{-2}$
        & $dr^4\epsilon^{-4}$
        \\

        Fran\c{c}a et al.~\cite{brandao2020fast}
        & $4$-design
        & $r\epsilon^{-2}$
        & $dr\epsilon^{-2}$
        & $dr^2\epsilon^{-4}$
        \\

        Cho et al.~\cite{cho2025sample}
        & $\mathcal{E}_{2\mathrm{L}}$
        & $dr^2\epsilon^{-2}$
        & $1$
        & $dr^2\epsilon^{-2}$
        \\

        This work
        & $\mathcal{E}_{2\mathrm{L}}$
        & $r^2\epsilon^{-2}$
        & $d$
        & $dr^2\epsilon^{-2}$
        \\

        \bottomrule
    \end{tabular}
    \caption{
        Comparison of single-copy quantum state tomography protocols for an unknown rank-$r$ state in dimension $d=2^n$, with trace norm error $\epsilon$. Here, $N_U$ denotes the number of measurement settings, $N_S$ denotes the number of shots per basis, and $T=N_U N_S$ is the total number of samples. All entries suppress polylogarithmic factors in $d$, $r$, and $1/\epsilon$.
    }
    \label{main:tab:qst}
\end{table}

We first consider quantum state tomography (QST). Given an unknown state $\rho$ of rank at most $r$ in dimension $d=2^n$, the goal is to construct an estimator $\hat{\rho}$ satisfying $\|\hat{\rho}-\rho\|_1\leq\epsilon$. For a fixed failure probability, any non-adaptive single-copy protocol requires $T=\Omega(dr^2/\epsilon^2)$ samples in the worst case~\cite{Lowe2025lower}. Ref.~\cite{cho2025sample} achieves this scaling up to a logarithmic factor using two-layer Clifford measurements of depth $O(\log n)$, with $T=O(dr^2\log d/\epsilon^2)$. However, it samples a fresh measurement unitary for every shot, so $N_U=T$ and $N_S=1$. Thus, its near-optimal total sample complexity comes with a large number of measurement settings.

Reusing each measurement unitary for multiple shots can reduce the number of settings. Using global Clifford measurements or unitary $4$-design measurements, Ref.~\cite{brandao2020fast} obtained protocols whose setting complexity depends only polylogarithmically on $d$. Their total sample complexities, however, have a worse dependence on the target accuracy $\epsilon$ than the near-optimal $dr^2/\epsilon^2$ scaling [Table~\ref{main:tab:qst}]. 

This raises the question of whether $N_U$ can be reduced without sacrificing the total sample complexity. We first show that global Clifford measurements admit an estimator satisfying
\begin{equation}
    N_U=\widetilde{O}(r^2/\epsilon^2),
    \qquad
    N_S=O(d),
\end{equation}
and hence $T=\widetilde{O}(dr^2/\epsilon^2)$. Here, $\widetilde O$ suppresses polylogarithmic factors in $d$, $r$, and $1/\epsilon$. We then use Theorem~\ref{main:thm:1} to replace the global Clifford measurements by the two-layer Clifford ensemble without changing these asymptotic scalings. 

Our estimator first averages the exact-inverse snapshots within each measurement basis, applies spectral clipping to each resulting matrix, and then averages over bases before taking a rank-$r$ approximation.
Although the tomography protocol reuses each sampled unitary for multiple shots, its concentration analysis reduces to the variance bound of single-shot observable estimation in Theorem~\ref{main:thm:1}. Compared with the single-shot two-layer Clifford protocol of Ref.~\cite{cho2025sample}, this reduces the number of measurement settings by a factor of $d$, up to polylogarithmic factors, without changing the asymptotic total sample complexity. In contrast, applying the bounds in Eq.~\eqref{main:eq:prev_est} to QST based on $\hat\rho_{\mathrm H}$ from approximate-design measurements can require an exponentially small design error to certify even constant trace-norm accuracy. The corresponding circuit construction then no longer guarantees logarithmic depth.

\subsubsection{Multiparameter quantum metrology}
\begin{table}[t]
    \centering
    \begin{tabular}{lccc}
        \toprule
        & Measurement
        & Pure
        & Mixed
        \\
        \midrule

        Zhou et al.~\cite{zhou2026randomized}
        & $\mathrm{Cl}(n)$
        & $\checkmark$
        & $\checkmark$
        \\

        Du et al.~\cite{du2026complexity}
        & Approx.\ $3$-design
        & $\checkmark$
        & ---
        \\

        Mao et al.~\cite{mao2026near}
        & $\mathrm{Cl}(1)^{\otimes n}$
        & $\checkmark\!(\textit{Avg.})$
        & ---
        \\

        This work
        & $\mathcal{E}_{2\mathrm{L}}$
        & $\checkmark$
        & $\checkmark$
        \\

        \bottomrule
    \end{tabular}
    \caption{
        Comparison of measurement protocols for multiparameter quantum metrology.
        A check mark ($\checkmark$) indicates an established \textit{near-optimality} or \textit{weak near-optimality} guarantee in the corresponding state regime.
        \textit{Avg.} indicates an average-case guarantee.
    }
    \label{main:tab:metrology}
\end{table}

Quantum metrology concerns the estimation of parameters $\theta$ encoded in a quantum state $\rho_{\theta}$. 
For a positive operator-valued measure (POVM) $\mathbb{M}=\{M_x\}_x$, we denote the resulting classical Fisher information matrix (CFIM) by $I(\mathbb{M},\rho_{\theta})$ and the quantum Fisher information matrix (QFIM) by $J(\rho_{\theta})$. 
We call a measurement $\mathbb{M}$ \textit{near-optimal} if, for some constant $c\geq 1$,
\begin{equation}\label{main:eq:near_optiaml}
    I(\mathbb{M},\rho_{\theta})\succeq \frac{1}{c}J(\rho_{\theta}).
\end{equation}
Together with the quantum Cramér--Rao bound $I(\mathbb{M},\rho_{\theta})\preceq J(\rho_{\theta})$~\cite{BraunsteinCaves1994}, Eq.~\eqref{main:eq:near_optiaml} ensures that the POVM $\mathbb{M}$ retains a constant fraction of the available quantum Fisher information in every parameter direction.

In the multiparameter setting, identifying POVMs $\mathbb{M}$ that satisfy Eq.~\eqref{main:eq:near_optiaml} can be computationally demanding~\cite{hayashi2023tight}. 
Randomized measurements based on global Clifford unitaries provide a state-independent alternative~\cite{zhou2026randomized}: they achieve \textit{near-optimality} for pure states and well-conditioned, approximately low-rank states, and \textit{weak near-optimality}, in the sense of matching the Gill--Massar (GM) bound~\cite{gill2000state} up to a constant factor, for well-conditioned full-parameter rank-$r$ states.

More recent works have reduced the measurement complexity for pure-state metrology~\cite{du2026complexity, mao2026near}. Approximate unitary $3$-designs implemented by shallow-depth circuits retain a constant fraction of the QFIM~\cite{du2026complexity}, and randomized Pauli measurements achieve a constant fraction for Haar-random pure states~\cite{mao2026near}. Both results are restricted to pure-state encodings [Table~\ref{main:tab:metrology}].

Using Theorem~\ref{main:thm:1}, we extend shallow randomized measurements to the mixed-state regimes previously accessible with global Clifford measurements. 
For well-conditioned approximately low-rank states, the two-layer Clifford ensemble achieves near-optimality, while for well-conditioned full-parameter rank-$r$ states it achieves the same asymptotic GM bound as global Clifford measurements. 
Thus, the mixed-state guarantees of Ref.~\cite{zhou2026randomized} can be retained using shallow measurements, resolving an open question left by Refs.~\cite{du2026complexity,mao2026near,zhou2026randomized}.

Moreover, unlike the guarantee of Ref.~\cite{du2026complexity}, our result does not require the measurement ensemble to form an approximate unitary $3$-design. As discussed in Proposition~\ref{main:prop:not_necessary}, the same metrological guarantees can hold even when the approximate-design condition fails.

\subsubsection{Stabilizer structure learning}

We next consider learning the Pauli symmetries of a quantum state. Ignoring global phases, we define the unsigned Pauli symmetry group $\mathrm{Weyl}(\rho):=\{P:\operatorname{Tr}(P\rho)^2=1\}$, and refer to $\dim\mathrm{Weyl}(\rho)$ as the stabilizer dimension of $\rho$. A pure stabilizer state has stabilizer dimension $n$, whereas states with $\dim\mathrm{Weyl}(\rho)=n-t$ retain $n-t$ independent Pauli symmetries, where $t$ is the stabilizer nullity. Such states include classes generated by Clifford circuits with a small number of non-Clifford gates~\cite{leone2024learning,chia2024efficient,grewal2025efficient}. We focus on $t=O(\log n)$. This regime is natural in the single-copy setting, since any potentially adaptive single-copy protocol for stabilizer group learning requires $\Omega(2^t)$ samples in the worst case~\cite{cho2026single}.

Previous work considered shallow measurements based on the block Clifford ensemble $\mathrm{Cl}(k)^{\otimes n/k}$ with $k=O(\log n)$~\cite{cho2026single}. These measurements efficiently recover the stabilizer structure for all but an exponentially small fraction of states with stabilizer dimension $n-t$, but do not provide a worst-case guarantee. Here, we show that the two-layer Clifford ensemble provides a worst-case guarantee.

Existing analyses based on global Clifford measurements use the transitive action of $\mathrm{Cl}(n)$ on non-identity Pauli operators~\cite{chia2024efficient,grewal2025efficient}. For the two-layer ensemble, however, different non-identity Pauli operators can have different probabilities of being mapped into $\mathcal{Z}$ and hence different visibilities under computational basis measurements. Therefore, the argument based on global Clifford transitivity does not apply directly. 

Let $S=\mathrm{Weyl}(\rho)$. For every codimension-one subspace $T\subset S$, we show that
\begin{align}\label{main:eq:stab}
    \Pr_{C}\!\left[
    C(S\setminus T)\cap\mathcal{Z}\neq\emptyset
    \right]
    =\Omega(2^{-t}),
\end{align}
where $C$ is sampled from the two-layer Clifford ensemble. Let $S_{\mathrm{vis}}\subsetneq S$ denote the subspace covered so far. Choosing a codimension-one subspace $T\subset S$ with $S_{\mathrm{vis}}\subseteq T$, Eq.~\eqref{main:eq:stab} implies that a new generator independent of $S_{\mathrm{vis}}$ becomes visible with probability $\Omega(2^{-t})$. Since $\dim S=n-t$, $O(n2^t)$ measurement bases suffice to cover $S$, matching the asymptotic scaling of global Clifford measurements.

The stabilizer learning protocol is not formulated as an observable estimation problem and does not apply the inverse shadow channel. Nevertheless, its visibility guarantee can be analyzed using an auxiliary shadow estimator. For a suitable observable, Theorem~\ref{main:thm:1} controls the second moment needed to establish Eq.~\eqref{main:eq:stab} [Appendix~\ref{app:sec:stabilizer_structure_learning}].

Stabilizer group learning also serves as a subroutine for QST of a state $\rho$ with $\dim \mathrm{Weyl}(\rho)=n-t$. Once the stabilizer structure is identified, the remaining tomography problem can be reduced to a system of $t$ qubits and can be performed by measuring Pauli operators on the original state. Consequently, for $t=O(\log n)$, the above result gives a shallow-depth single-copy protocol with $\operatorname{poly}(n,1/\epsilon)$ sample and time complexity for learning $\rho$ to trace norm error $\epsilon$~\cite{cho2026single}, without requiring Bell measurements~\cite{leone2024learning,hangleiter2024bell,grewal2025efficient} or global Clifford measurements~\cite{chia2024efficient,grewal2025efficient}.

\section{Discussion}
Our results show that the CP-order condition in Eq.~\eqref{main:eq:cp} alone does not determine learning performance. Specifically, the approximate unitary $3$-design condition is neither necessary nor sufficient for the Frobenius-norm variance guarantee of Theorem~\ref{main:thm:1}. Approximate strong unitary designs provide sufficient conditions. However, for circuits composed of independent local Haar-random gates, satisfying the relative error strong design condition requires depth $\Omega(n)$ even with all-to-all connectivity~\cite{schuster2025strong}.

By directly analyzing the two-layer Clifford ensemble, we establish the Frobenius-norm variance bound of Theorem~\ref{main:thm:1} at depth $O(\log n)$ in a 1D architecture without ancillary qubits.
This variance bound transfers global Clifford guarantees to setting-efficient quantum state tomography, metrology for mixed states, and stabilizer structure learning. The inverse shadow channel also admits an exact MPO representation with polynomial bond dimension, enabling post-processing without approximating the inverse channel.

For multi-shot observable estimation, matching the variance floor of a global exact unitary $4$-design within a constant factor uniformly over all states and observables imposes stronger requirements. The two-layer Haar ensemble requires block size $k=\Omega(n)$, while circuits composed of independent Haar-random two-qubit gates require depth $\Omega(n/\log n)$ even with all-to-all connectivity. These limitations show that the learning guarantees attainable at shallow depth depend on the task and the statistical quantity being estimated.

Several questions remain open. 
The optimal tradeoff among connectivity, ancilla count, and circuit depth for achieving the variance bound in Theorem~\ref{main:thm:1} remains open.
In the multi-shot setting, it remains open whether other circuit architectures can attain the exact unitary $4$-design scaling at sublinear depth while retaining efficient classical post-processing. 
Finally, our analysis focuses primarily on non-adaptive measurements, and extending our results to adaptive measurement strategies is a natural direction for future work.

\begin{acknowledgments}
This work was supported by a National Research Foundation of Korea (NRF) grant funded by the Korean Government (Ministry of Science and ICT (MSIT)) (RS-2023-NR057243, RS-2023-00283291, RS-2024-00413957, SRC Center for Quantum Coherence in Condensed Matter RS-2023-00207732, RS-2023-NR077112 and Quantum Technology R\&D Leading Program (Quantum Computing) RS-2024-00442994) and a core center program grant funded by the Ministry of Education (No. 2021R1A6C101B418). 
C.O. was supported by the National Research Foundation of Korea Grants (No. RS-2024-00431768 and No. RS-2025-00515456) funded by the Korean government (Ministry of Science and ICT (MSIT)) and the Institute of Information \& Communications Technology Planning \& Evaluation (IITP) Grants funded by the Korean government (MSIT) (No. RS-2024-00437284, No. IITP-2025-RS-2025-02283189 and No. IITP-2025-RS-2025-02263264) by Global Partnership Program of Leading Universities in Quantum Science and Technology (RS-2025-08542968) through the National Research Foundation of Korea~(NRF) funded by the Korean government (Ministry of Science and ICT(MSIT)).
ChatGPT 5.6 was used to assist in refining proof ideas, checking errors, and improving the presentation of the manuscript. All proofs and results were independently written and verified by the authors, who take full responsibility for their correctness.
\end{acknowledgments}

\bibliographystyle{apsrev4-2}
\bibliography{main}

\clearpage
\onecolumngrid
\appendix

\section{RELATED WORK}\label{appx:related_works}

\paragraph{\textbf{Approximate unitary $t$-designs.}}
Haar-random unitaries provide a standard model of quantum randomness but generally require exponentially large circuit complexity. Many applications require agreement with the Haar measure only up to a fixed moment order $t$. Unitary $t$-designs provide this restricted form of randomness by reproducing the corresponding Haar moments. The Clifford group is an exact unitary $3$-design but not a unitary $4$-design, and no general efficient construction of exact unitary $t$-designs is known for $t\geq4$. Approximate unitary designs relax exact moment matching by allowing a controlled error. Recent work has shown that designs satisfying the relative error condition in Eq.~\eqref{main:eq:cp} can be implemented without ancillary qubits in a 1D architecture at depth $O(t\,\operatorname{poly}\log t\,\log(n/\epsilon_{\mathrm{des}}))$~\cite{schuster2025random, laracuente2026approximate}. This depth is logarithmic in $n$ for fixed $t$ and $\epsilon_{\mathrm{des}}$.

\paragraph{\textbf{Approximate strong unitary $t$-designs.}} 
The standard approximate-design condition does not generally control mixed moments involving $U$, $U^\dagger$, $U^T$, and $U^*$, which arise in out-of-time-order correlators~\cite{roberts2017chaos} and the Hayden--Preskill protocol~\cite{yoshida2017efficient}. Approximate strong unitary designs were introduced to control such moments~\cite{schuster2025strong}, allowing the corresponding Haar guarantees to be recovered for many state-learning tasks. For circuits composed of independent local Haar-random gates, however, satisfying the relative-error condition with constant error $\epsilon_{\mathrm{des}}$ requires depth $\Omega(n)$ already at order two, regardless of geometry. Thus, compared to ordinary approximate designs, which admit logarithmic-depth constructions, approximate strong designs lose the advantage of shallow measurements in this circuit model.

\paragraph{\textbf{Shallow shadows.}}
Shadow tomography and classical shadows estimate selected properties of a quantum state without reconstructing the entire state~\cite{aaronson2018shadow,huang2020predicting}. In classical shadows, randomized single-copy measurement data can be reused to estimate many observables that can be chosen later. The measurement ensemble determines the statistical and experimental cost. Product single-qubit Clifford $\mathrm{Cl}(1)^{\otimes n}$ measurements are effective for local observables, whereas global Clifford measurements $\mathrm{Cl}(n)$ provide favorable Frobenius-norm guarantees for general observables but require linear-depth circuits in one dimension without ancillary qubits. Shallow shadows interpolate between these regimes using low-depth brickwork circuits~\cite{akhtar2023scalable,bertoni2024shallow} or block Clifford ensembles $\mathrm{Cl}(k)^{\otimes n/k}$ with $k=O(\log n)$~\cite{cho2025shallow}. Approximate unitary designs provide another logarithmic-depth approach with state uniform guarantees, although the corresponding Haar-inverse estimator is generally biased and does not ensure Frobenius-norm variance scaling~\cite{schuster2025random}.

\paragraph{\textbf{Setting-efficient quantum state tomography.}}

Quantum state tomography reconstructs an unknown state $\rho$. It is widely used for device characterization and as a subroutine in various quantum learning algorithms~\cite{qst_ion,qst_super,qst_nv,huang2022provably}. Previous works have mainly focused on minimizing the total sample complexity, for which tight lower bounds and near-optimal protocols are known~\cite{Lowe2025lower,guctua2020fast,cho2025sample}. In experiments, the number of distinct measurement bases $N_U$ is also important because changing a basis can require recompiling the measurement circuit. Writing the total number of samples as $T=N_U N_S$, one can reduce this overhead by increasing the number of shots $N_S$ taken in each basis while keeping the total number of samples $T$ as small as possible. Several protocols have studied the resulting tradeoff between measurement settings and total samples~\cite{voroninski2013quantum,guctua2020fast, brandao2020fast,cho2025sample}. The present work reuses each shallow Clifford basis for multiple shots, reducing the number of settings $N_U$, while retaining near-optimal total sample complexity.

\paragraph{\textbf{Multiparameter quantum metrology.}}
Quantum metrology studies the estimation of parameters encoded in a quantum state $\rho_\theta$. The information obtained from a measurement is described by the Classical Fisher Information Matrix, whereas the Quantum Fisher Information Matrix characterizes the information available in the state. In multiparameter problems, optimal measurements can be state dependent and incompatible across different parameter directions. Randomized measurements provide a practical state-independent alternative. Unitary $3$-design measurements were shown to retain a constant fraction of the quantum Fisher information for broad classes of pure and low-rank states~\cite{zhou2026randomized}. For pure states, related guarantees have subsequently been obtained using shallow approximate designs and few-qubit randomized measurements~\cite{du2026complexity,mao2026near}. For the mixed-state regimes considered here, however, it had remained open whether shallow-depth measurements could replace global Clifford measurements.

\paragraph{\textbf{Stabilizer structure learning.}}
Stabilizer structure learning identifies the Pauli symmetries of a quantum state, described by its stabilizer group $\mathrm{Weyl}(\rho)$. For states with stabilizer dimension $n-t$, where $t$ is the stabilizer nullity, learning $\mathrm{Weyl}(\rho)$ reduces tomography to the remaining non-stabilizer degrees of freedom~\cite{grewal2025efficient,chia2024efficient}. Existing efficient protocols based on Bell sampling require joint measurements on multiple copies~\cite{grewal2025efficient,leone2024learning, hangleiter2024bell,chen2025stabilizer}, while single-copy protocols use global Clifford measurements. More recently, block Clifford ensembles $\mathrm{Cl}(k)^{\otimes n/k}$ were shown to learn the stabilizer structure efficiently for all but an exponentially small fraction of states when $k=O(\log n)$ and $t=O(\log n)$, while a worst-case guarantee remained open~\cite{cho2026single}.

\section{Preliminaries}
In this section, we introduce the notation and preliminary results used
in the subsequent analysis.

\subsection{Shadow channel}

We first review the shadow channel and its form for Clifford measurement ensembles. 
Let $\mathcal{E}$ be an ensemble of $n$-qubit unitaries. 
For a state $\rho$, the corresponding shadow channel is
\begin{equation}\label{app:eq:shadow_channel}
    \mathcal{M}_{\mathcal{E}}(\rho)
    =
    \mathbb{E}_{U\sim\mathcal{E}}
    \sum_{b\in\{0,1\}^n}
    \operatorname{Tr}
    \left(
        U\rho U^{\dagger}\ket{b}\!\bra{b}
    \right)
    U^{\dagger}\ket{b}\!\bra{b}U .
\end{equation}
Here, $U$ specifies the measurement basis and $b\in\{0,1\}^n$ is the measurement outcome. 
If $\mathcal{M}_{\mathcal{E}}$ is invertible, each measurement record $(U,b)$ gives the unbiased estimator $\hat{\rho}=\mathcal{M}_{\mathcal{E}}^{-1}(U^{\dagger}\ket{b}\!\bra{b}U)$. 
By the definition of the shadow channel, $\mathbb{E}_{U,b}[\hat{\rho}]=\rho$. Hence, for any Hermitian observable $O$, $\hat{o}=\operatorname{Tr}(O\hat{\rho})$ is an unbiased estimator of $\operatorname{Tr}(O\rho)$. The statistical error of this estimator is determined by its second moment. In the single-shot setting,
\begin{align}\label{app:eq:general_variance}
    \operatorname{Var}[\hat{o}]
    &=
    \mathbb{E}_{U\sim\mathcal{E}}
    \sum_{b\in\{0,1\}^n}
    \operatorname{Tr}
    \left(
        O\,
        \mathcal{M}_{\mathcal{E}}^{-1}
        \left(
            U^{\dagger}\ket{b}\!\bra{b}U
        \right)
    \right)^2
    \operatorname{Tr}
    \left(
        U\rho U^{\dagger}\ket{b}\!\bra{b}
    \right)
    -
    \operatorname{Tr}(O\rho)^2 .
\end{align}

We now restrict to Clifford ensembles, $\mathcal E\subset\operatorname{Cl}(n)$. Let $\mathcal P_n=\{I,X,Y,Z\}^{\otimes n}$ denote the $n$-qubit Pauli operators and let $\mathcal Z=\{I,Z\}^{\otimes n}$. 
For any Clifford ensemble, the shadow channel is diagonal in the Pauli basis~\cite{cho2025shallow}:
\begin{equation}
    \mathcal M_{\mathcal E}(P)=m_P P
    \label{app:eq:visibility_channel}
\end{equation}
where
\begin{equation}
    m_P
    :=
    \mathbb E_{U\sim\mathcal E}
    \left[
        \mathbf 1
        \left\{
            UPU^\dagger\in\pm\mathcal Z
        \right\}
    \right]
    =
    \Pr_{U\sim\mathcal E}
    \left[
        UPU^\dagger\in\pm\mathcal Z
    \right].
    \label{app:eq:visibility}
\end{equation}
We refer to $m_P$ as the visibility of $P$. Since the Pauli operators form an operator basis, Eq.~\eqref{app:eq:visibility_channel} also gives a simple condition for informational completeness. 
The shadow channel is invertible if and only if $m_P>0$ for every $P\in\mathcal{P}_n$. 
In this case,
\begin{equation}
    \mathcal{M}_{\mathcal{E}}^{-1}(P)
    =
    m_P^{-1}P.
\end{equation}
\begin{fact}[Self-adjointness of the inverse shadow channel~\cite{huang2020predicting}]
\label{app:fact:dual}
    If $\mathcal M_{\mathcal E}$ is invertible, then
    $\mathcal M_{\mathcal E}^{-1}$ is self-adjoint with respect to the
    Hilbert--Schmidt inner product. Thus, for all Hermitian $A$ and $B$,
    \begin{equation}
        \operatorname{Tr}\left(
            A\,\mathcal M_{\mathcal E}^{-1}(B)
        \right)
        =
        \operatorname{Tr}\left(
            \mathcal M_{\mathcal E}^{-1}(A)\,B
        \right).
    \end{equation}
\end{fact}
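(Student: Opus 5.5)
The plan is to show first that $\mathcal M_{\mathcal E}$ itself is self-adjoint with respect to the Hilbert--Schmidt inner product, and then to pass self-adjointness to the inverse.

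\emph{Self-adjointness of $\mathcal M_{\mathcal E}$.} Write $\Pi_{U,b}=U^\dagger\ket b\!\bra bU$, a Hermitian rank-one projector. Then Eq.~\eqref{app:eq:shadow_channel} reads
\[
\mathcal M_{\mathcal E}(A)=\mathbb E_{U}\sum_b\operatorname{Tr}(\Pi_{U,b}A)\,\Pi_{U,b}
\]
for every operator $A$, since the defining formula is linear in $\rho$. Hence for Hermitian $A,B$,
\[
\operatorname{Tr}\bigl(B\,\mathcal M_{\mathcal E}(A)\bigr)=\mathbb E_U\sum_b\operatorname{Tr}(\Pi_{U,b}A)\operatorname{Tr}(\Pi_{U,b}B).
\]
The right-hand side is symmetric in $A$ and $B$, so $\operatorname{Tr}(B\,\mathcal M_{\mathcal E}(A))=\operatorname{Tr}(\mathcal M_{\mathcal E}(B)\,A)$.

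To extend this from Hermitian operators to all operators, use the Hilbert--Schmidt form $\langle X,Y\rangle=\operatorname{Tr}(X^\dagger Y)$. By sesquilinearity it suffices to check the Hermitian case, since Hermitian operators span the operator space over $\mathbb C$. Alternatively, one can note directly that $\mathcal M_{\mathcal E}$ is a nonnegative combination of the maps $X\mapsto\Pi\operatorname{Tr}(\Pi X)=\Pi\langle\Pi,X\rangle$, each of which is an orthogonal projector onto $\mathrm{span}\{\Pi\}$ and therefore self-adjoint and positive semidefinite.

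\emph{Passing to the inverse.} $\mathcal M_{\mathcal E}$ is a self-adjoint operator on a finite-dimensional inner-product space, and it is invertible by assumption. For any $X,Y$ set $X'=\mathcal M_{\mathcal E}^{-1}(X)$ and $Y'=\mathcal M_{\mathcal E}^{-1}(Y)$. Then
\[
\langle \mathcal M_{\mathcal E}^{-1}X,Y\rangle=\langle X',\mathcal M_{\mathcal E}Y'\rangle=\langle \mathcal M_{\mathcal E}X',Y'\rangle=\langle X,\mathcal M_{\mathcal E}^{-1}Y\rangle,
\]
so $\mathcal M_{\mathcal E}^{-1}$ is self-adjoint. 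For Hermitian $A$, $\langle A,\cdot\rangle=\operatorname{Tr}(A\,\cdot)$. Moreover $\mathcal M_{\mathcal E}$ preserves Hermiticity, since it maps Hermitian operators to real combinations of projectors, and therefore so does its inverse. These two facts turn the inner-product identity into the stated trace identity.

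\emph{Clifford case.} For a Clifford ensemble there is a quicker route: by Eq.~\eqref{app:eq:visibility_channel}, $\mathcal M_{\mathcal E}^{-1}$ is diagonal in the orthogonal Pauli basis with real eigenvalues $m_P^{-1}$, which gives self-adjointness immediately.

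No step is a real obstacle. The only care needed is the Hermitian-versus-complex bookkeeping, namely confirming that the trace pairing coincides with the Hilbert--Schmidt inner product on Hermitian inputs.
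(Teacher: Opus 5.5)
Your argument is correct. The paper does not prove this statement itself; it records it as a Fact and cites Huang et al., so there is no in-paper argument to compare against, and your proposal supplies a self-contained justification.

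Your Clifford remark is the quickest route in the setting where the paper mostly uses the fact. There $\mathcal M_{\mathcal E}(P)=m_PP$ holds with real $m_P$ on the orthogonal Pauli basis.

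The paper also invokes the fact for non-Clifford ensembles: the Haar-block two-layer ensemble in the multi-shot lower bound, and approximate strong designs in the ensemble-dependent post-processing section. In those cases the Pauli-diagonal shortcut still works, but only through the Pauli-invariance assumption. Your general argument needs no invariance at all.

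One simplification is available. The identity
$\operatorname{Tr}(B\,\mathcal M_{\mathcal E}(A))=\mathbb E_U\sum_b\operatorname{Tr}(\Pi_{U,b}A)\operatorname{Tr}(\Pi_{U,b}B)$
is bilinear and symmetric for \emph{all} operators $A,B$, not only Hermitian ones. Substituting $A=\mathcal M_{\mathcal E}^{-1}(X)$ and $B=\mathcal M_{\mathcal E}^{-1}(Y)$ gives
$\operatorname{Tr}(X\,\mathcal M_{\mathcal E}^{-1}(Y))=\operatorname{Tr}(\mathcal M_{\mathcal E}^{-1}(X)\,Y)$
for all $X,Y$ at once. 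This bypasses both the sesquilinear extension and the Hermiticity-preservation step.

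That step also appears slightly out of order in your write-up. You already need Hermiticity of $\mathcal M_{\mathcal E}(B)$ when you identify $\operatorname{Tr}(\mathcal M_{\mathcal E}(B)A)$ with $\langle\mathcal M_{\mathcal E}(B),A\rangle$, but you only establish it in the last paragraph.
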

Expanding $O=\frac{1}{d}\sum_P\operatorname{Tr}(OP)P$, where $d=2^n$, the single-shot observable estimator can therefore be written as
\begin{equation}\label{app:eq:pauli_estimator}
    \hat{o}
    =
    \frac{1}{d}
    \sum_{P\in\mathcal{P}_n}
    \operatorname{Tr}(OP)\,
    m_P^{-1}
    \operatorname{Tr}
    \left(
        UPU^{\dagger}\ket{b}\!\bra{b}
    \right).
\end{equation}
To describe the second moment of Eq.~\eqref{app:eq:pauli_estimator}, we define the correlated visibility of two Pauli operators~\cite{bertoni2024shallow, cho2025sample},
\begin{equation}\label{app:eq:correlated_visibility}
    \tau(P,Q)
    :=
    \mathbb{E}_{U\sim\mathcal{E}}
    \left[
        \mathbf{1}
        \{UPU^{\dagger}\in\pm\mathcal{Z}\}
        \mathbf{1}
        \{UQU^{\dagger}\in\pm\mathcal{Z}\}
    \right].
\end{equation}
Equivalently, $\tau(P,Q)$ is the probability that $P$ and $Q$ are simultaneously visible in the same measurement basis. 
In particular, $\tau(P,Q)=0$ whenever $P$ and $Q$ anticommute, since two anticommuting Pauli operators cannot both be mapped into the commuting set $\pm\mathcal{Z}$. Using Eqs.~\eqref{app:eq:visibility} and~\eqref{app:eq:correlated_visibility}, the variance in Eq.~\eqref{app:eq:general_variance} becomes
\begin{equation}\label{app:eq:cliff_Ord_variance}
    \operatorname{Var}[\hat{o}]
    =
    \sum_{P,Q\in\mathcal{P}_n}
    \frac{
        \operatorname{Tr}(OP)
        \operatorname{Tr}(OQ)
    }{d^2}
    \frac{\tau(P,Q)}{m_Pm_Q}
    \operatorname{Tr}(\rho PQ)
    -
    \operatorname{Tr}(O\rho)^2 .
\end{equation}
Thus, the dependence of the variance on the Clifford ensemble is entirely captured by $m_P$ and $\tau(P,Q)$. 

Recall that the two-layer Clifford ensemble $\mathcal E_{2\mathrm L}$ consists of two staggered layers of $k$-qubit unitaries. Each layer is a tensor product of unitaries acting on non-overlapping contiguous blocks, with periodic boundary conditions~[Fig.~\ref{main:fig:1}a]. The block unitaries are sampled independently and uniformly from $\mathrm{Cl}(k)$. A unitary $U\sim\mathcal E_{2\mathrm L}$ can be written as $U=U_2U_1$, where $U_1$ and $U_2$ are the products of the block unitaries in the first and second layers, respectively. For $\ell\in\{1,2\}$, let $w_{k,\ell}(P)$ denote the number of blocks in layer $\ell$ on which $P$ acts nontrivially.

For the uniform $n$-qubit Clifford ensemble $\mathcal E=\operatorname{Cl}(n)$, the visibility $m_P$ is
\begin{equation}
    m_{P}
    =
    \begin{cases}
        1 & P=I,\\[2pt]
        \dfrac{1}{2^n+1} & P\neq I.
    \end{cases}
\end{equation}
Similarly, the correlated visibility $\tau(P,Q)$ is
\begin{equation}\label{app:eq:global_cliff_Ord_correlated_visibility}
    \tau(P,Q;n)
    =
    \begin{cases}
        1,
        & P=Q=I, \\[2pt]
        \dfrac{1}{2^n+1},
        & P=I,\ Q\neq I, \\[6pt]
        \dfrac{1}{2^n+1},
        & P\neq I,\ Q=I, \\[6pt]
        \dfrac{1}{2^n+1},
        & P=Q\neq I, \\[6pt]
        \dfrac{2}{(2^n+1)(2^n+2)},
        & P,Q\neq I,\ P\neq Q,\ [P,Q]=0, \\[6pt]
        0,
        & \{P,Q\}=0 .
    \end{cases}
\end{equation}

For the block Clifford ensemble $\operatorname{Cl}(k)^{\otimes n/k}$, the independence of the blocks gives a product form for both visibility $m_P$ and $\tau(P,Q)$~\cite{cho2025sample}. Let $P_{i}$ and $Q_{i}$ denote the restrictions of $P$ and $Q$ to the $i$th block. Then
\begin{align}
    m_P
    &=(2^k+1)^{-w_{k}(P)},
    \label{app:eq:block_product_visibility}\\
    \tau(P,Q)
    &=
    \prod_{i=1}^{n/k}
    \tau(P_{i},Q_{i};k),
    \label{app:eq:block_product_correlated_visibility}
\end{align}
where $w_k(P)$ is the number of blocks for which $P_i\neq I$, and the local factor $\tau(P_i,Q_i;k)$ is given by Eq.~\eqref{app:eq:global_cliff_Ord_correlated_visibility} with $n$ replaced by $k$. The following bound will be used for the product terms appearing below.

\begin{lemma}
    For $a,b >0$, the following holds
    \begin{equation}
        \left(1+\frac{b}{2^{ak}}\right)^{n/k}
        =
        \left(1+\frac{b}{2^{ak}}\right)^{(2^{ak}/b)(bn/(k2^{ak}))}
        \le e^{bn/(k2^{ak})}
        =
        O(1),
    \end{equation}
    when $k2^{ak}=\Omega(n)$. 
\end{lemma}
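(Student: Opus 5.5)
The plan is to use the elementary inequality $1+x\le e^{x}$ for $x\ge 0$ and then exploit the hypothesis $k2^{ak}=\Omega(n)$ to control the exponent. The middle equality in the statement is only a rewriting of the exponent: $n/k=(2^{ak}/b)\cdot\bigl(bn/(k2^{ak})\bigr)$. It needs no argument beyond $a,b>0$, which makes $2^{ak}/b$ well defined and positive.

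\emph{Step 1 (exponential bound).} Set $x=b/2^{ak}>0$. From $\log(1+x)\le x$, or equivalently $1+x\le e^{x}$, raising both sides to the nonnegative power $n/k$ gives
\begin{equation}
\left(1+\frac{b}{2^{ak}}\right)^{n/k}\le \exp\!\left(\frac{n}{k}\cdot\frac{b}{2^{ak}}\right)=e^{bn/(k2^{ak})}.
\end{equation}
Monotonicity of $y\mapsto y^{n/k}$ on $[0,\infty)$ justifies this. Alternatively, one can view it through the grouped form $(1+x)^{1/x}\le e$ raised to the power $bn/(k2^{ak})$, which is exactly the displayed middle expression.

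\emph{Step 2 (constant exponent).} The condition $k2^{ak}=\Omega(n)$ means there is a constant $c>0$ with $k2^{ak}\ge c\,n$ for all sufficiently large $n$. Hence $bn/(k2^{ak})\le b/c$, and the right-hand side is at most $e^{b/c}$. This is a constant independent of $n$ and $k$ whenever $a$ and $b$ are fixed constants, which is how the lemma will be applied: for example, with $a=1/2$ in Theorem~\ref{main:thm:1} under $k2^{k/2}=\Omega(n)$, and $b$ an absolute constant coming from local transfer-matrix factors.

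There is no real obstacle here. The only point needing care is that the $O(1)$ hides a dependence on $a$, $b$, and the implied constant in $\Omega(n)$. In particular, $b$ must be constant, not growing with $n$. I would state this explicitly so that later product bounds over $n/k$ blocks legitimately absorb the factor into the constant.
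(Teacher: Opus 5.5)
Your proof is correct and follows the same route as the paper. The paper's argument is the displayed chain itself: rewrite the exponent, apply $(1+x)^{1/x}\le e$ (equivalent to your $1+x\le e^{x}$), and bound $bn/(k2^{ak})$ by a constant using $k2^{ak}=\Omega(n)$. Your caveat that the hidden constant depends on $a$, $b$, and the implied constant in $\Omega(n)$, so that $b$ must stay fixed, matches how the estimate is used later with $a=1/2$ and an absolute constant $b$.
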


Equation~\eqref{app:eq:visibility} is specific to Clifford unitary ensembles and does not hold for general unitary ensembles. For Pauli-invariant unitary ensembles, the following relation holds.

\begin{fact}[\cite{bu2024classical}]\label{app:fact:m_P(U)}
    Let $\mathcal E$ be a Pauli-invariant unitary ensemble. For every
    $P\in\mathcal P_n$,
    $\mathcal M_{\mathcal E}(P)=m_PP$, where
    \begin{align}
        m_P
        &=
        \frac{1}{d}
        \mathbb E_{U\sim\mathcal E}
        \sum_b
        \bra{b}UPU^\dagger\ket{b}^2
        =
        \mathbb E_{U\sim\mathcal E}[m_P(U)]\label{app:eq:pauli_invariant_m_P},
        \\
        m_P(U)
        &=
        \frac{1}{d}
        \sum_b
        \bra{b}UPU^\dagger\ket{b}^2.
        \label{app:eq:m_P_U}
    \end{align}
\end{fact}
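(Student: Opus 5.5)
The plan is to use Pauli invariance to twirl the shadow channel over the Pauli group. This forces $\mathcal M_{\mathcal E}$ to be diagonal in the Pauli basis, and the diagonal entry can then be read off directly. Throughout I take Pauli invariance to mean that $U\sim\mathcal E$ and $UQ$ have the same distribution for every $Q\in\mathcal P_n$. This is right multiplication, which is the relevant side because the state enters as $U\rho U^\dagger$. If the definition used instead fixes invariance on the other side, the same argument applies after relabeling.

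\textbf{Step 1 (twirl).} Write $A_{U,b}:=U^\dagger\ket b\!\bra bU$, so that $\mathcal M_{\mathcal E}(P)=\mathbb E_U\sum_b \operatorname{Tr}(PA_{U,b})\,A_{U,b}$. Substituting $U\to UQ$ leaves the expectation unchanged and gives $A_{UQ,b}=QA_{U,b}Q$. Using $\operatorname{Tr}(P\,QAQ)=\operatorname{Tr}(QPQ\,A)=\chi_Q(P)\operatorname{Tr}(PA)$, where $\chi_Q(P)=\pm1$ according to whether $P$ and $Q$ commute or anticommute, we obtain
\begin{equation*}
\mathcal M_{\mathcal E}(P)=\mathbb E_U\sum_b \operatorname{Tr}(PA_{U,b})\,\frac{1}{d^2}\sum_{Q\in\mathcal P_n}\chi_Q(P)\,QA_{U,b}Q .
\end{equation*}

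\textbf{Step 2 (character orthogonality).} Expand $A=\frac1d\sum_R\operatorname{Tr}(RA)R$. Since $QRQ=\chi_Q(R)R$ and $\frac{1}{d^2}\sum_Q\chi_Q(P)\chi_Q(R)=\delta_{PR}$ (the orthogonality of characters of $\mathbb Z_2^{2n}$), the inner average projects $A$ onto $\frac1d\operatorname{Tr}(PA)P$. Hence
\begin{equation*}
\mathcal M_{\mathcal E}(P)=\frac1d\,\mathbb E_U\sum_b\operatorname{Tr}(PA_{U,b})^2\,P .
\end{equation*}

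\textbf{Step 3 (identify $m_P$).} We have $\operatorname{Tr}(PA_{U,b})=\bra bUPU^\dagger\ket b$, which is real because $UPU^\dagger$ is Hermitian. This yields Eq.~\eqref{app:eq:pauli_invariant_m_P} with $m_P(U)$ as in Eq.~\eqref{app:eq:m_P_U}.

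None of these steps is a serious obstacle. The only points requiring care are fixing the correct side of the invariance and confirming that the character sum equals exactly $\delta_{PR}$ with the $1/d^2$ normalization. Both are routine.
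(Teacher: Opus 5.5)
Your argument is correct and complete. The substitution $U\mapsto UQ$ gives $A_{UQ,b}=QA_{U,b}Q$ and $\operatorname{Tr}(P\,QAQ)=\chi_Q(P)\operatorname{Tr}(PA)$. Averaging over $Q$ with $\frac{1}{d^2}\sum_Q\chi_Q(P)\chi_Q(R)=\delta_{PR}$ projects each snapshot onto $\frac1d\operatorname{Tr}(PA)P$, and the resulting coefficient is the stated $m_P=\mathbb E_U[m_P(U)]$, including $m_I=1$. The paper gives no proof of this Fact and simply imports it from the cited reference, so there is no in-paper argument to compare against. Your twirling derivation is a valid self-contained proof. It is equivalent to the covariance formulation $\mathcal M_{\mathcal E}(QXQ)=Q\,\mathcal M_{\mathcal E}(X)\,Q$: there, $\mathcal M_{\mathcal E}(P)$ must have the same commutation sign with every $Q$ as $P$ does, which forces it to be proportional to $P$.

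One correction concerns your side remark. Invariance on the other side would \emph{not} work ``after relabeling.'' With the state entering as $U\rho U^\dagger$, left multiplication $U\mapsto QU$ only permutes the outcomes $b$ (up to phases). It therefore leaves $\mathcal M_{\mathcal E}$ unchanged and cannot force Pauli-diagonality. For example, on one qubit let $V$ rotate the $Z$ axis to $\hat n=(1,1,1)/\sqrt3$, and take $\mathcal E=\{QV:Q\in\mathcal P_1\}$ uniformly. Then $\mathcal M_{\mathcal E}(X)=\frac13(X+Y+Z)$, which is not proportional to $X$. The hypothesis genuinely needed is invariance on the side adjacent to $\rho$, here $U\mapsto UQ$. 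The paper's applications supply exactly this. For the all-to-all circuit, each $Q_j$ commutes past the gates that do not touch qubit $j$ and is absorbed into the first Haar gate on that qubit. The strong-design lemma assumes invariance on both sides.
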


\begin{lemma}\label{app:lem:m_P_L_bound}
    For any unitaries $U,V$ and $P\in\mathcal P_n$,
    \begin{equation}
        |m_P(U)-m_P(V)|
        \leq
        4\|U-V\|_\infty.
    \end{equation}
\end{lemma}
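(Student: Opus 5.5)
Set $A=UPU^\dagger$ and $B=VPV^\dagger$. By Eq.~\eqref{app:eq:m_P_U},
\begin{equation}
    m_P(U)-m_P(V)=\frac1d\sum_b\left(\bra{b}A\ket{b}^2-\bra{b}B\ket{b}^2\right)
    =\frac1d\sum_b\left(\bra{b}A\ket{b}-\bra{b}B\ket{b}\right)\left(\bra{b}A\ket{b}+\bra{b}B\ket{b}\right).
\end{equation}
Both $A$ and $B$ are unitary and Hermitian, so every diagonal entry is real with modulus at most one. Hence $|\bra{b}A\ket{b}+\bra{b}B\ket{b}|\le 2$ and $|\bra{b}A-B\ket{b}|\le\|A-B\|_\infty$ for each $b$. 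Averaging over the $d$ values of $b$ gives
\begin{equation}
    |m_P(U)-m_P(V)|\le 2\|A-B\|_\infty .
\end{equation}

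It remains to show $\|A-B\|_\infty\le 2\|U-V\|_\infty$. Use the telescoping decomposition
\begin{equation}
    UPU^\dagger-VPV^\dagger=(U-V)PU^\dagger+VP(U-V)^\dagger .
\end{equation}
The operator norm is submultiplicative, $\|P\|_\infty=\|U\|_\infty=\|V\|_\infty=1$, and $\|(U-V)^\dagger\|_\infty=\|U-V\|_\infty$. Applying the triangle inequality to the two terms therefore gives $\|A-B\|_\infty\le 2\|U-V\|_\infty$, and combining with the previous display yields the constant $4$.

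The only point needing care is that the diagonal entries of $A$ and $B$ are real, so that the difference of squares factors cleanly. This holds because Pauli operators are Hermitian and conjugation by a unitary preserves Hermiticity.
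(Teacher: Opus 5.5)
Your proof is correct and gives the same constant, but your key estimate differs from the paper's. The paper applies Cauchy--Schwarz to the sum over $b$. It bounds $\bigl(\sum_b(\bra{b}A\ket{b}+\bra{b}B\ket{b})^2\bigr)^{1/2}\le\|A\|_2+\|B\|_2=2\sqrt d$ and $\bigl(\sum_b\bra{b}(A-B)\ket{b}^2\bigr)^{1/2}\le\|A-B\|_2$. It then uses the same telescoping decomposition together with $\|XY\|_2\le\|X\|_\infty\|Y\|_2$ to get $\|A-B\|_2\le 2\sqrt d\,\|U-V\|_\infty$. You instead bound each summand pointwise in operator norm, using only that the diagonal entries of a unitary have modulus at most one. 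This is more elementary and avoids Schatten-$2$ estimates entirely. The paper's route passes through the finer intermediate bound $|m_P(U)-m_P(V)|\le\frac{2}{\sqrt d}\|A-B\|_2$. That bound is never worse than your $2\|A-B\|_\infty$ and is much sharper when $A-B$ has low rank. Similarly, for a Hermitian $P$ that is not unitary, the paper's argument would give the constant $4\|P\|_2^2/d$ rather than your $4\|P\|_\infty^2$. For Pauli $P$ with the perturbation measured by $\|U-V\|_\infty$, these advantages disappear, and both arguments yield exactly $4\|U-V\|_\infty$. As a small aside, the difference-of-squares factorization does not require the diagonal entries to be real, so the point you flag as needing care is not actually load-bearing.
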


\begin{proof}
    By the Cauchy--Schwarz inequality and $\sum_b \bra{b}\!A\!\ket{b}^2\le \|A\|_2^2$ for a hermitian matrix $A$,
    \begin{align}
        |m_P(U)-m_P(V)|
        &=
        \frac{1}{d}
        \left|
        \sum_b 
        \bra{b}UPU^\dagger\ket{b}^2
        -
        \bra{b}VPV^\dagger\ket{b}^2
        \right|
        \\
        &=
        \frac{1}{d}
        \left|
        \sum_b 
        \left(\bra{b}UPU^\dagger\ket{b}+\bra{b}VPV^\dagger\ket{b}\right)
        \left(\bra{b}UPU^\dagger\ket{b}-\bra{b}VPV^\dagger\ket{b}\right)
        \right|
        \\
        &\leq
        \frac{1}{d}
        \left(
            \sum_b
            \left(
                \bra{b}UPU^\dagger\ket{b}
                +
                \bra{b}VPV^\dagger\ket{b}
            \right)^2
        \right)^{1/2}
        \left(
            \sum_b
            \left(
                \bra{b}UPU^\dagger\ket{b}
                -
                \bra{b}VPV^\dagger\ket{b}
            \right)^2
        \right)^{1/2}.
        \label{app:eq:m_P_lipschitz_cs}
    \end{align}
    The first factor satisfies
    \begin{align}
        &
        \left(
            \sum_b
            \left(
                \bra{b}UPU^\dagger\ket{b}
                +
                \bra{b}VPV^\dagger\ket{b}
            \right)^2
        \right)^{1/2}
        \nonumber\\
        &\leq
        \left(
            \sum_b
                \bra{b}UPU^\dagger\ket{b}^2
        \right)^{1/2}
        +
        \left(
            \sum_b
                \bra{b}VPV^\dagger\ket{b}^2
        \right)^{1/2}
        \nonumber\\
        &\leq
        \|UPU^\dagger\|_2+\|VPV^\dagger\|_2
        =
        2\sqrt d.
    \end{align}
    For the second factor,
    \begin{equation}
        \left(
            \sum_b
                \bra{b}
                \left(
                    UPU^\dagger-VPV^\dagger
                \right)
                \ket{b}^2
        \right)^{1/2}
        \leq
        \|UPU^\dagger-VPV^\dagger\|_2.
    \end{equation}
    Substituting these bounds into Eq.~\eqref{app:eq:m_P_lipschitz_cs} gives
    \begin{align}
        |m_P(U)-m_P(V)|
        &\leq
        \frac{2}{\sqrt d}
        \|UPU^\dagger-VPV^\dagger\|_2
        \nonumber\\
        &\leq
        \frac{2}{\sqrt d}
        \left(
            \|(U-V)PU^\dagger\|_2
            +
            \|VP(U^\dagger-V^\dagger)\|_2
        \right)
        \nonumber\\
        &\leq
        \frac{2}{\sqrt d}
        \left(
            \|U-V\|_\infty\|PU^\dagger\|_2
            +
            \|VP\|_2\|U^\dagger-V^\dagger\|_\infty
        \right)
        \nonumber\\
        &=
        4\|U-V\|_\infty.
    \end{align}
    Here, we used
    $\|AB\|_2\leq\|A\|_\infty\|B\|_2$,
    $\|P\|_2=\sqrt d$, and
    $\|U^\dagger-V^\dagger\|_\infty=\|U-V\|_\infty$.
\end{proof}

\subsection{Approximate unitary $t$-designs}

For a unitary ensemble $\mathcal E$, define its $t$-th moment channel by
\begin{equation}
    \Phi_{\mathcal E}^{(t)}(X)
    =
    \mathbb E_{U\sim\mathcal E}
    \left[U^{\otimes t}X(U^\dagger)^{\otimes t}\right].
\end{equation}
The ensemble is an exact unitary $t$-design if $\Phi_{\mathcal E}^{(t)}=\Phi_{\mathrm H}^{(t)}$, where $\Phi_{\mathrm H}^{(t)}$ is the Haar moment channel. Approximate designs relax this equality. We use the relative error definition~\cite{schuster2025random}: an ensemble $\mathcal E$ is an $\epsilon_{\mathrm{des}}$-approximate unitary $t$-design if
\begin{equation}\label{app:eq:approx_design}
    (1-\epsilon_{\mathrm{des}})
    \Phi_{\mathrm H}^{(t)}
    \preceq
    \Phi_{\mathcal E}^{(t)}
    \preceq
    (1+\epsilon_{\mathrm{des}})
    \Phi_{\mathrm H}^{(t)},
\end{equation}
where $\Phi_1\preceq\Phi_2$ means that $\Phi_2-\Phi_1$ is completely positive. Relative error approximate unitary $t$-designs can be implemented in a 1D ancilla-free architecture with circuit depth~\cite{schuster2025random}
\begin{equation}
    \mathrm{depth}
    =
    O\!\left(
        t\operatorname{poly}\log t\cdot
        \log\frac{n}{\epsilon_{\mathrm{des}}}
    \right).
\end{equation}
For fixed $t$ and $\epsilon_{\mathrm{des}}$, this is logarithmic in the system size.

For classical shadow estimation, Ref.~\cite{schuster2025random} uses the Haar inverse shadow channel $\mathcal M_{\mathrm H}^{-1}$ to derive guarantees from the approximate design condition. For measurement data generated using $U\sim\mathcal E$, this gives $\hat{\rho}_{\mathrm H} =\mathcal M_{\mathrm H}^{-1} (U^\dagger\ket b\!\bra bU)$ and $\hat{o}_{\mathrm H} =\operatorname{Tr}(O\hat{\rho}_{\mathrm H})$. Since $\mathcal M_{\mathrm H}^{-1}$ need not invert $\mathcal M_{\mathcal E}$, this estimator is generally biased. Although Ref.~\cite{schuster2025random} assumes $O\succeq0$, decomposing $O_0$ into its positive and negative parts gives the following bounds for arbitrary Hermitian $O$ when $\mathcal E$ is an $\epsilon_{\mathrm{des}}$-approximate unitary $3$-design:
\begin{align}\label{app:eq:approx_design_shadow}
    \left|
        \mathbb{E}[\hat{o}_{\mathrm H}]
        -
        \operatorname{Tr}(O\rho)
    \right|
    &=
    O\!\left(
        \epsilon_{\mathrm{des}}\|O_0\|_1
    \right),
    \nonumber\\
    \operatorname{Var}[\hat{o}_{\mathrm H}]
    &=
    O\!\left(
        \|O_0\|_2^2
        +
        \epsilon_{\mathrm{des}}\|O_0\|_1^2
    \right).
\end{align}
These bounds hold uniformly over the input state $\rho$. For observables with bounded trace norm, the additional terms in Eq.~\eqref{app:eq:approx_design_shadow} remain controlled. For example, if $O=\ket{\psi}\!\bra{\psi}$, then $\|O_0\|_2^2=1-1/d$ and $\|O_0\|_1=2(1-1/d)$. The bias is therefore $O(\epsilon_{\mathrm{des}})$ and the variance remains $O(1)$. Choosing $\epsilon_{\mathrm{des}}=O(\epsilon)$ gives the same sample complexity as global Clifford measurements for overlap estimation, with measurement circuits of depth $O(\log(n/\epsilon))$.

\subsection{Matrix Bernstein inequality with spectral clipping}

\begin{theorem}[Matrix Bernstein inequality~\cite{tropp2015introduction}]
    Let $Y_1,\ldots,Y_T$ be independent, centered, $d\times d$
    Hermitian random matrices. Suppose that, almost surely,
    $\|Y_i\|_\infty\leq R$, and that
    $\|\mathbb{E}[Y_i^2]\|_\infty\leq\sigma^2$. We refer to $R$ as the
    radius and to $\sigma^2$ as the variance parameter. Then, for every
    $t>0$,
    \begin{equation}\label{app:eq:matrix_bernstein}
        \Pr\left[
            \left\|
                \frac{1}{T}\sum_{i=1}^T Y_i
            \right\|_\infty
            \geq t
        \right]
        \leq
        2d\exp\left(
            -\frac{Tt^2}{2\sigma^2+2Rt/3}
        \right).
    \end{equation}
    Equivalently, with probability at least $1-\delta$,
    \begin{equation}\label{app:eq:matrix_bernstein_high_Probability}
        \left\|
            \frac{1}{T}\sum_{i=1}^T Y_i
        \right\|_\infty
        =
        O \left(
        \sqrt{\frac{\sigma^2\log(d/\delta)}{T}}
        +
        \frac{R\log(d/\delta)}{T}
        \right)
        .
    \end{equation}
\end{theorem}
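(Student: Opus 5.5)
We follow the standard matrix Laplace transform route (Tropp), applied to $S=\sum_{i=1}^T Y_i$ at threshold $Tt$. First reduce the spectral norm to extreme eigenvalues: $\|S\|_\infty\ge Tt$ means either $\lambda_{\max}(S)\ge Tt$ or $\lambda_{\max}(-S)\ge Tt$. The family $\{-Y_i\}$ satisfies the same hypotheses, so it suffices to prove $\Pr[\lambda_{\max}(S)\ge Tt]\le d\exp(-Tt^2/(2\sigma^2+2Rt/3))$. A union bound then produces the factor $2d$.

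\textbf{Step 1 (Laplace transform).} For $\theta>0$, Markov's inequality and $e^{\theta\lambda_{\max}(S)}=\lambda_{\max}(e^{\theta S})\le\operatorname{Tr}e^{\theta S}$ give
\begin{equation}
\Pr[\lambda_{\max}(S)\ge Tt]\le e^{-\theta Tt}\,\mathbb E\operatorname{Tr}e^{\theta S}.
\end{equation}

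\textbf{Step 2 (subadditivity of the matrix cumulant generating function).} This is the main obstacle, because $e^{A+B}\ne e^Ae^B$ for noncommuting matrices. We invoke Lieb's concavity theorem: $A\mapsto\operatorname{Tr}\exp(H+\log A)$ is concave on positive definite matrices. Conditioning on $Y_1,\dots,Y_{T-1}$ and applying Jensen's inequality to the last summand, then iterating, yields
\begin{equation}
\mathbb E\operatorname{Tr}e^{\theta S}\le\operatorname{Tr}\exp\Big(\sum_i\log\mathbb E e^{\theta Y_i}\Big).
\end{equation}

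\textbf{Step 3 (one-matrix Bernstein moment bound).} Use the scalar inequality $e^{x}\le 1+x+x^2 g(\theta)/\theta^2$ for $x=\theta y$ with $|y|\le R$. Here
\begin{equation}
g(\theta)=\frac{\theta^2/2}{1-R\theta/3},\qquad 0<\theta<3/R,
\end{equation}
which comes from $\sum_{q\ge2}\theta^q R^{q-2}/q!\le(\theta^2/2)\sum_{j\ge0}(R\theta/3)^j$. The transfer rule lifts this to Hermitian $Y_i$ with $\|Y_i\|_\infty\le R$. Using $\mathbb E Y_i=0$ and $1+a\le e^a$, we get
\begin{equation}
\mathbb E e^{\theta Y_i}\preceq I+g(\theta)\mathbb E Y_i^2\preceq\exp\big(g(\theta)\mathbb E Y_i^2\big).
\end{equation}
Since $\log$ is operator monotone, $\log\mathbb E e^{\theta Y_i}\preceq g(\theta)\mathbb E Y_i^2$. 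Monotonicity of $\operatorname{Tr}\exp$ then gives
\begin{equation}
\operatorname{Tr}\exp\Big(\sum_i\log\mathbb E e^{\theta Y_i}\Big)\le d\exp\Big(g(\theta)\,\lambda_{\max}\big(\textstyle\sum_i\mathbb E Y_i^2\big)\Big)\le d\,e^{g(\theta)T\sigma^2}.
\end{equation}

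\textbf{Step 4 (optimize and invert).} Combining the steps gives the bound $d\exp(-\theta Tt+g(\theta)T\sigma^2)$. Choose $\theta=t/(\sigma^2+Rt/3)$, which lies in $(0,3/R)$. A short calculation reduces the exponent to $-Tt^2/(2\sigma^2+2Rt/3)$, giving Eq.~\eqref{app:eq:matrix_bernstein}.

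For the high-probability form, set the right-hand side equal to $\delta$ and write $L=\log(2d/\delta)$. This gives the quadratic $Tt^2=L(2\sigma^2+2Rt/3)$ in $t$. Its positive root satisfies $t\le\sqrt{2\sigma^2L/T}+2RL/(3T)$, by $\sqrt{a+b}\le\sqrt a+\sqrt b$. Finally $\log(2d/\delta)=O(\log(d/\delta))$, which yields Eq.~\eqref{app:eq:matrix_bernstein_high_Probability}.
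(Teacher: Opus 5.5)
Your proof is correct and is essentially the same argument as the one in Tropp's monograph, which the paper cites without reproducing a proof: Lieb-based subadditivity of the matrix cumulant generating function, the Bernstein moment bound with $g(\theta)=\frac{\theta^2/2}{1-R\theta/3}$, and the choice $\theta=t/(\sigma^2+Rt/3)$. The only unaddressed corner is $\sigma^2=0$, where your $\theta$ equals $3/R$ and leaves the admissible interval; but then $\mathbb{E}[Y_i^2]=0$ forces $Y_i=0$ almost surely, so the bound holds trivially.
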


Even when an estimator has a bounded second moment, its operator norm may not admit a useful uniform upper bound, making it difficult to control the radius $R$ in the matrix Bernstein inequality. We therefore apply spectral clipping to bound the radius and control the resulting bias using the same second moment. For $K>0$, define
\begin{equation}\label{app:eq:clip_definition}
    \operatorname{clip}_K(x)
    =
    \begin{cases}
        -K, & x<-K,\\
        x,  & -K\leq x\leq K,\\
        K,  & x>K.
    \end{cases}
\end{equation}

If $A=\sum_j\lambda_j\ket{j}\!\bra{j}$ is Hermitian, then  $\operatorname{clip}_K(A) =\sum_j\operatorname{clip}_K(\lambda_j)\ket{j}\!\bra{j}$.
The clipped estimator has a spectral radius of at most $K$, but it is generally biased. The following lemma separately bounds the bias introduced by clipping and the statistical fluctuations of the empirical average.

\begin{lemma}[Matrix Bernstein inequality with spectral clipping]\label{appx:lem:clip}
    Let $\hat X_1,\ldots,\hat X_T$ be independent copies of a Hermitian
    random matrix $\hat X$ satisfying $\mathbb{E}[\hat X]=X$ and
    $\|\mathbb{E}[\hat X^2]\|_\infty\leq\sigma^2$.
    Define
    \begin{equation}
        \overline X^{(K)}_T
        =
        \frac{1}{T}\sum_{i=1}^T
        \operatorname{clip}_K(\hat X_i).
    \end{equation}
    Then, for every $K>0$ and $0<\delta<1$, with probability at least
    $1-\delta$,
    \begin{equation}
        \left\|\overline X^{(K)}_T-X\right\|_\infty
        =
        O \left(
        \frac{\sigma^2}{K}
        +
        \sqrt{\frac{\sigma^2\log(d/\delta)}{T}}
        +
        \frac{K\log(d/\delta)}{T}
        \right)\label{app:eq:clipped_matrix_concentration}.
    \end{equation}
\end{lemma}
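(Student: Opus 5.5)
We split the error into a deterministic bias and a random fluctuation by the triangle inequality,
\begin{equation}
    \left\|\overline X^{(K)}_T-X\right\|_\infty
    \leq
    \left\|\mathbb E[\operatorname{clip}_K(\hat X)]-X\right\|_\infty
    +
    \left\|\overline X^{(K)}_T-\mathbb E[\operatorname{clip}_K(\hat X)]\right\|_\infty ,
\end{equation}
and bound the first term by $\sigma^2/K$ and the second by the matrix Bernstein inequality in Eq.~\eqref{app:eq:matrix_bernstein_high_Probability}.

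\textbf{Bias.} Write $g(x)=x-\operatorname{clip}_K(x)$, a scalar function. It vanishes on $[-K,K]$ and satisfies $|g(x)|\leq |x|\mathbf 1\{|x|>K\}\leq x^2/K$ for all real $x$. Hence, spectrally, $-\hat X^2/K\preceq g(\hat X)\preceq \hat X^2/K$, since $g(\hat X)$ and $\hat X^2$ are simultaneously diagonal. Taking expectations preserves the Loewner order, so
\begin{equation}
    -\mathbb E[\hat X^2]/K\preceq X-\mathbb E[\operatorname{clip}_K(\hat X)]\preceq \mathbb E[\hat X^2]/K .
\end{equation}
Because $\mathbb E[\hat X^2]\succeq0$, this gives operator norm at most $\|\mathbb E[\hat X^2]\|_\infty/K\le\sigma^2/K$.

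\textbf{Fluctuation.} Set $Y_i=\operatorname{clip}_K(\hat X_i)-\mathbb E[\operatorname{clip}_K(\hat X)]$. These are independent and centered, with $\|Y_i\|_\infty\leq 2K$, so we may take $R=2K$. For the variance parameter,
\begin{equation}
    \mathbb E[Y_i^2]\preceq \mathbb E[\operatorname{clip}_K(\hat X)^2]\preceq \mathbb E[\hat X^2],
\end{equation}
using $\operatorname{clip}_K(x)^2\leq x^2$ pointwise, so $\sigma^2$ serves as the variance parameter. Eq.~\eqref{app:eq:matrix_bernstein_high_Probability} then gives the remaining two terms $\sqrt{\sigma^2\log(d/\delta)/T}+K\log(d/\delta)/T$ with probability at least $1-\delta$.

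The only delicate step is the operator-order argument for the bias. One needs that the Loewner inequality between functions of the same Hermitian matrix follows from the scalar inequality, and that $\|A\|_\infty\le\|B\|_\infty$ whenever $-B\preceq A\preceq B$ with $B\succeq0$. Both are standard, so no real obstacle remains.
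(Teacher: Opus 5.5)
Your proof is correct and follows the paper's argument essentially step for step. Both use the triangle-inequality split into a clipping bias, bounded by $\sigma^2/K$ via $|x-\operatorname{clip}_K(x)|\le x^2/K$, and a fluctuation term handled by matrix Bernstein with radius $2K$ and variance parameter $\sigma^2$ via $\operatorname{clip}_K(x)^2\le x^2$. Your explicit Loewner-order justification of the bias step (functional calculus, then $-B\preceq A\preceq B$ with $B\succeq 0$ implying $\|A\|_\infty\le\|B\|_\infty$) spells out a detail the paper only asserts.
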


\begin{proof}
    The scalar inequalities $|x-\operatorname{clip}_K(x)|\leq x^2/K$ and $\operatorname{clip}_K(x)^2\leq x^2$ imply
    \begin{equation}
        \left\|
            \mathbb{E}[\operatorname{clip}_K(\hat X)]-X
        \right\|_\infty
        \leq
        \frac{\sigma^2}{K},
        \qquad
        \mathbb{E}[(\operatorname{clip}_K(\hat X)-\mathbb{E}[\operatorname{clip}_K(\hat X)])^2]
        \preceq
        \mathbb{E}[\operatorname{clip}_K(\hat X)^2]
        \preceq
        \mathbb{E}[\hat X^2].
    \end{equation}
    Therefore, the estimation error can be bounded as follows:
    \begin{align}
        \left\|\overline X^{(K)}_T-X\right\|_\infty
        &\leq
        \left\|
            \overline X^{(K)}_T
            -
            \mathbb{E}[\operatorname{clip}_K(\hat X)]
        \right\|_\infty
        +
        \left\|
            \mathbb{E}[\operatorname{clip}_K(\hat X)]-X
        \right\|_\infty \\
        &\leq
        \left\|
            \overline X^{(K)}_T
            -
            \mathbb{E}[\operatorname{clip}_K(\hat X)]
        \right\|_\infty
        +
        \frac{\sigma^2}{K} \\
        &=
        O\left(
            \frac{\sigma^2}{K}
            +
            \sqrt{\frac{\sigma^2\log(d/\delta)}{T}}
            +
            \frac{K\log(d/\delta)}{T}
        \right).
    \end{align}
    To obtain the last line, apply the matrix Bernstein inequality (Eq.~\eqref{app:eq:matrix_bernstein_high_Probability}) to $Y_i=\operatorname{clip}_K(\hat X_i) -\mathbb{E}[\operatorname{clip}_K(\hat X)]$. These matrices satisfy $\mathbb{E}[Y_i]=0$, $\|Y_i\|_\infty\leq \|\operatorname{clip}_K(\hat X_i)\|_\infty+\|\mathbb{E}[\operatorname{clip}_K(\hat X)]\|_\infty \leq2K$, and
    \begin{align}
        \mathbb{E}[Y_i^2]
        &=
        \mathbb{E}[\operatorname{clip}_K(\hat X)^2]
        -
        \mathbb{E}[\operatorname{clip}_K(\hat X)]^2 \\
        &\preceq
        \mathbb{E}[\operatorname{clip}_K(\hat X)^2]
        \preceq
        \mathbb{E}[\hat X^2],
    \end{align}
    and hence
    $\|\mathbb{E}[Y_i^2]\|_\infty\leq\sigma^2$.
\end{proof}

\subsection{A variance lower bound for arbitrary estimators}
\label{app:subsec:chapman_robbins}

Consider an arbitrary real-valued estimator $\hat o=\hat o(z)$ computed from a measurement outcome $z$ whose distribution is either $p$ or $q$. Suppose that $p$ and $q$ are close (small chi-square divergence), while the expectations of $\hat o$ under these distributions are well separated. If $\hat o$ had small variance under $q$, then its values would be concentrated around $\mathbb E_{z\sim q}[\hat o(z)]$. Since $p$ and $q$ assign similar weights to the same outcomes, changing the distribution from $q$ to $p$ could not substantially change the expectation of $\hat o$.
Therefore, a large separation between the two expectations despite a small chi-square divergence requires a large variance. The following theorem quantifies this tradeoff without assuming that the estimator is unbiased or linear.

\begin{theorem}[Hammersley--Chapman--Robbins inequality~\cite{nishiyama2019new}]
\label{app:thm:var_lower}
    Let $p$ and $q$ be probability distributions on a common sample
    space $\mathcal Z$, with $q(z)>0$ whenever $p(z)>0$. Let
    $\hat o=\hat o(z)$ be any real-valued estimator with finite second
    moment under $q$. Define
    \begin{equation}
        \chi^2\!\left(
            p
            \middle\Vert
            q
        \right)
        :=
        \mathbb E_{z\sim q}
        \left[
            \left(
                \frac{p(z)}{q(z)}-1
            \right)^2
        \right].
    \end{equation}
    If this quantity is finite and nonzero, then
    \begin{equation}
        \operatorname{Var}_{z\sim q}
        \!\left[
            \hat o(z)
        \right]
        \geq
        \frac{
            \left|
                \mathbb E_{z\sim p}[\hat o(z)]
                -
                \mathbb E_{z\sim q}[\hat o(z)]
            \right|^2
        }{
            \chi^2\!\left(
                p
                \middle\Vert
                q
            \right)
        }.
    \end{equation}
\end{theorem}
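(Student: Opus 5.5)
The plan is to write the difference of expectations as a covariance under $q$ and then apply the Cauchy--Schwarz inequality. No structure of $\hat o$ beyond a finite second moment under $q$ is needed.

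First, set $L(z):=p(z)/q(z)$ on the support of $q$; in the general measure-theoretic case this is the Radon--Nikodym derivative $dp/dq$, which exists because of the assumption $q(z)>0$ whenever $p(z)>0$. That same assumption gives $\sum_{z:q(z)>0}p(z)=1$, so
\begin{equation}
    \mathbb E_{z\sim q}[L(z)-1]=0
    \qquad\text{and}\qquad
    \mathbb E_{z\sim p}[\hat o(z)]=\mathbb E_{z\sim q}[L(z)\hat o(z)].
\end{equation}
Write $\mu_q:=\mathbb E_{z\sim q}[\hat o(z)]$. Subtracting the two identities, and using the first one to insert the centering constant at no cost, gives
\begin{equation}
    \mathbb E_{z\sim p}[\hat o]-\mathbb E_{z\sim q}[\hat o]
    =
    \mathbb E_{z\sim q}\!\left[(L(z)-1)\,\hat o(z)\right]
    =
    \mathbb E_{z\sim q}\!\left[(L(z)-1)\bigl(\hat o(z)-\mu_q\bigr)\right].
\end{equation}

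Second, apply Cauchy--Schwarz in $L^2(q)$ to the last expression:
\begin{equation}
    \left|\mathbb E_{p}[\hat o]-\mathbb E_{q}[\hat o]\right|^2
    \le
    \mathbb E_{q}\!\left[(L-1)^2\right]\,
    \mathbb E_{q}\!\left[(\hat o-\mu_q)^2\right]
    =
    \chi^2(p\Vert q)\,\operatorname{Var}_{z\sim q}[\hat o(z)].
\end{equation}
Both factors on the right are finite by hypothesis, so the product $(L-1)(\hat o-\mu_q)$ is $q$-integrable. This justifies the interchanges above and shows that $\mathbb E_p[\hat o]$ is finite. Dividing by $\chi^2(p\Vert q)$, which is nonzero by assumption, yields the claimed inequality.

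The only step needing care is the support bookkeeping in the first step. Restricting $\mathbb E_q[L]$ to $\{q>0\}$ loses no $p$-mass exactly because of the absolute-continuity hypothesis; without it, $\mathbb E_q[L-1]$ could be negative and the centering trick would fail. Everything else is routine.
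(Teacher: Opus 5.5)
Your proof is correct: expressing $\mathbb E_{p}[\hat o]-\mathbb E_{q}[\hat o]$ as the $q$-covariance of $L-1$ and $\hat o-\mu_q$ and applying Cauchy--Schwarz is the standard proof of the Hammersley--Chapman--Robbins bound, and your handling of absolute continuity (so that $\mathbb E_q[L]=1$) and of integrability is sound. The paper does not prove this statement itself (it cites the result), but its preceding heuristic paragraph --- that a small $\chi^2$ divergence together with a small variance under $q$ forces a small mean shift --- is exactly what your argument makes rigorous.
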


\subsection{Entrywise inequalities for nonnegative matrices}

We collect several elementary properties of the entrywise order for nonnegative matrices and vectors that will be used below. For two matrices of the same size, we write 
\begin{equation}
    A\leq_{\mathrm{ew}}B
\end{equation}
if $A_{ij}\leq B_{ij}$ for every $i,j$, and use the same notation for vectors. If $A\leq_{\mathrm{ew}}\widetilde A$ and $x\leq_{\mathrm{ew}}\widetilde x$ are compatible and nonnegative, then $Ax\leq_{\mathrm{ew}}\widetilde A\widetilde x$. Likewise, if $B\leq_{\mathrm{ew}}\widetilde B$, then $AB\leq_{\mathrm{ew}}\widetilde A\widetilde B$. Iterating this property gives, for nonnegative square matrices of the same size,
\begin{equation}
    A_i\leq_{\mathrm{ew}}\widetilde A_i
    \ \text{for all }i
    \quad\Longrightarrow\quad
    A_1\cdots A_m
    \leq_{\mathrm{ew}}
    \widetilde A_1\cdots\widetilde A_m
    \quad\Longrightarrow\quad
    \operatorname{Tr}(A_1\cdots A_m)
    \leq
    \operatorname{Tr}(\widetilde A_1\cdots\widetilde A_m).
    \label{app:eq:elementwise_product_bound}
\end{equation}

\begin{definition}[Spectrum and spectral radius]
    For a square matrix $M$, let $\operatorname{spec}(M)$ denote its spectrum, the set of its eigenvalues. Its spectral radius is $\rho(M):=\max_{\mu\in\operatorname{spec}(M)}|\mu|$.
\end{definition}

We also use the following consequence of the Collatz--Wielandt inequality.

\begin{fact}[Collatz--Wielandt bound%
~\cite{maccluer2000many}]
\label{app:fact:cw_bound}
    Let $A$ be a nonnegative $r\times r$ matrix and let
    $g=(g_1,\ldots,g_r)^T$ satisfy $g_i>0$ for every $i$. If
    $Ag\leq_{\mathrm{ew}}\lambda g$, then
    \begin{equation}
        \rho(A)\leq\lambda,
        \qquad
        \operatorname{Tr}(A^m)
        \leq
        r\lambda^m
    \end{equation}
    for every positive integer $m$.
\end{fact}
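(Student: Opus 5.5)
The statement immediately above is the Collatz--Wielandt bound (Fact~\ref{app:fact:cw_bound}). The plan is to build a positive similarity that turns the eigenvector-type inequality into a row-sum bound.

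First, set $D=\operatorname{diag}(g_1,\ldots,g_r)$, which is invertible because every $g_i>0$. Then define $B=D^{-1}AD$, so that $B_{ij}=A_{ij}g_j/g_i\ge 0$. The hypothesis $Ag\leq_{\mathrm{ew}}\lambda g$ becomes
\[
\sum_j B_{ij}=\frac{(Ag)_i}{g_i}\le\lambda \quad\text{for every } i .
\]
So $B$ is nonnegative with all row sums at most $\lambda$, and therefore $\|B\|_{\infty\to\infty}\le\lambda$ in the induced max-row-sum norm.

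Next, use that spectral radius is bounded by any induced operator norm. Since $B$ is similar to $A$, it has the same spectrum, and so $\rho(A)=\rho(B)\le\|B\|_{\infty\to\infty}\le\lambda$.

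For the trace bound, note that $A^m=DB^mD^{-1}$, so $\operatorname{Tr}(A^m)=\operatorname{Tr}(B^m)$. The matrix $B^m$ is nonnegative, and submultiplicativity of the induced norm gives that its row sums are at most $\lambda^m$. Each diagonal entry is therefore bounded by its row sum, $(B^m)_{ii}\le\lambda^m$. Summing over the $r$ diagonal entries yields $\operatorname{Tr}(A^m)\le r\lambda^m$.

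A second route to the trace bound avoids norms entirely. Iterating the monotonicity in Eq.~\eqref{app:eq:elementwise_product_bound} gives $A^mg\leq_{\mathrm{ew}}\lambda^m g$. Then $(A^m)_{ii}g_i\le(A^mg)_i\le\lambda^m g_i$, so each diagonal entry is at most $\lambda^m$, and summing gives the same conclusion.

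There is no real obstacle. The only point to check is that $\lambda\ge0$ implicitly, since $Ag\ge0$ and $g>0$; this keeps $\lambda^m$ meaningful and keeps the inequalities in the right direction.
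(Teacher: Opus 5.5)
Your proof is correct. The paper gives no proof of this statement; it quotes it as a known fact from the cited reference. The usual textbook argument goes through Perron--Frobenius. One takes a nonzero nonnegative left eigenvector $y$ with $y^{T}A=\rho(A)\,y^{T}$, so $\rho(A)\,y^{T}g=y^{T}Ag\le\lambda\,y^{T}g$, and $y^{T}g>0$ because $g$ is strictly positive. Your diagonal similarity $B=D^{-1}AD$ combined with the max-row-sum norm reaches $\rho(A)\le\lambda$ without needing the existence of a nonnegative Perron vector, so it is the more elementary route. For the trace, once $\rho(A)\le\lambda$ is known you could also write $|\operatorname{Tr}(A^{m})|=|\sum_i\mu_i^{m}|\le r\,\rho(A)^{m}\le r\lambda^{m}$. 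Your entrywise argument instead gives the slightly stronger diagonal-by-diagonal bound $(A^{m})_{ii}\le\lambda^{m}$, from $A^{m}g\leq_{\mathrm{ew}}\lambda^{m}g$. That is exactly the form in which the paper later uses the fact: it only has an entrywise test-vector inequality for products such as $\mathbb{B}_{11}^{m}$ or the $2\times 2$ matrix $Q$. Your remark that $\lambda\ge 0$ is forced by $Ag\ge 0$ and $g>0$ is the right thing to check, since the iteration step multiplies an entrywise inequality by $\lambda$.
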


We refer to such a positive vector $g$ as a Collatz--Wielandt test vector. A valid value of $\lambda$ can be obtained by bounding the componentwise ratios $\max_i(Ag)_i/g_i$.

\section{Tensor-network formulation of two-layer Clifford measurements}

The technical step in our analysis is a tensor-network description of the two-layer block Clifford ensemble. Rather than comparing the ensemble with an approximate unitary design, we directly exploit the staggered geometry of the circuit. We divide the system into half-blocks of $k/2$ qubits. Each $k$-qubit block consists of two adjacent half-blocks, and the blocks in the second layer are shifted by one half-block relative to those in the first layer. This structure allows us to assign a local transfer matrix to each $k$-qubit block. Multiplying these matrices along the circuit gives matrix product state~(MPS) representations of the visibility $m_P$ and the correlated visibility $\tau(P,Q)$.
Labeling every Pauli operator on a $k$-qubit block would naively require $4^k$ labels. The transitivity of the Clifford group reduces them to two types for a single Pauli: identity and nonidentity. For a Pauli pair, six types remain: $(I,I)$, $(P,I)$, $(I,P)$, equal nonidentity Paulis $(P,P)$, distinct commuting Paulis $(P,Q)$ with $PQ=QP$, and anticommuting Paulis $(P,Q)$ with $PQ=-QP$.
The Clifford average is the same for all Pauli operators or Pauli pairs within each type. This reduces the local transfer matrix dimensions to sizes independent of the block Hilbert space dimension $2^k$. We use these tensor network representations for two purposes. First, in the proof of Theorem~\ref{main:thm:1}, we combine the MPS representations of $m_P$ and $\tau(P,Q)$ and bound the associated transfer matrix product. Second, using the factorization structure of $m_P$, we construct an exact MPS for $m_P^{-1}$ with bond dimension $\chi=n/k$. This permits exact implementation of the inverse shadow channel $\mathcal M^{-1}$, without using the Haar inverse or a numerical approximation.

\subsection{MPS representation of $m_P$}
\begin{figure*}[t]
    \centering
    \includegraphics[width=0.9\linewidth, trim=2.5cm 22.2cm 4cm 1cm, clip]{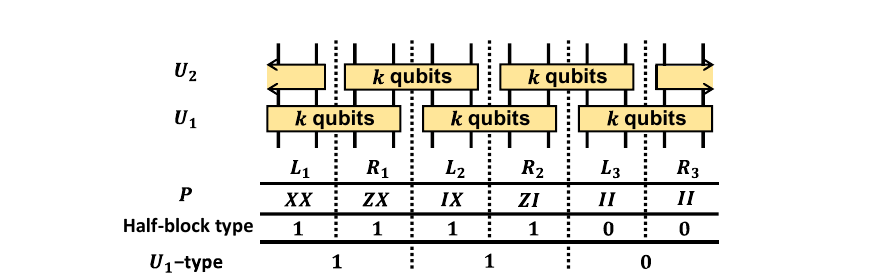}
    \caption{\textbf{Half-block decomposition of the visibility $m_P$.} The first layer acts on $(L_i,R_i)$, while the second layer acts on $(R_i,L_{i+1})$. A half block has type $0$ if the Pauli operator restricted to it is the identity and type $1$ otherwise. The type of a first layer block is obtained by combining the types of its two half blocks. In the example shown here, the half block type word is $(1,1,1,1,0,0)$ and the block support word is $(1,1,0)$. 
    }
    \label{app:fig:m_P}
\end{figure*}

The MPS representation of the visibility $m_P$ of the two-layer Clifford ensemble has been studied in Ref.~\cite{cho2025sample}. We give a different derivation here because the same construction will be used for the correlated visibility $\tau(P,Q)$. Assume that $k$ is even and divides $n$, and let $m=n/k$. We divide the system into $2m$ half-blocks, each containing $k/2$ qubits. The blocks in the first layer are labeled by $(L_i,R_i)$ for $i=1,\ldots,m$, while the blocks in the second layer are labeled by $(R_i,L_{i+1})$, with the cyclic convention $L_{m+1}=L_1$.
Accordingly, we write $U=U_2U_1$, where $U_1$ and $U_2$ are products of independent $k$-qubit Clifford unitaries acting on the blocks in the first and second layers, respectively.

For Pauli operators $P$ and $Q$, define $\operatorname{type}(P)=0$ if $P=I$ and $\operatorname{type}(P)=1$ otherwise. 
For $a,b\in\{0,1\}$, define $a\star b$ by the following table:
\begin{equation}\label{app:eq:single_pauli_type_composition}
    \begin{array}{cc|c}
        a & b & a\star b\\
        \hline
        0 & 0 & 0\\
        0 & 1 & 1\\
        1 & 0 & 1\\
        1 & 1 & 1
    \end{array}.
\end{equation}
Thus,
$\operatorname{type}(P\otimes Q)=\operatorname{type}(P)\star\operatorname{type}(Q)$. Let $P_i=P_{L_i}\otimes P_{R_i}$ be the restriction of $P$ to the $i$th block in the first layer, and define $r_i:=\operatorname{type}(P_i)$. The binary vector $\boldsymbol{r}(P)=(r_1,\ldots,r_m)$ will be called the block support vector of $P$. In the example shown in Fig.~\ref{app:fig:m_P}, the half-block types are $(1,1,1,1,0,0)$ and $\boldsymbol{r}(P)=(1,1,0)$.

The visibility $m_P$ depends on $P$ only through this block support vector. To see this, consider two Pauli operators $P$ and $Q$ with the same block support vector. The transitivity of the Clifford group implies that there is a block Clifford $V_1\in\mathrm{Cl}(k)^{\otimes m}$ such that $Q=V_1PV_1^\dagger$, up to phase. Since $U_1V_1$ has the same distribution as $U_1$, we have $m_Q=m_P$. Thus, the detailed Pauli operator within each active block is not needed when computing the visibility $m_P$.

For a $k$-qubit Pauli operator $P$, let $r\in\{0,1\}$ denote its support type, with $r=0$ for the identity and $r=1$ otherwise. Define
\begin{align}
    p_r(a,b)
    &:=
    \Pr_{V\sim\mathrm{Cl}(k)}
    \left[
    \mathrm{type}((VPV^\dagger)_L)=a ,
    \mathrm{type}((VPV^\dagger)_R)=b
    \right].
    \label{app}
\end{align}
If $r=0$, the Pauli operator on the block is the identity and hence $(a,b)=(0,0)$. If $r=1$, the transitivity property of the Clifford group makes the output $VPV^{\dagger}$ uniform over all $4^k-1$ non-identity Pauli operators. A half-block contains $4^{k/2}-1=2^k-1$ non-identity Pauli operators. Therefore,
\begin{equation}\label{app:eq:single_pauli_half_type_distribution}
    \begin{array}{c|cccc}
        (a,b)
        &(0,0)&(0,1)&(1,0)&(1,1)\\
        \hline
        p_0(a,b)
        &1&0&0&0\\[2pt]
        p_1(a,b)
        &0
        &\dfrac{1}{2^k+1}
        &\dfrac{1}{2^k+1}
        &\dfrac{2^k-1}{2^k+1}
    \end{array}.
\end{equation}

For a uniformly random $k$-qubit Clifford, let $s(0)$ and $s(1)$ denote the probabilities that a Pauli of type $0$ and type $1$, respectively, is mapped into $\pm\mathcal Z$. Clifford conjugation fixes the identity, so $s(0):=1$. A nonidentity Pauli is mapped uniformly to one of the $4^k-1$ nonidentity Paulis, of which $2^k-1$ lie in $\pm\mathcal Z$. Hence, $s(1):=(2^k-1)/(4^k-1)=(2^k+1)^{-1}$.

After applying $U_1$, define $a_i:=\operatorname{type}((U_1PU_1^\dagger)_{L_i})$ and $b_i:=\operatorname{type}((U_1PU_1^\dagger)_{R_i})$. The input to the $i$th block of $U_2$ acts on $(R_i,L_{i+1})$ and has type $b_i\star a_{i+1}$, where $a_{m+1}=a_1$. Thus, its probability of being mapped into $\pm\mathcal Z$ by the corresponding Clifford is $s(b_i\star a_{i+1})$. Let $w_{k,2}(P)$ denote the number of second-layer blocks on which $P$ is nonidentity. Independence of the Clifford blocks in both layers then gives

\begin{align}
    m_P
    &=
    \mathbb{E}_{U}
    \left[
        \mathbf{1}
        \left\{
            UPU^\dagger\in\pm\mathcal{Z}
        \right\}
    \right]
    =
    \mathbb{E}_{U_1}\mathbb{E}_{U_2}
    \left[
        \mathbf{1}
        \left\{
            U_2U_1PU_1^\dagger U_2^\dagger
            \in\pm\mathcal{Z}
        \right\}
    \right]
    \\
    &=
    \mathbb{E}_{U_1}
    \left[
        (2^k+1)^{-w_{k,2}(U_1PU_1^\dagger)}
    \right]
    =
    \mathbb{E}_{U_1}
    \left[
        \prod_{i=1}^{m}
        s(b_i\star a_{i+1})
    \right]
    \\
    &=
    \sum_{\substack{
        a_1,\ldots,a_m\in\{0,1\}\\
        b_1,\ldots,b_m\in\{0,1\}
    }}
    \prod_{i=1}^{m}
    p_{r_i}(a_i,b_i)\,
    s(b_i\star a_{i+1})\\
    &=
    \sum_{a_1,\ldots,a_m\in\{0,1\}}
    \prod_{i=1}^{m}
    \left[
        \sum_{b_i\in\{0,1\}}
        p_{r_i}(a_i,b_i)\,
        s(b_i\star a_{i+1})
    \right]
    \\
    &=
    \sum_{a_1,\ldots,a_m\in\{0,1\}}
    \prod_{i=1}^{m}
    (\mathbb{K}_{r_i})_{a_i,a_{i+1}}
    \\
    &=
    \operatorname{Tr}
    \left(
        \mathbb{K}_{r_1}\cdots\mathbb{K}_{r_m}
    \right),
    \label{app:eq:visibility_mps}
\end{align}
Here, for $r\in\{0,1\}$, the transfer matrix $\mathbb{K}_r$ is defined by
\begin{equation}\label{app:eq:visibility_transfer_definition}
    (\mathbb{K}_r)_{aa'}
    :=
    \sum_{b\in\{0,1\}}
    p_r(a,b)\,
    s(b\star a'),
    \qquad
    a,a'\in\{0,1\}.
\end{equation}
The row index $a$ records the type on $L_i$, the column index $a'$ records the type on $L_{i+1}$, and the type $b$ on $R_i$ is summed over. Using Eq.~\eqref{app:eq:single_pauli_half_type_distribution}, the two transfer matrices are
\begin{equation}\label{app:eq:visibility_transfer_matrices}
    \mathbb{K}_0
    =
    \begin{pmatrix}
        1 & \dfrac{1}{2^k+1}\\
        0 & 0
    \end{pmatrix},
    \qquad
    \mathbb{K}_1
    =
    \frac{1}{(2^k+1)^2}
    \begin{pmatrix}
        1 & 1\\
        2^{k+1} & 2^k
    \end{pmatrix},
\end{equation}
as in Ref.~\cite{cho2025sample}. Thus, $P\mapsto m_P$ has an exact MPS representation with bond dimension two. The physical index of this MPS is the block type $r_i\in\{0,1\}$.

\subsection{MPS representation of $m_P^{-1}$}\label{app:subsec:eff_mpo}

The MPS representation of $m_P$ obtained above does not immediately give an MPS representation of $m_P^{-1}$. In particular, taking the inverse of each transfer matrix or taking the reciprocal of each matrix entry does not give the entrywise reciprocal of the full contraction. For the two-layer ensemble, an exact representation can instead be obtained from the  fact that $\operatorname{rank}(\mathbb{K}_0)=1$.

Let $m=n/k$, and let $\boldsymbol{r}(P)=(r_1,\ldots,r_m)\in\{0,1\}^m$ be the block support vector of $P$ defined in the previous subsection. Thus,
\begin{equation}
    m_P
    =
    \operatorname{Tr}
    \left(
        \mathbb{K}_{r_1}\cdots\mathbb{K}_{r_m}
    \right).
\end{equation}
The matrix $\mathbb{K}_0$ can be written as
\begin{equation}\label{app:eq:visibility_rank_one_reset}
    \mathbb{K}_0
    =
    st^{T},
    \qquad
    s = (1, 0)^T,
    \qquad
    t = (1,  \frac{1}{2^k+1})^T.
\end{equation}
For a nonnegative integer $\ell$, define $\alpha_\ell:=t^{T}\mathbb{K}_1^\ell s$.

\begin{fact}[Appendix G in \cite{cho2025sample}]\label{app:fact:alpha}
    The eigenvalues of $\mathbb{K}_1$ are $\lambda_{\pm}=\frac{2^k+1\pm\sqrt{4^k+6\cdot2^k+1}}{2(2^k+1)^2}$.
    For every $\ell\geq0$,
    \begin{equation}
        \alpha_{\ell}
        =
        \alpha_{+}\lambda_{+}^{\ell}
        +
        \alpha_{-}\lambda_{-}^{\ell},
    \end{equation}
    where $\alpha_{\pm}=\frac{\alpha_1-\lambda_{\mp}}{\lambda_{\pm}-\lambda_{\mp}}, \alpha_1=\frac{3\cdot2^k+1}{(2^k+1)^3}.$
\end{fact}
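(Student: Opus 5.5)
The plan is a direct two-by-two computation: find the spectrum of $\mathbb{K}_1$, then use the fact that $\alpha_\ell$ satisfies the order-two linear recurrence given by Cayley--Hamilton, and fix the two coefficients from $\alpha_0$ and $\alpha_1$.

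\emph{Spectrum.} Write $\mathbb{K}_1=c\,B$ with $c=(2^k+1)^{-2}$ and
\[
B=\begin{pmatrix}1&1\\ 2^{k+1}&2^k\end{pmatrix}
\]
from Eq.~\eqref{app:eq:visibility_transfer_matrices}. Then $\operatorname{Tr}B=2^k+1$ and $\det B=2^k-2^{k+1}=-2^k$. The characteristic polynomial $\mu^2-(2^k+1)\mu-2^k$ has discriminant $(2^k+1)^2+4\cdot 2^k=4^k+6\cdot2^k+1>0$. Hence $B$ has the two distinct real roots
\[
\mu_\pm=\tfrac12\bigl(2^k+1\pm\sqrt{4^k+6\cdot2^k+1}\bigr).
\]
Multiplying by $c$ gives exactly the stated $\lambda_\pm$. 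In particular $\lambda_+\neq\lambda_-$, so $\mathbb{K}_1$ is diagonalizable.

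\emph{Form of $\alpha_\ell$.} Since the eigenvalues are distinct, $\mathbb{K}_1^\ell=\lambda_+^\ell\Pi_++\lambda_-^\ell\Pi_-$ with spectral projectors $\Pi_\pm$. It follows that
\[
\alpha_\ell=t^T\mathbb{K}_1^\ell s=\alpha_+\lambda_+^\ell+\alpha_-\lambda_-^\ell,
\qquad \alpha_\pm:=t^T\Pi_\pm s.
\]
Equivalently, Cayley--Hamilton gives $\alpha_{\ell+2}=(\lambda_++\lambda_-)\alpha_{\ell+1}-\lambda_+\lambda_-\alpha_\ell$, whose general solution has this form. So it only remains to identify $\alpha_\pm$ from the two initial values.

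\emph{Initial values.} Using $s=(1,0)^T$ and $t=(1,(2^k+1)^{-1})^T$ from Eq.~\eqref{app:eq:visibility_rank_one_reset}, we get $\alpha_0=t^Ts=1$. Next, $\mathbb{K}_1 s=c\,(1,2^{k+1})^T$, so
\[
\alpha_1=c\Bigl(1+\tfrac{2^{k+1}}{2^k+1}\Bigr)=\frac{3\cdot2^k+1}{(2^k+1)^3},
\]
which matches the stated value. Solving $\alpha_++\alpha_-=1$ and $\alpha_+\lambda_++\alpha_-\lambda_-=\alpha_1$ gives
\[
\alpha_+=\frac{\alpha_1-\lambda_-}{\lambda_+-\lambda_-},\qquad
\alpha_-=\frac{\lambda_+-\alpha_1}{\lambda_+-\lambda_-}=\frac{\alpha_1-\lambda_+}{\lambda_--\lambda_+}.
\]
This is the formula $\alpha_\pm=(\alpha_1-\lambda_\mp)/(\lambda_\pm-\lambda_\mp)$.

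There is no real obstacle here. The only point needing care is checking that $\lambda_+\neq\lambda_-$, so the Vandermonde system for $\alpha_\pm$ is solvable; this follows from the strictly positive discriminant.
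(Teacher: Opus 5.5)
Your proof is correct. The trace and determinant of $(2^k+1)^2\mathbb{K}_1$, the positive discriminant $4^k+6\cdot2^k+1$ (which gives distinct eigenvalues and hence a spectral decomposition valid for all $\ell\ge0$), and the fit to $\alpha_0=t^Ts=1$ and $\alpha_1=(3\cdot2^k+1)/(2^k+1)^3$ all check out against the paper's definitions of $\mathbb{K}_1$, $s$, and $t$.

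The paper gives no proof of its own here; it imports the Fact from Appendix G of the cited reference. Your direct $2\times2$ computation is the natural self-contained verification. It also makes explicit, via $\det\mathbb{K}_1<0$, the sign information $\lambda_-<0<\lambda_+$ that the paper later uses in Lemma~\ref{app:lem:visibility_run_lower_bound}.
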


For $\ell \ge 0$, Eq.~\eqref{app:eq:visibility_rank_one_reset} gives
\begin{equation}\label{app:eq:visibility_reset_identity}
    \mathbb{K}_0
    \mathbb{K}_1^\ell
    \mathbb{K}_0
    =
    st^{T} \mathbb{K}_1^{\ell} st^{T}
    =
    s (t^{T} \mathbb{K}_1^{\ell} s)t^{T}
    =
    \alpha_\ell\mathbb{K}_0.
\end{equation}
Suppose that $\boldsymbol{r}(P)$ contains at least one zero, equivalently $\boldsymbol{r}(P)\neq1^m$. Since the trace is invariant under cyclic permutations, we can write the support vector as $0\,1^{\ell_1}0\,1^{\ell_2}\cdots0\,1^{\ell_z}$, where $\sum_{i=1}^{z} \ell_i=|\boldsymbol{r}(P)|$ is the number of ones in the vector. Adjacent zeros are allowed and correspond to $\ell_j=0$. Repeated application of Eq.~\eqref{app:eq:visibility_reset_identity} and the fact that $\mathbb{K}_0^2=\mathbb{K}_0$ give 
\begin{equation}\label{app:eq:mP_decomp}
    m_P=\prod_{j=1}^{z}\alpha_{\ell_j}.
\end{equation}
For example,
\begin{align}
    m=6,\quad 
    &\boldsymbol{r}(P)=101101 \sim 011011 = 01^201^2 \longmapsto\\ &\quad m_P=\operatorname{Tr}(\mathbb{K}_{0}\mathbb{K}_{1}^{2}\mathbb{K}_{0}\mathbb{K}_{1}^{2})=\alpha_{2}\operatorname{Tr}(\mathbb{K}_{0}\mathbb{K}_{1}^{2})=\alpha_2^2.
\end{align}
For the remaining support vector $\boldsymbol{r}(P)=1^m$, we have
$m_P=\operatorname{Tr}(\mathbb K_1^m)$. Motivated by the factorization
in Eq.~\eqref{app:eq:mP_decomp}, define the following matrices:
\begin{equation}\label{app:eq:inverse_visibility_mps_matrices}
    \mathbb{L}_0
    =
    \sum_{\ell=0}^{m-1}
    \alpha_\ell^{-1}|0\rangle\langle\ell|,
    \qquad
    \mathbb{L}_1
    =
    \sum_{\ell=0}^{m-2}
    |\ell+1\rangle\langle\ell|
    +
    \frac{
        [\operatorname{Tr}(\mathbb{K}_1^m)]^{-1}
    }{m}
    |0\rangle\langle m-1|.
\end{equation}
The matrices in Eq.~\eqref{app:eq:inverse_visibility_mps_matrices} are not unique. The essential properties are $\operatorname{rank}(\mathbb L_0)=1$ and $\mathbb L_0\mathbb L_1^\ell\mathbb L_0 =\alpha_\ell^{-1}\mathbb L_0$. The case $\boldsymbol{r}(P)=1^m$ requires only a separate adjustment.
\begin{lemma}[Exact MPS representation of the inverse visibility $m_P^{-1}$]
    \label{app:lem:exact_inverse_visibility_mps}
    For every Pauli operator $P$ with block support vector
    $\boldsymbol{r}(P)=(r_1,\ldots,r_m)$,
    \begin{equation}\label{app:eq:exact_inverse_visibility_mps}
        m_P^{-1}
        =
        \operatorname{Tr}
        \left(
            \mathbb{L}_{r_1}\cdots\mathbb{L}_{r_m}
        \right).
    \end{equation}
\end{lemma}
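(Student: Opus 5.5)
The plan is to mirror, on the side of $\mathbb L$, the reset argument that gave Eq.~\eqref{app:eq:mP_decomp}, and then to handle the all-ones support vector separately. I work in the $m$-dimensional bond space with basis $\{\ket{\ell}\}_{\ell=0}^{m-1}$, where $\ket{\ell}$ records the length of the current run of ones.

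\textbf{Step 1: the action of $\mathbb L_1$.} Because of its first sum, $\mathbb L_1$ acts as a shift on this basis. It sends $\ket{\ell}\mapsto\ket{\ell+1}$ for $\ell\le m-2$, and it sends $\ket{m-1}\mapsto c\ket{0}$ with $c:=[\operatorname{Tr}(\mathbb K_1^m)]^{-1}/m$. Hence $\mathbb L_1^\ell\ket{0}=\ket{\ell}$ for $0\le\ell\le m-1$, and the wrap-around term is never used in this range.

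\textbf{Step 2: the rank-one reset identity.} Write $\mathbb L_0=\ket{0}\tau^{T}$ with $\tau^{T}=\sum_\ell\alpha_\ell^{-1}\bra{\ell}$. These reciprocals are well defined because $\alpha_\ell=t^T\mathbb K_1^\ell s>0$: the vectors $t,s$ and the matrix $\mathbb K_1$ are entrywise nonnegative, and $\mathbb K_1$ is entrywise positive. Then
\[
\mathbb L_0\mathbb L_1^\ell\mathbb L_0=\ket{0}\bigl(\tau^T\ket{\ell}\bigr)\tau^T=\alpha_\ell^{-1}\mathbb L_0,\qquad 0\le\ell\le m-1 .
\]
In particular, $\ell=0$ gives $\mathbb L_0^2=\alpha_0^{-1}\mathbb L_0=\mathbb L_0$, since $\alpha_0=t^Ts=1$. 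This reproduces the $\ell=0$ case (adjacent zeros) consistently with $\mathbb K_0^2=\mathbb K_0$.

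\textbf{Step 3: support vectors containing a zero.} If $\boldsymbol r(P)\neq1^m$, I use cyclicity of the trace to rotate the word into the form $0\,1^{\ell_1}0\,1^{\ell_2}\cdots0\,1^{\ell_z}$. Every run then satisfies $\ell_j\le m-1$, so the identity of Step 2 applies to each consecutive segment $\mathbb L_0\mathbb L_1^{\ell_j}\mathbb L_0$. Applying it $z-1$ times collapses the product to
\[
\prod_{j<z}\alpha_{\ell_j}^{-1}\,\operatorname{Tr}\bigl(\mathbb L_0\mathbb L_1^{\ell_z}\bigr).
\]
For the last factor, $\operatorname{Tr}(\mathbb L_0\mathbb L_1^{\ell_z})=\tau^T\mathbb L_1^{\ell_z}\ket{0}=\alpha_{\ell_z}^{-1}$. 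Comparing with Eq.~\eqref{app:eq:mP_decomp} gives $\operatorname{Tr}(\mathbb L_{r_1}\cdots\mathbb L_{r_m})=\prod_j\alpha_{\ell_j}^{-1}=m_P^{-1}$.

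\textbf{Step 4: the all-ones vector.} For $\boldsymbol r(P)=1^m$, $\mathbb L_1$ is a weighted cyclic permutation in which exactly one edge carries the weight $c$. Its $m$-th power is therefore $\mathbb L_1^m=c\,I_m$, and so
\[
\operatorname{Tr}(\mathbb L_1^m)=mc=[\operatorname{Tr}(\mathbb K_1^m)]^{-1}=m_P^{-1}.
\]
Here $m_P=\operatorname{Tr}(\mathbb K_1^m)$ by Eq.~\eqref{app:eq:visibility_mps}.

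\textbf{Main point of care.} The only real obstacle is the bookkeeping showing that within a single run the wrap-around entry of $\mathbb L_1$ is never activated. This holds because any word containing a zero has every run of length at most $m-1$, and the $1^m$ word is exactly the case that uses the weight $c$.
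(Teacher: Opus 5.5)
Your proof is correct and follows essentially the same route as the paper. Both use the shift action $\mathbb L_1^\ell\ket{0}=\ket{\ell}$ for $\ell\le m-1$, the rank-one reset identity $\mathbb L_0\mathbb L_1^\ell\mathbb L_0=\alpha_\ell^{-1}\mathbb L_0$ applied to the cyclically rotated word $0\,1^{\ell_1}\cdots0\,1^{\ell_z}$ and compared with Eq.~\eqref{app:eq:mP_decomp}, and $\mathbb L_1^m=c\,I_m$ for the all-ones word. Your extra remarks only make explicit what the paper leaves implicit: that $\alpha_\ell>0$, and the evaluation of the final factor $\operatorname{Tr}(\mathbb L_0\mathbb L_1^{\ell_z})=\alpha_{\ell_z}^{-1}$.
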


\begin{proof}
    We first establish the counterpart of Eq.~\eqref{app:eq:visibility_reset_identity}. For $0\leq\ell\leq m-1$, $\mathbb{L}_1^\ell|0\rangle=|\ell\rangle$. Therefore,
    \begin{align}
        \mathbb{L}_0\mathbb{L}_1^\ell\mathbb{L}_0
        &=
        \mathbb{L}_0\mathbb{L}_1^\ell
        \sum_{r=0}^{m-1}
        \alpha_{r}^{-1}|0\rangle\langle r|
        =
        \sum_{r=0}^{m-1}
        \alpha_{r}^{-1}
        \left(
            \mathbb{L}_0\mathbb{L}_1^\ell|0\rangle
        \right)
        \langle r|
        \nonumber
        \\
        &=
        \sum_{r=0}^{m-1}
        \alpha_{r}^{-1}
        \left(
            \mathbb{L}_0|\ell\rangle
        \right)
        \langle r|
        =
        \sum_{r=0}^{m-1}
        \alpha_{r}^{-1}\alpha_\ell^{-1}
        |0\rangle\langle r|
        =
        \alpha_\ell^{-1}\mathbb{L}_0.
        \label{app:eq:inverse_visibility_reset_identity}
    \end{align}
    In particular, setting $\ell=0$ and using $\alpha_0=1$ gives $\mathbb{L}_0^2=\mathbb{L}_0$. Suppose first that $\boldsymbol{r}(P)$ contains at least one zero ($\boldsymbol{r}(P)\neq 1^m$). By cyclicity of the trace, we may write it as $0\,1^{\ell_1}0\,1^{\ell_2}\cdots0\,1^{\ell_z}$. Repeated application of Eq.~\eqref{app:eq:inverse_visibility_reset_identity} gives $\operatorname{Tr}( \mathbb{L}_{r_1}\cdots\mathbb{L}_{r_m}) =\prod_{j=1}^{z}\alpha_{\ell_j}^{-1}$.
    Since $m_P=\prod_{j=1}^{z}\alpha_{\ell_j}$, this contraction is equal to $m_P^{-1}$. It remains to consider $\boldsymbol{r}(P)=1^m$ where $m_P=\operatorname{Tr}(\mathbb{K}_1^m)$. By construction, $\mathbb{L}_1$ satisfies $\mathbb{L}_1^m=\frac{[\operatorname{Tr}(\mathbb{K}_1^m)]^{-1}}{m}I_m$, where $I_m$ is the $m\times m$ identity matrix. Taking the trace gives $\operatorname{Tr}(\mathbb{L}_1^m) =[\operatorname{Tr}(\mathbb{K}_1^m)]^{-1}=m_P^{-1}$.
\end{proof}

The matrices $\mathbb{L}_0$ and $\mathbb{L}_1$ 
\begin{equation}
    \mathbb{L}_0
    =
    \begin{pmatrix}
        \alpha_0^{-1}
        & \alpha_1^{-1}
        & \cdots
        & \alpha_{m-2}^{-1}
        & \alpha_{m-1}^{-1}
        \\
        0 & 0 & \cdots & 0 & 0
        \\
        \vdots & \vdots & & \vdots & \vdots
        \\
        0 & 0 & \cdots & 0 & 0
    \end{pmatrix},
    \qquad
    \mathbb{L}_1
    =
    \begin{pmatrix}
        0 & 0 & \cdots & 0
        & \frac{[\operatorname{Tr}(\mathbb{K}_1^m)]^{-1}}{m}
        \\
        1 & 0 & \cdots & 0 & 0
        \\
        0 & 1 & \ddots & \vdots & \vdots
        \\
        \vdots & \ddots & \ddots & 0 & 0
        \\
        0 & \cdots & 0 & 1 & 0
    \end{pmatrix}.
\end{equation}
have dimension $m=n/k$. Hence, $P\mapsto m_P^{-1}$ has an exact MPS representation with bond dimension $\chi=n/k$. To obtain the matrix product operator~(MPO) of the inverse shadow channel $\mathcal{M}^{-1}$ in the Pauli transfer matrix representation, we extend this MPS diagonally in the local Pauli basis. On each block, $\mathbb{L}_0$ is placed on the diagonal entry corresponding to the identity Pauli, while $\mathbb{L}_1$ is placed on the diagonal entries corresponding to all non-identity Paulis. Contracting the virtual indices then assigns the diagonal coefficient $m_P^{-1}$ to every Pauli operator $P$. Since this diagonal extension does not change the virtual indices, the MPO has the same bond dimension $\chi=n/k$. For $k=O(\log n)$, the operator $U^\dagger\ket b\!\bra bU$ has polynomial MPO bond dimension, and so does the snapshot $\mathcal M^{-1}(U^\dagger\ket b\!\bra bU)$. Consequently, $\hat o=\operatorname{Tr}(O\hat\rho)$ can be evaluated in polynomial time whenever $O$ also has polynomial MPO bond dimension.

\subsection{MPS representation of $\tau(P,Q)$}
\begin{figure*}[t]
    \centering
    \includegraphics[width=0.9\linewidth, trim=2.5cm 21.8cm 4cm 1cm, clip]{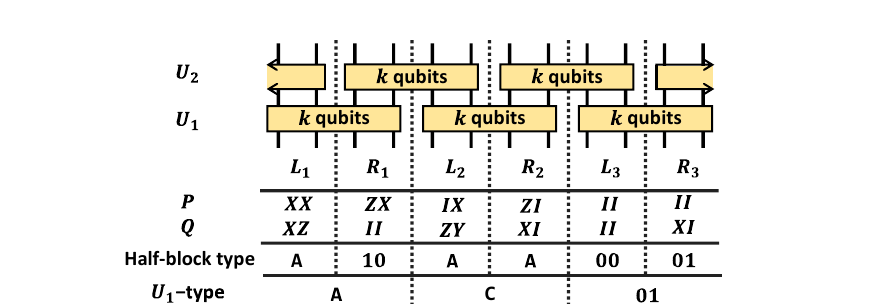}
    \caption{
    \textbf{Half-block decomposition used to evaluate the correlated visibility $\tau(P,Q)$.}
    The restrictions of $P$ and $Q$ to each half-block $L_i$ or $R_i$ determine its half-block type. The two half-block types within each first-layer block define the corresponding $U_1$ type according to Eq.~\eqref{app:eq:pauli_pair_type_composition}. The staggered second layer $U_2$ couples $R_i$ and $L_{i+1}$, with periodic boundary conditions indicated by the arrows.
    }
    \label{app:fig:tau_P}
\end{figure*}
We now apply the transfer matrix construction used for $m_P$ to the correlated visibility $\tau(P,Q)$. The main difference is that each half-block now carries a pair of Pauli operators. We therefore need to record the relation between the two Pauli operators, rather than only whether a single Pauli operator is the identity. Let
\begin{equation}
    \mathcal{S}
    :=
    \{
        \texttt{00},
        \texttt{10},
        \texttt{01},
        \texttt{E},
        \texttt{C},
        \texttt{A}
    \}.
\end{equation}
For Pauli operators $P$ and $Q$ acting on the same half-block or full block,
define $\operatorname{type}(P,Q)\in\mathcal{S}$ according to
\begin{equation}\label{app:eq:pauli_pair_types}
    \begin{array}{c|c}
        \operatorname{type}(P,Q) & \text{condition}\\
        \hline
        \texttt{00} & P=I,\quad Q=I\\
        \texttt{10} & P\neq I,\quad Q=I\\
        \texttt{01} & P=I,\quad Q\neq I\\
        \texttt{E}  & P=Q\neq I\\
        \texttt{C}  & P, Q\neq I,\quad P\neq Q,\quad PQ=QP\\
        \texttt{A}  & PQ=-QP
    \end{array}.
\end{equation}
The types $\texttt{10}$ and $\texttt{01}$ are kept separate because the pair $(P,Q)$ is ordered. The types $\texttt{C}$ and $\texttt{A}$ also need to be distinguished, even though both Pauli operators are non-identity in either case.

For $a,b\in\mathcal{S}$, define $a\ast b$ as the type obtained by taking
the tensor product of pair types $a$ and $b$. Thus, if
$\operatorname{type}(P_1,Q_1)=a$ and
$\operatorname{type}(P_2,Q_2)=b$, then
\begin{equation}
    \operatorname{type}
    (P_1\otimes P_2,Q_1\otimes Q_2)
    =
    a\ast b.
\end{equation}
The complete composition table of $a\ast b$ is
\begin{equation}\label{app:eq:pauli_pair_type_composition}
    \begin{array}{c|cccccc}
        a\backslash b
        &\texttt{00}&\texttt{10}&\texttt{01}
        &\texttt{E}&\texttt{C}&\texttt{A}\\
        \hline
        \texttt{00}
        &\texttt{00}&\texttt{10}&\texttt{01}
        &\texttt{E}&\texttt{C}&\texttt{A}\\
        \texttt{10}
        &\texttt{10}&\texttt{10}&\texttt{C}
        &\texttt{C}&\texttt{C}&\texttt{A}\\
        \texttt{01}
        &\texttt{01}&\texttt{C}&\texttt{01}
        &\texttt{C}&\texttt{C}&\texttt{A}\\
        \texttt{E}
        &\texttt{E}&\texttt{C}&\texttt{C}
        &\texttt{E}&\texttt{C}&\texttt{A}\\
        \texttt{C}
        &\texttt{C}&\texttt{C}&\texttt{C}
        &\texttt{C}&\texttt{C}&\texttt{A}\\
        \texttt{A}
        &\texttt{A}&\texttt{A}&\texttt{A}
        &\texttt{A}&\texttt{A}&\texttt{C}
    \end{array}.
\end{equation}
In particular, $\texttt{A}\ast\texttt{A}=\texttt{C}$ because two anticommutation signs cancel under the tensor product. Let $P_i=P_{L_i}\otimes P_{R_i}$ and $Q_i=Q_{L_i}\otimes Q_{R_i}$ be the restrictions of $P$ and $Q$ to the $i$th block in the first layer. Define $t_i:=\operatorname{type}(P_i,Q_i)$ and $\boldsymbol{t}(P,Q)=(t_1,\ldots,t_m)$. We refer to $\boldsymbol{t}(P,Q)$ as the block pair type vector of $(P,Q)$. An example of the block pair type vector is shown in Fig.~\ref{app:fig:tau_P}. 
In this example, the half-block pair types are $(\texttt{A},\texttt{10},\texttt{A},\texttt{A}, \texttt{00},\texttt{01})$. 
Composing the two half-block types within each first-layer block gives
\begin{equation}
    \boldsymbol{t}(P,Q)
    =
    (\texttt{A},\texttt{C},\texttt{01}).
\end{equation}
The correlated visibility depends on $(P,Q)$ only through this vector. Indeed, the six types in Eq.~\eqref{app:eq:pauli_pair_types} are preserved under simultaneous Clifford conjugation  $\mathrm{type}(P_i,Q_i)=\mathrm{type}(CP_iC^{\dagger},CQ_iC^{\dagger})$ where $C\in \mathrm{Cl}(k)$. Moreover, two ordered Pauli pairs of the same type can be mapped to each other by a Clifford unitary, up to signs. Therefore, if $(P,Q)$ and $(P',Q')$ have the same block pair type vector, there is a block Clifford $V_1\in\mathrm{Cl}(k)^{\otimes m}$ such that $P'=V_1PV_1^\dagger$ and $Q'=V_1QV_1^\dagger$, up to signs. Since $U_1V_1$ has the same distribution as $U_1$, it follows that $\tau(P',Q')=\tau(P,Q)$.

We next describe the output type distribution of one first-layer block. For compactness, let $D=2^k$. A half-block contains $4^{k/2}=D$ Pauli operators, while a full $k$-qubit block contains $4^k=D^2$ such operators. Let $n_t$ denote the number of ordered Pauli pairs of type $t$ on a half-block, and let $N_t$ denote the corresponding number on a full block. Their values are
\begin{equation}\label{app:eq:pauli_pair_orbit_sizes}
    \begin{array}{c|cc}
        t & n_t & N_t\\
        \hline
        \texttt{00}
        &1&1\\[2pt]
        \texttt{10}
        &D-1&D^2-1\\[2pt]
        \texttt{01}
        &D-1&D^2-1\\[2pt]
        \texttt{E}
        &D-1&D^2-1\\[2pt]
        \texttt{C}
        &\dfrac{(D-1)(D-4)}{2}
        &\dfrac{(D^2-1)(D^2-4)}{2}\\[8pt]
        \texttt{A}
        &\dfrac{D(D-1)}{2}
        &\dfrac{D^2(D^2-1)}{2}
    \end{array}.
\end{equation}
For each $t\in\mathcal{S}$, choose any Pauli pair $(P,Q)$ of type $t$ on a $k$-qubit block. Let $\pi_t(a,b)$ denote the probability that the output pair types on its left and right half-blocks are $a$ and $b$, respectively. More explicitly,
\begin{align}
    \pi_t(a,b)
    &:=
    \Pr_{V\sim\mathrm{Cl}(k)}
    \left[
        \begin{array}{l}
        \operatorname{type}
        \bigl((VPV^\dagger)_L,(VQV^\dagger)_L\bigr)=a,\\
        \operatorname{type}
        \bigl((VPV^\dagger)_R,(VQV^\dagger)_R\bigr)=b
        \end{array}
    \right]
    \nonumber\\
    &=
    \frac{n_an_b}{N_t}\,
    \mathbf{1}\{a\ast b=t\}.
    \label{app:eq:pauli_pair_output_distribution}
\end{align}
This probability does not depend on the choice of $(P,Q)$. The second equality follows because Clifford conjugation is uniform within each pair type. The factor $n_an_b$ counts the pairs whose left and right types are $a$ and $b$, while $a\ast b=t$ ensures that their full block type is $t$. 

For a Pauli pair $(P,Q)$ of type $t\in\mathcal S$, define $\eta(t):=\Pr_{C\sim\mathrm{Cl}(k)} [CPC^\dagger,\ CQC^\dagger\in\pm\{I,Z\}^{\otimes k}]$, where $C$ is sampled uniformly. By Eq.~\eqref{app:eq:global_cliff_Ord_correlated_visibility} with $n=k$, its values are
\begin{equation}\label{app:eq:local_pair_visibility}
    \begin{array}{c|cccccc}
        t
        &\texttt{00}&\texttt{10}&\texttt{01}
        &\texttt{E}&\texttt{C}&\texttt{A}\\
        \hline
        \eta(t)
        &1
        &\dfrac{1}{D+1}
        &\dfrac{1}{D+1}
        &\dfrac{1}{D+1}
        &\dfrac{2}{(D+1)(D+2)}
        &0
    \end{array}.
\end{equation}

For fixed $U_1$, set $\widetilde P=U_1PU_1^\dagger$ and $\widetilde Q=U_1QU_1^\dagger$, ignoring their signs, and define
\begin{equation}
    a_i
    :=
    \operatorname{type}
    (\widetilde P_{L_i},\widetilde Q_{L_i}),
    \qquad
    b_i
    :=
    \operatorname{type}
    (\widetilde P_{R_i},\widetilde Q_{R_i}).
\end{equation}
The first layer preserves the pair type on each full block, so $a_i\ast b_i=t_i$, and the probability of obtaining $(a_i,b_i)$ is $\pi_{t_i}(a_i,b_i)$.

The $i$th block in the second layer acts on $(R_i,L_{i+1})$. The input pair type of this block is therefore $b_i\ast a_{i+1}$, and its contribution to the correlated visibility is $\eta(b_i\ast a_{i+1})$. Using the cyclic convention $a_{m+1}=a_1$ and the independence of the Clifford blocks, we obtain
\begin{align}
    \tau(P,Q)
    &=
    \mathbb{E}_{U_1}\mathbb{E}_{U_2}
    \left[
        \mathbf{1}
        \left\{
            U_2U_1PU_1^\dagger U_2^\dagger
            \in\pm\mathcal Z
        \right\}
        \mathbf{1}
        \left\{
            U_2U_1QU_1^\dagger U_2^\dagger
            \in\pm\mathcal Z
        \right\}
    \right]
    \nonumber\\
    &=
    \mathbb{E}_{U_1}
    \left[
        \prod_{i=1}^{m}
        \eta(b_i\ast a_{i+1})
    \right]
    \nonumber\\
    &=
    \sum_{\substack{
        a_1,\ldots,a_m\in\mathcal{S}\\
        b_1,\ldots,b_m\in\mathcal{S}
    }}
    \prod_{i=1}^{m}
    \pi_{t_i}(a_i,b_i)\,
    \eta(b_i\ast a_{i+1})
    \label{app:eq:correlated_visibility_type_sum}\\
    &=
    \sum_{a_1,\ldots,a_m\in\mathcal{S}}
    \prod_{i=1}^{m}
    \left[
        \sum_{b_i\in\mathcal{S}}
        \pi_{t_i}(a_i,b_i)\,
        \eta(b_i\ast a_{i+1})
    \right]
    \nonumber\\
    &=
    \sum_{a_1,\ldots,a_m\in\mathcal{S}}
    \prod_{i=1}^{m}
    \left(
        \mathbb{T}_{t_i}
    \right)_{a_i,a_{i+1}}
    \nonumber\\
    &=
    \operatorname{Tr}
    \left(
        \mathbb{T}_{t_1}
        \cdots
        \mathbb{T}_{t_m}
    \right).
    \label{app:eq:correlated_visibility_mps}
\end{align}
Here, for $t,a,a'\in\mathcal{S}$, the pair transfer matrix is defined by
\begin{align}
    \left(
        \mathbb{T}_t
    \right)_{a,a'}
    :=
    \sum_{b\in\mathcal{S}}
    \pi_t(a,b)\,
    \eta(b\ast a')=
    \frac{n_a}{N_t}
    \sum_{\substack{b\in\mathcal{S}\\a\ast b=t}}
    n_b\,\eta(b\ast a').
    \label{app:eq:correlated_visibility_transfer}
\end{align}
The row index $a$ records the pair type on $L_i$, the column index $a'$ records the pair type on $L_{i+1}$, and the pair type $b$ on $R_i$ is summed over. Equations~\eqref{app:eq:pauli_pair_type_composition}, \eqref{app:eq:pauli_pair_orbit_sizes}, and \eqref{app:eq:local_pair_visibility} specify every entry of the six transfer matrices. Therefore, $(P,Q)\mapsto\tau(P,Q)$ has an exact MPS representation with bond dimension six. Its physical index is the first layer block pair type $t_i\in\mathcal{S}$, and the bond dimension is 6.

\section{Proof of Theorem~\ref{main:thm:1}}
In this section, we state a formal version of Theorem~\ref{main:thm:1} in the main text and prove it in two steps. We first rewrite the variance $\operatorname{Var}[\hat o]$ and show that the desired Frobenius norm bound follows from controlling a spectral norm. We then use the transfer matrix representations developed in the previous section to establish the required norm bound. 

\begin{theorem}\label{appx:thm:1}
    For the two-layer Clifford ensemble with even block size $k\ge8$ dividing $n$ and satisfying $k2^{k/2}\ge Cn$ for some constant $C>0$,and for every state $\rho$ and Hermitian observable $O$, the unbiased estimator $\hat{o}=\operatorname{Tr}(O\hat{\rho})$ satisfies
    \begin{equation}
        \operatorname{Var}[\hat{o}]
        \leq
        C_F\lVert O_0\rVert_2^2,
        \qquad
        C_F=8e^{45/C},
    \end{equation}
    where $O_0=O-\frac{\operatorname{Tr}(O)}{d}I$.
\end{theorem}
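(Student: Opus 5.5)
The plan is to start from the Pauli-basis variance formula Eq.~\eqref{app:eq:cliff_Ord_variance} and reduce the Frobenius-norm bound to a spectral-norm bound on a suitable positive semidefinite matrix. Since $\mathcal M^{-1}(I)=I$, replacing $O$ by $O_0$ shifts $\hat o$ by a deterministic constant and leaves the variance unchanged, so we may assume $\operatorname{Tr}(O)=0$. Drop the subtracted term $-\operatorname{Tr}(O\rho)^2\le 0$. Write $c_P=\operatorname{Tr}(OP)/(d\,m_P)$; this gives
\begin{equation}
\operatorname{Var}[\hat o]\le \sum_{P,Q\neq I} c_Pc_Q\,\tau(P,Q)\operatorname{Tr}(\rho PQ)
=\mathbb E_{U}\,\operatorname{Tr}\!\Big(\rho\, A_U^2\Big),
\end{equation}
where $A_U=\sum_{P\neq I} c_P\mathbf 1\{UPU^\dagger\in\pm\mathcal Z\}P$. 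The right-hand side equals $\mathbb E_U\operatorname{Tr}(\rho A_U^2)\le \|\mathbb E_U A_U^2\|_\infty$.

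Each $A_U$ restricted to the visible set is a sum of commuting Paulis that are diagonal in the rotated basis. The next step is to bound $\operatorname{Tr}(\rho A_U^2)$ by a quadratic form. Using that $UA_UU^\dagger$ is diagonal, $\bra b UA_U^2U^\dagger\ket b$ is a square of a Walsh-type sum. Averaging over $b$ under $\rho$ and then taking $\max_\rho$ reduces, as in the global Clifford analysis, to controlling
\begin{equation}
\sum_{P,Q} |c_P||c_Q|\,\tau(P,Q)\,|\operatorname{Tr}(\rho PQ)|.
\end{equation}
We bound this by Cauchy--Schwarz/Schur test: $|c_Pc_Q|\le (c_P^2 w_Q/w_P+c_Q^2w_P/w_Q)/2$ with weights $w_P$ depending only on block support. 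It then suffices to show, for every fixed $P$,
\begin{equation}
\frac{1}{d\,m_P^{2}}\sum_{Q}\tau(P,Q)\,|\operatorname{Tr}(\rho PQ)|\,\frac{w_Q}{w_P}\le C_F,
\end{equation}
because $\sum_P \operatorname{Tr}(OP)^2/d=\|O_0\|_2^2$. To handle the $\rho$-dependence, group $Q$ by $R=PQ$ and use $\sum_R \operatorname{Tr}(\rho R)^2\le d$. Alternatively, for fixed $P$, bound $\sum_Q\tau(P,Q)|\operatorname{Tr}(\rho PQ)|$ by $\max_\rho$ of $\operatorname{Tr}(\rho\,\mathbb E[\cdot])$, which is at most $\max$ over $Q$ configurations by type.

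This turns the problem into a combinatorial sum over block pair type vectors: $\sum_{\boldsymbol t}\#\{Q:\boldsymbol t(P,Q)=\boldsymbol t\}\operatorname{Tr}(\mathbb T_{t_1}\cdots\mathbb T_{t_m})$. Since the weights factor over blocks, this sum becomes the trace of products of summed $6\times6$ transfer matrices $\widetilde{\mathbb T}_{r}=\sum_{t\,\text{compatible with } r}N_t'\,\mathbb T_t$, one for $r_i(P)=0$ and one for $r_i(P)=1$. It must be compared against $m_P^2=\operatorname{Tr}(\mathbb K_{r_1}\cdots\mathbb K_{r_m})^2$, which by Eq.~\eqref{app:eq:mP_decomp} factors into products of $\alpha_\ell$. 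Using the entrywise monotonicity Eq.~\eqref{app:eq:elementwise_product_bound}, we replace each summed matrix by a simpler dominating matrix. We then compare with $\mathbb K_r\otimes\mathbb K_r$ (dimension $4$) plus small perturbations of relative size $O(2^{-k/2})$ or $O(k\,2^{-k/2})$ coming from \texttt{C}/\texttt{A} types and cross terms.

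The main obstacle is this last step: showing the ratio of the summed $\tau$-contraction to $d\,m_P^2$ is $\prod_i(1+O(2^{-k/2}))\cdot O(1)$ uniformly in the support pattern. The first attempt would be to exploit the reset structure. Wherever $r_i=0$ both sides factor through rank-one projections, which reduces to bounding runs $1^\ell$. For each run, we find a Collatz--Wielandt test vector $g$ (Fact~\ref{app:fact:cw_bound}) for the normalized matrix $\widetilde{\mathbb T}_1/\lambda_+^2$ with eigenvalue $\le 1+b/2^{k/2}$. Applying Lemma 1 with $a=1/2$ and $k2^{k/2}\ge Cn$ then gives the factor $e^{b n/(k2^{k/2})}\le e^{b/C}$. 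Tracking constants ($b\approx45$, $k\ge8$ to control subleading eigenvalue $\lambda_-$ and the $\alpha_\pm$ prefactors) yields $C_F=8e^{45/C}$.
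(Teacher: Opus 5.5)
Your opening reduction is correct and matches the paper: $\operatorname{Var}[\hat o]\le\operatorname{Tr}(\rho\,\mathbb E_UA_U^2)\le\|\mathbb E_UA_U^2\|_\infty$, and $\mathbb E_UA_U^2=\sum_{P,Q}c_Pc_Q\tau(P,Q)PQ$ is exactly $\Gamma(O_0)$. The proof breaks at the next step, where you replace this operator by the scalar $\sum_{P,Q}|c_P||c_Q|\tau(P,Q)|\operatorname{Tr}(\rho PQ)|$ and run a Schur test over individual Pauli pairs. Taking absolute values pair by pair discards the sign cancellations that make a Frobenius-norm bound possible. The resulting quantity is exponentially larger than $\|O_0\|_2^2$ already for $\mathrm{Cl}(n)$, where the true variance is at most $3\|O_0\|_2^2$.

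Concretely, take $O=\frac1d\sum_{P\neq I}s_PP$ with $s_P\in\{\pm1\}$, so $\|O_0\|_2^2\approx d$. Take $\rho=(|\phi\rangle\langle\phi|)^{\otimes n}$ with single-qubit Bloch vector $(1,1,1)/\sqrt3$. For distinct commuting non-identity pairs $\tau(P,Q)=2/((d+1)(d+2))$, and each $R\neq I$ arises as $PQ$ from about $d^2/2$ such pairs. Your sum is therefore at least about $\sum_{R\neq I}|\operatorname{Tr}(\rho R)|\approx(1+\sqrt3)^n$, which exceeds $\|O_0\|_2^2$ by a factor $((1+\sqrt3)/2)^n$. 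Your per-$P$ condition would force this form to be at most $C_F\|O_0\|_2^2$, so for $\mathrm{Cl}(n)$ it fails for every choice of weights $w$.

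Nothing in the two-layer structure restores the discarded signs. For any Clifford ensemble exactly $d$ Paulis are visible for each $U$, so $\sum_Q\tau(P,Q)=d\,m_P$, a factor $d$ more than the diagonal term $\tau(P,P)=m_P$. (In particular, your alternative of bounding $|\operatorname{Tr}(\rho PQ)|\le1$ and counting $Q$ by type only yields $1/m_P\ge d/2$ for full-support $P$.) This weight is carried by partners $Q\neq P$, whose active blocks are mostly of type $\texttt{C}$ or $\texttt{A}$: about $D^2$ of them per active block against a single $\texttt{E}$ partner. These types are therefore the leading contribution, not $O(2^{-k/2})$ perturbations, and they are harmless only because of cancellations inside the operator $\sum_Q c_Q\tau(P,Q)PQ$.

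The missing idea is to bound the operator norm of $\Gamma(O_0)$ while keeping the products $PQ$ as operators. In the global case this is the identity $\mathbf 1\{PQ=QP\}PQ=\tfrac12(PQ+QP)$. It turns the off-diagonal part into a multiple of $A^2$ (up to a multiple of the identity), where $A=\sum_Pc_PP$, and $\|A^2\|_\infty\le\|A\|_2^2$.

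The paper's analogue proceeds in four steps.
First, it groups $O_0=\sum_{\boldsymbol u}O_{\boldsymbol u}$ by block support, so $m_P=m(\boldsymbol u)$ is constant on each group, and writes $\tau=\operatorname{Tr}(\mathbb T_{t_1}\cdots\mathbb T_{t_m})$.
Second, on blocks where both Paulis are non-identity it uses $(\mathbb T_t)_{aa'}PQ=(\mathbb T_{\texttt{E}}-\mathbb T_{\texttt{C}})_{aa'}\mathbf 1\{P=Q\}PQ+\tfrac12(\mathbb T_{\texttt{C}}+\mathbb T_{\texttt{A}})_{aa'}PQ+\tfrac12(\mathbb T_{\texttt{C}}-\mathbb T_{\texttt{A}})_{aa'}QP$. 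The Pauli sums then reassemble into ordered products satisfying $\|\operatorname{Ord}_{\boldsymbol\kappa}(O_{\boldsymbol u},O_{\boldsymbol v})\|_\infty\le F(\boldsymbol u,\boldsymbol v,\boldsymbol\kappa)\|O_{\boldsymbol u}\|_2\|O_{\boldsymbol v}\|_2$, via the partial-transpose lemma.
Third, absolute values are taken only of the $6\times6$ coefficient matrices, and only after the change of basis $H$ on $\{\texttt{C},\texttt{A}\}$; without $H$, even this coarse step produces a spurious factor $2^{n/k}$.
Fourth, a Schur-type bound is applied only over the $2^m$ support classes, where the kernel is at most $8e^{6m/D}\gamma_D^mq_D^{|\boldsymbol u\oplus\boldsymbol v|}$.

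Your reset-and-run analysis with Collatz--Wielandt test vectors and the $a=1/2$ product lemma is the right tool for that final stage. It must be applied to these transfer matrices, not to a Pauli-level absolute-value sum.
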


\subsection{Reduction to a transfer matrix bound}
We first separate the dependence on the input state $\rho$ from the remaining calculation. Since every exact shadow snapshot $\hat\rho = \mathcal{M}^{-1}(U^{\dagger}\ket{b}\!\bra{b}U)$ has unit trace, the identity part of $O$ does not contribute to the variance $\operatorname{Var}[\operatorname{Tr}(O\hat{\rho})] = \operatorname{Var}[\operatorname{Tr}(O_0\hat{\rho})]$. We therefore work with $O_0$ throughout this subsection. It is useful to collect the second moment into the following operator:
\begin{align}
    \Gamma(O_0)
    &:=
    \mathbb{E}_{U}
    \sum_{b}
    U^\dagger\ket{b}\!\bra{b}U
        \operatorname{Tr}
        \left(
            O_0\,
            \mathcal{M}^{-1}
            \left(
                U^\dagger\ket{b}\!\bra{b}U
            \right)
        \right)^2 \label{app:eq:prediction_operator}.
\end{align}
$\Gamma(O_0)$ is positive semidefinite because it is the sum of projectors with positive coefficients. Born's rule gives $\mathbb{E}_{U,b} [\operatorname{Tr}(O_0\hat{\rho})^2] = \operatorname{Tr} (\rho\Gamma(O_0))$. Consequently,
\begin{equation}\label{app:eq:variance_prediction_bound}
    \operatorname{Var}[\hat{o}]
    \leq
    \operatorname{Tr}(\rho \Gamma(O_0))
    \le
    \left\|
        \Gamma(O_0)
    \right\|_\infty .
\end{equation}

Thus, bounding the variance $\operatorname{Var}[\hat o]$ reduces to bounding the spectral norm of $\Gamma(O_0)$.
For each $\boldsymbol u\in\{0,1\}^m$, let 
\begin{equation}
    O_{\boldsymbol u}:=\sum_{P\in\mathcal{P}_{\boldsymbol u}}o_P P,
\end{equation}
where $\mathcal{P}_{\boldsymbol u}:=\{P\in\mathcal{P}_n:\mathbf{r}(P)=\boldsymbol u\}$ and $o_P=\frac{\operatorname{Tr}(O_0P)}{d}$. Then
\begin{equation}\label{app:eq:observable_support_decomposition}
    O_0=\sum_{\boldsymbol u\in\{0,1\}^m}O_{\boldsymbol u},
    \qquad
    \lVert O_0\rVert_2^2
    =
    \sum_{\boldsymbol u\in\{0,1\}^m}\lVert O_{\boldsymbol u}\rVert_2^2. 
\end{equation}
The second equality follows from $\operatorname{Tr}(O_{\boldsymbol u} O_{\boldsymbol v})=0$ if $\boldsymbol{u}\neq \boldsymbol{v}$. 
Substituting the support decomposition of $O_0$ into Eq.~\eqref{app:eq:prediction_operator} gives
\begin{align}
    \Gamma(O_0) 
    &= 
    \mathbb{E}_{U}\sum_{b}
    U^\dagger\ket{b}\!\bra{b}U
    [\operatorname{Tr}(O_0\mathcal{M}^{-1}(U^\dagger\ket{b}\!\bra{b}U))]^2\\
    &=
    \sum_{\boldsymbol u,\boldsymbol v}\mathbb{E}_{U}\sum_{b}
    U^\dagger\ket{b}\!\bra{b}U
    \operatorname{Tr}(O_{\boldsymbol u}\mathcal{M}^{-1}(U^\dagger\ket{b}\!\bra{b}U))
    \operatorname{Tr}(O_{\boldsymbol v}\mathcal{M}^{-1}(U^\dagger\ket{b}\!\bra{b}U))\\
    &=
    \sum_{\boldsymbol u,\boldsymbol v}\mathbb{E}_{U}\sum_{b}
    U^\dagger\ket{b}\!\bra{b}U
    \operatorname{Tr}(\mathcal{M}^{-1}(O_{\boldsymbol u})U^\dagger\ket{b}\!\bra{b}U)
    \operatorname{Tr}(\mathcal{M}^{-1}(O_{\boldsymbol v})U^\dagger\ket{b}\!\bra{b}U)\\
    &=
    \sum_{\boldsymbol u,\boldsymbol v}\frac{1}{m(\boldsymbol u) m(\boldsymbol v)}\mathbb{E}_{U}\sum_{b}
    U^\dagger\ket{b}\!\bra{b}U
    \operatorname{Tr}(O_{\boldsymbol u}U^\dagger\ket{b}\!\bra{b}U)
    \operatorname{Tr}(O_{\boldsymbol v}U^\dagger\ket{b}\!\bra{b}U)\\
    &=
    \sum_{\boldsymbol u,\boldsymbol v}\frac{1}{m(\boldsymbol u) m(\boldsymbol v)} \sum_{P \in \mathcal{P}_{\boldsymbol u}}\sum_{Q \in \mathcal{P}_v}o_P o_Q \tau(P,Q) PQ\\
    &=
    \sum_{\boldsymbol u,\boldsymbol v}\frac{G_{\boldsymbol u \boldsymbol v}(O_0)}{m(\boldsymbol u) m(\boldsymbol v)},
\end{align}
Here, Fact~\ref{app:fact:dual} is used to move $\mathcal M^{-1}$ from the measurement snapshot to $O_{\boldsymbol u}$ and $O_{\boldsymbol v}$. We also used $\mathcal M^{-1}(O_{\boldsymbol u})=m(\boldsymbol u)^{-1}O_{\boldsymbol u}$, which follows from $m_P=m(\boldsymbol u)$ for every $P\in\mathcal P_u$. We define
\begin{align}
    G_{\boldsymbol u \boldsymbol v}(O_0)
    &:=
    \sum_{P \in \mathcal{P}_{\boldsymbol u}}\sum_{Q \in \mathcal{P}_v}
    o_P o_Q \tau(P,Q) PQ\\
    &=
    \sum_{P \in \mathcal{P}_{\boldsymbol u}}\sum_{Q \in \mathcal{P}_v}
    o_P o_Q \operatorname{Tr}(\mathbb{T}_{t_1}\mathbb{T}_{t_2}\dots\mathbb{T}_{t_m}) PQ\\
    &=
    \sum_{a_1,...a_m \in \mathcal{S}}
    \sum_{P \in \mathcal{P}_{\boldsymbol u}}\sum_{Q \in \mathcal{P}_v}
    o_P o_Q
    (\mathbb{T}_{t_1})_{a_1, a_2}(\mathbb{T}_{t_2})_{a_2, a_3}\dots (\mathbb{T}_{t_m})_{a_m, a_1} PQ\\
    &=
    \sum_{a_1,...a_m \in \mathcal{S}}
    \sum_{P \in \mathcal{P}_{\boldsymbol u}}\sum_{Q \in \mathcal{P}_v}
    o_P o_Q
    \bigotimes_{i=1}^{m} (\mathbb{T}_{t_i})_{a_i, a_{i+1}}P_i Q_i \label{app:eq:G}.
\end{align}
Recall that $t_i=t_i(P,Q)=\operatorname{type}(P_i,Q_i) \in\mathcal S$ denotes the correlation type of the local Pauli pair $(P_i,Q_i)$ on the $i$th $k$-qubit block of the first layer.
The visibility type vectors $\boldsymbol{u}$ and $\boldsymbol{v}$ determine which of the local Pauli operators $P_i$ and $Q_i$ are identities. On blocks where $u_i=v_i=1$, however, they do not determine whether the local correlation type is $\texttt{E}$, $\texttt{C}$, or $\texttt{A}$, and therefore do not determine the transfer matrix $\mathbb T_{t_i(P,Q)}$. The following lemma separates this remaining dependence on the local Pauli pair.
\begin{lemma}[Local decomposition identity] \label{app:lem:local_ordered_product_decomposition}
    Let $P$ and $Q$ be non-identity Pauli operators acting on the same block, and let $t=t(P,Q)\in\{\texttt{E},\texttt{C},\texttt{A}\}$ denote their possible correlation type. Then, for $a,a'\in \mathcal{S}$,
    \begin{align}
        (\mathbb T_t)_{aa'}PQ
        =
        (\mathbb T_{\texttt{E}}-\mathbb T_{\texttt{C}})_{aa'}
        \mathbf 1\{P=Q\}PQ
        +
        \left(
            \frac{\mathbb T_{\texttt{C}}+\mathbb T_{\texttt{A}}}{2}
        \right)_{aa'}PQ
        +
        \left(
            \frac{\mathbb T_{\texttt{C}}-\mathbb T_{\texttt{A}}}{2}
        \right)_{aa'}QP.
        \label{app:eq:local_decomp}
    \end{align}
\end{lemma}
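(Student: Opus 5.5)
This is a finite case check over the three possible types, using only the Pauli commutation relations. For non-identity Paulis $P,Q$ on the same block, $QP=PQ$ when $t\in\{\texttt{E},\texttt{C}\}$ and $QP=-PQ$ when $t=\texttt{A}$. We therefore evaluate the right-hand side of Eq.~\eqref{app:eq:local_decomp} separately in each case. After rewriting $QP$ as $\pm PQ$, each side becomes a scalar matrix entry times the same operator $PQ$. It then suffices to compare the coefficients entrywise in $(a,a')$.

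\emph{Case $t=\texttt{C}$.} Here $P\neq Q$ and $[P,Q]=0$, so the indicator term vanishes. The remaining coefficient is
\[
\tfrac{1}{2}(\mathbb T_{\texttt{C}}+\mathbb T_{\texttt{A}})+\tfrac{1}{2}(\mathbb T_{\texttt{C}}-\mathbb T_{\texttt{A}})=\mathbb T_{\texttt{C}}.
\]

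\emph{Case $t=\texttt{A}$.} Here again $P\neq Q$, so the indicator vanishes. Substituting $QP=-PQ$, the coefficient of $PQ$ is
\[
\tfrac{1}{2}(\mathbb T_{\texttt{C}}+\mathbb T_{\texttt{A}})-\tfrac{1}{2}(\mathbb T_{\texttt{C}}-\mathbb T_{\texttt{A}})=\mathbb T_{\texttt{A}}.
\]

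\emph{Case $t=\texttt{E}$.} Here $P=Q$, so $QP=PQ=I$ and the indicator equals one. The coefficient is
\[
(\mathbb T_{\texttt{E}}-\mathbb T_{\texttt{C}})+\mathbb T_{\texttt{C}}=\mathbb T_{\texttt{E}}.
\]

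In every case the coefficient equals $\mathbb T_t$, which establishes Eq.~\eqref{app:eq:local_decomp}.

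Signs are the only point needing care. Since the surrounding calculation ignores global phases, one should check that $PQ$ here means the genuine operator product, possibly carrying a factor of $\pm i$. The identity is linear in that product, so the phase appears identically on both sides and does not affect the argument. Nothing else is an obstacle.

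The value of the decomposition is what it enables afterwards. The second and third terms have coefficients that no longer depend on the type of $(P,Q)$, so summing over Paulis produces operators of the form $O_{\boldsymbol u}O_{\boldsymbol v}$ and $O_{\boldsymbol v}O_{\boldsymbol u}$. The first term restricts the sum to the diagonal $P=Q$.
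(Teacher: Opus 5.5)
Your proof is correct and follows the same route as the paper: a direct case check over $\texttt{E}$, $\texttt{C}$, $\texttt{A}$, where the indicator term survives only for $\texttt{E}$ (with $PQ=QP=I$) and the sign of $QP$ relative to $PQ$ settles $\texttt{C}$ and $\texttt{A}$. One small caveat concerns only your closing remark, not the lemma: once the expansion is applied on every block of $J_{11}$, the reversed order is imposed block by block, so the resulting global operator is the mixed ordered product $\operatorname{Ord}_{\boldsymbol\kappa}(O_{\boldsymbol u},O_{\boldsymbol v})$ rather than literally $O_{\boldsymbol v}O_{\boldsymbol u}$.
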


\begin{proof}
    It suffices to verify the identity for the three possible values of $t\in\{\texttt{E},\texttt{C},\texttt{A}\}$.
    
    If $t=\texttt{E}$, then $P=Q$. 
    \begin{equation}
        (\mathrm{RHS}) = (\mathbb T_{\texttt{E}}-\mathbb T_{\texttt{C}})_{aa'}I 
        + \left(\frac{\mathbb T_{\texttt{C}}+\mathbb T_{\texttt{A}}}{2}\right)_{aa'}I
        + \left(\frac{\mathbb T_{\texttt{C}}-\mathbb T_{\texttt{A}}}{2}\right)_{aa'}I
        =(\mathbb T_{\texttt{E}})_{aa'}I.
    \end{equation}
    If $t=\texttt{C}$, then $P\neq Q$ and $QP=PQ$. 
    \begin{equation}
        (\mathrm{RHS}) = 0
        + \left(\frac{\mathbb T_{\texttt{C}}+\mathbb T_{\texttt{A}}}{2}\right)_{aa'}PQ
        + \left(\frac{\mathbb T_{\texttt{C}}-\mathbb T_{\texttt{A}}}{2}\right)_{aa'}PQ
        =(\mathbb T_{\texttt{C}})_{aa'}PQ.
    \end{equation}
    If $t=\texttt{A}$, then $P\neq Q$ and $QP=-PQ$. 
    \begin{equation}
        (\mathrm{RHS}) = 0
        + \left(\frac{\mathbb T_{\texttt{C}}+\mathbb T_{\texttt{A}}}{2}\right)_{aa'}PQ
        - \left(\frac{\mathbb T_{\texttt{C}}-\mathbb T_{\texttt{A}}}{2}\right)_{aa'}PQ
        =(\mathbb T_{\texttt{A}})_{aa'}PQ.
    \end{equation}
    Thus the decomposition holds for every
    $t\in\{\texttt{E},\texttt{C},\texttt{A}\}$.
\end{proof}

We now apply the \textit{local decomposition identity} of Lemma~\ref{app:lem:local_ordered_product_decomposition} simultaneously to all first-layer blocks. Fix two visibility type vectors $\boldsymbol u,\boldsymbol v \in \{0,1\}^m$, and define $J_{ab}:=\{i\in[m]:(u_i,v_i)=(a,b)\}$. These sets give the disjoint partition
\begin{equation}
    [m]=J_{00}\sqcup J_{01}\sqcup J_{10}\sqcup J_{11}.
\end{equation}

For a fixed Pauli pair
$P\in\mathcal P_u$ and $Q\in\mathcal P_v$, let
\begin{equation}
    \boldsymbol t(P,Q)
    =
    \bigl(t_1(P,Q),\ldots,t_m(P,Q)\bigr)
\end{equation}
denote its correlation type vector. The visibility type vectors $u$ and
$v$ determine the correlation type on every block outside $J_{11}$:
\begin{equation}
    t_i(P,Q)
    =
    \begin{cases}
        \texttt{00},&i\in J_{00},\\
        \texttt{01},&i\in J_{01},\\
        \texttt{10},&i\in J_{10}.
    \end{cases}
\end{equation}
On a block $i\in J_{11}$, however,
\begin{equation}
    t_i(P,Q)\in\{\texttt{E},\texttt{C},\texttt{A}\}
\end{equation}
On a block $i\in J_{11}$, the correlation type $t_i(P,Q)$ depends on the detailed local Pauli pair $(P_i,Q_i)$. Thus, the visibility type vectors $\boldsymbol{u}$ and $\boldsymbol{v}$ do not determine the transfer matrix $\mathbb T_{t_i(P,Q)}$ on these blocks. Eq.~\eqref{app:eq:local_decomp} expresses each active-active contribution $u_i=v_i=1$ as a sum of three terms. Expanding the product $\bigotimes_{i=1}^{m}(\mathbb T_{t_i})_{a_i,a_{i+1}}P_iQ_i$ gives a sum over the three local terms at each block in $J_{11}$. We encode the selected terms by an auxiliary vector
\begin{equation}
    \boldsymbol{\kappa}=(\kappa_1,\ldots,\kappa_m).
\end{equation}

For fixed visibility type vectors $\boldsymbol{u}$ and $\boldsymbol{v}$, define the local set at block $i$ by
\begin{equation}
    \mathcal K_{\boldsymbol u,\boldsymbol v}^{(i)}
    :=
    \begin{cases}
        \{0\},
        &i\notin J_{11},\\[2pt]
        \{\mathord{=},\mathtt{fwd},\mathtt{bwd}\},
        &i\in J_{11}.
    \end{cases}
\end{equation}
Here $0$ is a dummy label on blocks outside $J_{11}$. The set of labels is then given by the Cartesian product
\begin{equation}
    \mathcal K_{\boldsymbol u,\boldsymbol v}:=\prod_{i=1}^m\mathcal K_{\boldsymbol u,\boldsymbol v}^{(i)}.
\end{equation}
Since every block in $J_{11}$ has three labels, $|\mathcal K_{\boldsymbol u,\boldsymbol v}|=3^{|J_{11}|}$. For $\xi\in\{\mathord{=},\mathtt{fwd},\mathtt{bwd}\}$, define
\begin{equation}
    J_{11, \boldsymbol{\kappa}}^{(\xi)}
    :=
    \{i\in J_{11}:\kappa_i=\xi\}.
    \label{app:eq:kappa_active_sets}
\end{equation}
Therefore, for given $\boldsymbol{u}$ and $\boldsymbol{v}$, each $\boldsymbol\kappa \in \mathcal{K}_{\boldsymbol u,\boldsymbol v}$ induces the disjoint partition of $[m]$
\begin{align}
    [m]
    ={}&
    J_{00}
    \sqcup J_{01}
    \sqcup J_{10}
    \nonumber\\
    &
    \sqcup J_{11, \boldsymbol{\kappa}}^{(=)}
    \sqcup
    J_{11, \boldsymbol{\kappa}}^{(\mathtt{fwd})}
    \sqcup
    J_{11, \boldsymbol{\kappa}}^{(\mathtt{bwd})}.
    \label{app:eq:kappa_partition}
\end{align}
For every $\boldsymbol\kappa\in\mathcal K_{\boldsymbol u,\boldsymbol v}$, define the local auxiliary transfer matrix at block $i$ by
\begin{equation}
    \mathbb R_{u_iv_i}^{(\kappa_i)}
    :=
    \begin{cases}
        \mathbb T_{\texttt{00}},
        &(u_i,v_i,\kappa_i)=(0,0,0),\\
        \mathbb T_{\texttt{01}},
        &(u_i,v_i,\kappa_i)=(0,1,0),\\
        \mathbb T_{\texttt{10}},
        &(u_i,v_i,\kappa_i)=(1,0,0),\\
        \mathbb T_{\texttt{E}}-\mathbb T_{\texttt{C}},
        &(u_i,v_i,\kappa_i)=(1,1,{=}),\\
        \dfrac{\mathbb T_{\texttt{C}}+\mathbb T_{\texttt{A}}}{2},
        &(u_i,v_i,\kappa_i)=(1,1,\mathtt{fwd}),\\
        \dfrac{\mathbb T_{\texttt{C}}-\mathbb T_{\texttt{A}}}{2},
        &(u_i,v_i,\kappa_i)=(1,1,\mathtt{bwd}).
    \end{cases}
    \label{app:eq:local_packet_transfer}
\end{equation}
\begin{remark}
    The auxiliary label $\kappa_i$ is introduced by applying Lemma~\ref{app:lem:local_ordered_product_decomposition} on the blocks in $J_{11}$. It is not the correlation type $t_i(P,Q)$, but labels one of the three terms in the local decomposition in Eq.~\eqref{app:eq:local_decomp}. Thus, after introducing $\boldsymbol{u}$, $\boldsymbol{v}$, and $\boldsymbol{\kappa}$, the transfer matrix $\mathbb T_{t_i(P,Q)}$ at each block is replaced by $\mathbb R_{u_iv_i}^{(\kappa_i)}$.
\end{remark}
With this notation, the local identity of Lemma~\ref{app:lem:local_ordered_product_decomposition} on a block $i\in J_{11}$ becomes
\begin{align}
    (\mathbb T_{t_i(P,Q)})_{aa'}P_iQ_i
    =
    (\mathbb R_{11}^{(=)})_{aa'}
    \mathbf 1\{P_i=Q_i\}P_iQ_i
    +
    (\mathbb R_{11}^{(\mathtt{fwd})})_{aa'}P_iQ_i
    +
    (\mathbb R_{11}^{(\mathtt{bwd})})_{aa'}Q_iP_i.
    \label{app:eq:local_packet_expansion}
\end{align}
Certain local terms reverse the Pauli multiplication order from $P_iQ_i$ to $Q_iP_i$, while the equality term imposes the constraint $P_i=Q_i$. This information must be retained when the local decomposition is extended to the full Pauli pair $(P,Q)$. We encode these equality constraints and multiplication-order choices in the following ordered Pauli product with respect to $\boldsymbol{\kappa}$.

\begin{definition}[Ordered Pauli product]
    For Pauli operators $P$ and $Q$ with visibility type vectors $\boldsymbol{u}$ and $\boldsymbol{v}$, define
    \begin{align}
        \operatorname{Ord}_{\boldsymbol{\kappa}}(P,Q)
        &:=
        \left(
            \prod_{i\in J_{11, \kappa}^{(=)}}
            \mathbf{1}\{P_i=Q_i\}
        \right)
        \bigotimes_{i=1}^{m}
        \begin{cases}
            P_iQ_i,
            & i\notin J_{11, \kappa}^{(\texttt{bwd})},\\
            Q_iP_i,
            & i\in J_{11, \kappa}^{(\texttt{bwd})}.
        \end{cases}\\
        &=
        \left(
            \prod_{i\in J_{11, \kappa}^{(=)}}
            \mathbf{1}\{P_i=Q_i\}
        \right)
        (P^{T_{S}}Q^{T_{S}})^{T_{S}},
    \end{align}
    where $S=J_{11, \boldsymbol{\kappa}}^{(\mathtt{bwd})}$, and $O^{T_S}$ denotes the partial transpose of $O$ over the blocks in $S$. We can extend this definition bilinearly. For $A=\sum_P a_P P$ and $B=\sum_Q b_Q Q$, define
    \begin{equation}\label{app:eq:ordered_product_bilinear_extension}
        \operatorname{Ord}_{\boldsymbol{\kappa}}(A,B)
        =
        \sum_{P,Q}
        a_Pb_Q
        \operatorname{Ord}_{\boldsymbol{\kappa}}(P,Q).
    \end{equation}
\end{definition}

\begin{corollary}\label{app:cor:kappa_ord}
    Applying Eq.~\eqref{app:eq:local_packet_expansion} at every block in $J_{11}$ gives the following identity.
    For $P \in \mathcal{P}_{\boldsymbol u}$ and $Q \in \mathcal{P}_v$,
    \begin{equation}
        \bigotimes_{i=1}^{m}
        \left[
            (\mathbb T_{t_i(P,Q)})_{a_i,a_{i+1}}P_iQ_i
        \right]
        =
        \sum_{\boldsymbol\kappa\in\mathcal K_{\boldsymbol u,\boldsymbol v}}
        \left[
            \prod_{i=1}^{m}
            (\mathbb R_{u_iv_i}^{(\kappa_i)})_{a_i,a_{i+1}}
        \right]
        \operatorname{Ord}_{\boldsymbol\kappa}(P,Q).
    \end{equation}
\end{corollary}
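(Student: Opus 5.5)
The plan is to prove Corollary~\ref{app:cor:kappa_ord} by expanding a tensor product of block-wise sums, where the scalar coefficients factor block by block. Fix $P\in\mathcal P_{\boldsymbol u}$, $Q\in\mathcal P_{\boldsymbol v}$ and the bond indices $a_1,\ldots,a_m$ with $a_{m+1}=a_1$. The left-hand side is $\bigotimes_{i=1}^m X_i$ with $X_i:=(\mathbb T_{t_i(P,Q)})_{a_i,a_{i+1}}P_iQ_i$, an operator on block $i$ with a scalar prefactor. I will write each $X_i$ as a sum over the local label set $\mathcal K^{(i)}_{\boldsymbol u,\boldsymbol v}$, namely
\[
X_i=\sum_{\kappa_i\in\mathcal K^{(i)}_{\boldsymbol u,\boldsymbol v}}(\mathbb R^{(\kappa_i)}_{u_iv_i})_{a_i,a_{i+1}}\,Y_i^{(\kappa_i)} .
\]
Here the local operators are $Y_i^{(0)}=P_iQ_i$, $Y_i^{(=)}=\mathbf 1\{P_i=Q_i\}P_iQ_i$, $Y_i^{(\mathtt{fwd})}=P_iQ_i$ and $Y_i^{(\mathtt{bwd})}=Q_iP_i$.

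The per-block identity is checked in two cases. For $i\notin J_{11}$ the label set is $\{0\}$. The type $t_i(P,Q)$ equals $\texttt{00}$, $\texttt{01}$ or $\texttt{10}$ according to $(u_i,v_i)$, and $\mathbb R^{(0)}_{u_iv_i}$ was defined to be exactly that $\mathbb T_{t_i}$, so the identity is immediate. For $i\in J_{11}$ both $P_i$ and $Q_i$ are non-identity, so $t_i\in\{\texttt{E},\texttt{C},\texttt{A}\}$. The identity is then Lemma~\ref{app:lem:local_ordered_product_decomposition} in the form of Eq.~\eqref{app:eq:local_packet_expansion}, with the three terms matched to $\mathbb R^{(=)}_{11}$, $\mathbb R^{(\mathtt{fwd})}_{11}$ and $\mathbb R^{(\mathtt{bwd})}_{11}$.

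Next I take the tensor product over $i$ and expand by multilinearity of $\otimes$. This gives a sum over $\boldsymbol\kappa\in\prod_i\mathcal K^{(i)}_{\boldsymbol u,\boldsymbol v}=\mathcal K_{\boldsymbol u,\boldsymbol v}$ of $\prod_i(\mathbb R^{(\kappa_i)}_{u_iv_i})_{a_i,a_{i+1}}$ times $\bigotimes_i Y_i^{(\kappa_i)}$. Scalars pull out of tensor products, and the indicator factors $\mathbf 1\{P_i=Q_i\}$ are scalars too. Hence $\bigotimes_i Y_i^{(\kappa_i)}$ equals $\prod_{i\in J^{(=)}_{11,\boldsymbol\kappa}}\mathbf 1\{P_i=Q_i\}$ times the tensor product of $Q_iP_i$ on $J^{(\mathtt{bwd})}_{11,\boldsymbol\kappa}$ and $P_iQ_i$ elsewhere. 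That is precisely $\operatorname{Ord}_{\boldsymbol\kappa}(P,Q)$ by its first defining line.

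A last consistency check concerns signs. The decomposition $P=\bigotimes_iP_i$ carries no sign ambiguity, and the second layer-independent factorization $PQ=\bigotimes_i P_iQ_i$ holds blockwise, so no extra phases arise. The alternative partial-transpose expression in the definition is not needed here. There is no real obstacle: the only point needing care is that all three local terms in $J_{11}$ are kept for every type, with the $\mathbf 1\{P_i=Q_i\}$ factor killing the $=$ term when $t_i\neq\texttt{E}$, exactly as in the lemma's case check.
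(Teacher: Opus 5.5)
Your proposal is correct and follows essentially the same route as the paper. You expand each block factor over $\mathcal K^{(i)}_{\boldsymbol u,\boldsymbol v}$ (trivially off $J_{11}$, via Lemma~\ref{app:lem:local_ordered_product_decomposition} on $J_{11}$), then use multilinearity of the tensor product to collect $\prod_i(\mathbb R^{(\kappa_i)}_{u_iv_i})_{a_i,a_{i+1}}$ and identify the remaining operator with $\operatorname{Ord}_{\boldsymbol\kappa}(P,Q)$, only spelling out the local operators $Y_i^{(\kappa_i)}$ that the paper leaves implicit.
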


\begin{proof}
    For $i\notin J_{11}$, $\kappa_i=0$ and
    $\mathbb R_{u_iv_i}^{(0)}=\mathbb T_{t_i(P,Q)}$.  For $i\in J_{11}$,
    Eq.~\eqref{app:eq:local_packet_expansion} expands the local factor into
    the three terms labeled by
    $\kappa_i\in\{\mathord{=},\mathtt{fwd},\mathtt{bwd}\}$.
    By multilinearity of the tensor product, each sum over $\kappa_i$ can be
    moved outside the tensor product, giving
    $\sum_{\boldsymbol\kappa\in\mathcal K_{\boldsymbol u,\boldsymbol v}}$.  For each
    $\boldsymbol\kappa$, the transfer matrix factors form the product of
    $\mathbb R_{u_iv_i}^{(\kappa_i)}$, while the Pauli multiplication order
    and equality constraints are encoded in
    $\operatorname{Ord}_{\boldsymbol\kappa}(P,Q)$.
\end{proof}

Substituting Corollary~\ref{app:cor:kappa_ord} into
Eq.~\eqref{app:eq:G} gives
\begin{align}
    G_{\boldsymbol u \boldsymbol v}(O_0)
    &=
    \sum_{a_1,...a_m \in \mathcal{S}}
    \sum_{P \in \mathcal{P}_{\boldsymbol u}}\sum_{Q \in \mathcal{P}_v}
    o_P o_Q
    \bigotimes_{i=1}^{m} (\mathbb{T}_{t_i})_{a_i, a_{i+1}}P_i Q_i\\
    &=
    \sum_{a_1,...a_m \in \mathcal{S}}
    \sum_{P \in \mathcal{P}_{\boldsymbol u}}\sum_{Q \in \mathcal{P}_v}
    o_P o_Q
    \sum_{\kappa \in \mathcal{K}_{\boldsymbol u,\boldsymbol v}}
    (\mathbb{R}_{u_1, v_1}^{(\kappa_1)})_{a_1, a_{2}}
    \dots
    (\mathbb{R}_{u_m, v_m}^{(\kappa_m)})_{a_m, a_1}
    \operatorname{Ord}_{\kappa}(P,Q)\\
    &=
    \sum_{\kappa \in \mathcal{K}_{\boldsymbol u,\boldsymbol v}}
    \operatorname{Tr}(\mathbb{R}_{u_1, v_1}^{(\kappa_1)} \dots \mathbb{R}_{u_m, v_m}^{(\kappa_m)})
    \sum_{P \in \mathcal{P}_{\boldsymbol u}}\sum_{Q \in \mathcal{P}_v}
    o_P o_Q \operatorname{Ord}_{\kappa}(P,Q)\\
    &=
    \sum_{\kappa \in \mathcal{K}_{\boldsymbol u,\boldsymbol v}}
    \operatorname{Tr}(\mathbb{R}_{u_1, v_1}^{(\kappa_1)} \dots \mathbb{R}_{u_m, v_m}^{(\kappa_m)})
    \operatorname{Ord}_{\kappa}(O_{\boldsymbol u},O_{\boldsymbol v}).
\end{align}

\begin{lemma}\label{app:lem:mixed_order_product}
    Let $\mathcal H_1$, $\mathcal H_2$, and $\mathcal H_3$ be finite-dimensional Hilbert spaces, and let $T_S$ denote the partial transpose over an arbitrary tensor factor $S$ of $\mathcal H_2$. Then, for arbitrary $A_{12}\in\mathcal L(\mathcal H_1\otimes\mathcal H_2)$ and $B_{23}\in\mathcal L(\mathcal H_2\otimes\mathcal H_3)$,
    \begin{equation}\label{app:eq:mixed_order_product}
        \left\|
        \left(
        (A_{12}\otimes I_3)^{T_S}
        (I_1\otimes B_{23})^{T_S}
        \right)^{T_S}
        \right\|_\infty
        \leq
        \lVert A_{12}\rVert_2
        \lVert B_{23}\rVert_2 .
    \end{equation}
\end{lemma}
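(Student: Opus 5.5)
The plan is to prove the stronger bound with the Schatten $2$-norm on the left-hand side, and then use $\|X\|_\infty\le\|X\|_2$. There are two ingredients. First, the partial transpose over any tensor factor only permutes matrix entries in a product basis. Second, the Hilbert--Schmidt norm of a product of operators that overlap only on $\mathcal H_2$ is controlled by a single Cauchy--Schwarz step.

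\textbf{Step 1: removing the outer transpose.} Fix orthonormal product bases of $\mathcal H_1$, $\mathcal H_2$ and $\mathcal H_3$, with the basis of $\mathcal H_2$ chosen compatible with the factorization $\mathcal H_2=\mathcal H_S\otimes\mathcal H_{S^c}$ used to define $T_S$. In such a basis, $X\mapsto X^{T_S}$ exchanges the row and column labels of the $S$ factor, i.e.\ it permutes the matrix entries. Hence $\|X^{T_S}\|_2=\|X\|_2$ for every operator $X$. This gives
\begin{equation}
    \bigl\|\bigl((A_{12}\otimes I_3)^{T_S}(I_1\otimes B_{23})^{T_S}\bigr)^{T_S}\bigr\|_\infty
    \le
    \bigl\|(A'\otimes I_3)(I_1\otimes B')\bigr\|_2 ,
\end{equation}
where $A':=A_{12}^{T_S}$ and $B':=B_{23}^{T_S}$. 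Here I use that the partial transpose on $S\subset\mathcal H_2$ commutes with tensoring identities on $\mathcal H_1$ or $\mathcal H_3$, so that $(A_{12}\otimes I_3)^{T_S}=A'\otimes I_3$ and $(I_1\otimes B_{23})^{T_S}=I_1\otimes B'$. By the same entry-permutation argument, $\|A'\|_2=\|A_{12}\|_2$ and $\|B'\|_2=\|B_{23}\|_2$.

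\textbf{Step 2: Frobenius bound for overlapping products.} It remains to show $\|(A'\otimes I_3)(I_1\otimes B')\|_2\le\|A'\|_2\|B'\|_2$. I would do this by writing matrix elements explicitly with indices $x,x'$ for $\mathcal H_1$, $y,y',y''$ for $\mathcal H_2$, and $z,z'$ for $\mathcal H_3$. The product $C=(A'\otimes I_3)(I_1\otimes B')$ then has entries
\begin{equation}
    C_{(x,y,z),(x',y',z')}=\sum_{y''}A'_{(x,y),(x',y'')}\,B'_{(y'',z),(y',z')} .
\end{equation}
Applying Cauchy--Schwarz to the sum over $y''$ gives
\begin{equation}
    |C_{(x,y,z),(x',y',z')}|^2\le\Bigl(\sum_{y''}|A'_{(x,y),(x',y'')}|^2\Bigr)\Bigl(\sum_{y''}|B'_{(y'',z),(y',z')}|^2\Bigr).
\end{equation}
The first factor depends only on $(x,y,x')$ and the second only on $(z,y',z')$. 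Summing over all six free indices therefore factorizes into $\|A'\|_2^2\,\|B'\|_2^2$, which is the desired bound.

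\textbf{Main obstacle.} The only delicate point is the bookkeeping in Step 1. One must check that the partial transpose on $S$ is taken with respect to a product basis of $\mathcal H_2$, so that it is a genuine entry permutation and therefore Hilbert--Schmidt-isometric. One must also check that it acts factorwise, so that it passes through the $I_1$ and $I_3$ tensor factors. Once this is settled, the mixed-order structure (the reversal $P_iQ_i\to Q_iP_i$ on the \texttt{bwd} blocks) plays no further role, because the Frobenius norm is blind to it. In particular, no operator-norm argument is needed beyond the trivial comparison $\|\cdot\|_\infty\le\|\cdot\|_2$.
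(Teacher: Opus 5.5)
Your proposal is correct and follows the paper's proof. Like the paper, you bound $\|\cdot\|_\infty$ by $\|\cdot\|_2$, use that the partial transpose is a Hilbert--Schmidt isometry that passes through the $I_1$ and $I_3$ factors, and reduce to bounding $\|(\tilde A_{12}\otimes I_3)(I_1\otimes\tilde B_{23})\|_2$ by $\|\tilde A_{12}\|_2\|\tilde B_{23}\|_2$. The only difference is in that last step: the paper uses $\operatorname{Tr}\bigl(\operatorname{Tr}_1(\tilde A^\dagger\tilde A)\operatorname{Tr}_3(\tilde B\tilde B^\dagger)\bigr)\le\operatorname{Tr}(\tilde A^\dagger\tilde A)\operatorname{Tr}(\tilde B\tilde B^\dagger)$ for positive operators, whereas your Cauchy--Schwarz on the contracted $\mathcal H_2$ index is the same estimate written in coordinates.
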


\begin{proof}
    Since the norm inequality $\|\cdot\|_{\infty}\le\|\cdot\|_{2}$ and partial transpose preserves the Frobenius norm,
    \begin{align}
        \left\|
        \left(
        (A_{12}\otimes I_3)^{T_S}
        (I_1\otimes B_{23})^{T_S}
        \right)^{T_S}
        \right\|_\infty
        &\leq
        \left\|
        \left(
        (A_{12}\otimes I_3)^{T_S}
        (I_1\otimes B_{23})^{T_S}
        \right)^{T_S}
        \right\|_2\\
        &=
        \left\|
        (A_{12}\otimes I_3)^{T_S}
        (I_1\otimes B_{23})^{T_S}
        \right\|_2\\
        &=
        \left\|
        (\tilde A_{12}\otimes I_3)
        (I_1\otimes \tilde B_{23})
        \right\|_2\\
        &=
        \left|
        \operatorname{Tr}(
        (\tilde A_{12}^{\dagger}\tilde A_{12}\otimes I_3)
        (I_1 \otimes \tilde B_{23}\tilde B_{23}^{\dagger})
        )
        \right|^{1/2}\\
        &=
        \left|\operatorname{Tr}(
        \operatorname{Tr}_1(\tilde A_{12}^{\dagger}\tilde A_{12})
        \operatorname{Tr}_3(\tilde B_{23}\tilde B_{23}^{\dagger})
        )
        \right|^{1/2}\\
        &\leq
        \left|\operatorname{Tr}(\tilde A_{12}^{\dagger}\tilde A_{12})
        \right|^{1/2}
        \left|\operatorname{Tr}(\tilde B_{23}\tilde B_{23}^{\dagger})
        \right|^{1/2}\\
        &=
        \|\tilde A_{12}\|_2 \|\tilde B_{23}\|_2
        =
        \|A_{12}^{T_S}\|_2 \|B_{23}^{T_S}\|_2\\
        &=
        \|A_{12}\|_2 \|B_{23}\|_2,
    \end{align}
    where $\tilde A_{12}=A_{12}^{T_S}, \tilde B_{23}=B_{23}^{T_S}$. In the third line from the end, we use $\operatorname{Tr}(AB)\leq\operatorname{Tr}(A)\!\operatorname{Tr}(B)$ for positive semidefinite operators $A$ and $B$.
\end{proof}

\begin{corollary}[Spectral norm bound]
\label{app:cor:ordered_product_bound}
    For $\boldsymbol u, \boldsymbol v\in\{0,1\}^m$ and
    $\boldsymbol\kappa\in\mathcal K_{\boldsymbol u,\boldsymbol v}$,
    \begin{equation}
        \left\|
            \operatorname{Ord}_{\boldsymbol\kappa}(O_{\boldsymbol u},O_{\boldsymbol v})
        \right\|_\infty
        \leq
        F(\boldsymbol u,\boldsymbol v,\boldsymbol\kappa)
        \lVert O_{\boldsymbol u}\rVert_2
        \lVert O_{\boldsymbol v}\rVert_2,
        \label{app:eq:ordered_product_bound}
    \end{equation}
    where
    \begin{align}
        F(\boldsymbol u,\boldsymbol v,\boldsymbol\kappa)
        &:=
        2^{-k\left(
            |J_{00}|
            +\frac{|J_{10}|+|J_{01}|}{2}
            +|J_{11, \boldsymbol{\kappa}}^{(=)}|
        \right)}
        \label{app:eq:ordered_product_dimension_factor}\\
        &=
        \prod_{i=1}^m
        f(u_i,v_i,\kappa_i).
    \end{align}
    The local factor $f(u_i, v_i, \kappa_i)$ is defined by
    \begin{equation}
        f(u_i, v_i, \kappa_i)
        :=
        \begin{cases}
            2^{-k},
            &(u_i, v_i, \kappa_i)=(0,0,0),\\[2pt]
            2^{-k/2},
            &(u_i, v_i, \kappa_i)=(0,1,0),\\[2pt]
            2^{-k/2},
            &(u_i, v_i, \kappa_i)=(1,0,0),\\[2pt]
            2^{-k},
            &(u_i, v_i, \kappa_i)=(1,1,{=}),\\[2pt]
            1,
            &(u_i, v_i, \kappa_i)=(1,1,\mathtt{fwd}),\\[2pt]
            1,
            &(u_i, v_i, \kappa_i)=(1,1,\mathtt{bwd}).
        \end{cases}
        \label{app:eq:local_dimension_factor}
    \end{equation}
\end{corollary}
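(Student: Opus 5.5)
The plan is to reduce the bound to Lemma~\ref{app:lem:mixed_order_product} by exploiting the structure of $O_{\boldsymbol u}$ and $O_{\boldsymbol v}$ block by block. The starting point is the support of each operator. $O_{\boldsymbol u}$ acts as the identity on every block with $u_i=0$, and similarly $O_{\boldsymbol v}$ on blocks with $v_i=0$. So on $J_{00}$ both operators are identity. On $J_{10}$ only $O_{\boldsymbol u}$ acts nontrivially, on $J_{01}$ only $O_{\boldsymbol v}$ does, and on $J_{11}$ both do.

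\emph{Step 1: handle the equality constraint.} The constraint $\mathbf 1\{P_i=Q_i\}$ on $J^{(=)}_{11,\boldsymbol\kappa}$ will be rewritten as a partial trace. Using Pauli orthogonality,
\[
\mathbf 1\{P_i=Q_i\}P_iQ_i=\mathbf 1\{P_i=Q_i\}I=2^{-k}\operatorname{Tr}(P_iQ_i)I .
\]
Alternatively, one can write $\sum_{P_i}P_i\otimes P_i$ as $2^k$ times the swap and contract. The cleaner route is to introduce a copy: view $\operatorname{Ord}_{\boldsymbol\kappa}(O_{\boldsymbol u},O_{\boldsymbol v})$ restricted to $J^{(=)}$ as the map obtained by pairing the $J^{(=)}$ legs of $O_{\boldsymbol u}$ and $O_{\boldsymbol v}$ through $2^{-k}\operatorname{Tr}$, tensored with $I$ on those blocks. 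This pairing is a bilinear contraction. By Cauchy--Schwarz on the contracted index, it is bounded in Frobenius norm by $2^{-k|J^{(=)}|}$ times the Frobenius norms. The identity factor $I$ on $J^{(=)}$ does not enlarge the operator norm.

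\emph{Step 2: account for the identity blocks.} On $J_{00}$ the factor is $I$, which contributes nothing to the operator norm. However, $\|O_{\boldsymbol u}\|_2$ includes a factor $\sqrt{2^k}$ from $\|I\|_2$ on each such block, for each of the two operators. This yields $2^{-k}$ per block of $J_{00}$. Similarly, on $J_{10}\cup J_{01}$ only one of the two operators carries an identity factor, which yields $2^{-k/2}$ per block.

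\emph{Step 3: apply the lemma to the remaining factors.} Write $O_{\boldsymbol u}=\widetilde A\otimes I$ with $\widetilde A$ supported on $J_{10}\cup J_{11}$, and $O_{\boldsymbol v}=\widetilde B\otimes I$ with $\widetilde B$ supported on $J_{01}\cup J_{11}$. Then $\|O_{\boldsymbol u}\|_2=2^{k(|J_{00}|+|J_{01}|)/2}\|\widetilde A\|_2$, and likewise for $O_{\boldsymbol v}$. The ordered product on $J^{(\mathtt{bwd})}$ is exactly $((\widetilde A\otimes I)^{T_S}(I\otimes\widetilde B)^{T_S})^{T_S}$ with $S=J^{(\mathtt{bwd})}_{11,\boldsymbol\kappa}\subset\mathcal H_2$. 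Lemma~\ref{app:lem:mixed_order_product} then gives the bound $\|\widetilde A\|_2\|\widetilde B\|_2$ when $J^{(=)}=\emptyset$, where $\mathcal H_1=J_{10}$, $\mathcal H_2=J_{11}$, and $\mathcal H_3=J_{01}$. Combining with the scalings from Steps 1 and 2 yields the factor $F$.

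\emph{Main obstacle.} The main difficulty is combining Step 1 with the lemma when $J^{(=)}\neq\emptyset$. The plan is to extend $\mathcal H_2$ to include the $J^{(=)}$ blocks and redo the proof of Lemma~\ref{app:lem:mixed_order_product} with the partial trace $2^{-k}\operatorname{Tr}_{J^{(=)}}$ inserted. The Frobenius-norm chain still works: contracting the shared index via Cauchy--Schwarz gives $\|\cdot\|_2\le\|\widetilde A\|_2\|\widetilde B\|_2$ for the contracted operator on the complementary blocks. The remaining $2^{-k}$ per block, together with the identity tensor factor (operator norm $1$), then yields exactly $f=2^{-k}$ on those blocks.
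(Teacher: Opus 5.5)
Your proposal is correct and follows essentially the same route as the paper. The paper expands $O_{\boldsymbol u}=I_{00}\otimes I_{01}\otimes\sum_P P\otimes A_P$ (and likewise $O_{\boldsymbol v}$) over the $J^{(=)}_{11,\boldsymbol\kappa}$ blocks, uses the indicator to collapse those blocks to the identity, applies Lemma~\ref{app:lem:mixed_order_product} termwise with $\mathcal H_1=J_{10}$, $\mathcal H_2=J_{11}\setminus J^{(=)}_{11,\boldsymbol\kappa}$, $\mathcal H_3=J_{01}$, and then uses Cauchy--Schwarz over $P$ and converts $\sum_P\lVert A_P\rVert_2^2$ back to $\lVert O_{\boldsymbol u}\rVert_2^2$; this is precisely your normalized partial-trace contraction, Cauchy--Schwarz on the contracted index, and identity bookkeeping ($2^{-k}$ on $J_{00}$ and $J^{(=)}_{11,\boldsymbol\kappa}$, $2^{-k/2}$ on $J_{10}\cup J_{01}$).
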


\begin{proof}
    For fixed visibility type vector $\boldsymbol u$, $\boldsymbol v$, and $\boldsymbol\kappa$, let $I_{ab}$ denote
    the identity operator on the subsystem formed by the $k$-qubit
    blocks in $J_{ab}$. In particular,
    $\lVert I_{ab}\rVert_2^2=2^{k|J_{ab}|}$. We expand
    \begin{align}
        O_{\boldsymbol u}
        &=
        I_{00}\otimes I_{01}\otimes
        \sum_{
            P\in \{I,X,Y,Z\}^{k|J_{11, \boldsymbol{\kappa}}^{(=)}|}
        }
        P\otimes A_P,
        \\
        O_{\boldsymbol v}
        &=
        I_{00}\otimes I_{10}\otimes
        \sum_{
            Q\in\{I,X,Y,Z\}^{k|J_{11, \boldsymbol{\kappa}}^{(=)}|}
        }
        Q\otimes B_Q,
    \end{align}
    where $P$ and $Q$ act on the subsystem formed by the blocks in
    $J_{11, \boldsymbol{\kappa}}^{(=)}$, and $A_P$ and $B_Q$ act on the
    remaining subsystem. Orthogonality of the Pauli basis gives
    \begin{align}
        \lVert O_{\boldsymbol u}\rVert_2^2
        &=
        2^{
            k\left(
                |J_{00}|+|J_{01}|
                +|J_{11, \boldsymbol{\kappa}}^{(=)}|
            \right)
        }
        \sum_P\lVert A_P\rVert_2^2,
        \\
        \lVert O_{\boldsymbol v}\rVert_2^2
        &=
        2^{
            k\left(
                |J_{00}|+|J_{10}|
                +|J_{11, \boldsymbol{\kappa}}^{(=)}|
            \right)
        }
        \sum_Q\lVert B_Q\rVert_2^2.
    \end{align}
    Then
    \begin{align}
        \operatorname{Ord}_{\kappa}(O_{\boldsymbol u}, O_{\boldsymbol v})
        &=
        I_{00} \otimes \sum_{P,Q} PQ \cdot\mathbf{1}\{P=Q\} \otimes ((I_{01}\otimes A_P )^{T_S}(I_{10}\otimes B_Q)^{T_S})^{T_S}\\
        &=
        I_{00} \otimes I_{J_{11, \boldsymbol{\kappa}}^{(=)}}\otimes \sum_{P}((I_{01}\otimes A_P )^{T_S}(I_{10}\otimes B_P)^{T_S})^{T_S},
    \end{align}
    where $S=J_{11}^{(\texttt{bwd})}(\boldsymbol\kappa)$. Thus, operator norm is bounded as follows:
    \begin{align}
        \left\|
        \operatorname{Ord}_{\kappa}(O_{\boldsymbol u}, O_{\boldsymbol v})
        \right\|_{\infty}
        &=
        \left\|
        \sum_{P}((I_{01}\otimes A_P )^{T_S}(I_{10}\otimes B_P)^{T_S})^{T_S}
        \right\|_{\infty}\\
        &\le
        \sum_{P}
        \left\|
        ((I_{01}\otimes A_P )^{T_S}(I_{10}\otimes B_P)^{T_S})^{T_S}
        \right\|_{\infty}\\
        &\le
        \sum_{P}\|A_P\|_2 \|B_P\|_2\\
        &\le
        \left(\sum_P \|A_P\|_2^2\right)^{1/2} \left(\sum_P \|B_P\|_2^2\right)^{1/2}\\
        &=
        2^{-k(
            |J_{00}|
            +\frac{|J_{10}|+|J_{01}|}{2}
            +|J_{11, \boldsymbol{\kappa}}^{(=)}|)}\|O_{\boldsymbol u}\|_2\|O_{\boldsymbol v}\|_2,
    \end{align}
    as desired.
    
\end{proof}
Combining Corollaries~\ref{app:cor:kappa_ord} and
\ref{app:cor:ordered_product_bound}, we obtain
\begin{align}
    \lVert G_{\boldsymbol u \boldsymbol v}(O_0)\rVert_\infty
    &\leq
    \sum_{\boldsymbol\kappa\in\mathcal K_{\boldsymbol u,\boldsymbol v}}
    \left|
        \operatorname{Tr}\left(
            \mathbb R_{u_1v_1}^{(\kappa_1)}
            \cdots
            \mathbb R_{u_mv_m}^{(\kappa_m)}
        \right)
    \right|
    \left\|
        \operatorname{Ord}_{\boldsymbol\kappa}(O_{\boldsymbol u},O_{\boldsymbol v})
    \right\|_\infty\\
    &\leq
    \lVert O_{\boldsymbol u}\rVert_2\lVert O_{\boldsymbol v}\rVert_2
    \sum_{\boldsymbol\kappa\in\mathcal K_{\boldsymbol u,\boldsymbol v}}
    F(\boldsymbol u,\boldsymbol v,\boldsymbol\kappa)
    \left|
        \operatorname{Tr}\left(
            \mathbb R_{u_1v_1}^{(\kappa_1)}
            \cdots
            \mathbb R_{u_mv_m}^{(\kappa_m)}
        \right)
    \right|.
    \label{app:eq:packet_trace_reduction}
\end{align}

In the ordered basis $\mathcal S = (\texttt{00},\texttt{10},\texttt{01}, \texttt{E},\texttt{C},\texttt{A})$, let $H$ act as the identity on the first four bond states and define
\begin{equation}
    H\big|_{\{\texttt{C},\texttt{A}\}}
    =
    \frac12
    \begin{pmatrix}
        1&1\\
        1&-1
    \end{pmatrix},
    \qquad
    H^{-1}\big|_{\{\texttt{C},\texttt{A}\}}
    =
    \begin{pmatrix}
        1&1\\
        1&-1
    \end{pmatrix}.
    \label{app:eq:H}
\end{equation}
Trace invariance under the common similarity transformation gives
\begin{align}
    \sum_{\boldsymbol\kappa\in\mathcal K_{\boldsymbol u,\boldsymbol v}}
    F(\boldsymbol u,\boldsymbol v,\boldsymbol\kappa)
    \left|
        \operatorname{Tr}\left(
            \mathbb R_{u_1v_1}^{(\kappa_1)}
            \cdots
            \mathbb R_{u_mv_m}^{(\kappa_m)}
        \right)
    \right|
    &=
    \sum_{\boldsymbol\kappa\in\mathcal K_{\boldsymbol u,\boldsymbol v}}
    \left|
        \operatorname{Tr}\left(
            \prod_{i=1}^m
            f(u_i,v_i,\kappa_i)
            H
            \mathbb R_{u_iv_i}^{(\kappa_i)}
            H^{-1}
        \right)
    \right|\\
    &\leq
    \sum_{\boldsymbol\kappa\in\mathcal K_{\boldsymbol u,\boldsymbol v}}
    \operatorname{Tr}\left(
        \prod_{i=1}^m
        f(u_i,v_i,\kappa_i)
        \left|
            H
            \mathbb R_{u_iv_i}^{(\kappa_i)}
            H^{-1}
        \right|_{\mathrm{ew}}
    \right)\\
    &=
    \operatorname{Tr}\left(
        \prod_{i=1}^m \sum_{\kappa_i \in \mathcal{K}_{\boldsymbol u,\boldsymbol v}^{(i)}}
        f(u_i,v_i,\kappa_i)
        \left|
            H\mathbb R_{u_iv_i}^{(\kappa_i)}H^{-1}
        \right|_{\mathrm{ew}}
    \right)\\
    &=
    \operatorname{Tr}\left(
        \prod_{i=1}^m \widetilde{\mathbb B}_{u_i,v_i}
    \right)
    ,
    \label{app:eq:weighted_packet_transfer_bound}
\end{align}
where $|\cdot|_{\mathrm{ew}}$ denotes entrywise absolute value $(|A|_{\mathrm{ew}})_{ij}=|(A)_{ij}|$ and
\begin{equation}\label{appx:eq:t_B}
    \widetilde{\mathbb B}_{u_iv_i}
    := 
    \sum_{\kappa_i \in \mathcal{K}_{\boldsymbol u,\boldsymbol v}^{(i)}}
    f(u_i, v_i, \kappa_i)
    \left|
        H\mathbb R_{u_iv_i}^{(\kappa_i)}H^{-1}
    \right|_{\mathrm{ew}}.
\end{equation}
Applying the $H$ conjugation before taking the entrywise absolute value removes an exponentially growing factor in our bound (see Appendix~\ref{app:sec:role_H} for details).
Substituting Eq.~\eqref{app:eq:weighted_packet_transfer_bound} into Eq.~\eqref{app:eq:packet_trace_reduction} gives
\begin{align}
    \|\Gamma(O_0)\|_{\infty} 
    &=
    \left\|
    \sum_{\boldsymbol u,\boldsymbol v}\frac{G_{\boldsymbol u \boldsymbol v}(O_0)}{m(\boldsymbol u) m(\boldsymbol v)}
    \right\|_{\infty}\le
    \sum_{\boldsymbol u,\boldsymbol v}\frac{\|G_{\boldsymbol u \boldsymbol v}(O_0)\|_{\infty}}{m(\boldsymbol u) m(\boldsymbol v)}\\
    &\le
    \sum_{\boldsymbol u,\boldsymbol v}\frac{\|O_{\boldsymbol u}\|_{2}\|O_{\boldsymbol v}\|_{2}}{m(\boldsymbol u) m(\boldsymbol v)}
    \operatorname{Tr}
    \left(
    \widetilde{\mathbb B}_{u_1, v_1}
    \dots
    \widetilde{\mathbb B}_{u_m, v_m}
    \right)\\
    &=
    \sum_{\boldsymbol u,\boldsymbol v} (\widetilde K)_{\boldsymbol u \boldsymbol v} \|O_{\boldsymbol u}\|_{2}\|O_{\boldsymbol v}\|_{2},\label{app:eq:gamma_quadratic}
\end{align}
where $\widetilde K$ is defined by the $2^m \times 2^m$ matrix with
\begin{equation}
    (\widetilde K)_{\boldsymbol u \boldsymbol v}=\frac{\operatorname{Tr}
    \left(
    \widetilde{\mathbb B}_{u_1, v_1}
    \dots
    \widetilde{\mathbb B}_{u_m, v_m}
    \right)}{m(\boldsymbol u)m(\boldsymbol v)}.
\end{equation}

\subsection{Entrywise bounds for the matrices $\widetilde{\mathbb B}_{ab}$}

The exact matrices $\widetilde{\mathbb B}_{ab}$ are nonnegative $6\times6$ matrices whose entries are rational functions of $D$ and $\sqrt D$, where $D=2^{k}$. Directly retaining these expressions in a product of $m$ matrices would make the subsequent estimates unnecessarily complicated. We instead keep an upper bound on the leading coefficient and the leading power of $D$ in each entry. In the ordered basis
\begin{equation}
    \mathcal S
    =
    (\texttt{00},\texttt{10},\texttt{01},
     \texttt{E},\texttt{C},\texttt{A}),
\end{equation}
define
\begin{equation}
    \mathbb{B}_{00}
    :=
    \begin{pmatrix}
        D^{-1} & D^{-2} & D^{-2} & D^{-2} & 2D^{-3} & 2D^{-3}\\
        0&0&0&0&0&0\\
        0&0&0&0&0&0\\
        0&0&0&0&0&0\\
        0&0&0&0&0&0\\
        0&0&0&0&0&0
    \end{pmatrix}.
    \label{app:eq:B00_entrywise_envelope}
\end{equation}
For the two mismatch matrices, define
\begin{align}
    \mathbb{B}_{10}
    &:=
    D^{-1/2}
    \begin{pmatrix}
        D^{-2} & D^{-2} & 2D^{-3} & 2D^{-3} & 2D^{-3} & 2D^{-3}\\
        2D^{-1} & D^{-1} & 3D^{-2} & 3D^{-2} & 2D^{-2} & 2D^{-2}\\
        0&0&0&0&0&0\\
        0&0&0&0&0&0\\
        0&0&0&0&0&0\\
        0&0&0&0&0&0
    \end{pmatrix},
    \label{app:eq:B10_entrywise_envelope}\\
    \mathbb{B}_{01}
    &:=
    D^{-1/2}
    \begin{pmatrix}
        D^{-2} & 2D^{-3} & D^{-2} & 2D^{-3} & 2D^{-3} & 2D^{-3}\\
        0&0&0&0&0&0\\
        2D^{-1} & 3D^{-2} & D^{-1} & 3D^{-2} & 2D^{-2} & 2D^{-2}\\
        0&0&0&0&0&0\\
        0&0&0&0&0&0\\
        0&0&0&0&0&0
    \end{pmatrix}.
    \label{app:eq:B01_entrywise_envelope}
\end{align}
These matrices retain the structures needed below:
\[
    \operatorname{rank}
    \bigl(\mathbb{B}_{00}\bigr)=1,
    \qquad
    \operatorname{rank}
    \bigl(\mathbb{B}_{01}\bigr)
    =
    \operatorname{rank}
    \bigl(\mathbb{B}_{10}\bigr)
    =2.
\]
For the active--active matrix, define
\begin{equation}
    \mathbb{B}_{11}
    :=
    \begin{pmatrix}
        D^{-3} & 4D^{-4} & 4D^{-4} & D^{-3} & 4D^{-4} & 4D^{-4}\\
        6D^{-3} & 2D^{-3} & 4D^{-3} & 4D^{-3} & 2D^{-3} & 2D^{-3}\\
        6D^{-3} & 4D^{-3} & 2D^{-3} & 4D^{-3} & 2D^{-3} & 2D^{-3}\\
        2D^{-2} & 7D^{-3} & 7D^{-3} & D^{-2} & 4D^{-3} & 4D^{-3}\\
        \frac52D^{-2} & D^{-2} & D^{-2} & D^{-2} & D^{-2} & 3D^{-3}\\
        \frac52D^{-2} & D^{-2} & D^{-2} & D^{-2} & 3D^{-3} & D^{-2}
    \end{pmatrix}.
    \label{app:eq:B11_entrywise_envelope}
\end{equation}

\begin{lemma}[Entrywise bounds for $\widetilde{\mathbb B}_{ab}$]
\label{app:lem:B_entrywise_bounds}
For $D=2^k\geq256$,
\begin{align}
    &\widetilde{\mathbb B}_{00}
    \leq_{\mathrm{ew}}
    \mathbb{B}_{00},
    \qquad
    \widetilde{\mathbb B}_{01}
    \leq_{\mathrm{ew}}
    \mathbb{B}_{01},
    \nonumber\\
    &\widetilde{\mathbb B}_{10}
    \leq_{\mathrm{ew}}
    \mathbb{B}_{10},
    \qquad
    \widetilde{\mathbb B}_{11}
    \leq_{\mathrm{ew}}
    \left(1+\frac{6}{D}\right)
    \mathbb{B}_{11}.
    \label{app:eq:B_entrywise_bounds}
\end{align}
\end{lemma}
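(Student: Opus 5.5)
The proof is a direct, finite computation of the six $6\times6$ matrices $\mathbb T_t$, $t\in\mathcal S$, followed by an entry-by-entry comparison. First we compute every entry of $\mathbb T_t$ from Eq.~\eqref{app:eq:correlated_visibility_transfer}. Each entry is
\[
(\mathbb T_t)_{aa'}=\frac{n_a}{N_t}\sum_{b:\,a\ast b=t}n_b\,\eta(b\ast a'),
\]
which needs only the composition table~\eqref{app:eq:pauli_pair_type_composition}, the orbit sizes~\eqref{app:eq:pauli_pair_orbit_sizes}, and the local visibilities~\eqref{app:eq:local_pair_visibility}. These give exact rational functions of $D$.

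Two structural facts cut the work down. First, a row $a$ of $\mathbb T_t$ vanishes whenever no $b$ satisfies $a\ast b=t$. For $t=\texttt{00}$ only $a=\texttt{00}$ survives, giving rank one. For $t=\texttt{10}$ only $a\in\{\texttt{00},\texttt{10}\}$ survive, and symmetrically for $\texttt{01}$. This reproduces the zero patterns of $\mathbb B_{00},\mathbb B_{10},\mathbb B_{01}$. Second, $H$ acts only on the $\{\texttt{C},\texttt{A}\}$ coordinates, and the $\texttt{C}$ and $\texttt{A}$ rows vanish for $t\in\{\texttt{00},\texttt{10},\texttt{01}\}$. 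So for these three cases $H\mathbb T_tH^{-1}$ only mixes the $\texttt{C}$ and $\texttt{A}$ columns, replacing them by their sum and difference. With $f=D^{-1}$ or $D^{-1/2}$, we bound each resulting entry by its leading term. For example, the $(\texttt{00},\texttt{00})$ entry of $\mathbb T_{\texttt{00}}$ is $\eta(\texttt{00})=1$, and the factor $f=D^{-1}$ gives $D^{-1}$. Each crude inequality such as $\frac{2}{(D+1)(D+2)}\le 2D^{-2}$ is checked separately.

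The main obstacle is $\widetilde{\mathbb B}_{11}$, which is built from three matrices:
\[
D^{-1}\bigl|H(\mathbb T_{\texttt E}-\mathbb T_{\texttt C})H^{-1}\bigr|_{\mathrm{ew}},\qquad \Bigl|H\tfrac{\mathbb T_{\texttt C}\pm\mathbb T_{\texttt A}}{2}H^{-1}\Bigr|_{\mathrm{ew}}.
\]
Here the $H$-conjugation is essential. In the $\texttt{C}/\texttt{A}$ block, rows of $\mathbb T_{\texttt C}$ and $\mathbb T_{\texttt A}$ nearly coincide at leading order. Taking the half-sum and half-difference followed by the $H$ mixing makes the leading terms cancel in the off-diagonal positions, which is why $\mathbb B_{11}$ has $3D^{-3}$ at $(\texttt C,\texttt A)$ and $(\texttt A,\texttt C)$ but $D^{-2}$ on the diagonal. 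The plan is to write each entry as a leading monomial plus a remainder and to verify that the remainder is at most $6/D$ times the displayed envelope entry. We would do this by clearing denominators, so each check becomes a polynomial inequality in $D$, and then checking that the inequality holds at $D=256$ and that the difference polynomial is increasing for $D\ge256$. The uniform factor $(1+6/D)$ absorbs the subleading terms, for instance $(D+1)^{-1}\le D^{-1}$ together with numerators like $D^2-4$ in $N_{\texttt C}$.

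The remaining work is bookkeeping. For $\widetilde{\mathbb B}_{11}$ we sum the three weighted absolute matrices and compare the sum with $(1+6/D)\mathbb B_{11}$ entrywise, using the triangle inequality in the cancelling positions only after the exact leading terms have cancelled. We expect the cancellation in the $\{\texttt C,\texttt A\}$ block and the $\texttt E$ row, where entries of order $D^{-2}$ must be tracked exactly, to be the delicate part. We would verify those entries symbolically first, possibly with computer algebra, and then treat the rest by leading-order bounds.
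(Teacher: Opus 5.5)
Your proposal is correct and follows the paper's proof in essentially every step. You evaluate the entries of $H\mathbb R^{(\kappa)}_{ab}H^{-1}$ from the orbit sizes, composition table and local visibilities. For $\widetilde{\mathbb B}_{00},\widetilde{\mathbb B}_{01},\widetilde{\mathbb B}_{10}$ you use the crude bounds $D-1\le D$ and $D+1,D+2\ge D$. For $\widetilde{\mathbb B}_{11}$ you check ratios entry by entry, absorbing subleading corrections into $1+6/D$ via $D\ge256$. The only difference is the final verification device: the paper keeps only the positive numerator corrections and the negative denominator corrections, whereas you evaluate at $D=256$ and check monotonicity. One caution: several $\{\texttt C,\texttt A\}$-block entries are exact residuals of cancelling $\Theta(D^{-2})$ terms. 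For example, $(H\mathbb T_{\texttt A}H^{-1})_{\texttt C\texttt A}=\tfrac12\bigl[(\mathbb T_{\texttt A})_{\texttt A\texttt C}-(\mathbb T_{\texttt A})_{\texttt C\texttt A}\bigr]=\frac{3D-2}{D(D+1)^2(D+2)}$ produces the envelope entry $3D^{-3}$ at $(\texttt C,\texttt A)$ with no slack at that order, so simplify these differences exactly rather than bounding them by a triangle inequality on separately estimated remainders.
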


\begin{proof}
    The bounds for $\widetilde{\mathbb B}_{00}$, $\widetilde{\mathbb B}_{01}$, and
    $\widetilde{\mathbb B}_{10}$ follow directly from their explicit entries by using
    \begin{equation}
        D-1\leq D,
        \qquad
        D+1\geq D,
        \qquad
        D+2\geq D.
    \end{equation}
    For each entry of $\widetilde{\mathbb B}_{11}$, we can write
    \begin{equation}
        \left(\widetilde{\mathbb B}_{11}\right)_{rs}
        =
        \frac{
            A_0D^\alpha+\sum_{i\geq1}A_iD^{\alpha-i}
        }{
            B_0D^\beta+\sum_{j\geq1}B_jD^{\beta-j}
        },
        \qquad
        \left(\mathbb B_{11}\right)_{rs}
        =
        \frac{A_0}{B_0}D^{\alpha-\beta},
    \end{equation}
    where $A_0,B_0>0$. For $D\geq256$, define
    \begin{align}
        A_{\mathrm{rest}}^{(r,s)}
        :=
        \sum_{i\geq1}
        \frac{A_i}{A_0}\,256^{1-i}\mathbf{1}\{A_i>0\},
        \quad
        B_{\mathrm{rest}}^{(r,s)}
        :=
        \sum_{j\geq1}
        \frac{|B_j|}{B_0}\,256^{1-j}\mathbf{1}\{B_j<0\}.
    \end{align}
    Keeping only the positive numerator corrections and
    negative denominator corrections gives
    \begin{align}
        \frac{
            \left(\widetilde{\mathbb B}_{11}\right)_{rs}
        }{
            \left(\mathbb B_{11}\right)_{rs}
        }
        \leq
        \frac{
            1+A_{\mathrm{rest}}^{(r,s)}/D
        }{
            1-B_{\mathrm{rest}}^{(r,s)}/D
        }
        \leq
        1+\frac{6}{D}
    \end{align}
    with the maximum attained at $(r,s)=(4,1)$. Therefore,
    \begin{equation}
        \widetilde{\mathbb B}_{11}
        \leq_{\mathrm{ew}}
        \left(1+\frac{6}{D}\right)\mathbb B_{11},
        \qquad D\geq256.
    \end{equation}
\end{proof}

Lemma~\ref{app:lem:B_entrywise_bounds} and
Eq.~\eqref{app:eq:elementwise_product_bound} imply
\begin{align}
    \operatorname{Tr}\!\left(
        \widetilde{\mathbb B}_{u_1v_1}\cdots
        \widetilde{\mathbb B}_{u_mv_m}
    \right)
    &\leq
    \left(1+\frac{6}{D}\right)^{|J_{11}|}
    \operatorname{Tr}\!\left(
        \mathbb{B}_{u_1v_1}\cdots
        \mathbb{B}_{u_mv_m}
    \right)
    \nonumber\\
    &\leq
    e^{6m/D}
    \operatorname{Tr}\!\left(
        \mathbb{B}_{u_1v_1}\cdots
        \mathbb{B}_{u_mv_m}
    \right).
    \label{app:eq:B_product_entrywise_bound}
\end{align}
In the last line, we use $|J_{11}|\le m$ and $(1+6/D)\le e^{6/D}$. The factor $e^{6m/D}$ will be included in the final bound on $(\widetilde K)_{\boldsymbol u \boldsymbol v}$.

\subsection{Lower bounds for the visibility $m(u)$}

\begin{lemma}[Lower bound for a nonempty run]\label{app:lem:visibility_run_lower_bound}
    For every $\ell\geq1$,
    \begin{equation}
        \alpha_\ell\geq\alpha_1\lambda_+^{\ell-1}.
        \label{app:eq:visibility_run_lower_bound}
    \end{equation}
\end{lemma}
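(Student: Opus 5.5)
The plan is to use the closed form in Fact~\ref{app:fact:alpha} and reduce the claim to sign conditions on $\lambda_\pm$ and $\alpha_-$. Writing $\alpha_\ell=\alpha_+\lambda_+^\ell+\alpha_-\lambda_-^\ell$ and using the same formula at $\ell=1$, namely $\alpha_1=\alpha_+\lambda_++\alpha_-\lambda_-$, gives for every $\ell\ge1$
\begin{equation}
    \alpha_\ell-\alpha_1\lambda_+^{\ell-1}
    =
    \alpha_-\lambda_-\left(\lambda_-^{\ell-1}-\lambda_+^{\ell-1}\right).
\end{equation}
The $\alpha_+$ terms cancel exactly. It therefore suffices to show that this right-hand side is nonnegative.

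First I would establish the spectral facts. From Eq.~\eqref{app:eq:visibility_transfer_matrices},
\begin{equation}
    \det\mathbb K_1=\frac{2^k-2^{k+1}}{(2^k+1)^4}<0,
    \qquad
    \operatorname{Tr}\mathbb K_1=\frac{1}{2^k+1}>0 .
\end{equation}
The negative determinant gives $\lambda_-<0<\lambda_+$. The positive trace, $\lambda_++\lambda_->0$, gives $|\lambda_-|<\lambda_+$. The same conclusions can be read off directly from the explicit formula for $\lambda_\pm$, since $\sqrt{4^k+6\cdot2^k+1}>2^k+1$.

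Next I would determine the sign of the bracket. For $\ell=1$ the bracket is zero and the claim holds with equality. For $\ell\ge2$ we have $\lambda_-^{\ell-1}\le|\lambda_-|^{\ell-1}<\lambda_+^{\ell-1}$, so the bracket is negative. Since also $\lambda_-<0$, the product $\lambda_-(\lambda_-^{\ell-1}-\lambda_+^{\ell-1})$ is nonnegative. The claim thus reduces to showing $\alpha_-\ge0$.

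The remaining step, and the only one needing an estimate, is $\alpha_-\ge0$. By Fact~\ref{app:fact:alpha}, $\alpha_-=(\alpha_1-\lambda_+)/(\lambda_--\lambda_+)$. The denominator is negative, so $\alpha_-\ge0$ is equivalent to $\alpha_1\le\lambda_+$. Heuristically $\lambda_+\approx 1/2^k$ while $\alpha_1=(3\cdot2^k+1)/(2^k+1)^3\approx 3/4^k$, so there is ample room.

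To make this rigorous I would bound the square root from below by $\sqrt{4^k+6\cdot2^k+1}\ge 2^k+1$. This gives
\begin{equation}
    \lambda_+\ge \frac{2(2^k+1)}{2(2^k+1)^2}=\frac{1}{2^k+1}.
\end{equation}
It then suffices to check $(3\cdot2^k+1)/(2^k+1)^2\le 1$, which is equivalent to $3\cdot 2^k+1\le 4^k+2\cdot2^k+1$, i.e.\ $2^k\le 4^k$. This holds for all $k\ge0$. Combining the three steps yields $\alpha_\ell\ge\alpha_1\lambda_+^{\ell-1}$ for every $\ell\ge1$.
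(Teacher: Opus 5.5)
Your proof is correct and follows the paper's argument: both use the closed form from Fact~\ref{app:fact:alpha}, the sign facts $\lambda_-<0<\lambda_+$ and $|\lambda_-|<\lambda_+$, and the inequality $\alpha_1\le 1/(2^k+1)\le\lambda_+$, which gives $\alpha_-\ge0$. The only difference is cosmetic: you factor $\alpha_\ell-\alpha_1\lambda_+^{\ell-1}=\alpha_-\lambda_-(\lambda_-^{\ell-1}-\lambda_+^{\ell-1})$ directly, whereas the paper divides by $\lambda_+^{\ell}$ and splits into odd and even $\ell$.
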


\begin{proof}
    Recall from Fact~\ref{app:fact:alpha} that
    \begin{equation}
        \alpha_\ell
        =
        \alpha_+\lambda_+^\ell
        +
        \alpha_-\lambda_-^\ell,
        \label{app:eq:alpha_spectral_representation}
    \end{equation}
    where
    $\alpha_+=(\alpha_1-\lambda_-)/(\lambda_+-\lambda_-)$,
    $\alpha_-=(\lambda_+-\alpha_1)/(\lambda_+-\lambda_-)$, and
    \begin{equation}
        \lambda_\pm
        =
        \frac{
            D+1\pm\sqrt{D^2+6D+1}
        }{
            2(D+1)^2
        }.
        \label{app:eq:visibility_eigenvalues}
    \end{equation}
    Since $\lambda_-<0<\lambda_+$ and
    $0<\alpha_1=(3D+1)/(D+1)^3\leq 1/(D+1)\leq\lambda_+$, both 
    \begin{equation}
        \alpha_+>0, \quad \alpha_->0.
    \end{equation}
    Moreover,
    $\lambda_++\lambda_-=1/(D+1)>0$, and therefore
    $\lambda_+>|\lambda_-|$.  Hence
    $0<|\lambda_-|/\lambda_+<1$. We rewrite
    Eq.~\eqref{app:eq:alpha_spectral_representation} as
    \begin{equation}
        \alpha_\ell
        =
        \alpha_{+}\lambda_{+}^{\ell}+(-1)^{\ell}\alpha_-|\lambda_-|^{\ell}
        =
        \lambda_+^\ell
        \left[
            \alpha_+
            +
            (-1)^\ell\alpha_-
            \left(
                \frac{|\lambda_-|}{\lambda_+}
            \right)^\ell
        \right].
        \label{app:eq:alpha_parity_representation}
    \end{equation}
    Taking $\ell=1$ gives
    \begin{equation}
        \frac{\alpha_1}{\lambda_+}
        =
        \alpha_+
        -
        \alpha_-
        \frac{|\lambda_-|}{\lambda_+}.
    \end{equation}
    Suppose first that $\ell$ is odd.  Since
    $0<\frac{|\lambda_{-}|}{\lambda_{+}}<1$,
    the nonnegativity of $\alpha_-$ then gives
    \begin{equation}
        \frac{\alpha_{\ell}}{\lambda_{+}^{\ell}}
        =
        \alpha_+
        -
        \alpha_-
        \left(
            \frac{|\lambda_-|}{\lambda_+}
        \right)^\ell
        \geq
        \alpha_+
        -
        \alpha_-
        \frac{|\lambda_-|}{\lambda_+}
        =
        \frac{\alpha_1}{\lambda_+}.
    \end{equation}
    Suppose next that $\ell$ is even.  Since $\alpha_-\geq0$,
    \begin{equation}
        \frac{\alpha_{\ell}}{\lambda_{+}^{\ell}}
        =
        \alpha_+
        +
        \alpha_-
        \left(
            \frac{|\lambda_-|}{\lambda_+}
        \right)^\ell
        \geq
        \alpha_+
        \geq
        \alpha_+
        -
        \alpha_-
        \frac{|\lambda_-|}{\lambda_+}
        =
        \frac{\alpha_1}{\lambda_+}.
    \end{equation}
    Thus, for $\ell\ge1$, $\alpha_\ell\geq\alpha_1\lambda_+^{\ell-1}$.
\end{proof}


\begin{lemma}[Visibility estimates]\label{app:lem:visibility_estimates}
    Suppose that $D\geq256$.  Then
        \begin{align}
            D^2\alpha_1
            &\geq
            \frac{5}{2},
            \label{app:eq:alpha_one_lower_bound}\\
            \frac{1}{D+1}
            &\leq
            \lambda_+
            \leq
            \frac{1}{D},
            \label{app:eq:mu_plus_bounds}\\
            m(1^m)
            &\geq
            \left(1-\frac{1}{D}\right)\lambda_+^m.
            \label{app:eq:all_one_visibility_lower_bound}
        \end{align}
\end{lemma}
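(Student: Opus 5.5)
The three estimates are elementary, and I will prove them one at a time, using only the explicit formulas for $\alpha_1$, $\lambda_\pm$, and $\mathbb K_1$ from Fact~\ref{app:fact:alpha} and Eq.~\eqref{app:eq:visibility_transfer_matrices}.

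\emph{First estimate.} For Eq.~\eqref{app:eq:alpha_one_lower_bound}, write $D^2\alpha_1=D^2(3D+1)/(D+1)^3$. As $D\to\infty$ this tends to $3$, so it suffices to check that the ratio is at least $5/2$ on $D\ge256$. The cleanest route is the bound $D^2/(D+1)^2\ge(256/257)^2$ together with $(3D+1)/(D+1)=3-2/(D+1)\ge 3-2/257$. The product of these two lower bounds is about $2.96>5/2$.

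\emph{Second estimate.} For Eq.~\eqref{app:eq:mu_plus_bounds}, set $\lambda_+=\bigl(D+1+\sqrt{D^2+6D+1}\bigr)/\bigl(2(D+1)^2\bigr)$.
\begin{itemize}
\item For the lower bound, I use $\sqrt{D^2+6D+1}\ge D+1$, which holds because $D^2+6D+1\ge D^2+2D+1$. This gives $\lambda_+\ge 2(D+1)/\bigl(2(D+1)^2\bigr)=1/(D+1)$.
\item For the upper bound, I need $D(D+1)+D\sqrt{D^2+6D+1}\le 2(D+1)^2$, that is, $D\sqrt{D^2+6D+1}\le D^2+3D+2$.
\item Squaring both sides reduces this to $D^4+6D^3+D^2\le(D^2+3D+2)^2=D^4+6D^3+13D^2+12D+4$, which is obviously true.
\end{itemize}

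\emph{Third estimate.} For Eq.~\eqref{app:eq:all_one_visibility_lower_bound}, I use $m(1^m)=\operatorname{Tr}(\mathbb K_1^m)=\lambda_+^m+\lambda_-^m$. Here $\lambda_-<0$ and $|\lambda_-|<\lambda_+$, as already noted in the proof of Lemma~\ref{app:lem:visibility_run_lower_bound}. Hence $\lambda_+^m+\lambda_-^m\ge\lambda_+^m\bigl(1-(|\lambda_-|/\lambda_+)^m\bigr)\ge\lambda_+^m\bigl(1-|\lambda_-|/\lambda_+\bigr)$, where the last step uses $m\ge1$ and the fact that the ratio is below $1$. It therefore suffices to show $|\lambda_-|/\lambda_+\le 1/D$.

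This ratio bound is the one step needing care. Using $\lambda_++\lambda_-=1/(D+1)$, I get $|\lambda_-|=\lambda_+-1/(D+1)$. The sum-and-product identities give the same formula: $\lambda_+\lambda_-=\det\mathbb K_1=(D-2^{k+1})/(D+1)^4=-D/(D+1)^4$. Then $|\lambda_-|=\bigl(\sqrt{D^2+6D+1}-(D+1)\bigr)/\bigl(2(D+1)^2\bigr)$. Rationalizing the numerator gives $\sqrt{D^2+6D+1}-(D+1)=4D/\bigl(\sqrt{D^2+6D+1}+D+1\bigr)\le 2$, so $|\lambda_-|\le 1/(D+1)^2$. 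Combining this with $\lambda_+\ge1/(D+1)$ yields $|\lambda_-|/\lambda_+\le 1/(D+1)\le 1/D$, which completes the argument.

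For even $m$, the trace is already at least $\lambda_+^m$ without this step. The case analysis on the parity of $m$ can therefore be skipped, since the uniform bound covers both parities.
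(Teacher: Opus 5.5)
Your proof is correct and follows the paper's argument. The paper likewise checks $D^2\alpha_1\ge 5/2$ directly and sandwiches $\sqrt{D^2+6D+1}$ between $D+1$ and $D+3$ for the bounds on $\lambda_+$; you square the target inequality instead of using $D+3$. It also writes $m(1^m)\ge\lambda_+^m[1-(|\lambda_-|/\lambda_+)^m]$, using the exact ratio $|\lambda_-|/\lambda_+=4D/(\sqrt{D^2+6D+1}+D+1)^2<1/D$ where you bound $|\lambda_-|\le 1/(D+1)^2$ and $\lambda_+\ge 1/(D+1)$ separately, so the differences are only cosmetic.
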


\begin{proof}
    Using $\alpha_1=(3D+1)/(D+1)^3$ and $D\ge 256$, 
    we can verify Eq.~\eqref{app:eq:alpha_one_lower_bound}. Next, the inequalities
    \begin{equation}
        D+1
        \leq
        \sqrt{D^2+6D+1}
        \leq
        D+3
    \end{equation}
    give
    \begin{equation}
        \frac{1}{D+1}
        \leq
        \lambda_+
        \leq
        \frac{D+2}{(D+1)^2}
        \leq
        \frac{1}{D}
        .
    \end{equation}
    This proves Eq.~\eqref{app:eq:mu_plus_bounds}. For the support vector $u=1^m$, $m(1^m)=\operatorname{Tr}(\mathbb K_1^m)=\lambda_+^m+\lambda_-^m$.
    Moreover,
    \begin{equation}
        \frac{|\lambda_-|}{\lambda_+}
        =
        \frac{
            4D
        }{
            \left(
                \sqrt{D^2+6D+1}+D+1
            \right)^2
        }
        <
        \frac1D.
    \end{equation}
    Therefore,
    \begin{align}
        m(1^m)
        &\geq
        \lambda_+^m-|\lambda_-|^m\\
        &=
        \lambda_+^m
        \left[
            1-
            \left(
                \frac{|\lambda_-|}{\lambda_+}
            \right)^m
        \right]\\
        &\geq
        \left(1-\frac1D\right)\lambda_+^m,
    \end{align}
    which proves Eq.~\eqref{app:eq:all_one_visibility_lower_bound}.
\end{proof}

\subsection{Bounding the entries $(K)_{uv}$}
We now bound each entry 
\begin{equation}
    (K)_{\boldsymbol u \boldsymbol v}:=\frac{\operatorname{Tr}
    \left(
    {\mathbb B}_{u_1v_1}
    \dots
    {\mathbb B}_{u_mv_m}
    \right)}{m(\boldsymbol u)m(\boldsymbol v)}
\end{equation}
by estimating the transfer matrix contraction in the numerator together with the two visibility factors in the
denominator. At every position $i\in J_{00}$, we have
\begin{equation}
    \mathbb B_{u_iv_i}
    =
    \mathbb B_{00},
    \qquad
    \mathbb K_{u_i}
    =
    \mathbb K_{v_i}
    =
    \mathbb K_0.
\end{equation}
The explicit matrix in
Eq.~\eqref{app:eq:B00_entrywise_envelope} has only one nonzero row, so
$\mathbb B_{00}$ has rank one.
Equation~\eqref{app:eq:visibility_rank_one_reset} gives the rank-one
form of $\mathbb K_0$. These rank-one structures factorize the numerator
and the two visibility contractions at the same positions. We therefore
divide the analysis according to whether $J_{00}$ is empty.

Suppose first that $J_{00}\neq\emptyset$ and that the full matrix product is not $\mathbb{B}_{00}^m$. After a cyclic
permutation, it can be written as
\begin{equation}
    \operatorname{Tr} 
    \left(
    \prod_{i=1}^m \mathbb B_{u_iv_i}
    \right)
    =
    \operatorname{Tr} 
    \left(
    \mathbb{B}_{00}^{t_1}W_1
    \mathbb{B}_{00}^{t_2}W_2
    \cdots
    \mathbb{B}_{00}^{t_{n_W}}W_{n_W}
    \right),
    \qquad
    t_a\geq1,
    \label{app:eq:common_zero_component_decomposition}
\end{equation}
where each $W_a$ is a nonempty matrix product containing no
$\mathbb B_{00}$. Here, $n_W$ is the number of maximal nonempty subproducts separated by the $\mathbb B_{00}$ runs, and $W_a$ denotes the $a$th such subproduct.
Consequently, every factor in $W_a$ belongs to
$\{\mathbb B_{11},\mathbb B_{10},\mathbb B_{01}\}$.
We refer to $\mathbb B_{10}$ and $\mathbb B_{01}$ as mismatch matrices, so each $W_a$ either consists only of $\mathbb B_{11}$ or contains at least one mismatch matrix.

Every $W$ containing a mismatch has a unique maximal-run decomposition of the form
\begin{equation}
    W
    =
    \mathbb{B}_{11}^{r_0}
    \mathbb{B}_{\sigma_1}^{\ell_1}
    \mathbb{B}_{11}^{r_1}
    \cdots
    \mathbb{B}_{\sigma_z}^{\ell_z}
    \mathbb{B}_{11}^{r_z}.
    \label{app:eq:mismatch_component_decomposition}
\end{equation}
Here, $z\geq1$, $\sigma_j\in\{10,01\}$ and $\ell_j\geq1$ for
$1\leq j\leq z$, while $r_j\geq0$ for $0\leq j\leq z$.
The exponents $r_0$ and $r_z$ count the $\mathbb B_{11}$ factors at the
two boundaries of $W$ and may vanish without any restriction on the
adjacent mismatch orientation. For every internal index $1\leq j\leq z-1$, the condition $r_j=0$ requires $\sigma_j\neq\sigma_{j+1}$; otherwise, the two adjacent powers can be combined, contradicting the maximality of the decomposition.

We write
\begin{equation}
    r:=\sum_{j=0}^z r_j,
    \qquad
    h:=\sum_{j=1}^z\ell_j,
    \qquad
    L:=r+h.
    \label{app:eq:component_counts}
\end{equation}
Each mismatch matrix $\mathbb B_{10}$ and $\mathbb B_{01}$ has exactly
two nonzero rows and therefore admits the rank-two factorization
\begin{equation}
    \mathbb B_\sigma
    =
    D^{-1/2}E_\sigma R_\sigma,
    \label{app:eq:mismatch_rank_two_factorization}
\end{equation}
where 
\begin{equation}
    E_{10}
    =
    \begin{pmatrix}
        1&0\\
        0&1\\
        0&0\\
        0&0\\
        0&0\\
        0&0
    \end{pmatrix},
    \qquad
    E_{01}
    =
    \begin{pmatrix}
        1&0\\
        0&0\\
        0&1\\
        0&0\\
        0&0\\
        0&0
    \end{pmatrix}.
    \label{app:eq:mismatch_embeddings}
\end{equation}
For $\sigma\in\{10,01\}$, let $R_\sigma$ be the $2\times6$ matrix consisting of the two nonzero rows of $D^{1/2}\mathbb B_\sigma$ [Eqs.~\eqref{app:eq:B10_entrywise_envelope} and~\eqref{app:eq:B01_entrywise_envelope}]. Define the internal mismatch transfer by
\begin{equation}
    J_\sigma
    :=
    R_\sigma E_\sigma.
    \label{app:eq:J_sigma_definition}
\end{equation}
Both orientations give the same matrix:
\begin{equation}
    J
    =
    J_{10}
    =
    J_{01}
    =
    \begin{pmatrix}
        D^{-2}&D^{-2}\\
        2D^{-1}&D^{-1}
    \end{pmatrix}.
    \label{app:eq:J_sigma_explicit}
\end{equation}
Consequently, for every $\ell\geq1$,
\begin{equation}
    \mathbb B_\sigma^\ell
    =
    D^{-\ell/2}
    E_\sigma
    J_\sigma^{\ell-1}
    R_\sigma.
    \label{app:eq:mismatch_run_factorization}
\end{equation}
Applying this factorization to  Eq.~\eqref{app:eq:mismatch_component_decomposition} gives
\begin{align}
    W
    &=
    \mathbb B_{11}^{r_0} \mathbb B_{\sigma_1}^{\ell_1}  \mathbb B_{11}^{r_1}\dots\mathbb B_{\sigma_z}^{\ell_z}\mathbb B_{11}^{r_z}\\
    &=
    D^{-h/2}
    \mathbb B_{11}^{r_0}
    (E_{\sigma_1}
    J^{\ell_1-1}
    R_{\sigma_1})
    \mathbb B_{11}^{r_1}
    \cdots
    (E_{\sigma_z}
    J^{\ell_z-1}
    R_{\sigma_z})
    \mathbb B_{11}^{r_z}.
    \label{app:eq:w_decomp}
\end{align}
This representation allows both mismatch orientations to be analyzed with the same propagation estimate. The powers within each mismatch run are governed by the two-dimensional matrix $J$, and the orientation $\sigma\in\{10,01\}$ appears only in $E_\sigma$ and $R_\sigma$. Equation~\eqref{app:eq:mismatch_component_decomposition} covers every component containing at least one mismatch matrix. The two remaining forms are $W=\mathbb B_{11}^{L}$, when no mismatch matrix is present, and the full product $\mathbb B_{00}^{m}$, which contains no component $W_a$. These forms are treated separately.

We divide the proof according to whether $J_{00}$ is empty. In Case 1, $J_{00}\neq\emptyset$, and the rank-one matrices at the common-zero positions reduce the numerator and both visibility factors to open components $W_a$. In Case 2, $J_{00}=\emptyset$, and the numerator remains a single cyclic trace. We first establish the decompositions and estimates used in the case analysis.
Write the rank-one matrix $\mathbb B_{00}$ as
\begin{equation}
    \mathbb B_{00}
    =
    e_0b^T,
    \qquad
    e_0^T=(1,0,0,0,0,0),
    \qquad
    b^T=
    \left(
        D^{-1},
        D^{-2},
        D^{-2},
        D^{-2},
        2D^{-3},
        2D^{-3}
    \right).
\end{equation}
Since $b^Te_0=D^{-1}$, we have $\mathbb B_{00}^t=D^{-t+1}\mathbb B_{00}$ for $t\geq1$ and $\mathbb B_{00}X\mathbb B_{00}=(b^TXe_0)\mathbb B_{00}$. Applying these identities to Eq.~\eqref{app:eq:common_zero_component_decomposition}, and using $|J_{00}|=\sum_{a=1}^{n_W}t_a$, gives
\begin{equation}
    \operatorname{Tr}\!\left(
        \mathbb B_{u_1v_1}\cdots
        \mathbb B_{u_mv_m}
    \right)
    =
    D^{-|J_{00}|+n_W}
    \prod_{a=1}^{n_W}b^TW_ae_0.
    \label{app:eq:numerator_component_factorization}
\end{equation}

The same decomposition applies to the two visibility factors. For a component $W=W_a$ lying between two consecutive $\mathbb B_{00}$ runs in Eq.~\eqref{app:eq:common_zero_component_decomposition}, relabel its $L$ positions and write $W=\mathbb B_{u_1v_1}\cdots\mathbb B_{u_Lv_L}$. Define
\begin{equation}
    \nu_{\boldsymbol u}(W)
    :=
    \operatorname{Tr}\!\left(
        \mathbb K_0
        \mathbb K_{u_1}\cdots\mathbb K_{u_L}
    \right),
    \qquad
    \nu_{\boldsymbol v}(W)
    :=
    \operatorname{Tr}\!\left(
        \mathbb K_0
        \mathbb K_{v_1}\cdots\mathbb K_{v_L}
    \right).
    \label{app:eq:component_visibility_factors}
\end{equation}
The rank-one form of $\mathbb K_0$ implies $\mathbb K_0X\mathbb K_0 =\operatorname{Tr}(\mathbb K_0X)\mathbb K_0$. Together with $\mathbb K_0^{t_a}=\mathbb K_0$, this gives
\begin{align}
    m(\boldsymbol u)
    =
    \prod_{a=1}^{n_W}\nu_u(W_a),
    \quad
    m(\boldsymbol v)
    =
    \prod_{a=1}^{n_W}\nu_v(W_a).
    \label{app:eq:visibility_component_factorization}
\end{align}
Combining Eqs.~\eqref{app:eq:numerator_component_factorization} and~\eqref{app:eq:visibility_component_factorization}, we obtain
\begin{equation}
    (K)_{\boldsymbol u \boldsymbol v}
    =
    D^{-|J_{00}|+n_W}
    \prod_{a=1}^{n_W}
    \frac{
        b^TW_ae_0
    }{
        \nu_u(W_a)\nu_v(W_a)
    }.
    \label{app:eq:normalized_component_factorization}
\end{equation}
For example, $W=\mathbb B_{01}\mathbb B_{10}\mathbb B_{11}$ induces the support sequences $011$ and $101$, so
\begin{equation}
    W=\mathbb B_{01}\mathbb B_{10}\mathbb B_{11} \longmapsto \nu_{\boldsymbol u}(W)=\alpha_2,\quad \nu_{\boldsymbol v}(W)=\alpha_1^2.
\end{equation}

For a mismatch component with the decomposition in Eq.~\eqref{app:eq:mismatch_component_decomposition}, set
\begin{equation}
    I_{\partial}:=\mathbf 1_{\{r_0>0\}}+\mathbf 1_{\{r_z>0\}}.
\end{equation}

\begin{lemma}[Visibility of an open mismatch component]
\label{app:lem:open_mismatch_component_visibility}
    Every open mismatch component $W$ satisfies
    \begin{equation}
        \nu_{\boldsymbol u}(W)\nu_{\boldsymbol v}(W)
        \geq
        \left(
            \frac{\alpha_1}{\lambda_+}
        \right)^{z+I_{\partial}}
        \lambda_+^{2r+h}
        =
        \alpha_1^{z+I_{\partial}}
        \lambda_+^{2r+h-z-I_{\partial}}.
        \label{app:eq:open_mismatch_component_visibility}
    \end{equation}
\end{lemma}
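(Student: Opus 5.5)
The plan is to treat $\nu_{\boldsymbol u}(W)$ and $\nu_{\boldsymbol v}(W)$ separately with the run factorization of the visibility, lower bound each run factor by Lemma~\ref{app:lem:visibility_run_lower_bound}, and then count the total number of runs exactly.

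\textbf{Step 1: factorize each visibility over its runs.} Write the support sequence of $\boldsymbol u$ restricted to $W$ as $(u_1,\ldots,u_L)$, and prepend the $\mathbb K_0$ factor appearing in Eq.~\eqref{app:eq:component_visibility_factors}. Then $\nu_{\boldsymbol u}(W)$ is the cyclic trace of a word that starts with a $0$. By the rank-one reset identity Eq.~\eqref{app:eq:visibility_reset_identity} and $\mathbb K_0^2=\mathbb K_0$, exactly as in Eq.~\eqref{app:eq:mP_decomp},
\[
\nu_{\boldsymbol u}(W)=\prod_{\text{maximal runs of ones in }u}\alpha_{\ell},
\]
where $\ell$ is the run length. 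Runs of length zero contribute $\alpha_0=1$. The same formula holds for $\boldsymbol v$.

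\textbf{Step 2: bound each run factor.} For every run with $\ell\ge1$, Lemma~\ref{app:lem:visibility_run_lower_bound} gives $\alpha_\ell\ge(\alpha_1/\lambda_+)\lambda_+^{\ell}$. Multiplying over all runs gives
\[
\nu_{\boldsymbol u}(W)\,\nu_{\boldsymbol v}(W)\ge(\alpha_1/\lambda_+)^{N}\lambda_+^{M},
\]
where $N$ is the total number of nonempty runs in $u$ and $v$ together, and $M$ is the total number of ones in $u$ and $v$ together. The count $M$ is immediate: each $\mathbb B_{11}$ position has $u_i=v_i=1$ and each mismatch position has exactly one of $u_i,v_i$ equal to $1$, so $M=2r+h$. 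Since $\alpha_1\le 1/(D+1)\le\lambda_+$, as in the proof of Lemma~\ref{app:lem:visibility_run_lower_bound}, we have $\alpha_1/\lambda_+\le1$. It therefore suffices to show $N\le z+I_\partial$.

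\textbf{Step 3: count the runs.} This counting is the only nontrivial step. I will show $N=z+I_\partial$ by counting run starts. A run start is a position where a sequence equals $1$ and its predecessor is $0$; the predecessor of position $1$ is the boundary $\mathbb K_0$, which counts as $0$. There are three kinds of position to check.
\begin{itemize}
\item At the first position of $W$: if $r_0>0$ the position is a $\mathbb B_{11}$, so both $u$ and $v$ start a run, giving $2$ starts. Otherwise the position is a mismatch and exactly one sequence starts, giving $1$ start.
\item Inside a $\mathbb B_{11}^{r_j}$ block or inside a mismatch run $\mathbb B_{\sigma_j}^{\ell_j}$: each sequence is constant there, so there are no starts.
\item Immediately after mismatch run $\sigma_j$: the sequence that $\sigma_j$ sets to zero becomes $1$ at the next position. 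This holds whether the next position is a $\mathbb B_{11}$ ($r_j>0$) or a mismatch of the opposite orientation ($r_j=0$, which forces $\sigma_{j+1}\neq\sigma_j$ by maximality). The other sequence was already $1$, so there is exactly one start. The exception is $j=z$ with $r_z=0$: then $W$ ends, and there is no start.
\end{itemize}
Summing the three contributions,
\[
N=1+\mathbf 1_{\{r_0>0\}}+(z-1)+\mathbf 1_{\{r_z>0\}}=z+I_\partial.
\]
Substituting $N$ and $M$ into the bound of Step 2 yields Eq.~\eqref{app:eq:open_mismatch_component_visibility}. The second form in the statement follows by collecting the powers of $\lambda_+$.

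I expect Step 3 to need the most care. The delicate points are the two boundary cases and the rule that $r_j=0$ forces $\sigma_j\neq\sigma_{j+1}$; a stray over-count of runs would reverse the direction of the inequality, because the factor $\alpha_1/\lambda_+$ is at most $1$.
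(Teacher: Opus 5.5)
Your proposal is correct and follows the paper's proof: factor $\nu_{\boldsymbol u}(W)$ and $\nu_{\boldsymbol v}(W)$ into factors $\alpha_\ell$, one per run of ones, count $2r+h$ ones and $z+I_{\partial}$ nonempty runs in total, and apply Lemma~\ref{app:lem:visibility_run_lower_bound} to each factor. Your run-start bookkeeping makes the paper's one-sentence run count explicit; the one transition you do not list, from a $\mathbb B_{11}$ block into a mismatch run, contributes no start because both sequences are already $1$ there.
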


\begin{proof}
    In the $u$- and $v$-sequences defining $\nu_{\boldsymbol u}(W)$ and $\nu_{\boldsymbol v}(W)$, each $\mathbb B_{11}$ contributes one $1$ to both sequences, and each mismatch matrix contributes one $1$ to exactly one sequence. Hence these sequences contain $2r+h$ ones in total. The $z$ maximal mismatch runs give $z$ nonempty runs of ones, while the two boundary active runs contribute $\mathbf 1_{\{r_0>0\}}+\mathbf 1_{\{r_z>0\}}=I_{\partial}$ additional runs. Thus, $\nu_{\boldsymbol u}(W)\nu_{\boldsymbol v}(W)$ contains $z+I_{\partial}$ factors $\alpha_\ell$, whose indices sum to $2r+h$. Applying Eq.~\eqref{app:eq:visibility_run_lower_bound} to each factor proves Eq.~\eqref{app:eq:open_mismatch_component_visibility}.
\end{proof}

We next collect the estimates used to bound the numerator factors $b^TWe_0$. Equation~\eqref{app:eq:w_decomp} gives
\begin{align}
    b^TWe_0
    =&
    D^{-h/2}
    \left(
        b^T\mathbb B_{11}^{r_0}E_{\sigma_1}
    \right)
    J^{\ell_1-1}
    \left(
        R_{\sigma_1}
        \mathbb B_{11}^{r_1}
        E_{\sigma_2}
    \right)
    J^{\ell_2-1}
    \cdots
    \nonumber\\
    &\quad\cdots
    \left(
        R_{\sigma_{z-1}}
        \mathbb B_{11}^{r_{z-1}}
        E_{\sigma_z}
    \right)
    J^{\ell_z-1}
    \left(
        R_{\sigma_z}
        \mathbb B_{11}^{r_z}e_0
    \right),
    \label{app:eq:bwe}
\end{align}
where the middle chain is absent when $z=1$. If an internal $r_j$ vanishes, the corresponding factor becomes $R_{\sigma_j}E_{\sigma_{j+1}}$, with $\sigma_j\neq\sigma_{j+1}$. All factors in Eq.~\eqref{app:eq:bwe} are entrywise nonnegative, so the estimates below may be applied successively from right to left. Define
\begin{equation}
    g
    :=
    \left(
        \frac{3}{D^{3/2}},
        \frac{4}{D},
        \frac{4}{D},
        \frac{3}{D^{1/2}},
        1,
        1
    \right)^T,
    \qquad
    p:=(1,D)^T,
    \label{app:eq:dimensionful_test_vectors}
\end{equation}
and
\begin{equation}
    \Lambda_D:=1+\frac{4}{\sqrt D},
    \qquad
    \kappa_D:=1+\frac{2}{D},
    \qquad
    \gamma_D:=\frac{\Lambda_D}{(D\lambda_+)^2}.
    \label{app:eq:propagation_constants}
\end{equation}

\begin{lemma}[Elementary transfer bounds]
\label{app:lem:elementary_transfer_bounds}
    Suppose that $D=2^k\geq256$. For
    $\sigma,\tau\in\{10,01\}$, the following entrywise bounds hold:
    \begin{align}
        \mathbb B_{11}g
        &\leq_{\mathrm{ew}}
        D^{-2}\Lambda_Dg,
        \label{app:eq:active_propagation_bound}\\
        Jp
        &\leq_{\mathrm{ew}}
        D^{-1}\kappa_Dp,
        \label{app:eq:mismatch_propagation_bound}\\
        \mathbb B_{11}E_\tau p
        &\leq_{\mathrm{ew}}
        2D^{-1}g,
        \label{app:eq:mismatch_to_active_bound}\\
        R_\sigma g
        &\leq_{\mathrm{ew}}
        10D^{-3}p,
        \label{app:eq:active_to_mismatch_bound}\\
        R_\sigma E_\tau p
        &\leq_{\mathrm{ew}}
        5D^{-2}p,
        \qquad \sigma\neq\tau,
        \label{app:eq:orientation_change_bound}\\
        b^Tg
        &\leq
        7D^{-5/2},
        \label{app:eq:left_active_boundary_bound}\\
        b^TE_\sigma p
        &=
        2D^{-1},
        \label{app:eq:left_mismatch_boundary_bound}\\
        R_\sigma e_0
        &\leq_{\mathrm{ew}}
        2D^{-2}p,
        \label{app:eq:right_mismatch_boundary_bound}\\
        \mathbb B_{11}e_0
        &\leq_{\mathrm{ew}}
        \frac{2}{3}D^{-3/2}g,
        \label{app:eq:right_active_boundary_bound}\\
        b^Te_0
        &=
        D^{-1}.
        \label{app:eq:reset_bound}
    \end{align}
\end{lemma}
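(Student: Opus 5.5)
Every claimed inequality involves only explicit nonnegative matrices and vectors whose entries are monomials in $D^{-1/2}$ with small integer coefficients, so the plan is a direct componentwise verification. For each matrix–vector bound we compute the product row by row, divide by the corresponding entry of the target vector, and bound the resulting ratio using only $D\ge256$, equivalently $D^{-1/2}\le 1/16$. Two facts are taken from earlier results: $\mathbb B_{00}=e_0b^T$ with the stated $b$, and the $E_\sigma,R_\sigma,J$ factorization of the mismatch matrices from Eqs.~\eqref{app:eq:B10_entrywise_envelope}--\eqref{app:eq:J_sigma_explicit}. Nothing about the Clifford ensemble itself is needed beyond these explicit envelopes.

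We would first dispose of the short identities. The scalar statements are immediate: $b^Te_0=D^{-1}$ is the first entry of $b$. For $b^TE_\sigma p$, note that $E_{10}p=(1,D,0,0,0,0)^T$ and $E_{01}p=(1,0,D,0,0,0)^T$, so
\[
b^TE_\sigma p=D^{-1}+D^{-2}\cdot D=2D^{-1}.
\]
For $b^Tg$, the sum is $3D^{-5/2}+8D^{-3}+3D^{-5/2}+4D^{-3}$, which is at most $7D^{-5/2}$ because $12D^{-3}\le D^{-5/2}$ when $\sqrt D\ge12$. For $Jp$, the entries are $(D^{-2}+D^{-1},\,3)$, compared with $D^{-1}\kappa_D(1,D)=(D^{-1}+2D^{-2},\,1+2/D)$; this needs to be rechecked. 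The second entry $3$ exceeds $1+2/D$, so either the envelope of $J$ uses $D^{-1}$ and $D^{-2}$ in the opposite rows, or we must recompute $R_\sigma E_\sigma$ from the envelopes. The first step is therefore to recompute $J$ carefully from rows $1,2$ (respectively $1,3$) of $D^{1/2}\mathbb B_\sigma$ restricted to the columns selected by $E_\sigma$. If the stated $J$ is right, adjust the test vector $p$ so that the Collatz–Wielandt ratio is $D^{-1}\kappa_D$.

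Next we would check the cross bounds. For $R_\sigma e_0$, the first column of $R_\sigma$ is $(D^{-2},2D^{-1})$, which is at most $2D^{-2}(1,D)$. For $R_\sigma g$, row one gives
\[
3D^{-7/2}+4D^{-3}+\cdots\le 10D^{-3},
\]
and row two is dominated by the $\mathtt E$ column, $3D^{-2}\cdot 3D^{-1/2}=9D^{-5/2}$, together with the two $\mathtt C,\mathtt A$ columns contributing $4D^{-2}$, for a total well below $10D^{-2}$. For $R_\sigma E_\tau$ with $\sigma\ne\tau$, the selected columns are $\{\texttt{00},\texttt{01}\}$ or $\{\texttt{00},\texttt{10}\}$. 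Their entries are $(D^{-2},2D^{-3})$ and $(2D^{-1},3D^{-2})$, and applying them to $(1,D)$ gives $(3D^{-2},5D^{-1})\le5D^{-2}p$. For $\mathbb B_{11}E_\tau p$ and $\mathbb B_{11}e_0$, we read off the $\texttt{00}$ column together with the $\texttt{10}$ or $\texttt{01}$ column of $\mathbb B_{11}$ and compare row by row with $g$. For example, the $\mathtt C$ row of $\mathbb B_{11}E_\tau p$ equals $\tfrac52D^{-2}+D^{-1}\le 2D^{-1}\cdot1$, and the first row equals $D^{-3}+4D^{-3}=5D^{-3}\le 2D^{-1}\cdot3D^{-3/2}$.

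The main obstacle is the Collatz–Wielandt bound $\mathbb B_{11}g\le D^{-2}\Lambda_Dg$, because the rows of $\mathbb B_{11}$ have very different scales and $g$ must balance them exactly. We would compute each of the six rows. The $\mathtt C$ and $\mathtt A$ rows give
\[
\tfrac52D^{-2}\cdot3D^{-3/2}+2\cdot4D^{-3}+3D^{-5/2}+D^{-2}+3D^{-3},
\]
whose leading term is $D^{-2}$, with corrections $3D^{-5/2}$ and $O(D^{-3})$ that are absorbed into $4D^{-5/2}$. The $\mathtt E$ row gives $D^{-2}\cdot3D^{-1/2}$ plus corrections of order $D^{-3}$, which must be compared against $3D^{-5/2}\cdot\tfrac{4}{\sqrt D}$. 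The $\texttt{10}$ and $\texttt{01}$ rows give $4D^{-3}\cdot3D^{-1/2}+\dots\le 4D^{-3}\Lambda_D$ after absorbing the terms. The first row gives approximately $3D^{-7/2}$. In each case we verify that the ratio to $g_r$ is at most $D^{-2}(1+4/\sqrt D)$, using $D^{-1/2}\le1/16$ to absorb the lower-order terms. If a row fails, we would tune the $g$ coefficients, which are chosen precisely so that the $\mathtt C,\mathtt A$ rows are tight.
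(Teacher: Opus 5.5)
Your plan is the same as the paper's proof: multiply the explicit envelopes row by row and absorb every lower-order term using $D^{-1/2}\le 1/16$. Your row analysis of $\mathbb B_{11}g$ reproduces the paper's ratios ($1+\tfrac83D^{-1/2}+\cdots$, $1+3D^{-1/2}+6D^{-1}+\cdots$, $1+3D^{-1/2}+11D^{-1}+\cdots$), each at most $\Lambda_D$.

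\textbf{The $Jp$ check.} Here an arithmetic error leads you to doubt a true bound. Since $p=(1,D)^T$, the second row of $J$ gives $2D^{-1}\cdot 1+D^{-1}\cdot D=1+2D^{-1}$, not $3$. Hence $Jp=(D^{-1}+D^{-2},\,1+2D^{-1})^T$. Compare this with $D^{-1}\kappa_Dp=(D^{-1}+2D^{-2},\,1+2D^{-1})^T$: the first component holds with room and the second holds with equality.

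The stated $J$ is also correct. Restricting rows $\texttt{00},\texttt{10}$ of $D^{1/2}\mathbb B_{10}$ to columns $\texttt{00},\texttt{10}$ returns it. So does restricting rows $\texttt{00},\texttt{01}$ of $D^{1/2}\mathbb B_{01}$ to columns $\texttt{00},\texttt{01}$.

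So neither recomputing $J$ nor your fallback of retuning $p$ is warranted. Retuning $p$ would in fact create work. The same $p$ is the test vector in $R_\sigma g\leq_{\mathrm{ew}}10D^{-3}p$, $R_\sigma E_\tau p\leq_{\mathrm{ew}}5D^{-2}p$, $R_\sigma e_0\leq_{\mathrm{ew}}2D^{-2}p$, $\mathbb B_{11}E_\tau p\leq_{\mathrm{ew}}2D^{-1}g$, and $b^TE_\sigma p=2D^{-1}$. It is also used in the later Collatz--Wielandt step of Lemma~\ref{app:lem:nonpure_cyclic_mismatch_numerator}. All of these would have to be re-verified.

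\textbf{The second row of $R_\sigma g$.} This is a smaller slip. You omit the $\texttt{10}$-column contribution $D^{-1}\cdot 4D^{-1}=4D^{-2}$ and the $\texttt{00}$-column contribution $2D^{-1}\cdot 3D^{-3/2}=6D^{-5/2}$. The $\mathtt E$ column does not dominate. The correct value is $8D^{-2}+15D^{-5/2}+12D^{-3}$. This is below $10D^{-2}$ because $15/16+12/256<2$, but it is not ``well below.'' The conclusion survives.

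With these corrections your verification goes through exactly as in the paper. For $\sigma=01$ it suffices to note that $\mathbb B_{01}$ is $\mathbb B_{10}$ with the $\texttt{10}$ and $\texttt{01}$ rows and columns exchanged. Moreover, $b$, $g$, and $\mathbb B_{11}$ are invariant under this exchange.
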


\begin{proof}
    Direct multiplication gives
    \begin{align}
        D^2\operatorname{diag}(g)^{-1}\mathbb B_{11}g
        &=
        \begin{pmatrix}
            1+\frac{8}{3}D^{-1/2}+D^{-1}+\frac{32}{3}D^{-3/2}\\
            1+3D^{-1/2}+6D^{-1}+\frac{9}{2}D^{-3/2}\\
            1+3D^{-1/2}+6D^{-1}+\frac{9}{2}D^{-3/2}\\
            1+\frac{8}{3}D^{-1/2}+2D^{-1}+\frac{56}{3}D^{-3/2}\\
            1+3D^{-1/2}+11D^{-1}+\frac{15}{2}D^{-3/2}\\
            1+3D^{-1/2}+11D^{-1}+\frac{15}{2}D^{-3/2}
        \end{pmatrix}\\
        &\leq_{\mathrm{ew}}
        (1+4D^{-1/2})\mathbf 1_6,
    \end{align}
    This proves Eq.~\eqref{app:eq:active_propagation_bound}. The matrix $\mathbb B_{01}$ is obtained from $\mathbb B_{10}$ by exchanging the second and third rows and columns. It is therefore enough to compute
    \begin{align}
        Jp
        &=
        \begin{pmatrix}
            D^{-2} & D^{-2}\\
            2D^{-1} & D^{-1}
        \end{pmatrix}
        \begin{pmatrix}
            1\\
            D
        \end{pmatrix}
        =
        \begin{pmatrix}
            D^{-1}+D^{-2}\\
            1+2D^{-1}
        \end{pmatrix}
        \leq_{\mathrm{ew}}
        D^{-1}\kappa_Dp,
        \label{app:eq:elementary_Jp}\\
        \mathbb B_{11}E_{10}p
        &=
        \mathbb B_{11}
        \begin{pmatrix}
            1\\
            D\\
            0\\
            0\\
            0\\
            0
        \end{pmatrix}
        =
        \begin{pmatrix}
            5D^{-3}\\
            2D^{-2}+6D^{-3}\\
            4D^{-2}+6D^{-3}\\
            9D^{-2}\\
            D^{-1}+\frac52D^{-2}\\
            D^{-1}+\frac52D^{-2}
        \end{pmatrix}
        \leq_{\mathrm{ew}}
        2D^{-1}g,
        \label{app:eq:elementary_BEp}\\
        R_{10}g
        &=
        \begin{pmatrix}
            D^{-2} & D^{-2} & 2D^{-3} & 2D^{-3} & 2D^{-3} & 2D^{-3}\\
            2D^{-1} & D^{-1} & 3D^{-2} & 3D^{-2} & 2D^{-2} & 2D^{-2}
        \end{pmatrix}
        \begin{pmatrix}
            3D^{-3/2}\\
            4D^{-1}\\
            4D^{-1}\\
            3D^{-1/2}\\
            1\\
            1
        \end{pmatrix}
        \\
        &=
        \begin{pmatrix}
            8D^{-3}+9D^{-7/2}+8D^{-4}\\
            8D^{-2}+15D^{-5/2}+12D^{-3}
        \end{pmatrix}
        \leq_{\mathrm{ew}}
        10D^{-3}p,
        \label{app:eq:elementary_Rg}\\
        R_{10}E_{01}p
        &=
        \begin{pmatrix}
            D^{-2} & 2D^{-3}\\
            2D^{-1} & 3D^{-2}
        \end{pmatrix}
        \begin{pmatrix}
            1\\
            D
        \end{pmatrix}
        =
        \begin{pmatrix}
            3D^{-2}\\
            5D^{-1}
        \end{pmatrix}
        \leq_{\mathrm{ew}}
        5D^{-2}p.
        \label{app:eq:elementary_REp}
    \end{align}
    These inequalities follow directly from $D\geq256$. The boundary terms satisfy
    \begin{align}
        b^Tg
        &=
        6D^{-5/2}+12D^{-3}
        \leq
        7D^{-5/2},
        \label{app:eq:elementary_bg}\\
        b^TE_{10}p
        &=
        2D^{-1},
        \label{app:eq:elementary_bEp}\\
        R_{10}e_0
        &=
        \begin{pmatrix}
            D^{-2}\\
            2D^{-1}
        \end{pmatrix}
        \leq_{\mathrm{ew}}
        2D^{-2}p,
        \label{app:eq:elementary_Re}\\
        \mathbb B_{11}e_0
        &=
        \begin{pmatrix}
            D^{-3}\\
            6D^{-3}\\
            6D^{-3}\\
            2D^{-2}\\
            \frac52D^{-2}\\
            \frac52D^{-2}
               \end{pmatrix}
        \leq_{\mathrm{ew}}
        \frac23D^{-3/2}g,
        \label{app:eq:elementary_Be}\\
        b^Te_0
        &=
        D^{-1}.
        \label{app:eq:elementary_be}
    \end{align}
    The corresponding statements for $\mathbb B_{01}$ can be proved in the same way.
\end{proof}

The remaining estimates are used only at boundaries and at transitions between the six-dimensional transfer space and the two-dimensional mismatch space. For these estimates, only the powers of $D$ are relevant, as the fixed coefficients do not affect the asymptotic scaling.

\begin{corollary}[Connector and boundary bounds]
\label{app:cor:connector_and_boundary_bounds}
    For $r\geq0$, set $I_r:=\mathbf 1_{\{r>0\}}$. Then, for
    $\sigma,\tau\in\{10,01\}$,
    \begin{align}
        R_\sigma\mathbb B_{11}^re_0
        &\leq_{\mathrm{ew}}
        \frac{20}{3}
        D^{-2r-2-I_r/2}
        \Lambda_D^r p,
        \label{app:eq:right_endpoint_bound}\\
        R_\sigma\mathbb B_{11}^rE_\tau p
        &\leq_{\mathrm{ew}}
        20D^{-2r-2}\Lambda_D^r p,
        \label{app:eq:internal_connector_bound}\\
        b^T\mathbb B_{11}^rE_\sigma p
        &\leq
        14D^{-2r-1-I_r/2}\Lambda_D^r.
        \label{app:eq:left_endpoint_bound}
    \end{align}
    In Eq.~\eqref{app:eq:internal_connector_bound}, $r=0$ requires $\sigma\neq\tau$.
\end{corollary}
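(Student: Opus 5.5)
The plan is to derive all three bounds by iterating the elementary estimates of Lemma~\ref{app:lem:elementary_transfer_bounds}. Throughout, I would use the fact that every matrix and vector involved is entrywise nonnegative, so entrywise inequalities may be applied successively from right to left (Eq.~\eqref{app:eq:elementwise_product_bound}). The common tool is the propagation bound $\mathbb B_{11}g\leq_{\mathrm{ew}}D^{-2}\Lambda_Dg$. Iterating it gives
\[
\mathbb B_{11}^{s}g\leq_{\mathrm{ew}}D^{-2s}\Lambda_D^{s}g \qquad\text{for every } s\geq0.
\]
Each bound then splits into the cases $r=0$ ($I_r=0$) and $r\geq1$ ($I_r=1$).

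For Eq.~\eqref{app:eq:right_endpoint_bound} with $r=0$, the bound $R_\sigma e_0\leq_{\mathrm{ew}}2D^{-2}p$ from Eq.~\eqref{app:eq:right_mismatch_boundary_bound} is already at least as strong as required, since $2\le 20/3$. For $r\geq1$, I would peel off one factor using $\mathbb B_{11}e_0\leq_{\mathrm{ew}}\tfrac23D^{-3/2}g$, then propagate the remaining $r-1$ factors along $g$. This gives
\[
\mathbb B_{11}^re_0\leq_{\mathrm{ew}}\tfrac23D^{-3/2-2(r-1)}\Lambda_D^{r-1}g .
\]
Applying $R_\sigma g\leq_{\mathrm{ew}}10D^{-3}p$ yields $\tfrac{20}{3}D^{-2r-5/2}\Lambda_D^{r-1}p$. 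This matches the exponent $-2r-2-1/2$, and $\Lambda_D^{r-1}\le\Lambda_D^r$.

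Eq.~\eqref{app:eq:internal_connector_bound} follows the same pattern. For $r=0$ (so $\sigma\ne\tau$), Eq.~\eqref{app:eq:orientation_change_bound} gives $5D^{-2}p\le 20D^{-2}p$. For $r\ge1$, I would use $\mathbb B_{11}E_\tau p\leq_{\mathrm{ew}}2D^{-1}g$, propagate $r-1$ factors, and finish with $R_\sigma g\leq_{\mathrm{ew}}10D^{-3}p$. The total is $20D^{-1-2(r-1)-3}\Lambda_D^{r-1}p=20D^{-2r-2}\Lambda_D^{r-1}p$.

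For Eq.~\eqref{app:eq:left_endpoint_bound}, the case $r=0$ is the identity $b^TE_\sigma p=2D^{-1}\le14D^{-1}$. For $r\ge1$, I would first use $\mathbb B_{11}E_\sigma p\le 2D^{-1}g$, then propagate to obtain $2D^{-1-2(r-1)}\Lambda_D^{r-1}g$, and finally contract with $b^Tg\le7D^{-5/2}$. This gives $14D^{-2r-3/2}\Lambda_D^{r-1}$, as claimed.

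The only point requiring care, and the one I expect to be the main obstacle, is the bookkeeping of half-integer powers of $D$. In particular, one must check that the extra $D^{-1/2}$ appears exactly when $r>0$. It arises from the mismatch between the scales of $g$, $p$ and $e_0$ in the entry and exit estimates. Everything else consists of direct substitutions of the previous lemma.
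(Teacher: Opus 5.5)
Your proposal is correct and matches the paper's proof: the $r=0$ cases come straight from the elementary bounds on $R_\sigma e_0$, $R_\sigma E_\tau p$ (for $\sigma\neq\tau$) and $b^TE_\sigma p$. For $r\ge1$ you peel off one boundary factor ($\mathbb B_{11}e_0\leq_{\mathrm{ew}}\tfrac23D^{-3/2}g$ or $\mathbb B_{11}E_\tau p\leq_{\mathrm{ew}}2D^{-1}g$), propagate the remaining $r-1$ factors along $g$, close with $R_\sigma g$ or $b^Tg$, and use $\Lambda_D\ge1$; your exponent bookkeeping ($-2r-5/2$, $-2r-2$, $-2r-3/2$) is right.
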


\begin{proof}
    Suppose first that $r=0$. Then $I_r=0$, and
    Eqs.~\eqref{app:eq:right_mismatch_boundary_bound},
    \eqref{app:eq:orientation_change_bound}, and
    \eqref{app:eq:left_mismatch_boundary_bound} give
    \begin{align}
        R_\sigma e_0
        &\leq_{\mathrm{ew}}
        2D^{-2}p
        \leq_{\mathrm{ew}}
        \frac{20}{3}D^{-2}p,\\
        R_\sigma E_\tau p
        &\leq_{\mathrm{ew}}
        5D^{-2}p
        \leq_{\mathrm{ew}}
        20D^{-2}p,
        \qquad \sigma\neq\tau,\\
        b^TE_\sigma p
        &=
        2D^{-1}
        \leq
        14D^{-1}.
    \end{align}

    Let $r\geq1$. Applying the corresponding boundary estimate once
    and propagating the remaining $r-1$ factors with
    Eq.~\eqref{app:eq:active_propagation_bound} gives
    \begin{align}
        R_\sigma\mathbb B_{11}^re_0
        &=
        R_\sigma\mathbb B_{11}^{r-1}
        \bigl(\mathbb B_{11}e_0\bigr)
        \nonumber\\
        &\leq_{\mathrm{ew}}
        \frac{20}{3}
        D^{-2r-5/2}
        \Lambda_D^{r-1}p
        \leq_{\mathrm{ew}}
        \frac{20}{3}
        D^{-2r-5/2}
        \Lambda_D^rp,\\
        R_\sigma\mathbb B_{11}^rE_\tau p
        &=
        R_\sigma\mathbb B_{11}^{r-1}
        \bigl(\mathbb B_{11}E_\tau p\bigr)
        \nonumber\\
        &\leq_{\mathrm{ew}}
        20D^{-2r-2}
        \Lambda_D^{r-1}p
        \leq_{\mathrm{ew}}
        20D^{-2r-2}
        \Lambda_D^rp,\\
        b^T\mathbb B_{11}^rE_\sigma p
        &=
        b^T\mathbb B_{11}^{r-1}
        \bigl(\mathbb B_{11}E_\sigma p\bigr)
        \nonumber\\
        &\leq
        14D^{-2r-3/2}
        \Lambda_D^{r-1}
        \leq
        14D^{-2r-3/2}
        \Lambda_D^r.
    \end{align}
    Here we used $\Lambda_D\geq1$. Since $I_r=1$ for $r\geq1$,
    these are the stated bounds.
\end{proof}

We now distinguish the possible forms of the matrix product. If $J_{00}\neq\emptyset$ and the full product is not $\mathbb B_{00}^m$, each component $W_a$ either contains a mismatch matrix or has the form $\mathbb B_{11}^{L_a}$; the product $\mathbb B_{00}^m$ is treated separately. If $J_{00}=\emptyset$, the product remains cyclic and is classified as pure active, non-pure mismatch, or pure mismatch. We begin with an open component containing at least one mismatch matrix.

\subsubsection{Case 1-1: Open mismatch components}

Let $W$ be a component in
Eq.~\eqref{app:eq:common_zero_component_decomposition} containing at
least one mismatch matrix.

\begin{corollary}[Mismatch-component numerator bound]
\label{app:cor:mismatch_component_numerator_bound}
    For $D\geq256$, every such component satisfies
    \begin{equation}
        b^TWe_0
        \leq
        5\cdot20^z
        \Lambda_D^r
        \kappa_D^{h-z}
        D^{-2r-3h/2-z-1-I_{\partial}/2}.
        \label{app:eq:mismatch_component_numerator_bound}
    \end{equation}
\end{corollary}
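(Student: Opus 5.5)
The plan is to evaluate the factored expression Eq.~\eqref{app:eq:bwe} from right to left. We carry a two-dimensional vector bounded by a multiple of $p$ through the mismatch space, and collect one scalar at the left boundary. All factors are entrywise nonnegative, so by Eq.~\eqref{app:eq:elementwise_product_bound} each entrywise bound can be substituted in turn. The final bound is then the product of the scalar prefactors.

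First, the right endpoint: Eq.~\eqref{app:eq:right_endpoint_bound} gives
\[
R_{\sigma_z}\mathbb B_{11}^{r_z}e_0\le_{\mathrm{ew}}\tfrac{20}{3}D^{-2r_z-2-\mathbf 1_{\{r_z>0\}}/2}\Lambda_D^{r_z}p .
\]
Next, each mismatch run: Eq.~\eqref{app:eq:mismatch_propagation_bound} iterated gives
\[
J^{\ell_j-1}p\le_{\mathrm{ew}}D^{-(\ell_j-1)}\kappa_D^{\ell_j-1}p .
\]
Then, each internal connector, $1\le j\le z-1$: Eq.~\eqref{app:eq:internal_connector_bound} gives
\[
R_{\sigma_j}\mathbb B_{11}^{r_j}E_{\sigma_{j+1}}p\le_{\mathrm{ew}}20D^{-2r_j-2}\Lambda_D^{r_j}p .
\]
This is legitimate when $r_j=0$, because maximality of the run decomposition forces $\sigma_j\neq\sigma_{j+1}$. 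Finally, the left endpoint: Eq.~\eqref{app:eq:left_endpoint_bound} gives
\[
b^T\mathbb B_{11}^{r_0}E_{\sigma_1}p\le14D^{-2r_0-1-\mathbf 1_{\{r_0>0\}}/2}\Lambda_D^{r_0}.
\]
Throughout we keep the overall prefactor $D^{-h/2}$ from Eq.~\eqref{app:eq:w_decomp}.

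Multiplying, the constants are $14\cdot\tfrac{20}{3}\cdot 20^{z-1}\le 5\cdot 20^{z}$, since $14\cdot 20/3\approx 93.3\le 100=5\cdot 20$. The powers of $\Lambda_D$ total $\sum_j r_j=r$. The powers of $\kappa_D$ total $\sum_j(\ell_j-1)=h-z$.

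The powers of $D$ add to the following pieces:
\begin{itemize}
\item[] $-h/2$ from the prefactor;
\item[] $-2r_0-1-\mathbf 1_{\{r_0>0\}}/2$ from the left endpoint;
\item[] $-\sum_{j=1}^{z-1}(2r_j+2)$ from the internal connectors;
\item[] $-(h-z)$ from the mismatch runs;
\item[] $-2r_z-2-\mathbf 1_{\{r_z>0\}}/2$ from the right endpoint.
\end{itemize}
Collecting terms gives $-2r-3h/2-I_\partial/2$ plus the constant $-1-2(z-1)+z-2=-z-1$. This is exactly the exponent in Eq.~\eqref{app:eq:mismatch_component_numerator_bound}.

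The case $z=1$ needs separate care, because the middle chain is absent. There the bound is just the left endpoint times $J^{\ell_1-1}$ times the right endpoint. The constant is $14\cdot\tfrac{20}{3}\le 100$, and the exponent is $-h/2-2r_0-1-(h-1)-2r_1-2-I_\partial/2=-2r-3h/2-2-I_\partial/2$, which matches $z=1$ in the stated formula.

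The main obstacle is purely bookkeeping. We must confirm that the boundary indicators $\mathbf 1_{\{r_0>0\}}$ and $\mathbf 1_{\{r_z>0\}}$ enter only through the endpoint bounds and combine to $I_\partial$. We must also confirm that the case $z=1$ with $r_0=r_z=0$, where $W=\mathbb B_\sigma^{h}$, is covered; it is, since $b^TE_\sigma p=2D^{-1}$ and $R_\sigma e_0\le 2D^{-2}p$ are absorbed in the larger constants used above.
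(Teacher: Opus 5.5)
Your proposal is correct and follows the paper's proof essentially step for step: you read Eq.~\eqref{app:eq:bwe} from right to left, applying the right-endpoint, mismatch-propagation, internal-connector (with maximality guaranteeing $\sigma_j\neq\sigma_{j+1}$ when $r_j=0$), and left-endpoint bounds, then collect the constant $14\cdot\frac{20}{3}\cdot 20^{z-1}\le 5\cdot 20^{z}$ and the exponent $-2r-3h/2-z-1-I_{\partial}/2$, exactly as the paper does. Your separate treatment of $z=1$ is not needed, since the general count with $z-1=0$ connectors already covers it, but it is consistent.
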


\begin{proof}
    Equation~\eqref{app:eq:w_decomp} contributes the factor
    $D^{-h/2}$. Iterating
    Eq.~\eqref{app:eq:mismatch_propagation_bound} through the mismatch
    runs contributes $D^{-(h-z)}\kappa_D^{h-z}$, since
    $\sum_{j=1}^z(\ell_j-1)=h-z$.

    Reading Eq.~\eqref{app:eq:bwe} from right to left, we apply
    Eq.~\eqref{app:eq:right_endpoint_bound} once,
    Eq.~\eqref{app:eq:internal_connector_bound} $z-1$ times, and
    Eq.~\eqref{app:eq:left_endpoint_bound} once. If an internal $r_j$
    vanishes, maximality gives $\sigma_j\neq\sigma_{j+1}$, so the
    condition in Eq.~\eqref{app:eq:internal_connector_bound} is
    satisfied. Therefore,
    \begin{align}
        b^TWe_0
        &\leq
        D^{-h/2}
        \left(
            14D^{-2r_0-1-I_{r_0}/2}\Lambda_D^{r_0}
        \right)
        \prod_{j=1}^{z}
        \left(
            D^{-(\ell_j-1)}\kappa_D^{\ell_j-1}
        \right)
        \nonumber\\
        &\quad\times
        \prod_{j=1}^{z-1}
        \left(
            20D^{-2r_j-2}\Lambda_D^{r_j}
        \right)
        \left(
            \frac{20}{3}
            D^{-2r_z-2-I_{r_z}/2}\Lambda_D^{r_z}
        \right)
        \nonumber\\
        &=
        14\cdot\frac{20}{3}\cdot20^{z-1}
        \Lambda_D^{\sum_{j=0}^{z}r_j}
        \kappa_D^{\sum_{j=1}^{z}(\ell_j-1)}
        \nonumber\\
        &\quad\times
        D^{-h/2-\sum_{j=1}^{z}(\ell_j-1)
        -2\sum_{j=0}^{z}r_j
        -1-2(z-1)-2-(I_{r_0}+I_{r_z})/2}
        \nonumber\\
        &=
        14\cdot\frac{20}{3}\cdot20^{z-1}
        \Lambda_D^r
        \kappa_D^{h-z}
        D^{-h/2-(h-z)-2r-2z-1-I_{\partial}/2}
        \nonumber\\
        &\leq
        5\cdot20^z
        \Lambda_D^r
        \kappa_D^{h-z}
        D^{-2r-3h/2-z-1-I_{\partial}/2}.
    \end{align}
\end{proof}

\begin{corollary}[Normalized mismatch-component bound]
\label{app:cor:normalized_mismatch_component_bound}
    For $D\geq256$, every open component containing at least one
    mismatch matrix satisfies
    \begin{equation}
        \frac{b^TWe_0}{\nu_{\boldsymbol u}(W)\nu_{\boldsymbol v}(W)}
        \leq
        \gamma_D^{L+1}q_D^h,
        \label{app:eq:normalized_mismatch_component_bound}
    \end{equation}
    where $q_D:=40/\sqrt D$.
\end{corollary}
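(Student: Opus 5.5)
The plan is to divide the numerator bound of Corollary~\ref{app:cor:mismatch_component_numerator_bound} by the visibility lower bound of Lemma~\ref{app:lem:open_mismatch_component_visibility}, and then sort the resulting powers of $D$, $\lambda_+$, $\alpha_1$, $\Lambda_D$, and $\kappa_D$ into the form $\gamma_D^{L+1}q_D^h$.

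The ratio is bounded by
\begin{equation*}
\frac{b^TWe_0}{\nu_{\boldsymbol u}(W)\nu_{\boldsymbol v}(W)}
\leq
\frac{5\cdot20^z\Lambda_D^r\kappa_D^{h-z}D^{-2r-3h/2-z-1-I_\partial/2}}{\alpha_1^{z+I_\partial}\lambda_+^{2r+h-z-I_\partial}}.
\end{equation*}
I will write $\alpha_1=(D^2\alpha_1)D^{-2}$ and use Eq.~\eqref{app:eq:alpha_one_lower_bound}, which gives $\alpha_1^{-(z+I_\partial)}\leq(2/5)^{z+I_\partial}D^{2z+2I_\partial}$. I will also write $\lambda_+=(D\lambda_+)D^{-1}$. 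Since $D\lambda_+\leq1$ by Eq.~\eqref{app:eq:mu_plus_bounds}, negative powers of $D\lambda_+$ can be traded for positive ones as needed.

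Next I track the powers of $D$. The numerator carries $-2r-3h/2-z-1-I_\partial/2$. The $\lambda_+$ factors contribute $+(2r+h-z-I_\partial)$, and the $\alpha_1$ factors contribute $+(2z+2I_\partial)$. The total exponent is $-h/2-1+I_\partial/2\leq -h/2-1/2$, since $I_\partial\leq 2$; in fact I only need it to be at most $-h/2$. The remaining factor is $(D\lambda_+)^{-(2r+h-z-I_\partial)}$. Since $2(L+1)=2r+2h+2\geq 2r+h-z-I_\partial$, this is at most $(D\lambda_+)^{-2(L+1)}$.

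It remains to absorb the constants. I bound $\Lambda_D^r$ by $\Lambda_D^{L+1}$, which is valid because $\Lambda_D\geq1$. Together with $(D\lambda_+)^{-2(L+1)}$ this gives $\gamma_D^{L+1}$. For the other constants, $z\leq h$ and $20\cdot(2/5)=8$, so $5\cdot 20^z(2/5)^{z+I_\partial}\leq 5\cdot 8^h$. Also $\kappa_D^{h-z}\leq\kappa_D^h\leq 2^h$. Hence the constants are at most $5\cdot 16^h\,D^{-h/2}\,D^{-1/2}$. Since $h\geq1$ and $D^{-1/2}\leq1/16$, the leftover factor $5D^{-1/2}$ is at most $1$, and $16^h D^{-h/2}\leq (40/\sqrt D)^h=q_D^h$.

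The main obstacle is making sure that every negative power of $D\lambda_+$ can be dominated by $\gamma_D^{L+1}$, and that the boundary terms $I_\partial$ do not leave a positive net power of $D$. The count $-h/2-1+I_\partial/2$ settles the latter, and the inequality $2r+h-z-I_\partial\leq 2L+2$ settles the former.
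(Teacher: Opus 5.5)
Your route is the paper's: you divide Corollary~\ref{app:cor:mismatch_component_numerator_bound} by Lemma~\ref{app:lem:open_mismatch_component_visibility} and use $D^2\alpha_1\ge 5/2$ and $D\lambda_+\le 1$. Folding all of $(D\lambda_+)^{-(2r+h-z-I_\partial)}$ into $\gamma_D^{L+1}$ via $2r+h-z-I_\partial\le 2(L+1)$ is a harmless variant of the paper's split into $\gamma_D^r$, $(\kappa_D/(D\lambda_+))^{h-z}$ and $(D\lambda_+)^{I_\partial}$. The final constant bookkeeping, however, contains an error.

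The net exponent $-h/2-1+I_\partial/2$ is \emph{not} at most $-h/2-1/2$ when $I_\partial=2$; it equals $-h/2$. So the extra $D^{-1/2}$ that you later use to absorb the prefactor $5$ is unavailable precisely when both boundary $\mathbb B_{11}$-runs are nonempty. You also dropped $(2/5)^{I_\partial}$ and bounded $8^z\le 8^h$ and $\kappa_D^{h-z}\le 2^h$ separately. As a result, in the realizable case $W=\mathbb B_{11}\mathbb B_{10}\mathbb B_{11}$ (with $h=z=1$, $I_\partial=2$) your chain yields only $80\,D^{-1/2}\gamma_D^{L+1}=2q_D\gamma_D^{L+1}$, twice the claim.

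The repair is immediate. Since $\kappa_D\le 8$, bound jointly $8^z\kappa_D^{h-z}\le 8^z\,8^{h-z}=8^h$. Then the constants are at most $5\cdot 8^h\le 40^h$ for $h\ge 1$, and the exponent bound $-h/2-1+I_\partial/2\le -h/2$ is all you need. This matches how the paper closes, via $\kappa_D/(D\lambda_+)<8$ and $(2/5)^{I_\partial}(D\lambda_+)^{I_\partial}D^{-1+I_\partial/2}\le 1$. Alternatively, keep the factor $(2/5)^{I_\partial}$: it equals $4/25$ when $I_\partial=2$ and absorbs the $5$, while for $I_\partial\le 1$ your $D^{-1/2}$ argument is valid.
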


\begin{proof}
    Dividing
    Eq.~\eqref{app:eq:mismatch_component_numerator_bound} by
    Eq.~\eqref{app:eq:open_mismatch_component_visibility} gives
    \begin{align}
        \frac{b^TWe_0}{\nu_{\boldsymbol u}(W)\nu_{\boldsymbol v}(W)}
        \leq{}&
        5\cdot20^z
        \Lambda_D^r
        \kappa_D^{h-z}
        D^{-2r-3h/2-z-1-I_{\partial}/2}
        \alpha_1^{-z-I_{\partial}}
        \lambda_+^{-2r-h+z+I_{\partial}}
        \nonumber\\
        \leq{}&
        5\cdot8^z
        \gamma_D^r
        \left(
            \frac{\kappa_D}{D\lambda_+}
        \right)^{h-z}
        D^{-h/2}
        \left(\frac{2}{5}\right)^{I_{\partial}}
        (D\lambda_+)^{I_{\partial}}
        D^{-1+I_{\partial}/2},
    \end{align}
    where the second inequality uses
    Eq.~\eqref{app:eq:alpha_one_lower_bound}. Since $I_{\partial}\in\{0,1,2\}$ and $D\lambda_+<1$, we have
    \begin{equation}
        \left(\frac{2}{5}\right)^{I_{\partial}}
        (D\lambda_+)^{I_{\partial}}
        D^{-1+I_{\partial}/2}
        \leq 1.
    \end{equation}
    Moreover, Eq.~\eqref{app:eq:mu_plus_bounds} gives $\kappa_D/(D\lambda_+)\leq(1+1/D)^3<8$. Since $1\leq z\leq h$, we have $5\cdot8^z(\kappa_D/(D\lambda_+))^{h-z}\leq40^h$. Hence
    \begin{equation}
        \frac{b^TWe_0}{\nu_{\boldsymbol u}(W)\nu_{\boldsymbol v}(W)}
        \leq
        \gamma_D^rq_D^h
        \leq
        \gamma_D^{L+1}q_D^h,
    \end{equation}
    where the last inequality follows from $\gamma_D>1$ and $L=r+h$.
\end{proof}

\subsubsection{Case 1-2: Open active components}

An open component containing no mismatch matrix has the form $W=\mathbb B_{11}^{L}$ with $L\geq1$. Since both local support sequences are $1^L$, we have $\nu_{\boldsymbol u}(W)=\nu_{\boldsymbol v}(W)=\alpha_L$.

\begin{lemma}[Normalized active-component bound]
\label{app:lem:normalized_active_component_bound}
    Suppose that $D\geq256$. Then
    \begin{equation}
        \frac{b^TWe_0}{\nu_{\boldsymbol u}(W)\nu_{\boldsymbol v}(W)}
        =
        \frac{b^T\mathbb B_{11}^{L}e_0}{\alpha_L^2}
        \leq
        \gamma_D^{L+1}.
        \label{app:eq:normalized_active_component_bound}
    \end{equation}
\end{lemma}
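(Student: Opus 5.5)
We bound the numerator $b^T\mathbb B_{11}^Le_0$ and the denominator $\alpha_L^2$ separately. The same Collatz--Wielandt test vector $g$ from Lemma~\ref{app:lem:elementary_transfer_bounds} is used for the numerator, and Lemma~\ref{app:lem:visibility_run_lower_bound} for the denominator.

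\textbf{Numerator.} Peel one factor off the right. By Eq.~\eqref{app:eq:right_active_boundary_bound},
\[
\mathbb B_{11}e_0\leq_{\mathrm{ew}}\tfrac23D^{-3/2}g .
\]
Propagate the remaining $L-1$ factors with Eq.~\eqref{app:eq:active_propagation_bound}, $\mathbb B_{11}g\leq_{\mathrm{ew}}D^{-2}\Lambda_Dg$. This is valid because all entries are nonnegative, so the entrywise order is preserved under multiplication. Close with Eq.~\eqref{app:eq:left_active_boundary_bound}, $b^Tg\leq7D^{-5/2}$. Together these give
\[
b^T\mathbb B_{11}^Le_0\leq \tfrac{14}{3}\,\Lambda_D^{L-1}D^{-2(L-1)}D^{-3/2}D^{-5/2}=\tfrac{14}{3}\Lambda_D^{L-1}D^{-2L-2}.
\]

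\textbf{Denominator.} Lemma~\ref{app:lem:visibility_run_lower_bound} gives $\alpha_L\geq\alpha_1\lambda_+^{L-1}$. Eq.~\eqref{app:eq:alpha_one_lower_bound} gives $\alpha_1\geq\frac52D^{-2}$. Hence
\[
\alpha_L^2\geq\tfrac{25}{4}D^{-4}\lambda_+^{2L-2}.
\]

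\textbf{Combining.} Dividing the two bounds,
\[
\frac{b^T\mathbb B_{11}^Le_0}{\alpha_L^2}\leq\frac{14/3}{25/4}\,\Lambda_D^{L-1}D^{-2L+2}\lambda_+^{-2L+2}
=\frac{56}{75}\left(\frac{\Lambda_D}{(D\lambda_+)^2}\right)^{L-1}=\frac{56}{75}\gamma_D^{L-1}.
\]
Since $D\lambda_+\leq1$ by Eq.~\eqref{app:eq:mu_plus_bounds} and $\Lambda_D\geq1$, we have $\gamma_D\geq1$. Therefore $\frac{56}{75}\gamma_D^{L-1}\leq\gamma_D^{L+1}$, which is the claim. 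There is slack of two powers of $\gamma_D$ to spare.

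\textbf{Main obstacle.} The argument is sensitive in the $D$-power bookkeeping: the $D^{-2}$ per-step decay of $\mathbb B_{11}$ along $g$ must exactly match $\lambda_+^2\approx D^{-2}$ from the two visibility factors. Any mismatch there would produce an exponential-in-$L$ loss.

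A subtle point is the boundary factors. The right boundary contributes $D^{-3/2}$ and the left contributes $D^{-5/2}$, so together they must beat $\alpha_1^{-2}\sim D^4$. They do exactly, giving $D^{-4}\cdot D^{4}$, with only a constant left over.

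The case $L=1$ should also be checked directly: $b^T\mathbb B_{11}e_0=b^T(\mathbb B_{11}e_0)$ is of order $D^{-4}$, while $\alpha_1^2\approx 9D^{-4}$. The ratio is $O(1)\leq\gamma_D^2$, which is consistent with the general bound.
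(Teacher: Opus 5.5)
Your proposal is correct and follows the paper's proof essentially step for step. You peel off $\mathbb B_{11}e_0\le_{\mathrm{ew}}\tfrac23 D^{-3/2}g$, propagate the remaining $L-1$ factors along $g$, and close with $b^Tg\le 7D^{-5/2}$; then dividing by $\alpha_1^2\lambda_+^{2L-2}$ with $D^2\alpha_1\ge 5/2$ gives $\tfrac{56}{75}\gamma_D^{L-1}\le\gamma_D^{L+1}$, exactly as the paper does.
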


\begin{proof}
    Applying Eqs.~\eqref{app:eq:right_active_boundary_bound},
    \eqref{app:eq:active_propagation_bound}, and
    \eqref{app:eq:left_active_boundary_bound} gives
    \begin{align}
        b^T\mathbb B_{11}^{L}e_0
        &=
        b^T\mathbb B_{11}^{L-1}
        \bigl(\mathbb B_{11}e_0\bigr)
        \nonumber\\
        &\leq
        \frac{2}{3}D^{-3/2}
        \left(D^{-2}\Lambda_D\right)^{L-1}
        b^Tg
        \nonumber\\
        &\leq
        \frac{14}{3}
        D^{-2L-2}\Lambda_D^{L-1}.
        \label{app:eq:active_component_numerator_bound}
    \end{align}
    Equation~\eqref{app:eq:visibility_run_lower_bound} gives
    $\alpha_L\geq\alpha_1\lambda_+^{L-1}$. Therefore,
    \begin{align}
        \frac{b^T\mathbb B_{11}^{L}e_0}{\alpha_L^2}
        &\leq
        \frac{14}{3}
        \frac{
            D^{-2L-2}\Lambda_D^{L-1}
        }{
            \alpha_1^2\lambda_+^{2L-2}
        }
        \nonumber\\
        &\leq
        \frac{14}{3}
        \left(\frac{2}{5}\right)^2
        \left(
            \frac{\Lambda_D}{(D\lambda_+)^2}
        \right)^{L-1}
        \nonumber\\
        &=
        \frac{56}{75}\gamma_D^{L-1}
        \leq
        \gamma_D^{L+1}.
    \end{align}
    Here the second inequality uses Eq.~\eqref{app:eq:alpha_one_lower_bound}, and the last uses $56/75<1$ and $\gamma_D>1$.
\end{proof}

This agrees with
Eq.~\eqref{app:eq:normalized_mismatch_component_bound} when $h=0$.

\subsubsection{Case 1-3: The product $\mathbb B_{00}^{m}$}

When every factor is $\mathbb B_{00}$, there is no open component $W_a$, and the normalized entry can be evaluated directly.

\begin{lemma}[Pure-reset product bound]
\label{app:lem:pure_reset_product_bound}
    Suppose that $D\geq256$. Then
    \begin{equation}
        \frac{\operatorname{Tr}(\mathbb B_{00}^{m})}{m(0^m)^2}
        =
        D^{-m}
        \leq
        \gamma_D^m.
        \label{app:eq:pure_reset_product_bound}
    \end{equation}
\end{lemma}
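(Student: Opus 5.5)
This is a direct computation with three pieces: the numerator, the denominator, and the comparison with $\gamma_D^m$.

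\textbf{Numerator.} I would use the rank-one factorization $\mathbb B_{00}=e_0b^T$ with $b^Te_0=D^{-1}$. Iterating gives $\mathbb B_{00}^m=D^{-(m-1)}\mathbb B_{00}$. Taking the trace,
\begin{equation}
    \operatorname{Tr}(\mathbb B_{00}^m)
    =
    D^{-(m-1)}\operatorname{Tr}(e_0b^T)
    =
    D^{-(m-1)}\,b^Te_0
    =
    D^{-m}.
\end{equation}

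\textbf{Denominator.} The support vector $0^m$ corresponds to the identity Pauli, so $m(0^m)=m_I=1$. As a consistency check, the rank-one form $\mathbb K_0=st^T$ with $t^Ts=1$ gives $\mathbb K_0^m=\mathbb K_0$. Hence $\operatorname{Tr}(\mathbb K_0^m)=t^Ts=1$. The normalized entry is therefore exactly $D^{-m}$.

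\textbf{Comparison.} It remains to show $D^{-m}\le\gamma_D^m$, for which it suffices that $\gamma_D\ge1\ge D^{-1}$. By Lemma~\ref{app:lem:visibility_estimates}, $\lambda_+\le 1/D$, so $D\lambda_+\le1$. Also $\Lambda_D=1+4/\sqrt D\ge1$. Together these give
\begin{equation}
    \gamma_D=\frac{\Lambda_D}{(D\lambda_+)^2}\ge1\ge D^{-1},
\end{equation}
which proves the claim. No step here is a real obstacle. The only point needing care is confirming that $b^Te_0=D^{-1}$ is read off the explicit first entry of $b$.
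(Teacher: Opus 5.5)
Your proposal is correct and follows the paper's proof essentially verbatim: the rank-one factorization $\mathbb B_{00}=e_0b^T$ with $b^Te_0=D^{-1}$ gives $\operatorname{Tr}(\mathbb B_{00}^m)=D^{-m}$, the idempotence $\mathbb K_0^m=\mathbb K_0$ with $\operatorname{Tr}(\mathbb K_0)=1$ gives $m(0^m)=1$, and the comparison reduces to $D^{-1}<1\le\gamma_D$. Your explicit justification of $\gamma_D\ge1$ via $D\lambda_+\le1$ and $\Lambda_D\ge1$ simply spells out what the paper asserts in one line.
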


\begin{proof}
    Since $\mathbb B_{00}=e_0b^T$ and $b^Te_0=D^{-1}$, we have $\operatorname{Tr}(\mathbb B_{00}^m)=(b^Te_0)^m=D^{-m}$. Moreover, $\mathbb K_0^m=\mathbb K_0$ and $\operatorname{Tr}(\mathbb K_0)=1$, so $m(0^m)=1$. The final inequality follows from $D^{-1}<1<\gamma_D$.
\end{proof}

\subsubsection{Combined bound for Case 1}

Cases 1-1 and 1-2 may occur in the same matrix product, while Case 1-3 covers the product with no open component. We now combine these estimates.

\begin{lemma}[Common-zero product bound]
\label{app:lem:common_zero_case_bound}
    Suppose that $D\geq256$ and $J_{00}\neq\emptyset$. Then
    \begin{equation}
        \frac{
            \operatorname{Tr}\!\left(
                \mathbb B_{u_1v_1}\cdots
                \mathbb B_{u_mv_m}
            \right)
        }{
            m(\boldsymbol u)m(\boldsymbol v)
        }
        \leq
        \gamma_D^m q_D^{|\boldsymbol u \oplus \boldsymbol v|}.
        \label{app:eq:common_zero_case_bound}
    \end{equation}
\end{lemma}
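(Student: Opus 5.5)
If the full product is $\mathbb B_{00}^m$, Lemma~\ref{app:lem:pure_reset_product_bound} gives the claim directly. Here $\boldsymbol u=\boldsymbol v=0^m$, so $|\boldsymbol u\oplus\boldsymbol v|=0$.

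Otherwise, I would start from the exact factorization in Eq.~\eqref{app:eq:normalized_component_factorization}:
\begin{equation*}
    (K)_{\boldsymbol u\boldsymbol v}
    =
    D^{-|J_{00}|+n_W}
    \prod_{a=1}^{n_W}
    \frac{b^TW_ae_0}{\nu_{\boldsymbol u}(W_a)\nu_{\boldsymbol v}(W_a)}.
\end{equation*}
I would then bound each component factor separately. If $W_a$ contains a mismatch matrix, Corollary~\ref{app:cor:normalized_mismatch_component_bound} bounds its normalized factor by $\gamma_D^{L_a+1}q_D^{h_a}$. If $W_a=\mathbb B_{11}^{L_a}$, Lemma~\ref{app:lem:normalized_active_component_bound} gives $\gamma_D^{L_a+1}$, which is the same expression with $h_a=0$. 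Both bounds are nonnegative, so the product over components is at most
\begin{equation*}
    D^{-|J_{00}|+n_W}\,
    \gamma_D^{\sum_a(L_a+1)}\,
    q_D^{\sum_a h_a}.
\end{equation*}

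Next I would do the bookkeeping. The mismatch positions are exactly the indices $i$ with $u_i\neq v_i$, and every such index lies in some component. Hence $\sum_a h_a=|\boldsymbol u\oplus\boldsymbol v|$. The components together with the $\mathbb B_{00}$ runs cover all $m$ positions, so $\sum_a L_a=m-|J_{00}|$, and the $\gamma_D$ exponent is $m-|J_{00}|+n_W$.

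It remains to show that the leftover factor $D^{-|J_{00}|+n_W}\gamma_D^{n_W-|J_{00}|}$ is at most $\gamma_D^{0}=1$. Each component is preceded by a nonempty $\mathbb B_{00}$ run, so $n_W\le\sum_a t_a=|J_{00}|$, i.e.\ $n_W-|J_{00}|\le 0$. Since $D\ge1$ and $\gamma_D>1$, both $D^{n_W-|J_{00}|}$ and $\gamma_D^{n_W-|J_{00}|}$ are at most $1$. This gives the bound $\gamma_D^m q_D^{|\boldsymbol u\oplus\boldsymbol v|}$.

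The main obstacle is this counting step. One must verify that the "$+1$" in each component bound $\gamma_D^{L_a+1}$ is exactly absorbed by the reset factor $D^{n_W}$, using only $n_W\le|J_{00}|$ and no finer information about the decomposition. One must also check that cyclic reindexing does not change which positions are counted as mismatches.
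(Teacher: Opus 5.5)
Your proposal is correct and follows the paper's proof essentially step for step: the $\mathbb B_{00}^m$ case via the pure-reset lemma, then the component factorization with the per-component bound $\gamma_D^{L_a+1}q_D^{h_a}$, the counts $\sum_a L_a=m-|J_{00}|$ and $\sum_a h_a=|\boldsymbol u\oplus\boldsymbol v|$, and finally $(D\gamma_D)^{-(|J_{00}|-n_W)}\le 1$ from $n_W\le|J_{00}|$. One small wording point: the extra $\gamma_D^{n_W}D^{n_W}$ is absorbed jointly by $(D\gamma_D)^{-|J_{00}|}$, not by the reset factor $D^{n_W}$ alone, but your displayed computation already handles this correctly.
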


\begin{proof}
    If the full matrix product is $\mathbb B_{00}^m$, then $\boldsymbol u=\boldsymbol v=0^m$, and the claim follows from Lemma~\ref{app:lem:pure_reset_product_bound}.

    Suppose now that the full product is not $\mathbb B_{00}^m$. For each component $W_a$, let $L_a$ and $h_a$ denote its length and number of mismatch matrices, respectively. Then
    \begin{equation}
        \sum_{a=1}^{n_W}L_a
        =
        m-|J_{00}|,
        \qquad
        \sum_{a=1}^{n_W}h_a
        =
        |\boldsymbol u \oplus \boldsymbol v|.
        \label{app:eq:common_zero_component_counts}
    \end{equation}
    Corollary~\ref{app:cor:normalized_mismatch_component_bound} and Lemma~\ref{app:lem:normalized_active_component_bound} give, for every component $W_a$,
    \begin{equation}
        \frac{
            b^TW_ae_0
        }{
            \nu_u(W_a)\nu_v(W_a)
        }
        \leq
        \gamma_D^{L_a+1}q_D^{h_a}.
        \label{app:eq:uniform_open_component_bound}
    \end{equation}
    Substituting this estimate into Eq.~\eqref{app:eq:normalized_component_factorization} gives
    \begin{align}
        (K)_{\boldsymbol u \boldsymbol v}
        &\leq
        D^{-|J_{00}|+n_W}
        \prod_{a=1}^{n_W}
        \gamma_D^{L_a+1}q_D^{h_a}
        \nonumber\\
        &=
        D^{-(|J_{00}|-n_W)}
        \gamma_D^{m-|J_{00}|+n_W}
        q_D^{|\boldsymbol u \oplus \boldsymbol v|}
        \nonumber\\
        &=
        \gamma_D^m
        (D\gamma_D)^{-(|J_{00}|-n_W)}
        q_D^{|\boldsymbol u \oplus \boldsymbol v|}
        \nonumber\\
        &\leq
        \gamma_D^m q_D^{|\boldsymbol u \oplus \boldsymbol v|}.
    \end{align}
    Here $|J_{00}|\geq n_W$ because every $\mathbb B_{00}$ run is nonempty, and $D\gamma_D\geq1$.
\end{proof}

This completes the case $J_{00}\neq\emptyset$. We now assume $J_{00}=\emptyset$, so the numerator remains a single cyclic trace.

\subsubsection{Case 2-1: The pure active cycle}

If $J_{00}=\emptyset$ and no mismatch matrix is present, then $\boldsymbol u=\boldsymbol v=1^m$ and the full matrix product is $\mathbb B_{11}^m$.

\begin{lemma}[Pure active cycle bound]
\label{app:lem:pure_active_cycle_bound}
    Suppose that $D\geq256$. Then
    \begin{equation}
        \frac{\operatorname{Tr}(\mathbb B_{11}^{m})}{m(1^m)^2}
        \leq
        8\gamma_D^m.
        \label{app:eq:pure_active_cycle_bound}
    \end{equation}
\end{lemma}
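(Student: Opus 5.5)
Since $\mathbb B_{11}$ is a nonnegative $6\times 6$ matrix, I will bound the numerator with the Collatz--Wielandt bound (Fact~\ref{app:fact:cw_bound}), using the test vector $g$ of Eq.~\eqref{app:eq:dimensionful_test_vectors}. The denominator will come from the all-ones visibility estimate in Lemma~\ref{app:lem:visibility_estimates}. The first step is Eq.~\eqref{app:eq:active_propagation_bound}, which gives $\mathbb B_{11}g\leq_{\mathrm{ew}}D^{-2}\Lambda_D g$. Since every entry of $g$ is strictly positive, Fact~\ref{app:fact:cw_bound} with $r=6$ and $\lambda=D^{-2}\Lambda_D$ yields
\begin{equation}
    \operatorname{Tr}(\mathbb B_{11}^m)\leq 6\,D^{-2m}\Lambda_D^m .
\end{equation}
This is the only place where the cyclic structure matters. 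Because the trace counts all six bond states, the numerator carries the dimension prefactor $6$ and not a boundary factor such as $b^Tg$.

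For the denominator, Eq.~\eqref{app:eq:all_one_visibility_lower_bound} gives $m(1^m)\geq(1-1/D)\lambda_+^m$. Hence
\begin{equation}
    \frac{\operatorname{Tr}(\mathbb B_{11}^m)}{m(1^m)^2}
    \leq
    \frac{6}{(1-1/D)^2}\,
    \frac{D^{-2m}\Lambda_D^m}{\lambda_+^{2m}}
    =
    \frac{6}{(1-1/D)^2}
    \left(\frac{\Lambda_D}{(D\lambda_+)^2}\right)^m
    =
    \frac{6}{(1-1/D)^2}\,\gamma_D^m ,
\end{equation}
where the last equality is the definition of $\gamma_D$ in Eq.~\eqref{app:eq:propagation_constants}. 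It then remains to check that $6/(1-1/D)^2\leq 8$ for $D\geq256$. This holds with a wide margin, since $(1-1/256)^{-2}<1.01$.

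I do not expect a real obstacle, because every ingredient is already stated. The one point needing care is that the lower bound $m(1^m)\geq(1-1/D)\lambda_+^m$ is uniform in $m$, including small and odd $m$ where $\lambda_-^m<0$. Lemma~\ref{app:lem:visibility_estimates} already covers this through $|\lambda_-|/\lambda_+<1/D$. The other point is that the Collatz--Wielandt trace bound requires a strictly positive test vector, and $g$ satisfies this.
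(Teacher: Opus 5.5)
Your proposal is correct and is essentially the paper's own proof. It uses the Collatz--Wielandt bound with the test vector $g$ and Eq.~\eqref{app:eq:active_propagation_bound} to get $\operatorname{Tr}(\mathbb B_{11}^m)\leq 6(D^{-2}\Lambda_D)^m$, bounds the denominator with $m(1^m)\geq(1-D^{-1})\lambda_+^m$, and finishes with $6/(1-D^{-1})^2\leq 8$ for $D\geq256$.
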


\begin{proof}
    Iterating Eq.~\eqref{app:eq:active_propagation_bound} and applying Fact~\ref{app:fact:cw_bound} in the six-dimensional transfer space give
    \begin{align}
        \mathbb B_{11}^m g
        &\leq_{\mathrm{ew}}
        \left(D^{-2}\Lambda_D\right)^m g,
        \nonumber\\
        \operatorname{Tr}(\mathbb B_{11}^m)
        &\leq
        6\left(D^{-2}\Lambda_D\right)^m.
    \end{align}
    Equation~\eqref{app:eq:all_one_visibility_lower_bound} gives
    $m(1^m)\geq(1-D^{-1})\lambda_+^m$. Hence
    \begin{align}
        \frac{\operatorname{Tr}(\mathbb B_{11}^{m})}{m(1^m)^2}
        &\leq
        \frac{6}{(1-D^{-1})^2}
        \left(
            \frac{\Lambda_D}{(D\lambda_+)^2}
        \right)^m
        \nonumber\\
        &=
        \frac{6}{(1-D^{-1})^2}\gamma_D^m
        \leq
        8\gamma_D^m,
    \end{align}
    where the last inequality uses $D\geq256$.
\end{proof}

\subsubsection{Case 2-2: Non-pure cyclic mismatch products}

Assume that $J_{00}=\emptyset$ and that the full product contains a mismatch matrix but is neither $\mathbb B_{10}^m$ nor $\mathbb B_{01}^m$. After applying the same cyclic permutation to $\boldsymbol{u}$ and $\boldsymbol{v}$, we may write
\begin{align}
    \mathbb B_{u_1v_1}\cdots\mathbb B_{u_mv_m}
    &=
    \mathbb B_{\sigma_1}^{\ell_1}
    \mathbb B_{11}^{r_1}
    \cdots
    \mathbb B_{\sigma_z}^{\ell_z}
    \mathbb B_{11}^{r_z},
    \label{app:eq:cyclic_mismatch_run_decomposition}
\end{align}
where 
\begin{equation}
    h
    =
    \sum_{j=1}^z\ell_j,
    \qquad
    r=
    \sum_{j=1}^zr_j,
    \qquad m=r+h.
\end{equation}
Here $z\geq1$, $\sigma_j\in\{10,01\}$, $\ell_j\geq1$, and $r_j\geq0$, with $\sigma_{z+1}:=\sigma_1$. If $r_j=0$, maximality gives $\sigma_j\neq\sigma_{j+1}$. The excluded pure mismatch cycles are precisely the case $z=1$ and $r_1=0$. Moreover, $r=|J_{11}|$ and $h=|J_{10}|+|J_{01}|=|\boldsymbol u \oplus \boldsymbol v|$.

\begin{lemma}[Cyclic mismatch numerator bound]
\label{app:lem:nonpure_cyclic_mismatch_numerator}
    Suppose that $D\geq256$. Then
    \begin{equation}
        \operatorname{Tr}\!\left(
            \mathbb B_{u_1v_1}\cdots\mathbb B_{u_mv_m}
        \right)
        \leq
        2\cdot20^z
        \Lambda_D^r
        \kappa_D^{h-z}
        D^{-2r-3h/2-z}.
        \label{app:eq:nonpure_cyclic_mismatch_numerator}
    \end{equation}
\end{lemma}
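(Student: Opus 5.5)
The plan is to reduce the cyclic trace to a closed chain in the two-dimensional mismatch space and bound it with the same right-to-left propagation used for Corollary~\ref{app:cor:mismatch_component_numerator_bound}. Using Eq.~\eqref{app:eq:mismatch_run_factorization} for every mismatch run in Eq.~\eqref{app:eq:cyclic_mismatch_run_decomposition}, cyclicity of the trace gives
\begin{equation}
    \operatorname{Tr}\!\left(\prod_{i=1}^m\mathbb B_{u_iv_i}\right)
    =
    D^{-h/2}
    \operatorname{Tr}\!\left(
        \prod_{j=1}^{z}
        J^{\ell_j-1}
        \bigl(R_{\sigma_j}\mathbb B_{11}^{r_j}E_{\sigma_{j+1}}\bigr)
    \right),
\end{equation}
with $\sigma_{z+1}=\sigma_1$. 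This is the trace of a nonnegative $2\times2$ matrix $M$. Each factor is a product of a $J$-power and a connector $C_j:=R_{\sigma_j}\mathbb B_{11}^{r_j}E_{\sigma_{j+1}}$.

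The next step bounds each factor against the test vector $p=(1,D)^T$. Equation~\eqref{app:eq:mismatch_propagation_bound} gives $J^{\ell_j-1}p\leq_{\mathrm{ew}}(D^{-1}\kappa_D)^{\ell_j-1}p$. Equation~\eqref{app:eq:internal_connector_bound} gives $C_jp\leq_{\mathrm{ew}}20D^{-2r_j-2}\Lambda_D^{r_j}p$. That bound needs $\sigma_j\neq\sigma_{j+1}$ when $r_j=0$. Maximality of the cyclic decomposition supplies this, including at the wrap-around $j=z$, because the pure mismatch cycle $z=1,r_1=0$ is excluded. In the case $z=1$ with $r_1>0$ the connector has $\sigma_2=\sigma_1$, but then $r_1\geq1$ and the bound still applies.

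Multiplying these bounds gives $Mp\leq_{\mathrm{ew}}\lambda p$ with
\begin{equation}
    \lambda = 20^zD^{-2r-2z}\Lambda_D^r(D^{-1}\kappa_D)^{h-z}.
\end{equation}
Fact~\ref{app:fact:cw_bound} with $r=2$ and exponent $1$ then yields $\operatorname{Tr}(M)\leq2\lambda$. Including the prefactor $D^{-h/2}$, the exponent of $D$ is $-h/2-(h-z)-2r-2z=-2r-3h/2-z$. This matches Eq.~\eqref{app:eq:nonpure_cyclic_mismatch_numerator} with constant $2\cdot20^z$.

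The main obstacle is bookkeeping rather than analysis. One must confirm that the cyclic rewriting really produces a closed product in the two-dimensional space, so that no leftover six-dimensional boundary terms like $b^T$ or $e_0$ appear. One must also confirm that every connector, including the wrap-around one, satisfies the hypothesis of Eq.~\eqref{app:eq:internal_connector_bound}. In the cyclic setting $\Lambda_D^{r_j}$ and $D^{-2r_j-2}$ carry no $I_r$ boundary improvement. This explains why the exponent here lacks the extra $-1-I_\partial/2$ appearing in the open-component bound.
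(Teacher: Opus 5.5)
Your proposal is correct and follows the paper's proof essentially step for step. You rewrite the cyclic trace as $D^{-h/2}\operatorname{Tr}(Q)$ for the nonnegative $2\times2$ chain, propagate $p=(1,D)^T$ through the $J$-powers and connectors via Eqs.~\eqref{app:eq:mismatch_propagation_bound} and~\eqref{app:eq:internal_connector_bound}, and finish with the Collatz--Wielandt bound; your explicit check of the wrap-around connector, including the case $z=1$ with $r_1\geq1$, is a detail the paper leaves implicit rather than a change of argument.
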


\begin{proof}
    Applying Eq.~\eqref{app:eq:mismatch_run_factorization} to every mismatch run and using cyclicity of the trace gives
    \begin{align}
        &Q
        :=
        J^{\ell_1-1}
        \left(
            R_{\sigma_1}
            \mathbb B_{11}^{r_1}
            E_{\sigma_2}
        \right)
        \cdots
        J^{\ell_z-1}
        \left(
            R_{\sigma_z}
            \mathbb B_{11}^{r_z}
            E_{\sigma_1}
        \right),
        \nonumber\\
        &\operatorname{Tr}\!\left(
            \mathbb B_{u_1v_1}\cdots\mathbb B_{u_mv_m}
        \right)
        =
        D^{-h/2}\operatorname{Tr}(Q).
        \label{app:eq:cyclic_mismatch_reduced_trace}
    \end{align}
    The matrix $Q$ is a nonnegative $2\times2$ matrix. Each power $J^{\ell_j-1}$ contributes $D^{-(\ell_j-1)}\kappa_D^{\ell_j-1}$, while each factor between adjacent mismatch runs contributes $20D^{-2r_j-2}\Lambda_D^{r_j}$. Therefore,
    \begin{align}
        Qp
        &\leq_{\mathrm{ew}}
        20^z
        D^{-2r-2z}
        \Lambda_D^r
        D^{-(h-z)}
        \kappa_D^{h-z}p
        \nonumber\\
        &=
        20^z
        \Lambda_D^r
        \kappa_D^{h-z}
        D^{-2r-h-z}p.
        \label{app:eq:cyclic_mismatch_product_propagation}
    \end{align}
    The condition in Eq.~\eqref{app:eq:internal_connector_bound} holds for every $j$ by the decomposition above. Since $p$ has strictly positive entries, Fact~\ref{app:fact:cw_bound} gives
    \begin{equation}
        \operatorname{Tr}(Q)
        \leq
        2\cdot20^z
        \Lambda_D^r
        \kappa_D^{h-z}
        D^{-2r-h-z}.
    \end{equation}
    Multiplying this bound by $D^{-h/2}$ proves the claim.
\end{proof}

\begin{lemma}[Visibility of a non-pure cyclic mismatch product]
\label{app:lem:nonpure_cyclic_mismatch_visibility}
    Suppose that $D\geq256$. Then
    \begin{equation}
        m(\boldsymbol u)m(\boldsymbol v)
        \geq
        \left(1-\frac1D\right)
        \alpha_1^z
        \lambda_+^{2r+h-z}.
        \label{app:eq:nonpure_cyclic_mismatch_visibility}
    \end{equation}
\end{lemma}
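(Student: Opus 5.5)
The plan is to treat $m(\boldsymbol u)$ and $m(\boldsymbol v)$ separately through the run factorization of Eq.~\eqref{app:eq:mP_decomp}, and then lower bound each run factor with Lemma~\ref{app:lem:visibility_run_lower_bound}. This mirrors the proof of Lemma~\ref{app:lem:open_mismatch_component_visibility}; the only new issue is that the product is cyclic, so one of the two sequences may have no zero.

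\textbf{Step 1: structure of the two sequences.} In the decomposition of Eq.~\eqref{app:eq:cyclic_mismatch_run_decomposition}, since $J_{00}=\emptyset$, every position has a $1$ in $\boldsymbol u$ or in $\boldsymbol v$.
\begin{itemize}
\item Each $\mathbb B_{11}$ position puts a $1$ in both sequences.
\item A $\mathbb B_{10}$ position puts a $1$ in $\boldsymbol u$ and a $0$ in $\boldsymbol v$.
\item A $\mathbb B_{01}$ position does the reverse.
\end{itemize}
Hence the total number of ones in $\boldsymbol u$ and $\boldsymbol v$ together is $2r+h$.

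The zeros of $\boldsymbol u$ are exactly the positions of the $01$-runs. Two distinct $01$-runs cannot be cyclically adjacent, since an adjacent pair with $r_j=0$ would have $\sigma_j=\sigma_{j+1}$, contradicting maximality. So each $01$-run is a maximal zero-run of $\boldsymbol u$. Cyclically, the number of maximal one-runs of $\boldsymbol u$ equals its number of zero-runs, namely $z_{01}$, the number of indices with $\sigma_j=01$. Likewise $\boldsymbol v$ has $z_{10}$ one-runs, and $z_{01}+z_{10}=z$.

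\textbf{Step 2: both sequences contain a zero.} If $z_{01},z_{10}\geq1$, Eq.~\eqref{app:eq:mP_decomp} writes $m(\boldsymbol u)m(\boldsymbol v)$ as a product of exactly $z$ factors $\alpha_{\ell}$. Each $\ell\geq1$, and the indices $\ell$ sum to $2r+h$. Applying Lemma~\ref{app:lem:visibility_run_lower_bound} to each factor gives
\begin{equation}
m(\boldsymbol u)m(\boldsymbol v)\geq \alpha_1^{z}\lambda_+^{2r+h-z},
\end{equation}
which is stronger than the claim because $1-1/D\leq1$.

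\textbf{Step 3: one sequence is all ones.} Otherwise, say $z_{01}=0$ (the case $z_{10}=0$ is symmetric). Then every mismatch is $\mathbb B_{10}$ and $\boldsymbol u=1^m$. Both sequences cannot be all ones, since a mismatch is present.
\begin{itemize}
\item Lemma~\ref{app:lem:visibility_estimates} gives $m(1^m)\geq(1-1/D)\lambda_+^{m}$, with $m=r+h$.
\item The sequence $\boldsymbol v$ has $z$ one-runs containing $r$ ones in total. Every run is nonempty, because two consecutive $10$-runs are separated by $r_j\geq1$. Lemma~\ref{app:lem:visibility_run_lower_bound} then gives $m(\boldsymbol v)\geq\alpha_1^{z}\lambda_+^{r-z}$.
\end{itemize}
Multiplying the two bounds gives $(1-1/D)\alpha_1^{z}\lambda_+^{2r+h-z}$, as claimed.

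The main obstacle is the run-counting in Step 1, in particular ensuring that no run of ones is empty or merged in the cyclic setting. This is exactly where maximality of the decomposition and the exclusion of the pure mismatch cycles (the case $z=1$, $r_1=0$) are used. The fact that $\alpha_1\leq\lambda_+$, shown in the proof of Lemma~\ref{app:lem:visibility_run_lower_bound}, would also cover any miscount that yields fewer than $z$ runs. Indeed, having fewer runs only increases the lower bound $\alpha_1^{z'}\lambda_+^{N-z'}$, since $\alpha_1^{z'}\lambda_+^{N-z'}\geq\alpha_1^{z}\lambda_+^{N-z}$ whenever $z'\leq z$.
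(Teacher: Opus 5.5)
Your proof is correct and follows the paper's argument. The case split is the same (neither sequence is $1^m$, versus one of them equals $1^m$), and you likewise apply Lemma~\ref{app:lem:visibility_run_lower_bound} to each of the $z$ runs of total length $2r+h$ and use Lemma~\ref{app:lem:visibility_estimates} for the all-ones sequence. Your explicit cyclic run count (by maximality each $01$-run is a maximal zero-run of $\boldsymbol u$, so $\boldsymbol u$ has $z_{01}$ one-runs and $\boldsymbol v$ has $z_{10}$, with the $\alpha_0=1$ factors from consecutive zeros dropping out) simply makes precise what the paper summarizes as applying the counting argument of Lemma~\ref{app:lem:open_mismatch_component_visibility} with $I_\partial=0$.
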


\begin{proof}
    Suppose first that $\boldsymbol u\neq1^m$ and $\boldsymbol v\neq1^m$. Since no $\mathbb B_{00}$ is present, there is no boundary from $\mathbb B_{00}$. Therefore, the counting argument of Lemma~\ref{app:lem:open_mismatch_component_visibility} applies with $I_{\partial}=0$ and gives
    \begin{equation}
        m(\boldsymbol u)m(\boldsymbol v)
        \geq
        \alpha_1^z\lambda_+^{2r+h-z}
        \geq
        \left(1-\frac1D\right)
        \alpha_1^z\lambda_+^{2r+h-z}.
    \end{equation}

    It remains to consider the case in which one of the two sequences is $1^m$. Suppose that $u=1^m$. Then no $\mathbb B_{01}$ occurs. Since the pure mismatch cycle is excluded, the $z$ mismatch runs are separated by nonempty active runs. Thus, the sequence $v$ has $z$ nonempty runs of ones with total length $r$. Therefore,
    \begin{align}
        m(\boldsymbol u)
        &\geq
        \left(1-\frac1D\right)\lambda_+^{r+h},
        \nonumber\\
        m(\boldsymbol v)
        &\geq
        \alpha_1^z\lambda_+^{r-z}.
    \end{align}
    Multiplying these inequalities proves
    Eq.~\eqref{app:eq:nonpure_cyclic_mismatch_visibility}.
    The case $v=1^m$ follows by exchanging $\boldsymbol{u}$ and $\boldsymbol{v}$.
\end{proof}

\begin{corollary}[Normalized cyclic mismatch bound]
\label{app:cor:normalized_cyclic_mismatch_bound}
    Suppose that $D\geq256$. Then
    \begin{equation}
        \frac{
            \operatorname{Tr}\!\left(
                \mathbb B_{u_1v_1}\cdots\mathbb B_{u_mv_m}
            \right)
        }{
            m(\boldsymbol u)m(\boldsymbol v)
        }
        \leq
        \gamma_D^m q_D^h.
        \label{app:eq:normalized_cyclic_mismatch_bound}
    \end{equation}
\end{corollary}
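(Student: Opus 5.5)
The proof divides the numerator bound of Lemma~\ref{app:lem:nonpure_cyclic_mismatch_numerator} by the visibility bound of Lemma~\ref{app:lem:nonpure_cyclic_mismatch_visibility}, in the same way as Corollary~\ref{app:cor:normalized_mismatch_component_bound}. The division gives
\begin{equation}
    \frac{\operatorname{Tr}(\mathbb B_{u_1v_1}\cdots\mathbb B_{u_mv_m})}{m(\boldsymbol u)m(\boldsymbol v)}
    \leq
    \frac{2\cdot20^z\Lambda_D^r\kappa_D^{h-z}D^{-2r-3h/2-z}}{(1-D^{-1})\,\alpha_1^z\lambda_+^{2r+h-z}} .
\end{equation}
We then group the factors by exponent. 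The $r$-dependent part is $\Lambda_D^r D^{-2r}\lambda_+^{-2r}=\gamma_D^r$. The $(h-z)$-dependent part is $\bigl(\kappa_D/(D\lambda_+)\bigr)^{h-z}$. The $z$-dependent part is $20^z(D^2\alpha_1)^{-z}(D\lambda_+)^{z}\cdot D^{z-z}$. Here we use $D^{-3h/2-z}\lambda_+^{-h+z}=D^{-h/2}\,(D\lambda_+)^{-(h-z)}\,D^{-z}\cdot D^{z}\cdots$. To check this carefully, write $D^{-3h/2-z}=D^{-h/2}D^{-(h-z)}D^{-2z}$ and $\lambda_+^{-(h-z)}$. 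The $D^{-2z}$ then pairs with $\alpha_1^{-z}$ to give $(D^2\alpha_1)^{-z}\le(2/5)^z$ by Eq.~\eqref{app:eq:alpha_one_lower_bound}.

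This leaves the bound
\begin{equation}
    \frac{2}{1-D^{-1}}\,8^z\,\gamma_D^r\,\Bigl(\frac{\kappa_D}{D\lambda_+}\Bigr)^{h-z}D^{-h/2}.
\end{equation}
Next we use $\kappa_D/(D\lambda_+)<8$ from Eq.~\eqref{app:eq:mu_plus_bounds}, together with $1\le z\le h$, to bound $8^z(\kappa_D/(D\lambda_+))^{h-z}\le 8^h$. The prefactor $2/(1-D^{-1})$ is then the only constant not already absorbed. In the open-component case the constant $5$ was absorbed into $40^h$ by $5\cdot 8^h\le 40^h$ for $h\ge1$, and we do the same here. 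Since $h\ge z\ge1$, we have $\tfrac{2}{1-D^{-1}}8^h\le 40^h$. This yields $\gamma_D^r q_D^h$ with $q_D=40/\sqrt D$.

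Finally, $\gamma_D>1$ (because $\Lambda_D\ge1$ and $D\lambda_+<1$) and $r\le m$ upgrade $\gamma_D^r$ to $\gamma_D^m$, which proves Eq.~\eqref{app:eq:normalized_cyclic_mismatch_bound}. The main thing to watch is the exponent bookkeeping of $D$ and $\lambda_+$ in the $z$-terms. In the cyclic case there is no boundary term $I_\partial$ and no extra $D^{-1}$ from a reset, so we must confirm that the powers $D^{-2z}\alpha_1^{-z}\lambda_+^{z}$ combine to at most $(2/5)^z(D\lambda_+)^z\cdot D^{-z}\cdot D^{z}$ and not to something growing in $D$. With $\lambda_+\le 1/D$ this is fine: the $z$-factor is at most $(2/5)^z\cdot 20^z=8^z$ times $(D\lambda_+)^z\le1$.
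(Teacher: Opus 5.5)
Your proof is correct and follows the paper's route. You divide Lemma~\ref{app:lem:nonpure_cyclic_mismatch_numerator} by Lemma~\ref{app:lem:nonpure_cyclic_mismatch_visibility}, use $D^2\alpha_1\ge 5/2$ to turn $20^z$ into $8^z$, and absorb the constants into $40^h$. The only difference is in the last step. You bound $(\kappa_D/(D\lambda_+))^{h-z}\le 8^{h-z}$ and finish with $\gamma_D^r\le\gamma_D^m$, as in Corollary~\ref{app:cor:normalized_mismatch_component_bound}. The paper instead regroups around $\gamma_D^m$ and removes the leftover $\kappa_D^{h-z}\Lambda_D^{-h}(D\lambda_+)^{h+z}\le 1$ using $\kappa_D\le\Lambda_D$ and $D\lambda_+<1$.

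Your first grouping sentence and your closing remark carry a stray factor $(D\lambda_+)^{z}$ that is already accounted for inside $(\kappa_D/(D\lambda_+))^{h-z}$. This does not matter, because the displayed intermediate bound you actually use, $\frac{2}{1-D^{-1}}8^z\gamma_D^r(\kappa_D/(D\lambda_+))^{h-z}D^{-h/2}$, is correct.
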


\begin{proof}
    Dividing
    Eq.~\eqref{app:eq:nonpure_cyclic_mismatch_numerator} by
    Eq.~\eqref{app:eq:nonpure_cyclic_mismatch_visibility} gives
    \begin{align}
        \frac{
            \operatorname{Tr}\!\left(
                \mathbb B_{u_1v_1}\cdots\mathbb B_{u_mv_m}
            \right)
        }{
            m(\boldsymbol u)m(\boldsymbol v)
        }
        &\leq
        \frac{2}{1-D^{-1}}
        20^z
        \Lambda_D^r
        \kappa_D^{h-z}
        D^{-2r-3h/2-z}
        \alpha_1^{-z}
        \lambda_+^{-2r-h+z}
        \nonumber\\
        &\leq
        \frac{2}{1-D^{-1}}
        8^z
        \Lambda_D^r
        \kappa_D^{h-z}
        D^{-2r-3h/2+z}
        \lambda_+^{-2r-h+z}
        \nonumber\\
        &=
        \frac{2}{1-D^{-1}}
        8^z
        \gamma_D^m
        D^{-h/2}
        \frac{\kappa_D^{h-z}}{\Lambda_D^h}
        (D\lambda_+)^{h+z}.
        \label{app:eq:normalized_cyclic_mismatch_intermediate}
    \end{align}
    Here the second inequality uses
    Eq.~\eqref{app:eq:alpha_one_lower_bound}, and the last equality
    uses $m=r+h$. Since $\kappa_D\leq\Lambda_D$,
    $D\lambda_+<1$, and $\Lambda_D>1$,
    \begin{equation}
        \frac{\kappa_D^{h-z}}{\Lambda_D^h}
        (D\lambda_+)^{h+z}
        \leq
        \Lambda_D^{-z}(D\lambda_+)^{h+z}
        \leq1.
    \end{equation}
    Moreover, $D\geq256$ and $1\leq z\leq h$ give
    \begin{equation}
        \frac{2}{1-D^{-1}}8^z
        <
        3\cdot8^z
        \leq
        40^h.
    \end{equation}
    Substituting these estimates into
    Eq.~\eqref{app:eq:normalized_cyclic_mismatch_intermediate} and
    using $q_D=40/\sqrt D$ proves the claim.
\end{proof}

\subsubsection{Case 2-3: Pure mismatch cycles}

It remains to consider a cycle consisting entirely of one mismatch orientation. For $\sigma\in\{10,01\}$, the full matrix product is $\mathbb B_\sigma^m$, the visibility factors are $m(0^m)$ and $m(1^m)$, and $|\boldsymbol u \oplus \boldsymbol v|=m$. In Case 1, this form is included by taking $z=1$ and $r_0=r_z=0$, since the two boundary runs impose no condition on $\sigma_1$. 

\begin{lemma}[Pure mismatch cycle bound]
\label{app:lem:pure_mismatch_cycle_bound}
    Suppose that $D\geq256$. For every
    $\sigma\in\{10,01\}$,
    \begin{equation}
        \frac{
            \operatorname{Tr}(\mathbb B_\sigma^m)
        }{
            m(0^m)m(1^m)
        }
        \leq
        \left(\frac{2}{\sqrt D}\right)^m
        \leq
        \gamma_D^m q_D^m.
        \label{app:eq:pure_mismatch_cycle_bound}
    \end{equation}
\end{lemma}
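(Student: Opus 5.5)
We handle the numerator and the two visibility factors separately, using the rank-two factorization of the mismatch matrices together with the closed forms for the visibilities.

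\textbf{Numerator.} By Eq.~\eqref{app:eq:mismatch_run_factorization},
\[
\mathbb B_\sigma^m=D^{-m/2}E_\sigma J^{m-1}R_\sigma .
\]
Cyclicity of the trace then gives
\[
\operatorname{Tr}(\mathbb B_\sigma^m)=D^{-m/2}\operatorname{Tr}(J^{m-1}R_\sigma E_\sigma)=D^{-m/2}\operatorname{Tr}(J^m),
\]
since $R_\sigma E_\sigma=J_\sigma=J$ by Eq.~\eqref{app:eq:J_sigma_definition}. This reduces everything to the $2\times2$ matrix $J$ of Eq.~\eqref{app:eq:J_sigma_explicit}.

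\textbf{Trace of $J^m$.} We bound $\operatorname{Tr}(J^m)$ by the Collatz--Wielandt bound (Fact~\ref{app:fact:cw_bound}) with test vector $p=(1,D)^T$. Equation~\eqref{app:eq:mismatch_propagation_bound} gives $Jp\leq_{\mathrm{ew}}D^{-1}\kappa_Dp$, hence
\[
\operatorname{Tr}(J^m)\leq 2(\kappa_D/D)^m .
\]
Alternatively, one can compute the spectrum directly: $\operatorname{Tr}J=D^{-1}+D^{-2}$ and $\det J=-D^{-3}$. So the eigenvalues are real, of opposite sign, and the dominant one is at most $D^{-1}(1+2/D)$. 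This avoids the factor $2$, which matters only for small $m$.

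\textbf{Visibilities.} We have $m(0^m)=1$ (from $\mathbb K_0^m=\mathbb K_0$ and $\operatorname{Tr}\mathbb K_0=1$). Lemma~\ref{app:lem:visibility_estimates} gives
\[
m(1^m)\geq(1-D^{-1})\lambda_+^m\geq(1-D^{-1})(D+1)^{-m}.
\]

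\textbf{Assembling.} Combining these,
\[
\frac{\operatorname{Tr}(\mathbb B_\sigma^m)}{m(0^m)m(1^m)}\leq \frac{2}{1-D^{-1}}\Bigl(\frac{\kappa_D(D+1)}{D^{3/2}}\Bigr)^m .
\]
Here $\kappa_D(D+1)/D=(1+2/D)(1+1/D)\leq 1+4/D$, so the base is at most $(1+4/D)D^{-1/2}$. For $m\ge1$ and $D\ge256$, the prefactor $2/(1-D^{-1})$ and the factor $(1+4/D)^m$ have to be absorbed into $2^m$.

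This absorption is the main obstacle. Since $2^m$ leaves slack $(2/(1+4/D))^m\ge(1.9)^m$, the constant is absorbed whenever $m\ge2$. For $m=1$ I would instead evaluate exactly: $\operatorname{Tr}(\mathbb B_\sigma)=D^{-1/2}(\mathbb B\text{-entry})$, and $\operatorname{Tr}J=D^{-1}+D^{-2}$ together with $m(1^1)=\operatorname{Tr}\mathbb K_1=(1+D)/(D+1)^2=1/(D+1)$ gives the ratio $D^{-3/2}(1+D^{-1})(D+1)\le 2D^{-1/2}$. For general $m$, the cleaner route is the exact eigenvalue bound $\operatorname{Tr}(J^m)\le \mu_+^m+|\mu_-|^m$ with $|\mu_-|<\mu_+$. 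Even $m$ then gives at most $2\mu_+^m$, and odd $m$ gives $\operatorname{Tr}(J^m)\le\mu_+^m$. The remaining factor $2$ for even $m\geq2$ is absorbed by the slack.

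\textbf{Second inequality.} Finally, $2/\sqrt D\le\gamma_Dq_D$ holds because $q_D=40/\sqrt D$ and $\gamma_D=\Lambda_D/(D\lambda_+)^2\ge1$ by Eq.~\eqref{app:eq:mu_plus_bounds}.
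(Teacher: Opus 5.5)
Your proof is correct, but it reaches the ratio $\operatorname{Tr}(J^m)/m(1^m)$ by a less direct route than the paper. The reduction $\operatorname{Tr}(\mathbb B_\sigma^m)=D^{-m/2}\operatorname{Tr}(J^m)$ via $R_\sigma E_\sigma=J$ and cyclicity is the same as the paper's.

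At that point the paper compares Eq.~\eqref{app:eq:J_sigma_explicit} with Eq.~\eqref{app:eq:visibility_transfer_matrices} entry by entry. This gives the matrix identity $J=(1+1/D)^2\mathbb K_1$. Since $m(1^m)=\operatorname{Tr}(\mathbb K_1^m)$ and $m(0^m)=1$, the ratio equals $D^{-m/2}(1+1/D)^{2m}$ \emph{exactly}. The bound $(1+1/D)^2\le 2$ then settles every $m\ge1$ at once.

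You instead bound $\operatorname{Tr}(J^m)$ from above (Collatz--Wielandt with $p$, or the spectrum of $J$) and $m(1^m)$ from below (Lemma~\ref{app:lem:visibility_estimates}) separately. This introduces the prefactor $2/(1-D^{-1})$ and the factor $(1+4/D)^m$, which forces your split into $m=1$ and $m\ge2$. Both pieces hold. For $D\ge256$ you have $2/(1+4/D)\ge1.96$, so the constant is absorbed once $m\ge2$, and your exact $m=1$ evaluation is right.

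What your route buys is uniformity with Case 2-2: it uses only test-vector bounds and visibility lower bounds, and never needs the relation between $J$ and $\mathbb K_1$. What it costs is the constant-chasing and the case split. Your own $m=1$ computation, $\operatorname{Tr}J/\operatorname{Tr}\mathbb K_1=(1+1/D)^2$, already shows the proportionality. Checking it entry by entry would have given you the whole lemma in one line, with the exact value rather than an upper bound.
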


\begin{proof}
    Equation~\eqref{app:eq:mismatch_rank_two_factorization},
    $R_\sigma E_\sigma=J$, and cyclicity of the trace give
    \begin{align}
        \operatorname{Tr}(\mathbb B_\sigma^m)
        &=
        D^{-m/2}
        \operatorname{Tr}\!\left(
            E_\sigma J^{m-1}R_\sigma
        \right)
        \nonumber\\
        &=
        D^{-m/2}\operatorname{Tr}(J^m).
        \label{app:eq:pure_mismatch_trace_reduction}
    \end{align}
    Comparing Eqs.~\eqref{app:eq:J_sigma_explicit} and
    \eqref{app:eq:visibility_transfer_matrices} gives
    $J=(1+1/D)^2\mathbb K_1$. Since
    $\operatorname{Tr}(\mathbb K_1^m)=m(1^m)$ and $m(0^m)=1$,
    \begin{align}
        \frac{
            \operatorname{Tr}(\mathbb B_\sigma^m)
        }{
            m(0^m)m(1^m)
        }
        &=
        D^{-m/2}
        \left(1+\frac1D\right)^{2m}
        \nonumber\\
        &\leq
        \left(\frac{2}{\sqrt D}\right)^m
        \leq
        \gamma_D^m q_D^m,
    \end{align}
    where the last line uses $(1+1/D)^2\leq2$,
    $\gamma_D>1$, and $q_D=40/\sqrt D$.
\end{proof}

Cases 2-1--2-3 complete the analysis for $J_{00}=\emptyset$. Together with the common-zero bound, these estimates give the uniform entrywise bound used below.

\subsection{From entrywise bounds to a quadratic-form bound}
\label{app:subsubsec:entrywise_to_quadratic_form}

The estimates established in the preceding cases imply that, for every $\boldsymbol u, \boldsymbol v\in\{0,1\}^m$,
\begin{equation}
    \frac{
        \operatorname{Tr}\!\left(
            \mathbb B_{u_1v_1}\cdots
            \mathbb B_{u_mv_m}
        \right)
    }{
        m(\boldsymbol u)m(\boldsymbol v)
    }
    \leq
    8\gamma_D^m q_D^{|\boldsymbol u \oplus \boldsymbol v|}.
    \label{app:eq:uniform_envelope_entry_bound}
\end{equation}
Here the four cases are covered by
Lemma~\ref{app:lem:common_zero_case_bound},
Lemma~\ref{app:lem:pure_active_cycle_bound},
Corollary~\ref{app:cor:normalized_cyclic_mismatch_bound}, and
Lemma~\ref{app:lem:pure_mismatch_cycle_bound}.

Returning to the exact matrices and applying
Eq.~\eqref{app:eq:B_product_entrywise_bound}, we obtain
\begin{align}
    (\widetilde K)_{\boldsymbol u \boldsymbol v}
    &=
    \frac{
        \operatorname{Tr}\!\left(
            \widetilde{\mathbb B}_{u_1v_1}\cdots
            \widetilde{\mathbb B}_{u_mv_m}
        \right)
    }{
        m(\boldsymbol u)m(\boldsymbol v)
    }\\
    &\leq
    8e^{6m/D}\gamma_D^m
    q_D^{|\boldsymbol u \oplus \boldsymbol v|}.
    \label{app:eq:uniform_exact_K_entry_bound}
\end{align}

For $\boldsymbol u, \boldsymbol v\in\{0,1\}^m$, define the $2^m\times2^m$ matrix $A$ by
\begin{equation}
    (A)_{\boldsymbol u \boldsymbol v}
    :=
    q_D^{|\boldsymbol u \oplus \boldsymbol v|}.
    \label{app:eq:hamming_matrix_definition}
\end{equation}
The additivity of the Hamming distance $|\boldsymbol u \oplus \boldsymbol v|=\sum_i|u_i \oplus v_i|$ gives
\begin{align}
    (A)_{\boldsymbol u \boldsymbol v}
    =
    q_D^{|\boldsymbol u \oplus \boldsymbol v|}
    =
    \prod_i q_D^{|u_i\oplus v_i|}
    &\longmapsto 
    A=
    \begin{pmatrix}
        1&q_D\\
        q_D&1
    \end{pmatrix}^{\otimes m}\\
    &\longmapsto 
    \lVert A\rVert_{\infty}
    =
    (1+q_D)^m.
    \label{app:eq:hamming_matrix_factorization}
\end{align}

Eq.~\eqref{app:eq:gamma_quadratic} 
and Eq.~\eqref{app:eq:uniform_exact_K_entry_bound} give
\begin{align}
    \lVert\Gamma(O_0)\rVert_\infty
    &\leq
    \sum_{\boldsymbol u,\boldsymbol v}(\widetilde K)_{\boldsymbol u \boldsymbol v}\|O_{\boldsymbol u}\|_{2}\|O_{\boldsymbol v}\|_{2}
    \\
    &\leq
    8e^{6m/D}\gamma_D^m
    \sum_{\boldsymbol u,\boldsymbol v}(A)_{\boldsymbol u \boldsymbol v}\|O_{\boldsymbol u}\|_{2}\|O_{\boldsymbol v}\|_{2}
    \\
    &\leq
    8e^{6m/D}\gamma_D^m
    \|A\|_{\infty} \sum_u\|O_{\boldsymbol u}\|_{2}^2\\
    &=
    8e^{6m/D}\gamma_D^m (1+q_D)^m \|O_0\|_2^2\\
    &\leq
    8e^{6m/D}\gamma_D^m
    \left(
        1+\frac{40}{\sqrt D}
    \right)^m
    \lVert O_0\rVert_2^2.
    \label{app:eq:entrywise_to_quadratic_form}
\end{align}

Using $\gamma_D\leq(1+4/\sqrt D)(1+1/D)^2$, $q_D=40/\sqrt D$, and $1+x\leq e^x$, we obtain
\begin{align}
    e^{6/D}\gamma_D(1+q_D)
    &\leq
    e^{6/D}
    \left(
        1+\frac4{\sqrt D}
    \right)
    \left(
        1+\frac1D
    \right)^2
    \left(
        1+\frac{40}{\sqrt D}
    \right)
    \\
    &\leq
    \exp\!\left(
        \frac{44}{\sqrt D}+\frac8D
    \right)
    \leq
    \exp\!\left(
        \frac{45}{\sqrt D}
    \right),
    \label{app:eq:dimension_factor_bound}
\end{align}
where the last inequality uses $D\geq256$.
Applying this estimate to Eq.~\eqref{app:eq:entrywise_to_quadratic_form} under the assumption $k2^{k/2}\ge Cn$ gives
\begin{align}
    \lVert\Gamma(O_0)\rVert_\infty
    &\leq
    8\exp\!\left(
        \frac{45m}{\sqrt D}
    \right)
    \lVert O_0\rVert_2^2\\
    &=
    8\exp\left(\frac{45n}{k2^{k/2}}\right)\lVert O_0\rVert_2^2\\
    &\leq
    8\exp\left(\frac{45}{C}\right)\lVert O_0\rVert_2^2.
    \label{app:eq:final_prediction_operator_bound}
\end{align}
This completes the proof of Theorem~\ref{appx:thm:1}.

\subsection{The role of the $H$ conjugation}\label{app:sec:role_H}

In Eq.~\eqref{appx:eq:t_B}, we conjugate each local transfer matrix $\mathbb R_{u_iv_i}^{(\kappa_i)}$ by $H$ before taking its entrywise absolute value. Applying the same conjugation at every position leaves each cyclic product trace unchanged:
\begin{equation}
    \operatorname{Tr}\!\left(
    \mathbb R_{u_1v_1}^{(\kappa_1)}
    \cdots
    \mathbb R_{u_mv_m}^{(\kappa_m)}
    \right)
    =
    \operatorname{Tr}\!\left(
    H\mathbb R_{u_1v_1}^{(\kappa_1)}H^{-1}
    \cdots
    H\mathbb R_{u_mv_m}^{(\kappa_m)}H^{-1}
    \right).
\end{equation}
The factors $H^{-1}H$ cancel between adjacent matrices, and cyclicity of the trace removes the remaining $H$ and $H^{-1}$. The conjugation changes the individual matrix entries and hence their entrywise absolute values. It therefore changes the upper bound obtained after taking these absolute values, while leaving the exact MPS trace unchanged.

For $u_i=v_i=1$, the sum in Eq.~\eqref{appx:eq:t_B} contains the $\mathord{=}$, $\texttt{fwd}$, and $\texttt{bwd}$ terms. With $H=I$, Eq.~\eqref{appx:eq:t_B} becomes
\begin{equation}
    \widetilde{\mathbb B}_{11}^{(H=I)}
    =
    \frac{1}{D}
    \left|
    \mathbb R_{11}^{(=)}
    \right|_{\mathrm{ew}}
    +
    \left|
    \mathbb R_{11}^{(\texttt{fwd})}
    \right|_{\mathrm{ew}}
    +
    \left|
    \mathbb R_{11}^{(\texttt{bwd})}
    \right|_{\mathrm{ew}}.
    \label{appx:eq:h_free_B11_definition}
    \end{equation}
    In the ordered basis
    $\mathcal S=(\texttt{00},\texttt{10},\texttt{01},
    \texttt E,\texttt C,\texttt A)$, the leading term of each entry is
    \begin{equation}
    \widetilde{\mathbb B}_{11}^{(H=I)}
    \simeq
    \begin{pmatrix}
    D^{-3}&4D^{-4}&4D^{-4}&D^{-3}&4D^{-4}&2D^{-4}\\
    6D^{-3}&2D^{-3}&4D^{-3}&4D^{-3}&2D^{-3}&2D^{-3}\\
    6D^{-3}&4D^{-3}&2D^{-3}&4D^{-3}&2D^{-3}&2D^{-3}\\
    2D^{-2}&7D^{-3}&7D^{-3}&D^{-2}&4D^{-3}&2D^{-3}\\
    5D^{-2}&2D^{-2}&2D^{-2}&2D^{-2}&D^{-2}&D^{-2}\\
    5D^{-2}&2D^{-2}&2D^{-2}&2D^{-2}&D^{-2}&D^{-2}
    \end{pmatrix}.
    \label{appx:eq:h_free_B11}
\end{equation}
Here, $\simeq$ retains the leading term in each entry; the omitted terms are smaller by a relative factor $O(D^{-1})$.

All entries of $\widetilde{\mathbb B}_{11}^{(H=I)}$ are nonnegative. Expanding the trace of its $m$th power gives a sum of products of nonnegative entries. Keeping only the terms whose intermediate indices belong to $\{\texttt C,\texttt A\}$ can therefore only decrease the trace.

Before taking entrywise absolute values, the $\texttt{fwd}$ and $\texttt{bwd}$ matrices on these two indices have the leading forms
\begin{equation}
    \left.
    \mathbb R_{11}^{(\texttt{fwd})}
    \right|_{\{\texttt C,\texttt A\}}
    =
    \frac{D^{-2}}{2}
    \begin{pmatrix}
    1&1\\
    1&1
    \end{pmatrix}
    +
    O(D^{-3}),
    \qquad
    \left.
    \mathbb R_{11}^{(\texttt{bwd})}
    \right|_{\{\texttt C,\texttt A\}}
    =
    \frac{D^{-2}}{2}
    \begin{pmatrix}
    1&-1\\
    -1&1
    \end{pmatrix}
    +
    O(D^{-3}).
    \label{appx:eq:h_free_signed_terms}
\end{equation}
Apart from the common factor $D^{-2}$, their leading terms are orthogonal rank-one projectors. When $H=I$, the entrywise absolute value removes their opposite off-diagonal signs. The three terms in Eq.~\eqref{appx:eq:h_free_B11_definition} then give
\begin{align}
    \left.
    \widetilde{\mathbb B}_{11}^{(H=I)}
    \right|_{\{\texttt C,\texttt A\}}
    &=
    \left.
    D^{-1}
    \left|\mathbb R_{11}^{(=)}\right|_{\mathrm{ew}}
    \right|_{\{\texttt C,\texttt A\}}
    +
    \left.
    \left|\mathbb R_{11}^{(\texttt{fwd})}\right|_{\mathrm{ew}}
    \right|_{\{\texttt C,\texttt A\}}
    +
    \left.
    \left|\mathbb R_{11}^{(\texttt{bwd})}\right|_{\mathrm{ew}}
    \right|_{\{\texttt C,\texttt A\}}
    \nonumber\\
    &=
    D^{-3}
    \begin{pmatrix}
    1&0\\
    0&1
    \end{pmatrix}
    +
    \frac{D^{-2}}{2}
    \begin{pmatrix}
    1&1\\
    1&1
    \end{pmatrix}
    +
    \frac{D^{-2}}{2}
    \begin{pmatrix}
    1&1\\
    1&1
    \end{pmatrix}
    +
    O(D^{-3})
    \nonumber\\
    &=
    D^{-2}
    \begin{pmatrix}
    1&1\\
    1&1
    \end{pmatrix}
    +
    O(D^{-3}),
\label{appx:eq:h_free_active_submatrix}
\end{align}
corresponding to the submatrix indexed by $\{\texttt{C}, \texttt{A}\}$. The two eigenvalues of this matrix are $2D^{-2}(1+O(D^{-1}))$ and $O(D^{-3})$. 
Consequently, whenever $k2^{k/2}=\Omega(n)$,
\begin{align}
    \frac{
    \operatorname{Tr}\!\left(
    \left(
    \widetilde{\mathbb B}_{11}^{(H=I)}
    \right)^m
    \right)
    }{
    m(1^m)^2
    }
    &\geq
    \frac{
    \operatorname{Tr}\!\left(
    \left(
    \left.
    \widetilde{\mathbb B}_{11}^{(H=I)}
    \right|_{\{\texttt C,\texttt A\}}
    \right)^m
    \right)
    }{
    m(1^m)^2
    }
    \nonumber\\
    &=
    \Omega(2^m)
    =
    \Omega\!\left(2^{n/k}\right).
    \label{appx:eq:h_free_exponential_factor}
\end{align}
It shows that the entrywise bound constructed from Eq.~\eqref{appx:eq:t_B} with $H=I$ contains the factor $2^{n/k}$. The choice of $H$ in Eq.~\eqref{app:eq:H} removes this factor by sending the two rank-one projectors to different diagonal entries before their entrywise absolute values are taken:
\begin{align}
    \left.
    \widetilde{\mathbb B}_{11}
    \right|_{\{\texttt C,\texttt A\}}
    &=
    \left.
    D^{-1}
    \left|
    H\mathbb R_{11}^{(=)}H^{-1}
    \right|_{\mathrm{ew}}
    \right|_{\{\texttt C,\texttt A\}}
    +
    \left.
    \left|
    H\mathbb R_{11}^{(\texttt{fwd})}H^{-1}
    \right|_{\mathrm{ew}}
    \right|_{\{\texttt C,\texttt A\}}
    +
    \left.
    \left|
    H\mathbb R_{11}^{(\texttt{bwd})}H^{-1}
    \right|_{\mathrm{ew}}
    \right|_{\{\texttt C,\texttt A\}}
    \nonumber\\
    &=
    D^{-3}
    \begin{pmatrix}
    1&0\\
    0&1
    \end{pmatrix}
    +
    D^{-2}
    \begin{pmatrix}
    1&0\\
    0&0
    \end{pmatrix}
    +
    D^{-2}
    \begin{pmatrix}
    0&0\\
    0&1
    \end{pmatrix}
    +
    O(D^{-3})
    \nonumber\\
    &=
    D^{-2}
    \begin{pmatrix}
    1&0\\
    0&1
    \end{pmatrix}
    +
    O(D^{-3}).
    \label{appx:eq:h_active_submatrix}
\end{align}
Its largest eigenvalue is $D^{-2}(1+O(D^{-1}))$, so its $m$th power does not contain the additional factor $2^m$.

\section{Approximate designs are not sufficient}\label{app:subsec:approximate_designs_not_sufficient}
Throughout this section, $\Phi_{\mathcal E}$ denotes the third-moment channel $\Phi_{\mathcal E}^{(3)}$.

\begin{fact}[\cite{collins2006integration}]
\label{app:fact:low_order_haar_choi_spectra}
    For an integer $t\in\{1,2,3\}$, define
    \begin{equation}
        J_s^{(t)}
        :=
        \mathbb E_{U\sim\operatorname{Haar}(\mathrm U(s))}
        \left[
            \left|U^{\otimes t}\right\rangle\!\right\rangle
            \left\langle\!\left\langle U^{\otimes t}\right|
        \right],
        \label{app:eq:haar_q_moment_choi}
    \end{equation}
    where $|A\rangle\!\rangle$ is the vectorization of matrix $A$.
    For $s\geq3$, the distinct positive eigenvalues of
    $J_s^{(1)}$, $J_s^{(2)}$, and $J_s^{(3)}$ are
    \begin{align}
        J_s^{(1)}:\quad&
        \frac1s,
        \label{app:eq:haar_choi_spectrum_t1}\\
        J_s^{(2)}:\quad&
        \frac{2}{s(s+1)},
        \qquad
        \frac{2}{s(s-1)},
        \label{app:eq:haar_choi_spectrum_t2}\\
        J_s^{(3)}:\quad&
        \frac{6}{s(s+1)(s+2)},
        \qquad
        \frac{6}{s(s^2-1)},
        \qquad
        \frac{6}{s(s-1)(s-2)}.
        \label{app:eq:haar_choi_spectrum_t3}
    \end{align}
\end{fact}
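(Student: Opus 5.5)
The plan is to obtain the spectra from Schur--Weyl duality. The Haar average will factor through the isotypic decomposition of $(\mathbb C^s)^{\otimes t}$ and become block diagonal, each block being the identity times a rank-one operator.

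First, decompose $(\mathbb C^s)^{\otimes t}\cong\bigoplus_{\lambda\vdash t,\ \ell(\lambda)\le s} V_\lambda\otimes S_\lambda$. Here $V_\lambda$ is the irreducible $\mathrm U(s)$-module of dimension $d_\lambda$, $S_\lambda$ is the Specht module of $S_t$ of dimension $m_\lambda$, and $U^{\otimes t}=\bigoplus_\lambda U_\lambda\otimes I_{m_\lambda}$. Choose a basis adapted to this decomposition, so that the maximally entangled vector splits as $|I\rangle\!\rangle=\sum_\lambda |I_{d_\lambda}\rangle\!\rangle\otimes|I_{m_\lambda}\rangle\!\rangle$ on the blocks $(V_\lambda\otimes V_\lambda)\otimes(S_\lambda\otimes S_\lambda)$. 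Since $|U^{\otimes t}\rangle\!\rangle=(U^{\otimes t}\otimes I)|I\rangle\!\rangle$, this gives
\begin{equation}
|U^{\otimes t}\rangle\!\rangle=\bigoplus_\lambda |U_\lambda\rangle\!\rangle\otimes|I_{m_\lambda}\rangle\!\rangle .
\end{equation}

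Second, apply Schur orthogonality. For inequivalent irreps, $\mathbb E_U[(U_\lambda)_{ij}\overline{(U_\mu)_{kl}}]=\delta_{\lambda\mu}\delta_{ik}\delta_{jl}/d_\lambda$. The cross terms with $\lambda\neq\mu$ therefore vanish, and $\mathbb E|U_\lambda\rangle\!\rangle\langle\!\langle U_\lambda|=I_{d_\lambda^2}/d_\lambda$. Hence
\begin{equation}
J_s^{(t)}=\bigoplus_\lambda \frac{1}{d_\lambda} I_{d_\lambda^2}\otimes |I_{m_\lambda}\rangle\!\rangle\langle\!\langle I_{m_\lambda}| .
\end{equation}
Because $|I_{m_\lambda}\rangle\!\rangle$ has squared norm $m_\lambda$, the positive eigenvalues are exactly $m_\lambda/d_\lambda$, one for each $\lambda$ appearing in the decomposition.

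Third, evaluate $m_\lambda/d_\lambda$ case by case with the hook-content formula. The condition $s\ge3$ guarantees that every $\lambda\vdash t$ with $t\le3$ occurs, including $(1,1,1)$.
\begin{itemize}
\item $t=1$: $\lambda=(1)$ has $d_\lambda=s$ and $m_\lambda=1$, giving $1/s$.
\item $t=2$: the symmetric irrep has $d=s(s+1)/2$ and the antisymmetric irrep has $d=s(s-1)/2$, both with $m=1$. This gives $2/(s(s+1))$ and $2/(s(s-1))$.
\item $t=3$: $(3)$ has $d=s(s+1)(s+2)/6$ and $m=1$; $(2,1)$ has $d=s(s^2-1)/3$ and $m=2$, giving $6/(s(s^2-1))$; $(1,1,1)$ has $d=s(s-1)(s-2)/6$ and $m=1$.
\end{itemize}
These values reproduce Eqs.~\eqref{app:eq:haar_choi_spectrum_t1}--\eqref{app:eq:haar_choi_spectrum_t3}. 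For $s\ge3$ the three values at $t=3$ are distinct, so these are exactly the distinct positive eigenvalues.

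The main obstacle is the bookkeeping in the first step. One must verify that the vectorization respects the Schur--Weyl block structure, i.e.\ that $|I\rangle\!\rangle$ has no components pairing different isotypic sectors. I would handle this by choosing an orthonormal basis of $(\mathbb C^s)^{\otimes t}$ adapted to $\bigoplus_\lambda V_\lambda\otimes S_\lambda$ and noting that $|I\rangle\!\rangle$ is basis independent up to this choice, which settles the point. The multiplicity factor $m_\lambda=2$ for $(2,1)$ is the only place where the non-abelian symmetric-group side enters.
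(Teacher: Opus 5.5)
Your proof is correct, and it takes a genuinely different route from the paper, because the paper gives no proof of this Fact. It simply imports the result from Collins--\'{S}niady, i.e.\ from the Weingarten calculus. In that framework $J_s^{(t)}$ appears as $\sum_{\sigma,\tau\in S_t}\mathrm{Wg}(\sigma\tau^{-1},s)\,V_\sigma\otimes V_\tau$, with $V_\sigma$ permuting the $t$ tensor factors (up to transposition conventions). Reading off the spectrum from that expression essentially requires diagonalizing it over the group algebra of $S_t$, which brings one back to the isotypic decomposition.

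Your Schur--Weyl argument goes there directly. Schur orthogonality removes the cross-sector terms and gives $J_s^{(t)}=\bigoplus_\lambda d_\lambda^{-1}I_{d_\lambda^2}\otimes|I_{m_\lambda}\rangle\!\rangle\langle\!\langle I_{m_\lambda}|$. From this you get the general formula $m_\lambda/d_\lambda$ for every $t$, the multiplicities $d_\lambda^2$, and an explicit block description of the range of $J_s^{(t)}$. These are exactly the quantities used downstream in Lemma~\ref{app:lem:block_haar_third_moment_domination}:
\begin{itemize}
\item the norms of $J_r^{(1)}$ and $J_r^{(2)}$;
\item the largest eigenvalue $6/(r(r-1)(r-2))$ of $J_r^{(3)}$;
\item the smallest positive eigenvalue $6/(d(d+1)(d+2))$ of the global Haar Choi operator.
\end{itemize}
So your derivation makes that lemma self-contained, whereas the citation is shorter but does not show where the numbers come from.

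The only step worth stating more precisely is the basis issue you flag yourself. For any unitary $W$ one has $(W\otimes\bar W)|I\rangle\!\rangle=|I\rangle\!\rangle$. Hence $|I\rangle\!\rangle=\sum_k|e_k\rangle\otimes\overline{|e_k\rangle}$ for every orthonormal basis $\{|e_k\rangle\}$. Equivalently, passing to a basis adapted to $\bigoplus_\lambda V_\lambda\otimes S_\lambda$ conjugates $J_s^{(t)}$ by the unitary $W\otimes\bar W$ and leaves its spectrum unchanged. Each term of that sum pairs a basis vector with its own conjugate, so no component of $|I\rangle\!\rangle$ couples different sectors, which is what your first step needs.
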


\begin{fact}[Haar decomposition of the Pauli commutant]
\label{app:fact:haar_decomposition_pauli_commutant}
    Let $P\in\mathcal P_n\setminus\{I\}$, and let $X$ and $Y$ be its positive and negative eigenspaces, respectively. Define $\mathrm H_P:=\{h\in\mathrm U(d):[h,P]=0\}$.
    Every $h\in\mathrm H_P$ can be written uniquely as $h=u\oplus v$, where $u\in\mathrm U(X)$ and $v\in\mathrm U(Y)$. Moreover, if $h\sim\operatorname{Haar}(\mathrm H_P)$, then $\boldsymbol{u}$ and $\boldsymbol{v}$ are independent Haar-random unitaries on $X$ and $Y$, respectively.
\end{fact}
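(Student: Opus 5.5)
The decomposition follows from the structure of the commutant of $P$. Write $P=\Pi_X-\Pi_Y$, where $\Pi_X$ and $\Pi_Y$ are the spectral projectors onto the $\pm1$ eigenspaces. Because $P$ is a nonidentity Pauli operator, it is traceless, so $\dim X=\dim Y=d/2$.

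A unitary $h$ commutes with $P$ if and only if it commutes with both spectral projectors, since $\Pi_X=(I+P)/2$ and $\Pi_Y=(I-P)/2$. Any such $h$ therefore leaves $X$ and $Y$ invariant. Its restrictions $u=h|_X$ and $v=h|_Y$ are unitaries on $X$ and $Y$, and $h=u\oplus v$. Uniqueness is immediate, because $u$ and $v$ are determined as these restrictions. Conversely, every $u\oplus v$ with $u\in\mathrm U(X)$ and $v\in\mathrm U(Y)$ commutes with $P$, since $P$ acts as $\pm1$ on each summand. This gives a group isomorphism $\phi:\mathrm U(X)\times\mathrm U(Y)\to\mathrm H_P$, $(u,v)\mapsto u\oplus v$. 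The map is continuous with continuous inverse, so it is an isomorphism of compact groups, and $\mathrm H_P$ is in particular closed.

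For the distributional statement, I use uniqueness of Haar measure. The product measure $\mu_X\otimes\mu_Y$ of the two Haar measures is the Haar measure of $\mathrm U(X)\times\mathrm U(Y)$. Indeed, it is a probability measure, and it is invariant under left multiplication by any $(u_0,v_0)$, because each factor is invariant.

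Pushing this measure forward along the isomorphism $\phi$ gives a left-invariant probability measure on $\mathrm H_P$. By uniqueness of normalized Haar measure on a compact group, this pushforward equals $\operatorname{Haar}(\mathrm H_P)$.

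Consequently, if $h\sim\operatorname{Haar}(\mathrm H_P)$, then $(u,v)=\phi^{-1}(h)$ has law $\mu_X\otimes\mu_Y$. That is, $u$ and $v$ are independent Haar-random unitaries on $X$ and $Y$.

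There is no real obstacle in this argument. The only point needing care is to state explicitly that $\phi$ is a topological group isomorphism, so that Haar measure is transported correctly; this holds because the restriction maps are continuous.
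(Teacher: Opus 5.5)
Your proposal is correct and follows essentially the same route as the paper: you restrict $h$ to the invariant eigenspaces $X$ and $Y$ to get the unique decomposition $h=u\oplus v$, and you show that the product of the two Haar measures, transported to $\mathrm H_P$, is left-invariant and therefore equals $\operatorname{Haar}(\mathrm H_P)$ by uniqueness. Your extra remarks, that $\dim X=\dim Y=d/2$ and that $\phi$ is a topological group isomorphism so the pushforward is well defined, are harmless refinements that the paper leaves implicit.
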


\begin{proof}
    For every $|x\rangle\in X$, the relation $[h,P]=0$ gives $P h|x\rangle=hP|x\rangle=h|x\rangle$. Hence, $h$ preserves $X$. The same argument shows that $h$ preserves $Y$. Since $h$ is unitary, its restrictions to $X$ and $Y$ are unitaries, and therefore
    \begin{equation}
        h=u\oplus v
    \end{equation}
    for unique $u\in\mathrm U(X)$ and $v\in\mathrm U(Y)$. Conversely, every such $u\oplus v$ commutes with $P$. 

    Now sample $\boldsymbol{u}$ and $\boldsymbol{v}$ independently from the Haar measures on
    $\mathrm U(X)$ and $\mathrm U(Y)$. For every fixed
    $u_0\oplus v_0\in\mathrm H_P$,
    \begin{equation}
        (u_0\oplus v_0)(u\oplus v)
        =
        (u_0u)\oplus(v_0v)
    \end{equation}
    has the same distribution as $u\oplus v$ by the invariance of the two Haar measures. The resulting probability measure on $\mathrm H_P$ is therefore left invariant. By the uniqueness of the Haar probability measure on the compact group $\mathrm H_P$~\cite{mele2024introduction},
    it is precisely $\operatorname{Haar}(\mathrm H_P)$. Thus, under Haar sampling from $\mathrm H_P$, the two blocks $\boldsymbol{u}$ and $\boldsymbol{v}$ are independent and Haar distributed on their respective spaces.
\end{proof}

\begin{lemma}[Block-Haar third-moment domination]
\label{app:lem:block_haar_third_moment_domination}
    Let $d=2r\geq16$, and fix a Pauli operator $P\in \mathcal{P}_n \setminus\{I\}$. Define $\mathrm{H}_P:=\left\{U\in\mathrm U(d):[U,P]=0\right\}$. Then
    \begin{equation}
        \Phi_{\mathrm{H}_P}
        \preceq_{\mathrm{CP}}
        15\Phi_{\mathrm H}.
        \label{app:eq:block_haar_third_moment_domination}
    \end{equation}
    \end{lemma}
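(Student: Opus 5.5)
The plan is to compare Choi matrices. Complete positivity of $15\Phi_{\mathrm H}-\Phi_{\mathrm H_P}$ is equivalent to the operator inequality $J(\Phi_{\mathrm H_P})\preceq 15\,J(\Phi_{\mathrm H})$ between the Choi operators $J(\Phi)=\mathbb E|U^{\otimes 3}\rangle\!\rangle\langle\!\langle U^{\otimes 3}|$. Both are positive semidefinite. The Haar Choi operator $J_d^{(3)}$ has known spectrum by Fact~\ref{app:fact:low_order_haar_choi_spectra}. Its smallest positive eigenvalue is $6/(d(d+1)(d+2))$, and it acts as a positive multiple of the projector onto its support $\mathcal S_d$. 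Concretely, $\mathcal S_d$ is the span of $\{|U^{\otimes3}\rangle\!\rangle\}$, which is the commutant-type symmetric subspace (isotypic components $\lambda\otimes\lambda$, $\lambda\vdash 3$, $\ell(\lambda)\le d$). It therefore suffices to show two things:
\begin{enumerate}
\item $\operatorname{supp} J(\Phi_{\mathrm H_P})\subseteq \mathcal S_d$;
\item $\|J(\Phi_{\mathrm H_P})\|_\infty \le 15\cdot 6/(d(d+1)(d+2))$.
\end{enumerate}
Together these give $J(\Phi_{\mathrm H_P})\preceq \|J(\Phi_{\mathrm H_P})\|_\infty \Pi_{\mathcal S_d}\preceq 15 J_d^{(3)}$.

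Step (i) is immediate: every $|U^{\otimes3}\rangle\!\rangle$ with $U\in\mathrm H_P\subset\mathrm U(d)$ lies in the span $\mathcal S_d$, and $J_d^{(3)}$ is supported exactly on that span.

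For step (ii), I would use Fact~\ref{app:fact:haar_decomposition_pauli_commutant}. Write $U=u\oplus v$ with $u,v$ independent Haar on $X$ and $Y$, each of dimension $r=d/2$. Then $U^{\otimes 3}$ decomposes into blocks indexed by $(\epsilon_1,\epsilon_2,\epsilon_3)\in\{X,Y\}^3$. For each block, the restriction of $U^{\otimes3}$ mapping the tensor factor $\epsilon_1\epsilon_2\epsilon_3$ to itself is a permutation of $u^{\otimes j}\otimes v^{\otimes 3-j}$. Here $j$ is the number of $X$ indices, and only the diagonal blocks in each of the three tensor slots are nonzero.

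Hence $|U^{\otimes3}\rangle\!\rangle=\sum_{\epsilon}|\,(u^{\otimes j}\otimes v^{\otimes 3-j})_\epsilon\rangle\!\rangle$, a sum of $8$ orthogonal vectors. Using the Cauchy--Schwarz bound $|\sum_{\epsilon} x_\epsilon\rangle\langle\sum_{\epsilon} x_\epsilon|\preceq 8\sum_\epsilon |x_\epsilon\rangle\langle x_\epsilon|$ would lose too much. Instead I would compute the expectation directly. Cross terms between $\epsilon$ and $\epsilon'$ with different multisets survive only if the same moment structure is matched. By independence and Haar invariance, $\mathbb E[u^{\otimes j}\otimes \bar u^{\otimes j'}]$ vanishes unless $j=j'$. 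So $J(\Phi_{\mathrm H_P})$ is block diagonal over $j\in\{0,1,2,3\}$.

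The $j$-block is a sum over the $\binom{3}{j}$ placements, each a copy of $J_r^{(j)}\otimes J_r^{(3-j)}$ embedded by a permutation. Its norm is therefore at most $\binom{3}{j}\|J_r^{(j)}\|\,\|J_r^{(3-j)}\|$. With $J_r^{(0)}=1$, Fact~\ref{app:fact:low_order_haar_choi_spectra} gives the following bounds:
\begin{itemize}
\item $j\in\{0,3\}$: $6/(r(r-1)(r-2))$;
\item $j\in\{1,2\}$: $3\cdot\frac1r\cdot\frac{2}{r(r-1)}=6/(r^2(r-1))$.
\end{itemize}
The maximum is $6/(r(r-1)(r-2))$. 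Comparing this with $15\cdot 6/(d(d+1)(d+2))$ for $d=2r$ requires $d(d+1)(d+2)\le 15\,r(r-1)(r-2)$, i.e. $8(r+\tfrac12)(r+1)\le 15(r-1)(r-2)$. This holds for $r\ge 8$, i.e. $d\ge16$, which matches the hypothesis.

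The main obstacle is making the block-diagonal claim in step (ii) rigorous. Off-diagonal terms within the same $j$ but different placements are exactly what the $\binom{3}{j}$ factor absorbs, via the triangle inequality on norms of permuted copies. Off-diagonal terms between different $j$ must be shown to vanish using the phase invariance $u\mapsto e^{i\theta}u$ of the Haar measure. A bound of the form $\binom{3}{j}$ is crude, and if it fails the threshold, I would fall back to the sharper observation that the placements are related by the $S_3$ action, on which $J_r^{(j)}\otimes J_r^{(3-j)}$ is already partially symmetric.
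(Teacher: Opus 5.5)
Your plan follows the paper's proof step for step. You compare Choi operators, use $\operatorname{Im}J_{\mathrm H_P}\subseteq\operatorname{Im}J_{\mathrm H}$, split $h=u\oplus v$, and use the phase invariance $u\mapsto e^{i\theta}u$ to block-diagonalize by the number $j$ of $u$-factors. You take $J_r^{(3)}$ for $j\in\{0,3\}$ and compare with $\lambda^+_{\min}(J_{\mathrm H})=6/(d(d+1)(d+2))$, and your arithmetic for $r\ge 8$ is right. One small slip: $J_d^{(3)}$ is not a multiple of its support projector, since it has three distinct positive eigenvalues. You only use $J_d^{(3)}\succeq\lambda^+_{\min}\Pi_{\mathcal S_d}$, which is true.

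The gap is the bound on the mixed sectors $j\in\{1,2\}$. The sector Choi block is $\sum_{\pi,\pi'}\mathbb E\,|x_\pi\rangle\!\rangle\langle\!\langle x_{\pi'}|$ over the three placements. It is not a sum of three permuted copies of $J_r^{(j)}\otimes J_r^{(3-j)}$. The terms with $\pi\neq\pi'$ are invariant under both phase rotations and do not vanish, because $\mathbb E[u\otimes\bar u]\neq 0$. For example, $\mathbb E\bigl[|u\rangle\!\rangle\langle\!\langle u|\otimes|u\rangle\!\rangle\langle\!\langle v|\otimes|v\rangle\!\rangle\langle\!\langle u|\bigr]$ has matrix element $\mathbb E|\operatorname{Tr}u|^4\,\mathbb E|\operatorname{Tr}v|^2=2$ between copies of $|I\rangle\!\rangle^{\otimes 3}$.

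The triangle inequality cannot absorb these terms at the required constant. Bounding all nine terms separately, each by $\|J_r^{(j)}\|_\infty\|J_r^{(3-j)}\|_\infty$, gives $\|J_j\|_\infty\le 9\cdot\frac1r\cdot\frac{2}{r(r-1)}$. That yields a domination constant of about $33$ at $r=8$ and $24$ asymptotically, not $15$.

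The missing step is the operator inequality you mentioned and then discarded, $\bigl(\sum_{i=1}^N|a_i\rangle\bigr)\bigl(\sum_{i=1}^N\langle a_i|\bigr)\preceq N\sum_i|a_i\rangle\langle a_i|$. Apply it \emph{within} each sector after the phase-invariance split, so $N=3$ rather than $8$. Combine it with the fact that the three diagonal expectations are supported on mutually orthogonal subspaces: a given slot carries $X\otimes X^*$ in one placement and $Y\otimes Y^*$ in another, since $\langle\!\langle u|v\rangle\!\rangle=0$. The norm of their sum is therefore the maximum rather than the sum. This gives $\|J_j\|_\infty\le 3\|J_r^{(2)}\|_\infty\|J_r^{(1)}\|_\infty=6/(r^2(r-1))$, exactly the paper's bound. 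Your comparison then goes through, and the $S_3$ fallback is unnecessary.
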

    
    \begin{proof}
    By Fact~\ref{app:fact:haar_decomposition_pauli_commutant}, a Haar-random $h\in\mathrm H_P$ can be written as $h=u\oplus v$, where $\boldsymbol{u}$ and $\boldsymbol{v}$ are independent Haar-random unitaries on $X$ and $Y$, respectively. The vectors $|u\rangle\!\rangle$ and $|v\rangle\!\rangle$ lie in the orthogonal spaces $X\otimes X^*$ and $Y\otimes Y^*$, respectively. Moreover,
    \begin{equation}
        |h\rangle\!\rangle
        =
        |u\rangle\!\rangle+|v\rangle\!\rangle.
    \end{equation}
    We therefore write $|h^{\otimes3}\rangle\!\rangle=|\Sigma_0\rangle\!\rangle+|\Sigma_1\rangle\!\rangle+|\Sigma_2\rangle\!\rangle+|\Sigma_3\rangle\!\rangle$, where
    \begin{align}
        |\Sigma_0\rangle\!\rangle
        &:=
        |u\rangle\!\rangle\otimes|u\rangle\!\rangle\otimes|u\rangle\!\rangle,\\
        |\Sigma_1\rangle\!\rangle
        &:=
        |u\rangle\!\rangle\otimes|u\rangle\!\rangle\otimes|v\rangle\!\rangle
        +
        |u\rangle\!\rangle\otimes|v\rangle\!\rangle\otimes|u\rangle\!\rangle
        +
        |v\rangle\!\rangle\otimes|u\rangle\!\rangle\otimes|u\rangle\!\rangle,\\
        |\Sigma_2\rangle\!\rangle
        &:=
        |u\rangle\!\rangle\otimes|v\rangle\!\rangle\otimes|v\rangle\!\rangle
        +
        |v\rangle\!\rangle\otimes|u\rangle\!\rangle\otimes|v\rangle\!\rangle
        +
        |v\rangle\!\rangle\otimes|v\rangle\!\rangle\otimes|u\rangle\!\rangle,\\
        |\Sigma_3\rangle\!\rangle
        &:=
        |v\rangle\!\rangle\otimes|v\rangle\!\rangle\otimes|v\rangle\!\rangle.
    \end{align}
    The Haar distribution of $u$ is invariant under $u\mapsto e^{i\theta}u$, which induces $|u\rangle\!\rangle\mapsto e^{i\theta}|u\rangle\!\rangle$. Under this transformation, $|\Sigma_p\rangle\!\rangle$ acquires the phase $e^{i(3-p)\theta}$. Hence, 
    \begin{equation}
        \mathbb E\!\left[|\Sigma_p\rangle\!\rangle\langle\! \langle \Sigma_q|\right]
        =
        e^{i(q-p)\theta}
        \mathbb E\!\left[|\Sigma_p\rangle\!\rangle\langle\! \langle \Sigma_q|\right]
    \end{equation}
    for every $\theta$, which implies $\mathbb E\!\left[|\Sigma_p\rangle\!\rangle\langle\! \langle \Sigma_q|\right]=\delta_{pq}\mathbb E\!\left[|\Sigma_p\rangle\!\rangle\langle\! \langle \Sigma_p|\right]$.
    It follows that
    \begin{align}
        J_{\mathrm H_P}
        &=
        \mathbb E_{h\sim\mathrm H_P}
        \left[\left(
            |h\rangle\!\rangle\langle\!\langle h|
        \right)^{\otimes3}\right]\\
        &=
        J_0\oplus J_1\oplus J_2\oplus J_3,
        \label{app:eq:block_haar_choi_decomposition}
    \end{align}
    where $J_p:=\mathbb E\!\left[|\Sigma_p\rangle\!\rangle\langle\! \langle \Sigma_p|\right]$.
    The two unmixed blocks satisfy $J_0=J_3=J_r^{(3)}$. Fact~\ref{app:fact:low_order_haar_choi_spectra} gives
    \begin{equation}
        \|J_0\|_\infty
        =
        \|J_3\|_\infty
        =
        \frac{6}{r(r-1)(r-2)}.
        \label{app:eq:unmixed_block_norm}
    \end{equation}
    
    To bound $J_1$, applying the inequality
    \begin{equation}
        (|a_1\rangle+|a_2\rangle+|a_3\rangle)(\langle a_1|+\langle a_2|+\langle a_3|)
        \preceq
        3\sum_{j=1}^3|a_j\rangle\langle a_j|.
        \label{app:eq:three_vector_bound}
    \end{equation}
    to $|\Sigma_1\rangle\!\rangle \langle\! \langle \Sigma_1|$ gives
    \begin{align}
        |\Sigma_1\rangle\!\rangle \langle\! \langle \Sigma_1| 
        \preceq 
        3(
        \nonumber&|u\rangle\!\rangle \langle\! \langle u|\otimes|u\rangle\!\rangle \langle\! \langle u|\otimes|v\rangle\!\rangle \langle\! \langle v|\\
        \quad\nonumber&+
        |u\rangle\!\rangle \langle\! \langle u|\otimes|v\rangle\!\rangle \langle\! \langle v|\otimes|u\rangle\!\rangle \langle\! \langle u|\\
        \quad&+
        |v\rangle\!\rangle \langle\! \langle v|\otimes|u\rangle\!\rangle \langle\! \langle u|\otimes|u\rangle\!\rangle \langle\! \langle u|).
    \end{align}
    Since $\boldsymbol{u}$ and $\boldsymbol{v}$ are independent, each term in $\mathbb{E}[|\Sigma_1\rangle\!\rangle \langle\! \langle \Sigma_1|]$ is a permutation of $3J_r^{(2)}\otimes J_r^{(1)}$. The three terms are mutually orthogonal due to the fact that $\langle\! \langle u|v\rangle\!\rangle=0$. Therefore,
    \begin{align}
        \|J_1\|_\infty
        \leq
        3
        \left\|J_r^{(2)}\right\|_\infty
        \left\|J_r^{(1)}\right\|_\infty
        =
        \frac{6}{r^2(r-1)}.
        \label{app:eq:first_mixed_block_norm}
    \end{align}
    The same bound holds in $\|J_2\|_\infty$. Since
    \begin{equation}
        \|J_1\|_\infty=\|J_2\|_\infty
        \leq
        \frac{6}{r^2(r-1)}
        \leq
        \frac{6}{r(r-1)(r-2)}=\|J_0\|_\infty=\|J_3\|_\infty,
    \end{equation}
    Eq.~\eqref{app:eq:block_haar_choi_decomposition} yields
    \begin{equation}
        \|J_{\mathrm{H}_P}\|_\infty
        =
        \frac{6}{r(r-1)(r-2)}.
        \label{app:eq:block_haar_choi_norm}
    \end{equation}
    
    Every element of $\mathrm{H}_P$ is a unitary on $\mathbb C^d$. Hence
    \begin{align}
        \operatorname{Im}(J_{\mathrm{H}_P})
        &=
        \operatorname{span}
        \left\{
            |h^{\otimes3}\rangle\!\rangle:
            h\in \mathrm{H}_P
        \right\}\\
        &\subseteq
        \operatorname{span}
        \left\{
            |U^{\otimes3}\rangle\!\rangle:
            U\in\mathrm U(d)
        \right\}\\
        &=
        \operatorname{Im}(J_{\mathrm{H}}).
        \label{app:eq:block_haar_image_inclusion}
    \end{align}
    Let $\Pi_{\mathrm H}$ and $\Pi_{\mathrm H_P}$ be the orthogonal projectors onto $\operatorname{Im}(J_{\mathrm H})$ and $\operatorname{Im}(J_{\mathrm H_P})$, respectively. Equations \eqref{app:eq:haar_choi_spectrum_t3} and \eqref{app:eq:block_haar_choi_norm} give
    \begin{align}
        J_{\mathrm{H}_P}
        &\preceq
        \|J_{\mathrm{H}_P}\|_\infty\Pi_{\mathrm{H}_P}
        \preceq
        \|J_{\mathrm{H}_P}\|_\infty\Pi_{\mathrm{H}}
        \\
        &\preceq
        \frac{
            \|J_{\mathrm{H}_P}\|_\infty
        }{
            \lambda_{\min}^{+}(J_{\mathrm{H}})
        }
        J_{\mathrm{H}}
        =
        \frac{
            d(d+1)(d+2)
        }{
            r(r-1)(r-2)
        }
        J_{\mathrm{H}}.
    \end{align}
    For $d=2r\geq16$,
    \begin{equation}
        \frac{
            d(d+1)(d+2)
        }{
            r(r-1)(r-2)
        }
        \leq15.
    \end{equation}
    Thus $J_{\mathrm{H}_P}\preceq15J_{\mathrm{H}}$. The Choi correspondence~\cite{choi1975completely} gives
    \begin{equation}
        \Phi_{\mathrm{H}_P}
        \preceq_{\mathrm{CP}}
        15\Phi_{\mathrm H},
    \end{equation}
    which proves the claim.
\end{proof}

For two unitary ensembles $\mathcal U$ and $\mathcal V$, let $\mathcal U\mathcal V$ denote the distribution of $UV$, where $U\sim\mathcal U$ and $V\sim\mathcal V$ are sampled independently. Their moment channels satisfy
\begin{equation}
    \Phi_{\mathcal U\mathcal V}
    =
    \Phi_{\mathcal U}\circ\Phi_{\mathcal V}.
    \label{app:eq:product_ensemble_moment_channel}
\end{equation}
Define the Clifford ensembles
\begin{align}
    \mathcal E_P
    &:=
    \left\{
        C\in\operatorname{Cl}(n):
        CPC^\dagger\in\pm\mathcal Z
    \right\},\\
    \mathcal E_P^{(c)}
    &:=
    \left\{
        C\in\operatorname{Cl}(n):
        CPC^\dagger\notin\pm\mathcal Z
    \right\},
\end{align}
each equipped with the uniform distribution. Uniform Clifford
conjugation maps $P$ uniformly to a signed non-identity Pauli. Since $\Pr_{C\sim \mathrm{Cl}(n)} (CPC^{\dagger} \in \pm \mathcal{Z})=1/(d+1)$,
\begin{equation}
    \Phi_{\operatorname{Cl}(n)}
    =
    \frac{1}{d+1}\Phi_{\mathcal E_P}
    +
    \frac{d}{d+1}\Phi_{\mathcal E_P^{(c)}}.
    \label{app:eq:cliff_Ord_detection_decomposition}
\end{equation}

\begin{lemma}[Approximate design with large variance]
\label{app:lem:approximate_design_large_variance}
    Let $d=2^n\geq16$ and fix
    $P\in\mathcal P_n\setminus\{I\}$. For $0<\eta<1$, define
    \begin{equation}
        \mathcal E_\eta
        :=
        (1-\eta)\mathcal E_P^{(c)}\mathrm H_P
        +
        \eta\operatorname{Cl}(n).
        \label{app:eq:M_eta}
    \end{equation}
    Then $\mathcal E_\eta$ has an invertible shadow channel and satisfies
    \begin{equation}
        \left(
            1-\frac{14(1-\eta)}{d}
        \right)
        \Phi_{\mathrm H}
        \preceq_{\mathrm{CP}}
        \Phi_{\mathcal E_\eta}
        \preceq_{\mathrm{CP}}
        \left(
            1+\frac{1-\eta}{d}
        \right)
        \Phi_{\mathrm H}.
        \label{app:eq:eta_design}
    \end{equation}
    Moreover, the associated shadow channel satisfies
    \begin{equation}
        \mathcal M_{\mathcal E_\eta}(P)=\frac{\eta}{d+1}P.
        \label{app:eq:eta_m_P}
    \end{equation}
\end{lemma}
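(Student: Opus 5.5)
The plan has three parts: compute the moment channel of $\mathcal E_\eta$ by linearity, bound it above and below using the Clifford $3$-design property and Lemma~\ref{app:lem:block_haar_third_moment_domination}, and then compute the visibilities.

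\textbf{Moment channel and the upper bound.} By linearity over mixtures and Eq.~\eqref{app:eq:product_ensemble_moment_channel},
\begin{equation*}
\Phi_{\mathcal E_\eta}=(1-\eta)\,\Phi_{\mathcal E_P^{(c)}}\circ\Phi_{\mathrm H_P}+\eta\,\Phi_{\mathrm H}.
\end{equation*}
Here we used $\Phi_{\operatorname{Cl}(n)}=\Phi_{\mathrm H}$ for $t=3$, since the Clifford group is an exact $3$-design. From Eq.~\eqref{app:eq:cliff_Ord_detection_decomposition},
\begin{equation*}
\tfrac{d}{d+1}\Phi_{\mathcal E_P^{(c)}}=\Phi_{\mathrm H}-\tfrac{1}{d+1}\Phi_{\mathcal E_P}\preceq\Phi_{\mathrm H},
\end{equation*}
so $\Phi_{\mathcal E_P^{(c)}}\preceq(1+1/d)\Phi_{\mathrm H}$. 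Composition with a CP map preserves the CP order, and $\Phi_{\mathrm H}\circ\Phi_{\mathrm H_P}=\Phi_{\mathrm H}$ by left invariance of Haar. Hence $\Phi_{\mathcal E_P^{(c)}}\circ\Phi_{\mathrm H_P}\preceq(1+1/d)\Phi_{\mathrm H}$. Substituting gives the upper bound $(1+(1-\eta)/d)\Phi_{\mathrm H}$.

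\textbf{Lower bound.} Write
\begin{equation*}
\Phi_{\mathcal E_P^{(c)}}\circ\Phi_{\mathrm H_P}=\tfrac{d+1}{d}\bigl(\Phi_{\mathrm H}-\tfrac{1}{d+1}\Phi_{\mathcal E_P}\circ\Phi_{\mathrm H_P}\bigr).
\end{equation*}
The key step is to bound $\Phi_{\mathcal E_P}\circ\Phi_{\mathrm H_P}$ from above, and here Lemma~\ref{app:lem:block_haar_third_moment_domination} enters. Every $C\in\mathcal E_P$ is a unitary, so $\Phi_{\mathcal E_P}\circ\Phi_{\mathrm H_P}$ is the moment channel of the ensemble $\{C h\}$. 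We cannot simply apply $\Phi_{\mathrm H_P}\preceq 15\Phi_{\mathrm H}$ after composing with $\Phi_{\mathcal E_P}$, because the order of composition matters. The plan is instead to compose on the left: $\Phi_{\mathcal E_P}\circ\Phi_{\mathrm H_P}\preceq 15\,\Phi_{\mathcal E_P}\circ\Phi_{\mathrm H}=15\Phi_{\mathrm H}$. This uses right invariance of Haar, $\Phi_C\circ\Phi_{\mathrm H}=\Phi_{\mathrm H}$ for each unitary $C$, together with the fact that precomposition by a CP map preserves the CP order. This gives
\begin{equation*}
\Phi_{\mathcal E_P^{(c)}}\circ\Phi_{\mathrm H_P}\succeq\tfrac{d+1}{d}\bigl(1-\tfrac{15}{d+1}\bigr)\Phi_{\mathrm H}=\bigl(1-\tfrac{14}{d}\bigr)\Phi_{\mathrm H}.
\end{equation*}
Mixing with the $\eta\Phi_{\mathrm H}$ term yields the claimed lower constant $1-14(1-\eta)/d$.

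\textbf{Visibilities and invertibility.} Each $h\in\mathrm H_P$ commutes with $P$, so $hPh^\dagger=P$. Then $C h P h^\dagger C^\dagger=CPC^\dagger\notin\pm\mathcal Z$ for $C\in\mathcal E_P^{(c)}$. The first component is Pauli-invariant (right multiplication of $h$ by Paulis keeps the Haar measure on $\mathrm H_P$ up to sign, since Paulis either commute or anticommute with $P$). I would check this via Fact~\ref{app:fact:m_P(U)} applied to the whole mixture: both components are Pauli-invariant, so the shadow channel is Pauli-diagonal with $m_Q=\mathbb E[m_Q(U)]$. For the first component, $m_P(U)=d^{-1}\sum_b\bra b CPC^\dagger\ket b^2=0$, because $CPC^\dagger$ is a non-$Z$-type Pauli and so has zero diagonal. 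Hence $m_P=\eta/(d+1)$. Invertibility follows because for every $Q$, $m_Q\ge\eta\cdot m_Q^{\mathrm{Cl}}>0$, the first component contributing nonnegatively.

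The step I expect to need most care is verifying the Pauli invariance of $\mathcal E_P^{(c)}\mathrm H_P$, needed to invoke Fact~\ref{app:fact:m_P(U)}. If it fails, the fallback is to compute $\mathcal M(P)$ directly. In that case the off-diagonal Pauli components of $\mathcal M(P)$ must be shown to vanish by averaging over $h$ with the central phases $u\oplus e^{i\phi}v$.
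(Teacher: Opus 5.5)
Your CP-order argument follows the paper's proof. The paper composes the Clifford detection decomposition on the right with $\Phi_{\mathrm H_P}$ and bounds $\Phi_{\mathcal E_P}\circ\Phi_{\mathrm H_P}\preceq 15\,\Phi_{\mathcal E_P}\circ\Phi_{\mathrm H}=15\Phi_{\mathrm H}$ via Lemma~\ref{app:lem:block_haar_third_moment_domination}. It then rearranges to $(1-14/d)\Phi_{\mathrm H}\preceq\Phi_{\mathcal E_P^{(c)}\mathrm H_P}\preceq(1+1/d)\Phi_{\mathrm H}$ before mixing. Under the convention $\Phi_{\mathcal U\mathcal V}=\Phi_{\mathcal U}\circ\Phi_{\mathcal V}$, the identity $\Phi_{\mathrm H}\circ\Phi_{\mathrm H_P}=\Phi_{\mathrm H}$ is right invariance and $\Phi_C\circ\Phi_{\mathrm H}=\Phi_{\mathrm H}$ is left invariance. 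You have these labels swapped, which is harmless because Haar measure is bi-invariant.

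You diverge in the shadow-channel part, and your route there is longer than needed. The step you flag as delicate is not required for Eq.~\eqref{app:eq:eta_m_P}. For every $b$ and every $U=Ch$ in the first component, $\langle b|UPU^\dagger|b\rangle=\langle b|CPC^\dagger|b\rangle=0$. The defining formula of the shadow channel then gives $\mathcal M_{\mathcal E_P^{(c)}\mathrm H_P}(P)=0$ identically, not merely its $P$-component. Linearity of $\mathcal M$ in the mixture finishes the computation, and this is what the paper does.

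Your parenthetical justification of Pauli invariance is wrong as written. If $Q$ anticommutes with $P$, then $hQ\notin\mathrm H_P$, so right multiplication by $Q$ does not preserve $\mathrm H_P$. The invariance is still true: write $ChQ=(CQ)(QhQ)$. Here $CQ$ is uniform on $\mathcal E_P^{(c)}$ because $CQPQC^\dagger=\pm CPC^\dagger$, and $QhQ$ is Haar on $\mathrm H_P$.

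Invertibility does not need Pauli-diagonality at all. For any ensemble, $\operatorname{Tr}(A^\dagger\mathcal M_{\mathcal E}(A))=\mathbb E\sum_b|\langle b|UAU^\dagger|b\rangle|^2\geq0$. Hence $\mathcal M_{\mathcal E_\eta}\succeq\eta\,\mathcal M_{\operatorname{Cl}(n)}\succeq\frac{\eta}{d+1}\mathrm{id}$ in the Hilbert--Schmidt order. This is the content of the paper's one-line remark that mixing with $\operatorname{Cl}(n)$ ensures invertibility.
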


\begin{proof}
    Composing Eq.~\eqref{app:eq:cliff_Ord_detection_decomposition} on the
    right with $\Phi_{\mathrm H_P}$ and using that
    $\operatorname{Cl}(n)$ is an exact unitary $3$-design gives
    \begin{equation}
        \Phi_{\mathrm H}
        =
        \frac{1}{d+1}\Phi_{\mathcal E_P\mathrm H_P}
        +
        \frac{d}{d+1}\Phi_{\mathcal E_P^{(c)}\mathrm H_P}.
        \label{app:eq:exact_design_conditional_decomposition}
    \end{equation}
    Here we used the right invariance of Haar measure $\Phi_{\operatorname{Cl}(n)\mathrm H_P}=\Phi_{\mathrm H}$. Lemma~\ref{app:lem:block_haar_third_moment_domination} and
    Eq.~\eqref{app:eq:product_ensemble_moment_channel} give
    \begin{align}
        \Phi_{\mathcal E_P\mathrm H_P}
        &=
        \Phi_{\mathcal E_P}\circ\Phi_{\mathrm H_P}
        \nonumber\\
        &\preceq_{\mathrm{CP}}
        15\Phi_{\mathcal E_P}\circ\Phi_{\mathrm H}=15\Phi_{\mathrm H},
        \label{app:eq:detected_ensemble_domination}
    \end{align}
    where the last equality follows from the left invariance of Haar
    measure. Rearranging
    Eq.~\eqref{app:eq:exact_design_conditional_decomposition} gives
    $\Phi_{\mathcal E_P^{(c)}\mathrm H_P}
    =(1+d^{-1})\Phi_{\mathrm H}
    -d^{-1}\Phi_{\mathcal E_P\mathrm H_P}$. Therefore,
    \begin{equation}
        \left(1-\frac{14}{d}\right)\Phi_{\mathrm H}
        \preceq_{\mathrm{CP}}
        \Phi_{\mathcal E_P^{(c)}\mathrm H_P}
        \preceq_{\mathrm{CP}}
        \left(1+\frac{1}{d}\right)\Phi_{\mathrm H}.
        \label{app:eq:no_detection_approximate_design}
    \end{equation}
    Since $\Phi_{\operatorname{Cl}(n)}=\Phi_{\mathrm H}$,
    Eq.~\eqref{app:eq:no_detection_approximate_design} gives
    \begin{equation}
        \left(1-\frac{14(1-\eta)}{d}\right)\Phi_{\mathrm H}
        \preceq_{\mathrm{CP}}
        \underbrace{
            (1-\eta)\Phi_{\mathcal E_P^{(c)}\mathrm H_P}
            +\eta\Phi_{\operatorname{Cl}(n)}
        }_{=\Phi_{\mathcal E_\eta}}
        \preceq_{\mathrm{CP}}
        \left(1+\frac{1-\eta}{d}\right)\Phi_{\mathrm H}.
    \end{equation}
    which proves Eq.~\eqref{app:eq:eta_design}. For $U=Ch\sim\mathcal E_P^{(c)}\mathrm H_P$, since $[h,P]=0$ and $CPC^\dagger\notin\pm\mathcal Z$, we have 
    \begin{align}
        \langle b|UPU^\dagger|b\rangle &= 
        \langle b|ChPh^{\dagger}C^\dagger|b\rangle\\
        &=\langle b|CPC^\dagger|b\rangle \label{app:eq:CPC}
        =0.
    \end{align}
    Hence
    $\mathcal M_{\mathcal E_P^{(c)}\mathrm H_P}(P)=0$, and
    \begin{equation}
        \mathcal M_{\mathcal E_\eta}(P)
        =
        (1-\eta)\mathcal M_{\mathcal E_P^{(c)}\mathrm H_P}(P)+
        \eta\mathcal M_{\operatorname{Cl}(n)}(P)
        =
        \frac{\eta}{d+1}P.
    \end{equation}
    Since $\eta>0$, mixing with the global Clifford ensemble ensures that $\mathcal M_{\mathcal E_\eta}$ is invertible.
\end{proof}
Although the unitary ensemble $\mathcal E_\eta$ in Eq.~\eqref{app:eq:M_eta} satisfies the approximate unitary $3$-design condition in Eq.~\eqref{main:eq:cp}, the visibility of $P$ is only $\eta/(d+1)$, as shown in Eq.~\eqref{app:eq:eta_m_P}. This visibility can be made arbitrarily small by decreasing $\eta$. This gives the following no-go theorem.

\begin{theorem}[Approximate designs are not sufficient]
\label{app:thm:approximate_design_arbitrary_estimator}
    Let $d=2^n\geq16$. For any $\epsilon_{\mathrm{des}}>14/2^n$, $\alpha>0$, and $0\leq\beta<1$, there exists an $\epsilon_{\mathrm{des}}$-approximate unitary $3$-design $\mathcal E$ satisfying Eq.~\eqref{main:eq:cp} such that any real-valued estimator $\hat f_O(U,b)$ satisfying
    \begin{equation}
        \left|
            \mathbb E_{U,b}[\hat f_O(U,b)]
            -
            \operatorname{Tr}(\rho O)
        \right|
        \leq
        \beta
        \left|
            \operatorname{Tr}(\rho O)
        \right|
        \label{app:eq:uniform_relative_bias}
    \end{equation}
    for every state $\rho$ and Hermitian observable $O$ has some
    state $\rho$ and Hermitian observable $O$ with $O_0\neq0$ for which
    \begin{equation}
        \operatorname{Var}_{U,b}[\hat f_O(U,b)]
        \geq
        \alpha\lVert O_0\rVert_2^2.
        \label{app:eq:arbitrary_estimator_variance_lower_bound}
    \end{equation}
    Here the expectation and variance are over the randomized
    measurement outcome $(U,b)$ generated from $\rho$ using
    $\mathcal E$.
\end{theorem}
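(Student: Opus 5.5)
We take $\mathcal E=\mathcal E_\eta$ from Lemma~\ref{app:lem:approximate_design_large_variance}, for a fixed non-identity Pauli $P$ and a small $\eta$. Since $14(1-\eta)/d<14/d<\epsilon_{\mathrm{des}}$, Eq.~\eqref{app:eq:eta_design} shows that $\mathcal E_\eta$ is an $\epsilon_{\mathrm{des}}$-approximate $3$-design for every $\eta\in(0,1)$. The only remaining freedom is $\eta$, and we will choose it as a function of $\alpha$ and $\beta$.

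\textbf{Choice of hard instance.} Take $O=P$, so that $O_0=P$ and $\|O_0\|_2^2=d$. Use two states that differ only in their $P$-component:
\[
\rho_\pm=\frac{I\pm P}{d}.
\]
Their expectations are $\operatorname{Tr}(\rho_\pm P)=\pm1$. The relative-bias condition Eq.~\eqref{app:eq:uniform_relative_bias} then forces $\mathbb E_{\rho_+}[\hat f]\ge1-\beta$ and $\mathbb E_{\rho_-}[\hat f]\le-(1-\beta)$. The separation of the two means is therefore at least $2(1-\beta)>0$.

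\textbf{Bounding the chi-square divergence.} We apply the Hammersley--Chapman--Robbins inequality (Theorem~\ref{app:thm:var_lower}) with $q$ the outcome distribution of $(U,b)$ under $\rho_-$ and $p$ the distribution under $\rho_+$. We need to compute $\chi^2(p\Vert q)$.

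Given $U$, the conditional likelihoods are
\[
p(b\mid U)=\frac{1+x_b}{d},\qquad q(b\mid U)=\frac{1-x_b}{d},\qquad x_b=\bra b UPU^\dagger\ket b .
\]
On the branch $U\in\mathcal E_P^{(c)}\mathrm H_P$, Eq.~\eqref{app:eq:CPC} gives $x_b=0$ for all $b$, so $p=q$ on that branch.

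On the Clifford branch, $UPU^\dagger$ is either off-diagonal, giving $x_b=0$, or lies in $\pm\mathcal Z$, giving $x_b=\pm1$. The support issue is that $q$ vanishes where $x_b=1$ while $p$ may not. To avoid this, we instead take the mixed states
\[
\rho_\pm=\frac{I\pm cP}{d},\qquad c=\tfrac12 .
\]
The mean separation is still at least $2c(1-\beta)$, and the likelihood ratio is now bounded.

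Since the $U$-marginals of $p$ and $q$ coincide, the divergence is
\[
\chi^2(p\Vert q)=\mathbb E_U\sum_b q(b\mid U)\left(\frac{2cx_b}{1-cx_b}\right)^2 .
\]
The summand is nonzero only when $UPU^\dagger\in\pm\mathcal Z$, and this event has probability $\eta/(d+1)$ by Eq.~\eqref{app:eq:eta_m_P}. On that event, $\sum_b q(b\mid U)\,(\cdot)$ is bounded by an absolute constant $c'$, because $|x_b|\le 1$ and $c=1/2$. Hence $\chi^2(p\Vert q)\le c'\eta/(d+1)$. This quantity is finite, and it is nonzero.

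\textbf{Conclusion.} Theorem~\ref{app:thm:var_lower} gives
\[
\operatorname{Var}_{\rho_-}[\hat f]\ge \frac{4c^2(1-\beta)^2(d+1)}{c'\eta}\ge \frac{(1-\beta)^2}{c'\eta}\,d .
\]
Choosing $\eta\le(1-\beta)^2/(c'\alpha)$ therefore yields $\operatorname{Var}\ge\alpha\|O_0\|_2^2$, with $\rho=\rho_-$ and $O=P$.

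The main care point is the $\chi^2$ computation. It requires the mixed states, so that $q>0$ wherever $p>0$, and a clean bound on the conditional sum on the visible Clifford branch. Beyond that, note that the bias condition is used only for these two states and this single observable.
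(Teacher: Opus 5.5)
Your proof is correct and takes essentially the same route as the paper: the ensemble $\mathcal E_\eta$, a two-point Hammersley--Chapman--Robbins argument, and a $\chi^2$ divergence that is supported only on the visible Clifford event of probability $\eta/(d+1)$. The only difference is cosmetic. The paper pairs $\rho_+=(I+P)/d$ with the maximally mixed reference $\rho_0=I/d$ (and takes $O=P/\sqrt d$); this reference has full support, and the bias condition forces $\mathbb E_{p_0}[\hat f_O]=0$ there. That gives the exact value $\chi^2=\eta/(d+1)$ without your detour through $c=1/2$ and the constant $c'$.
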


Since $\alpha$ can be arbitrarily large, the approximate unitary design condition alone does not imply the variance guarantee of global Clifford measurements. The relative bias condition in Eq.~\eqref{app:eq:uniform_relative_bias} includes every unbiased estimator, corresponding to $\beta=0$. For example, the shadow estimator $\hat f_O(U,b)=\operatorname{Tr}\!\left[ O\mathcal M_{\mathcal E}^{-1} (U^\dagger\ket b\!\bra b U) \right]$ is unbiased, where $\mathcal M_{\mathcal E}^{-1}$ is the exact inverse of the shadow channel associated with $\mathcal E$. The condition also allows biased estimators with relative bias at most $\beta$. An accuracy condition is necessary: without one, an estimator that always outputs a constant has zero variance for every measurement distribution.

\begin{proof}[Proof of Theorem~\ref{app:thm:approximate_design_arbitrary_estimator}]
    Fix $P\in\mathcal P_n\setminus\{I\}$. Choose $0<\eta<1$ sufficiently small that $(1-\beta)^2(d+1)/(\eta d)\geq\alpha$, and let $\mathcal E=\mathcal E_\eta$ be the ensemble in Eq.~\eqref{app:eq:M_eta}. By Lemma~\ref{app:lem:approximate_design_large_variance}, $\mathcal E$ satisfies Eq.~\eqref{main:eq:cp} with error at most $\epsilon_{\mathrm{des}}$. Consider
    \begin{equation}
        \rho_+=\frac{I+P}{d},
        \qquad
        \rho_0=\frac{I}{d},
        \qquad
        O=\frac{P}{\sqrt d}.
    \label{app:eq:hard_state_observable_pair}
    \end{equation}
    These are valid states, and $\operatorname{Tr}(\rho_+ O)=1/\sqrt d$ and $\operatorname{Tr}(\rho_0O)=0$, while $O_0=O$ and $\lVert O_0\rVert_2^2=1$.
    
    Let $p_+$ and $p_0$ denote the distributions of $z=(U,b)$ generated from $\rho_+$ and $\rho_0$, respectively. For fixed $U$, their conditional probabilities and chi-square divergence are 
    \begin{equation}
        p_+(b\!\mid\!U)=\frac{1+\langle b|UPU^\dagger|b\rangle}{d}, \qquad p_0(b\!\mid\!U)=\frac 1d, \qquad
        \chi^2
        \left(
        p_+(b\!\mid\!U)\| p_0(b\!\mid\!U)
        \right)
        =\frac{1}{d}\sum_b\langle b|UPU^\dagger|b\rangle^2. \label{app:eq:cond_prob}
    \end{equation}
    For $U\sim\mathcal E_P^{(c)}\mathrm H_P$, Eq.~\eqref{app:eq:CPC} gives $\langle b|UPU^\dagger|b\rangle=0$. In the Clifford part, this matrix element is nonzero with probability $1/(d+1)$. Therefore,
    \begin{align}
        \chi^2(p_+\| p_0)
        &=
        \mathbb{E}_{U\sim \mathcal E_{\eta}} \chi^2(p_+(b\!\mid\!U)\| p_0(b\!\mid\!U))\\
        &=
        (1-\eta)\,\mathbb{E}_{U\sim\mathcal E_P^{(c)}\mathrm H_P} \chi^2(p_+(b\!\mid\!U)\| p_0(b\!\mid\!U)) 
        +
        \eta\,\mathbb{E}_{U\sim \mathrm{Cl}(n)} \chi^2(p_+(b\!\mid\!U)\| p_0(b\!\mid\!U))
        \\
        &=
        \frac{\eta}{d}\,
        \mathbb E_{U\sim\operatorname{Cl}(n)}
        \sum_b
        \langle b|UPU^\dagger|b\rangle^2
        =
        \frac{\eta}{d+1}.
        \label{app:eq:hard_pair_chi_square}
    \end{align}
    
    Applying Eq.~\eqref{app:eq:uniform_relative_bias} to $\rho_+$ and $\rho_0$ gives
    \begin{equation}
        \left|
        \mathbb E_{z\sim p_+}[\hat f_O(z)]
        -
        \mathbb E_{z\sim p_0}[\hat f_O(z)]
        \right|
        \geq
        \frac{1-\beta}{\sqrt d}.
        \label{app:eq:hard_pair_mean_separation}
    \end{equation}
    Here, $\mathbb E_{z\sim p_0}[\hat f_O(z)]=0$ because $\operatorname{Tr}(\rho_0O)=0$. Combining Theorem~\ref{app:thm:var_lower} with Eqs.~\eqref{app:eq:hard_pair_chi_square} and \eqref{app:eq:hard_pair_mean_separation} gives
    \begin{align}
        \operatorname{Var}_{z\sim p_0}[\hat f_O(z)]
        &\geq
        \frac{(1-\beta)^2/d}{
        \eta/(d+1)
        }
        \nonumber\\
        &=
        \frac{(1-\beta)^2(d+1)}{\eta d}
        \geq
        \alpha\lVert O_0\rVert_2^2.
        \label{app:eq:hard_pair_variance_lower_bound}
    \end{align}
    Thus Eq.~\eqref{app:eq:arbitrary_estimator_variance_lower_bound}
    holds for $\rho=\rho_0$ and $O=P/\sqrt d$.
\end{proof}

\begin{corollary}[Sample lower bounds for state learning]
\label{app:cor:approximate_design_state_learning_no_go}
    Let $\mathcal E_\eta$ be the unitary ensemble defined in Eq.~\eqref{app:eq:M_eta}. Suppose that an algorithm uses $N$ i.i.d.\ measurement outcomes generated using $\mathcal E_\eta$ and solves, for every state $\rho$, either quantum state tomography or stabilizer group learning with
    \begin{equation}
        \Pr\!\left[
            \lVert\hat\rho-\rho\rVert_1<\frac12
        \right]
        \geq
        \frac{2}{3},\qquad
        \Pr\!\left[
            \widehat S\supseteq\operatorname{Weyl}(\rho),
            \quad
            \frac{1}{|\widehat S|}
            \sum_{a\in\widehat S}
            \operatorname{Tr}(\rho W_a)^2
            \geq \frac{3}{4}
        \right]
        \geq \frac{2}{3},
        \label{app:eq:both}
    \end{equation}
    where $\hat\rho$ and $\widehat S$ are the outputs of the tomography and stabilizer-learning algorithms, respectively. Then
    \begin{equation}
        N
        =
        \Omega\!\left(\frac{d}{\eta}\right).
    \end{equation}
    Since $\eta$ can be arbitrarily small, the approximate unitary $3$-design condition alone does not provide a uniform sample-complexity bound for either task.
\end{corollary}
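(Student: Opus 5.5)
The plan is a two-point (Le Cam) reduction. I would reuse the hard pair of states from the proof of Theorem~\ref{app:thm:approximate_design_arbitrary_estimator}: $\rho_+=(I+P)/d$ and $\rho_0=I/d$, with $P$ the fixed Pauli defining $\mathcal E_\eta$. The argument needs three facts.

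First, any algorithm meeting either success criterion in Eq.~\eqref{app:eq:both} for both inputs must distinguish $\rho_+$ from $\rho_0$ with success probability at least $2/3$.

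\emph{Tomography.} We have $\|\rho_+-\rho_0\|_1=\|P\|_1/d=1$. If $\|\hat\rho-\rho\|_1<1/2$ holds with probability at least $2/3$, then outputting whichever of $\rho_+,\rho_0$ is closer to $\hat\rho$ in trace norm identifies the true state on that event.

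\emph{Stabilizer learning.} We have $\operatorname{Weyl}(\rho_0)=\{I\}$ and $\operatorname{Weyl}(\rho_+)\supseteq\{I,P\}$. For $\rho_0$, the condition $\frac{1}{|\widehat S|}\sum_{a\in\widehat S}\operatorname{Tr}(\rho_0W_a)^2\ge3/4$ forces $|\widehat S|=1$, since only $W_a=I$ contributes and the average is $1/|\widehat S|$. For $\rho_+$, the containment $\widehat S\supseteq\operatorname{Weyl}(\rho_+)$ forces $P\in\widehat S$. So the test ``$P\in\widehat S$'' distinguishes the two states with probability at least $2/3$. Before relying on this, I would double-check the convention for $\operatorname{Weyl}$; it should be the set of signed Paulis with $\operatorname{Tr}(\rho W)^2=1$ up to normalization, which gives $\operatorname{Tr}(\rho_+P)^2=1$.

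Second, a distinguishing success probability of $2/3$ requires $\mathrm{TV}(p_+^{\otimes N},p_0^{\otimes N})\ge1/3$. Here $p_\pm$ is the law of a single record $(U,b)$ with $U\sim\mathcal E_\eta$.

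Third, I bound this total variation distance through the chi-square divergence, using tensorization:
\begin{equation}
1+\chi^2(p_+^{\otimes N}\|p_0^{\otimes N})=\bigl(1+\chi^2(p_+\|p_0)\bigr)^N=\Bigl(1+\frac{\eta}{d+1}\Bigr)^N,
\end{equation}
where the single-record value is Eq.~\eqref{app:eq:hard_pair_chi_square}. Combining this with $\mathrm{TV}\le\frac12\sqrt{\chi^2}$ gives
\begin{equation}
\frac13\le\frac12\sqrt{e^{N\eta/(d+1)}-1}.
\end{equation}
This forces $N\eta/(d+1)\ge\log(13/9)$, hence $N=\Omega(d/\eta)$.

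The main point requiring care is the stabilizer reduction, namely checking the exact semantics of $\widehat S$ and the purity-type condition so that the constraint on $\rho_0$ genuinely rules out $P\in\widehat S$. If that condition turned out to be weaker, I would fall back on testing whether the empirical signature of $P$ differs, again reducing to distinguishing $\rho_+$ from $\rho_0$. The chi-square computation itself is already supplied by the earlier theorem, so nothing new is needed there.
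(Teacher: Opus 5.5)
Your proof is correct and follows the paper's argument: the same Le Cam two-point reduction with $\rho_0=I/d$ and $\rho_+=(I+P)/d$. Your explicit reductions fill in what the paper only asserts, namely the nearest-state test for tomography and the test $P\in\widehat S$ for stabilizer learning, which works because the $3/4$ condition forces $\widehat S=\{0\}$ under $\rho_0$. The only difference is in bounding the $N$-fold distance: the paper uses subadditivity, $d_{\mathrm{TV}}(p_+^{\otimes N},p_0^{\otimes N})\le N\,d_{\mathrm{TV}}(p_+,p_0)=N\eta/(2(d+1))$, instead of $\chi^2$ tensorization, which gives the slightly better constant $N\ge 2(d+1)/(3\eta)$ and the same $\Omega(d/\eta)$ scaling.
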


\begin{proof}
    Fix $P\in\mathcal P_n\setminus\{I\}$ and consider the ensemble $\mathcal E_\eta$ in Eq.~\eqref{app:eq:M_eta}. By Lemma~\ref{app:lem:approximate_design_large_variance}, $\mathcal E_\eta$ satisfies the $\epsilon_{\mathrm{des}}$-approximate unitary $3$-design condition and has an invertible shadow channel for every $0<\eta<1$. Consider the two states
    \begin{equation}
        \rho_0=\frac{I}{d},
        \qquad
        \rho_+=\frac{I+P}{d}.
    \end{equation}
    Let $p_0$ and $p_+$ denote the distributions of $(U,b)$ generated by  $\rho_0$ and $\rho_+$, respectively. By Eq.~\eqref{app:eq:cond_prob},
    \begin{align}
        d_{\mathrm{TV}}(p_+,p_0)
        &=
        \frac{1}{2d}
        \mathbb E_{U\sim\mathcal E_\eta}
        \sum_b
        \left|
            \langle b|UPU^\dagger|b\rangle
        \right|=
        \frac{\eta}{2(d+1)}.
        \label{app:eq:state_learning_single_tv}
    \end{align}
    A test with success probability at least $2/3$ must satisfy, by Le Cam's two-point method~\cite{yu1997assouad},
    \begin{equation}
        \frac{1}{3}
        \leq
        d_{\mathrm{TV}}
        \left(
            p_+^{\otimes N},
            p_0^{\otimes N}
        \right)
        \leq
        N d_{\mathrm{TV}}(p_+,p_0)
        =
        \frac{N\eta}{2(d+1)}.
    \end{equation}
    Therefore,
    \begin{equation}
        N
        \geq
        \frac{2(d+1)}{3\eta}
        =
        \Omega\!\left(\frac{d}{\eta}\right).
        \label{app:eq:state_learning_sample_lower_bound}
    \end{equation}
    Any algorithm satisfying the corresponding guarantee in Eq.~\eqref{app:eq:both} would distinguish $\rho_0$ from $\rho_+$ with probability at least 2/3. Therefore, Eq.~\eqref{app:eq:state_learning_sample_lower_bound} applies to both tasks.
\end{proof}

\section{Approximate designs are not necessary}
\label{app:subsec:approximate_designs_not_necessary}

We prove Proposition~\ref{main:prop:not_necessary} using two disconnected copies of the periodic two-layer Clifford circuit.  Divide the $n$-qubit system into two registers $A$ and $B$, each containing $n/2$ qubits, and independently sample
\begin{equation}
    U_A\sim\mathcal E_A,
    \qquad
    U_B\sim\mathcal E_B,
\end{equation}
where $\mathcal E_A$ and $\mathcal E_B$ are periodic two-layer Clifford ensembles with the same block size $k$.  The ensemble on the full system is
\begin{equation}
    \mathcal E_{AB}
    :=
    \mathcal E_A\otimes\mathcal E_B.
    \label{app:eq:tensor_product_ensemble}
\end{equation}
We first show that both sides of the relative approximate-design condition in Eq.~\eqref{main:eq:cp} fail.  We then prove that the variance bound of Theorem~\ref{main:thm:1} remains valid.

Each register $A, B$ has Hilbert-space dimension $\sqrt d$. For $X\in\{A,B\}$ and $\pi\in S_3$, let $V_\pi^X$ permute the three copies of $\mathcal H_X$ according to $\pi$, and define
\begin{equation}
    \Pi_X
    :=
    \frac{1}{6}\sum_{\pi\in S_3}V_\pi^X
    \qquad
    \Pi_{AB}
    :=
    \frac{1}{6}\sum_{\pi\in S_3}
    V_\pi^A\otimes V_\pi^B.
    \label{app:eq:three_copy_projectors}
\end{equation}
\begin{lemma}[Permutation-projector inclusion]
\label{app:lem:permutation_projector_inclusion}
    The projectors in Eq.~\eqref{app:eq:three_copy_projectors} satisfy
    \begin{equation}
        \Pi_A\otimes\Pi_B
        \preceq
        \Pi_{AB}.
        \label{app:eq:permutation_projector_inclusion}
    \end{equation}
    Consequently, $\Pi_{AB}-\Pi_A\otimes\Pi_B$ is positive semidefinite.
\end{lemma}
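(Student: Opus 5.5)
Both sides are orthogonal projectors; once they commute, the inequality reduces to an inclusion of ranges. Write $\mathcal H_X^{\otimes 3}$ for the three copies of register $X$, and identify $(\mathcal H_A\otimes\mathcal H_B)^{\otimes3}\cong\mathcal H_A^{\otimes3}\otimes\mathcal H_B^{\otimes3}$. Under this identification, the permutation $\pi$ of the three copies of $\mathcal H_{AB}$ acts as $V_\pi^A\otimes V_\pi^B$.

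\textbf{Both operators are projectors.} Since $\pi\mapsto V_\pi^X$ is a unitary representation of $S_3$, the operator $\Pi_X$ is the symmetrizer onto $\mathrm{Sym}^3(\mathcal H_X)$. It is Hermitian because $V_{\pi}^\dagger=V_{\pi^{-1}}$, and idempotent because $V_\sigma\Pi_X=\Pi_X$ for all $\sigma$. The same argument applies to the diagonal representation $\pi\mapsto V_\pi^A\otimes V_\pi^B$, so $\Pi_{AB}$ is also an orthogonal projector.

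\textbf{Range inclusion.} Take any vector $x$ in the range of $\Pi_A\otimes\Pi_B$. Then $(V_\pi^A\otimes I)x=x$ and $(I\otimes V_\pi^B)x=x$ for every $\pi$. Multiplying these gives $(V_\pi^A\otimes V_\pi^B)x=x$ for every $\pi$, so $\Pi_{AB}x=x$. Equivalently, $\Pi_{AB}(\Pi_A\otimes\Pi_B)=\Pi_A\otimes\Pi_B$.

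\textbf{Conclusion.} Taking adjoints of this identity gives $(\Pi_A\otimes\Pi_B)\Pi_{AB}=\Pi_A\otimes\Pi_B$, so the two projectors commute. Set $Q:=\Pi_A\otimes\Pi_B$. Then $\Pi_{AB}-Q$ is Hermitian, and
\begin{equation}
(\Pi_{AB}-Q)^2=\Pi_{AB}-\Pi_{AB}Q-Q\Pi_{AB}+Q=\Pi_{AB}-Q .
\end{equation}
Hence $\Pi_{AB}-Q$ is itself an orthogonal projector and is therefore positive semidefinite, which proves Eq.~\eqref{app:eq:permutation_projector_inclusion}.

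The only point needing care is the identification of the tensor factors, namely that permuting copies of $\mathcal H_{AB}$ corresponds to $V_\pi^A\otimes V_\pi^B$ after reordering the factors. This is a routine check, and no other step is difficult.
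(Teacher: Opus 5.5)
Your proof is correct and follows the paper's argument: the paper likewise observes that a vector fixed by every $V_\pi^A\otimes V_\sigma^B$ is fixed by the diagonal $V_\tau^A\otimes V_\tau^B$, and hence lies in the range of $\Pi_{AB}$. Your extra step, deriving $\Pi_{AB}(\Pi_A\otimes\Pi_B)=(\Pi_A\otimes\Pi_B)\Pi_{AB}=\Pi_A\otimes\Pi_B$ and concluding that $\Pi_{AB}-\Pi_A\otimes\Pi_B$ is itself a projector, spells out the passage from range inclusion to the operator inequality that the paper leaves implicit.
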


\begin{proof}
    Suppose that $\ket{\psi}$ satisfies
    \begin{equation}
        (V_\pi^A\otimes V_\sigma^B)\ket{\psi}
        =
        \ket{\psi}
        \qquad
        \text{for all }\pi,\sigma\in S_3.
    \end{equation}
    Setting $\pi=\sigma=\tau$ gives
    \begin{equation}
        V_\tau^{AB}\ket{\psi}
        =
        (V_\tau^A\otimes V_\tau^B)\ket{\psi}
        =
        \ket{\psi}
        \qquad
        \text{for all }\tau\in S_3.
    \end{equation}
    Thus every vector fixed by $\Pi_A\otimes\Pi_B$ is also fixed by $\Pi_{AB}$, and hence $\Pi_A\otimes\Pi_B\preceq\Pi_{AB}$.
\end{proof}

Define the positive operators
\begin{equation}
    O_+
    :=
    \Pi_A\otimes\Pi_B,
    \qquad
    O_-
    :=
    \Pi_{AB}-\Pi_A\otimes\Pi_B,
    \label{app:eq:product_design_witnesses}
\end{equation}
together with the three-copy input
\begin{equation}
    \rho_0
    :=
    \left(
        \ket{0}_A\!\bra{0}
        \otimes
        \ket{0}_B\!\bra{0}
    \right)^{\otimes3}.
    \label{app:eq:product_design_test_state}
\end{equation}

\begin{lemma}[Failure of the lower relative bound]
\label{app:lem:product_lower_relative_bound}
    For every $\epsilon_{\mathrm{des}}<1$,
    \begin{equation}
        (1-\epsilon_{\mathrm{des}})
        \Phi_{\mathrm H}^{(3)}
        \npreceq
        \Phi_{\mathcal E_{AB}}^{(3)}.
        \label{app:eq:product_lower_relative_bound}
    \end{equation}
\end{lemma}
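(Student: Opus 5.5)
We will exhibit an explicit witness, using the positive operator $O_-=\Pi_{AB}-\Pi_A\otimes\Pi_B$ from Eq.~\eqref{app:eq:product_design_witnesses} and the three-copy product input $\rho_0$. The CP order $(1-\epsilon_{\mathrm{des}})\Phi_{\mathrm H}^{(3)}\preceq\Phi_{\mathcal E_{AB}}^{(3)}$ would imply positivity of $\Phi_{\mathcal E_{AB}}^{(3)}-(1-\epsilon_{\mathrm{des}})\Phi_{\mathrm H}^{(3)}$. In particular, it would give
\begin{equation}
    \operatorname{Tr}\!\left(O_-\Phi_{\mathcal E_{AB}}^{(3)}(\rho_0)\right)\ge(1-\epsilon_{\mathrm{des}})\operatorname{Tr}\!\left(O_-\Phi_{\mathrm H}^{(3)}(\rho_0)\right).
\end{equation}
It therefore suffices to show that the left side vanishes while the right side is strictly positive whenever $\epsilon_{\mathrm{des}}<1$.

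For the left side, every $U=U_A\otimes U_B$ in $\mathcal E_{AB}$ maps $\rho_0$ to $(\ket{\psi_A}\!\bra{\psi_A}\otimes\ket{\psi_B}\!\bra{\psi_B})^{\otimes 3}$, where $\ket{\psi_X}=U_X\ket 0$. This state is invariant under every $V_\pi^A\otimes V_\sigma^B$. Hence it lies in the range of $\Pi_A\otimes\Pi_B$, and by Lemma~\ref{app:lem:permutation_projector_inclusion} also in the range of $\Pi_{AB}$. Consequently $\operatorname{Tr}(O_-\,U^{\otimes3}\rho_0U^{\dagger\otimes3})=1-1=0$ for every $U$, and averaging over the ensemble gives zero.

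For the right side, the Haar twirl maps the pure product state $\ket{0}^{\otimes3}$ to the normalized projector onto the symmetric subspace of $(\mathbb C^d)^{\otimes 3}$, namely $\Pi_{AB}/\operatorname{Tr}\Pi_{AB}$. Using $\Pi_{AB}\,\Pi_A\otimes\Pi_B=\Pi_A\otimes\Pi_B$, this gives
\begin{equation}
    \operatorname{Tr}\!\left(O_-\Phi_{\mathrm H}^{(3)}(\rho_0)\right)=1-\frac{\operatorname{Tr}(\Pi_A)\operatorname{Tr}(\Pi_B)}{\operatorname{Tr}(\Pi_{AB})}.
\end{equation}
With $s=\sqrt d$, the ratio is $\binom{s+2}{3}^2/\binom{d+2}{3}$. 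We will check that this is strictly less than $1$ for $s\ge2$ by comparing $(s+2)^2(s+1)^2s^2/36$ with $(s^2+2)(s^2+1)s^2/6$; equivalently, we need $(s+1)^2(s+2)^2<6(s^2+1)(s^2+2)$, which is elementary. Then $(1-\epsilon_{\mathrm{des}})$ times a positive number exceeds $0$, a contradiction.

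The main subtlety is justifying that CP order can be tested with a single positive input and a positive observable. This is immediate, since a CP map sends $\rho_0\succeq0$ to a positive operator and the trace against $O_-\succeq0$ is then nonnegative. The Haar symmetric-subspace formula is standard. No other step should be delicate.
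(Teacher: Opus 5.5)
Your proposal is correct and follows the paper's proof essentially verbatim. It uses the same witness $O_-$ and input $\rho_0$. The ensemble side vanishes because each $(U_A\otimes U_B)^{\otimes 3}\rho_0(U_A\otimes U_B)^{\dagger\otimes 3}$ lies in the range of $\Pi_A\otimes\Pi_B\preceq\Pi_{AB}$, and the Haar side is strictly positive via $\Phi_{\mathrm H}^{(3)}(\rho_0)=\Pi_{AB}/\operatorname{Tr}(\Pi_{AB})$. Your explicit check that $\binom{\sqrt d+2}{3}^2<\binom{d+2}{3}$ for $\sqrt d\ge 2$ fills in the strict inequality that the paper simply asserts here; the paper quantifies this ratio only in the companion upper-bound lemma.
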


\begin{proof}
    For every $U_A\otimes U_B$ in $\mathcal E_{AB}$, the state $(U_A\otimes U_B)^{\otimes3}\rho_0 (U_A^\dagger\otimes U_B^\dagger)^{\otimes3}$ is fixed by $\Pi_A\otimes\Pi_B$.  Hence
    \begin{equation}
        \operatorname{Tr}\!\left[
            O_-\Phi_{\mathcal E_{AB}}^{(3)}(\rho_0)
        \right]
        =0.
        \label{app:eq:product_lower_witness_zero}
    \end{equation}
    The Haar third moment is
    \begin{equation}
        \Phi_{\mathrm H}^{(3)}(\rho_0)
        =
        \frac{\Pi_{AB}}{\operatorname{Tr}(\Pi_{AB})},
    \end{equation}
    and therefore
    \begin{equation}
        \operatorname{Tr}\!\left[
            O_-\Phi_{\mathrm H}^{(3)}(\rho_0)
        \right]
        =
        1-
        \frac{
            \operatorname{Tr}(\Pi_A)\operatorname{Tr}(\Pi_B)
        }{
            \operatorname{Tr}(\Pi_{AB})
        }
        >0.
        \label{app:eq:product_lower_witness_haar}
    \end{equation}
    Applying the proposed CP-order inequality to the positive input $\rho_0$ and testing it against $O_-\succeq0$ would give
    \begin{equation}
        0
        \geq
        (1-\epsilon_{\mathrm{des}})
        \operatorname{Tr}\!\left[
            O_-\Phi_{\mathrm H}^{(3)}(\rho_0)
        \right]
        >0,
    \end{equation}
    which is impossible.
\end{proof}

\begin{lemma}[Failure of the upper relative bound]
\label{app:lem:product_upper_relative_bound}
    Suppose that $d\geq16$.  For every
    $\epsilon_{\mathrm{des}}<1$,
    \begin{equation}
        \Phi_{\mathcal E_{AB}}^{(3)}
        \npreceq
        (1+\epsilon_{\mathrm{des}})
        \Phi_{\mathrm H}^{(3)}.
        \label{app:eq:product_upper_relative_bound}
    \end{equation}
\end{lemma}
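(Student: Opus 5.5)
The plan is to reuse the witness $O_+=\Pi_A\otimes\Pi_B$ and the product input $\rho_0$ from Eq.~\eqref{app:eq:product_design_test_state}, mirroring the proof of Lemma~\ref{app:lem:product_lower_relative_bound}. If $\Phi_{\mathcal E_{AB}}^{(3)}\preceq(1+\epsilon_{\mathrm{des}})\Phi_{\mathrm H}^{(3)}$ held, applying both sides to $\rho_0\succeq0$ and testing against $O_+\succeq0$ would give
\begin{equation}
    \operatorname{Tr}\!\left[O_+\Phi_{\mathcal E_{AB}}^{(3)}(\rho_0)\right]
    \leq
    (1+\epsilon_{\mathrm{des}})
    \operatorname{Tr}\!\left[O_+\Phi_{\mathrm H}^{(3)}(\rho_0)\right].
\end{equation}
I will show that the left side is $1$ while the right side is at most $2/\,\mathrm{something}>1$ fails, i.e., that the right side is strictly below $1$ for $\epsilon_{\mathrm{des}}<1$.

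For the left side, every $(U_A\otimes U_B)^{\otimes3}\rho_0(U_A^\dagger\otimes U_B^\dagger)^{\otimes3}$ equals $(\ket{\phi_A}\!\bra{\phi_A})^{\otimes3}\otimes(\ket{\phi_B}\!\bra{\phi_B})^{\otimes3}$ up to reordering of tensor factors. This state lies in $\operatorname{Sym}^3(\mathcal H_A)\otimes\operatorname{Sym}^3(\mathcal H_B)$, so it is fixed by $O_+$ and the trace equals $1$. For the right side, I use $\Phi_{\mathrm H}^{(3)}(\rho_0)=\Pi_{AB}/\operatorname{Tr}(\Pi_{AB})$ together with $\Pi_{AB}O_+=O_+$ from Lemma~\ref{app:lem:permutation_projector_inclusion}. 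This gives
\begin{equation}
    \frac{\operatorname{Tr}(\Pi_A)\operatorname{Tr}(\Pi_B)}{\operatorname{Tr}(\Pi_{AB})}
    =
    \frac{\binom{s+2}{3}^2}{\binom{s^2+2}{3}},
    \qquad s=\sqrt d .
\end{equation}

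The key step, and the only real computation, is bounding this ratio. The ratio equals
\begin{equation}
    \frac{(s+2)^2(s+1)^2}{6(s^2+2)(s^2+1)},
\end{equation}
which tends to $1/6$ as $s$ grows. For $s\geq4$ (i.e., $d\geq16$) I expect it to be at most roughly $0.4$; a direct check at $s=4$ gives $900/(6\cdot18\cdot17)\approx0.49$, with the ratio decreasing in $s$. Hence $(1+\epsilon_{\mathrm{des}})$ times the ratio is below $2\cdot0.49<1$, which contradicts the left side being $1$. The main obstacle is confirming monotonicity, or simply a uniform bound $<1/2$ for $s\geq4$. This reduces to the polynomial inequality $(s+2)^2(s+1)^2<3(s^2+2)(s^2+1)$, which I would verify by expanding both sides and comparing coefficients for $s\geq4$.
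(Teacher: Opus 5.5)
Your proposal is correct and follows the paper's proof essentially verbatim: the same witness $O_+=\Pi_A\otimes\Pi_B$ and input $\rho_0$, the value $1$ on the product-ensemble side, and the Haar-side ratio $\frac{(\sqrt d+1)^2(\sqrt d+2)^2}{6(d+1)(d+2)}\le\frac12$ for $d\ge16$, which makes $(1+\epsilon_{\mathrm{des}})$ times it strictly less than $1$. One small fix: the final polynomial inequality does not follow from comparing coefficients alone, since the difference $3(s^2+2)(s^2+1)-(s+2)^2(s+1)^2=2s^4-6s^3-4s^2-12s+2$ has negative lower-order coefficients; it is nonetheless positive for $s\ge4$, because $2s^3(s-3)\ge 8s^2>4s^2+12s$ there (at $s=4$ it equals $18$).
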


\begin{proof}
    The same product structure gives
    \begin{equation}
        \operatorname{Tr}\!\left[
            O_+\Phi_{\mathcal E_{AB}}^{(3)}(\rho_0)
        \right]
        =1.
        \label{app:eq:product_upper_witness_one}
    \end{equation}
    On the Haar side,
    \begin{align}
        \operatorname{Tr}\!\left[
            O_+\Phi_{\mathrm H}^{(3)}(\rho_0)
        \right]
        &=
        \frac{
            \operatorname{Tr}(\Pi_A)\operatorname{Tr}(\Pi_B)
        }{
            \operatorname{Tr}(\Pi_{AB})
        }
        \nonumber\\
        &=
        \frac{
            \binom{\sqrt{d}+2}{3}^{\,2}
        }{
            \binom{d+2}{3}
        }
        =
        \frac{
            (\sqrt{d}+1)^2(\sqrt{d}+2)^2
        }{
            6(d+1)(d+2)
        }
        \leq
        \frac{1}{2},
        \label{app:eq:product_upper_witness_haar}
    \end{align}
    where the last inequality holds for $d\geq16$. The proposed upper CP-order inequality would imply
    \begin{equation}
        1
        \leq
        (1+\epsilon_{\mathrm{des}})
        \operatorname{Tr}\!\left[
            O_+\Phi_{\mathrm H}^{(3)}(\rho_0)
        \right]
        <1,
    \end{equation}
    which is a contradiction.
\end{proof}

Lemmas~\ref{app:lem:product_lower_relative_bound} and~\ref{app:lem:product_upper_relative_bound} show that neither side of Eq.~\eqref{main:eq:cp} holds for $\mathcal E_{AB}$ with any $\epsilon_{\mathrm{des}}<1$.  We next establish the variance bound directly from the tensor-product structure of the two periodic components.

\begin{lemma}[Variance bound for two periodic components]
\label{app:lem:two_periodic_component_variance}
    Suppose that $k\geq8$ is even, $k$ divides $n/2$, and $k2^{k/2}\geq Cn$ for some constant $C>0$.  Let $\hat\rho_{AB}$ be the unbiased shadow estimator associated with $\mathcal E_{AB}$. Then, for every $n$-qubit state $\rho$ and Hermitian observable $O$,
    \begin{equation}
        \operatorname{Var}\!\left[
            \operatorname{Tr}(O\hat\rho_{AB})
        \right]
        \leq
        64e^{45/C}\lVert O_0\rVert_2^2,
        \qquad
        O_0
        :=
        O-\frac{\operatorname{Tr}(O)}{2^n}I.
        \label{app:eq:two_periodic_component_variance}
    \end{equation}
\end{lemma}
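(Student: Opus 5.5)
The proof will exploit the tensor-product structure of $\mathcal E_{AB}$: the visibilities and correlated visibilities factorize across the two registers, so the second-moment operator $\Gamma(O_0)$ factorizes as well, and the single-register bound of Theorem~\ref{appx:thm:1} can be applied on each side.

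\textbf{Step 1 (factorization).} For $P=P_A\otimes P_B$ we have $m_P=m^A_{P_A}m^B_{P_B}$ and $\tau(P,Q)=\tau_A(P_A,Q_A)\tau_B(P_B,Q_B)$, by independence of $U_A$ and $U_B$. Hence the inverse shadow channel is $\mathcal M_{AB}^{-1}=\mathcal M_A^{-1}\otimes\mathcal M_B^{-1}$. The snapshot is $\hat\rho_A\otimes\hat\rho_B$, with the two factors independent conditionally on nothing. Moreover, the operator in Eq.~\eqref{app:eq:prediction_operator} becomes the superoperator
\begin{equation}
    \Gamma(X)=\mathbb E\sum_{b_A,b_B}\Pi_{b_A}\otimes\Pi_{b_B}\,\operatorname{Tr}\!\bigl(X(\hat\rho_A\otimes\hat\rho_B)\bigr)^2 ,
\end{equation}
where $\Pi_{b_X}=U_X^\dagger\ket{b_X}\!\bra{b_X}U_X$.

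\textbf{Step 2 (handling the identity parts).} Since $\operatorname{Var}[\hat o]=\operatorname{Var}[\operatorname{Tr}(O_0\hat\rho)]$, write $O_0=\sum_{\mu,\nu}O_{\mu\nu}$, where $\mu,\nu\in\{0,1\}$ indicate whether the Pauli support on $A$ (resp.\ $B$) is trivial. This gives three orthogonal pieces: $I_A\otimes X_B$, $X_A\otimes I_B$, and the fully traceless piece $Y_{AB}$. Their squared Frobenius norms add to $\lVert O_0\rVert_2^2$. By $(a+b+c)^2\le3(a^2+b^2+c^2)$, it suffices to bound the second moment of each piece by a constant times its own Frobenius norm.

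For $I_A\otimes X_B$, the estimate reduces to $2^{n/4}\operatorname{Tr}(X_B\hat\rho_B)$, up to normalization conventions, because $\operatorname{Tr}\hat\rho_A=1$. This is a single-register problem. Theorem~\ref{appx:thm:1} applies on $n/2$ qubits; the condition $k2^{k/2}\ge Cn\ge C(n/2)$ holds, so the constant is at most $8e^{45/C}$. The piece $X_A\otimes I_B$ is handled in the same way.

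\textbf{Step 3 (the traceless--traceless piece, main obstacle).} This is the step I expect to be hardest. Here I would show that $\Gamma_{AB}$ restricted to such operators is dominated by $\Gamma_A\otimes\Gamma_B$ applied Pauli-wise. Expanding $Y=\sum_{P_A,P_B}y\,P_A\otimes P_B$, the support decomposition of Eq.~\eqref{app:eq:observable_support_decomposition} now runs over pairs $(\boldsymbol u_A,\boldsymbol u_B)$. The matrix $\widetilde K$ becomes the tensor product $\widetilde K_A\otimes\widetilde K_B$, since trace and visibility factorize. The entrywise bound Eq.~\eqref{app:eq:uniform_exact_K_entry_bound} then gives
\begin{equation}
    (\widetilde K)_{\boldsymbol u\boldsymbol v}\le 64\, e^{6m/D}\gamma_D^{m}\,q_D^{|\boldsymbol u\oplus\boldsymbol v|}
\end{equation}
with $m=n/k$ in total. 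The Hamming-matrix argument of Eq.~\eqref{app:eq:hamming_matrix_factorization} carries over verbatim, giving $\lVert\Gamma(Y)\rVert_\infty\le 64e^{45/C}\lVert Y\rVert_2^2$. The ordered-product bound of Corollary~\ref{app:cor:ordered_product_bound} is blockwise, so it holds unchanged for the disconnected geometry.

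In fact, this route handles all of $O_0$ at once, without Step 2. The decomposition over all $(\boldsymbol u_A,\boldsymbol u_B)$ includes the identity components, and the only change from Theorem~\ref{appx:thm:1} is the factor $8\cdot8=64$ coming from the two cyclic traces. I would therefore present the proof this way: repeat the reduction to $\widetilde K$, note $\widetilde K=\widetilde K_A\otimes\widetilde K_B$, multiply the per-register bounds $8e^{6m_X/D}\gamma_D^{m_X}q_D^{|\cdot|}$, and conclude as in Eq.~\eqref{app:eq:final_prediction_operator_bound} to obtain $64e^{45/C}$.
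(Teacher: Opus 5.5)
Your final route is essentially the paper's proof: factorize $m_P$ and $\tau(P,Q)$ across the two registers, note $(\widetilde K_{AB})_{\boldsymbol u\boldsymbol v}=(\widetilde K_A)_{\boldsymbol u^A\boldsymbol v^A}(\widetilde K_B)_{\boldsymbol u^B\boldsymbol v^B}$, multiply the two per-register bounds from Eq.~\eqref{app:eq:uniform_exact_K_entry_bound} to get $64e^{6m/D}\gamma_D^m q_D^{|\boldsymbol u\oplus\boldsymbol v|}$, and rerun the Hamming-matrix estimate. One aside in Step~1 is false but never used: $\hat\rho_A$ and $\hat\rho_B$ are not independent for correlated $\rho$, because $(b_A,b_B)$ is drawn jointly from the Born distribution, and only the independence of $U_A$ and $U_B$ (which is all the factorization of $m_P$, $\tau$, and $\Gamma$ requires) holds.
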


\begin{proof}
    Write
    \begin{equation}
        m_A=m_B=\frac{n}{2k},
        \qquad
        m=m_A+m_B=\frac{n}{k},
        \qquad
        D=2^k.
    \end{equation}
    The shadow channel and its exact inverse factorize as
    \begin{equation}
        \mathcal M_{AB}
        =
        \mathcal M_A\otimes\mathcal M_B,
        \qquad
        \mathcal M_{AB}^{-1}
        =
        \mathcal M_A^{-1}\otimes\mathcal M_B^{-1}.
        \label{app:eq:product_shadow_channel_factorization}
    \end{equation}

    Let $K_A$ and $K_B$ be the matrices entering
    Eq.~\eqref{app:eq:gamma_quadratic} for the two periodic components.
    For support strings
    \[
        u=(u^A,u^B),
        \qquad
        v=(v^A,v^B),
    \]
    both the exact visibility and the numerator contraction factorize. Consequently, the corresponding matrix for the product ensemble satisfies
    \begin{equation}
        (K_{AB})_{uv}
        =
        (K_A)_{u^Av^A}
        (K_B)_{u^Bv^B}.
        \label{app:eq:product_K_factorization}
    \end{equation}
    Applying Eq.~\eqref{app:eq:uniform_exact_K_entry_bound} separately to the two components gives
    \begin{align}
        (K_{AB})_{uv}
        &\leq
        64
        e^{6(m_A+m_B)/D}
        \gamma_D^{m_A+m_B}
        q_D^{|u^A\oplus v^A|+|u^B\oplus v^B|}
        \nonumber\\
        &=
        64
        e^{6m/D}
        \gamma_D^m
        q_D^{|\boldsymbol u \oplus \boldsymbol v|}.
        \label{app:eq:product_K_entry_bound}
    \end{align}

    Using the support decomposition of $O_0$ and repeating the quadratic-form estimate in Eqs.~\eqref{app:eq:entrywise_to_quadratic_form} and~\eqref{app:eq:dimension_factor_bound}, we obtain
    \begin{align}
        \left\|
            \Gamma_{AB}(O_0)
        \right\|_\infty
        &\leq
        64e^{6m/D}\gamma_D^m
        (1+q_D)^m
        \lVert O_0\rVert_2^2\\
        &\leq
        64\exp\!\left(
            \frac{45m}{\sqrt D}
        \right)
        \lVert O_0\rVert_2^2
        \nonumber\\
        &=
        64\exp\!\left(
            \frac{45n}{k2^{k/2}}
        \right)
        \lVert O_0\rVert_2^2\\
        &\leq
        64e^{45/C}\lVert O_0\rVert_2^2.
        \label{app:eq:product_prediction_operator_bound}
    \end{align}
    Combining this with Eq.~\eqref{app:eq:variance_prediction_bound}
    proves Eq.~\eqref{app:eq:two_periodic_component_variance}.
\end{proof}

Lemmas~\ref{app:lem:product_lower_relative_bound} and~\ref{app:lem:product_upper_relative_bound} show that the tensor-product ensemble fails both sides of the relative approximate-design condition. Lemma~\ref{app:lem:two_periodic_component_variance} shows that the same ensemble satisfies
\begin{equation}
    \operatorname{Var}\!\left[
        \operatorname{Tr}(O\hat\rho_{AB})
    \right]
    =
    O\!\left(
        \lVert O_0\rVert_2^2
    \right)
\end{equation}
under the block size condition of Theorem~\ref{main:thm:1}.  This proves Proposition~\ref{main:prop:not_necessary}.  The construction has two disconnected periodic one-dimensional components and already establishes that Eq.~\eqref{main:eq:cp} is not necessary for the variance guarantee of Theorem~\ref{main:thm:1}.

\section{Proof of Theorem~\ref{main:thm:2}}
\label{app:sec:proof_of_multishot_depth_lower_bound}

Theorem~\ref{main:thm:1} shows that, for single-shot observable estimation, global Clifford measurements can be replaced by measurements generated by the two-layer block unitary ensemble while preserving the same Frobenius-norm variance scaling. This replacement also applies to several randomized quantum-state learning algorithms because the statistical guarantee for these problems can be expressed as observable estimation problems that can be covered by Theorem~\ref{main:thm:1}. We now consider multi-shot observable estimation, in which each sampled unitary is reused for several measurement shots. Reusing a measurement unitary reduces the number of distinct circuits that need to be compiled, making this setting experimentally relevant \cite{brandao2020fast,zhou2023performance,helsen2023thrifty}.

Let $\{U_i\}_{i=1}^{N_U}$ be sampled independently from a unitary ensemble $\mathcal E$. For each $U_i$, we perform $N_S$ measurements and obtain outcomes $\{b_{i,j}\}_{j=1}^{N_S}$. The multi-shot shadow estimator is
\begin{equation}
    \hat{\rho}_{\mathrm{mul}}
    =
    \frac{1}{N_U N_S}
    \sum_{i=1}^{N_U}
    \sum_{j=1}^{N_S}
    \mathcal{M}^{-1}
    \left(
        U_i^\dagger
        \ket{b_{i,j}}\!\bra{b_{i,j}}
        U_i
    \right),
    \label{app:eq:multishot_shadow_estimator}
\end{equation}
where $\mathcal{M}^{-1}$ is the exact inverse shadow channel. The corresponding estimator of
$\operatorname{Tr}(\rho O)$ is $\hat{o}=\operatorname{Tr}(O\hat{\rho}_{\mathrm{mul}})$. Then its variance can be written as 
\begin{equation}
    \operatorname{Var}[\hat{o}]
    =
    \frac{1}{N_U}
    \left(
        \frac{V_{\mathrm{shot}}}{N_S}
        +
        \frac{N_S-1}{N_S}
        V_{\mathrm{floor}}
    \right),
    \label{app:eq:multishot_variance_decomposition}
\end{equation}
where 
\begin{align}
    &V_{\mathrm{shot}}=
    \operatorname{Var}_{U,b}
    \left[\operatorname{Tr}(O\mathcal{M}^{-1}(U^{\dagger}\ket{b}\!\bra{b}U))
    \right],\\
    &V_{\mathrm{floor}}=
    \operatorname{Var}_{U}
    \left[
    \sum_{b\in \{0,1\}^n} \operatorname{Tr}(U\rho U^{\dagger}\ket{b}\!\bra{b})\operatorname{Tr}(O\mathcal{M}^{-1}(U^{\dagger}\ket{b}\!\bra{b}U)) 
    \right]\label{app:eq:variance_floor}
\end{align}

The $\operatorname{Var}[\hat o]$ depends on fourth-order unitary moments. Thus a global exact unitary $4$-design reproduces the corresponding Haar
average exactly. In particular, the lemma below shows that
\begin{equation}
    V_{\mathrm{floor}}
    \leq
    \frac{4}{d}
    \lVert O_0\rVert_2^2.
    \label{app:eq:exact_four_design_floor_preview}
\end{equation}
We may ask whether the two-layer block unitary ensemble can reproduce the same scaling in Eq.~\eqref{app:eq:exact_four_design_floor_preview}. We answer this question negatively by constructing a state $\rho$ and an observable $O$ for which the two-layer geometry leaves a much larger value of $V_{\mathrm{floor}}$. The separation remains even when every block unitary is independently Haar random and when the block size grows linearly with the system size.

At a high level, the proof exploits the limited propagation of the two-layer circuit. We choose a state and an observable for which the quantity defining $V_{\mathrm{floor}}$ separates into contributions from two spatial regions. The two contributions depend on disjoint sets of random block unitaries and are therefore independent. More precisely, the relevant random variable takes the form $Z=XY$, where $X$ and $Y$ are independent and exact unbiasedness gives
$\mathbb{E}[X]=\mathbb{E}[Y]=1$. Consequently,
\begin{align}
    \operatorname{Var}[Z]
    &=
    \operatorname{Var}[X]
    +
    \operatorname{Var}[Y]
    +
    \operatorname{Var}[X]\operatorname{Var}[Y].
    \label{app:eq:product_variance_overview}
\end{align}
Thus, a fluctuation confined to either region remains as a lower bound on the full variance floor.

We first derive the multi-shot variance bound for a global exact unitary $4$-design and then construct the state and observable that exhibit the two-layer variance separation.

\begin{lemma}[Multi-shot variance for exact unitary $4$-designs]
\label{app:lem:exact_four_design_multishot_variance}
    Let $\mathcal E$ be a global exact unitary $4$-design on a
    $d$-dimensional Hilbert space. Then, for every state $\rho$ and every
    Hermitian observable $O$,
    \begin{equation}
        V_{\mathrm{shot}}
        \leq
        3\lVert O_0\rVert_2^2,
        \qquad
        V_{\mathrm{floor}}
        \leq
        \frac{4}{d}\lVert O_0\rVert_2^2.
        \label{app:eq:exact_four_design_variance_terms}
    \end{equation}
    Consequently, the multi-shot estimator in
    Eq.~\eqref{app:eq:multishot_shadow_estimator} satisfies
    \begin{equation}
        \operatorname{Var}[\hat{o}]
        \leq
        \frac{1}{N_U}
        \left(
            \frac{3}{N_S}
            +
            \frac{4(N_S-1)}{dN_S}
        \right)
        \lVert O_0\rVert_2^2.
        \label{app:eq:exact_four_design_multishot_variance}
    \end{equation}
\end{lemma}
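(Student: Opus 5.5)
For an exact $4$-design every quantity entering $V_{\mathrm{shot}}$ and $V_{\mathrm{floor}}$ is a polynomial of degree at most four in $U$ and $U^\dagger$, so both can be evaluated with Haar averages. In particular, since a $4$-design is also a $2$-design, $\mathcal M=\mathcal M_{\mathrm H}$ and $\mathcal M^{-1}(X)=(d+1)X-\operatorname{Tr}(X)I$. Each snapshot has unit trace, so we may replace $O$ by $O_0$ throughout. This gives
\[
\operatorname{Tr}\!\big(O_0\mathcal M^{-1}(U^\dagger\ket b\!\bra bU)\big)=(d+1)\bra b UO_0U^\dagger\ket b .
\]

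\textbf{Shot term.} We use $V_{\mathrm{shot}}\le \mathbb E_{U,b}[\hat o_0^2]$. This is the standard global-Clifford computation of Huang et al.: it involves third moments, namely $\rho$ once and $O_0$ twice. By the $3$-design property,
\[
\mathbb E\,\hat o_0^2=\frac{(d+1)}{d+2}\Big(\operatorname{Tr}(O_0^2)+2\operatorname{Tr}(\rho O_0^2)\Big)\le 3\|O_0\|_2^2 ,
\]
using $\operatorname{Tr}(\rho O_0^2)\le\|O_0\|_\infty^2\le\|O_0\|_2^2$. I would simply cite or reproduce this Weingarten evaluation over $S_3$.

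\textbf{Floor term.} Define $f(U)=(d+1)\sum_b \bra bU\rho U^\dagger\ket b\bra bUO_0U^\dagger\ket b$. Its mean is $\operatorname{Tr}(\rho O_0)$, and $V_{\mathrm{floor}}=\operatorname{Var}_U f\le \mathbb E f^2-\operatorname{Tr}(\rho O_0)^2$. Then
\[
\mathbb E f^2=(d+1)^2\sum_{b,b'}\mathbb E\operatorname{Tr}\!\big[(U^{\dagger\otimes4}\ket{bb bb'b'}\!\bra{bbb'b'}U^{\otimes4})(\rho\otimes O_0\otimes\rho\otimes O_0)\big],
\]
which is a fourth-moment Haar average. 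I split it into the diagonal terms $b=b'$ and the off-diagonal terms $b\neq b'$. For $b=b'$ the relevant quantity is $\Phi_{\mathrm H}^{(4)}(\ket b\!\bra b^{\otimes4})=\Pi_{\mathrm{sym}}/\binom{d+3}{4}$. Hence the $d$ diagonal terms contribute
\[
\frac{(d+1)^2 d}{\binom{d+3}{4}\,24}\sum_{\pi\in S_4}\operatorname{Tr}\!\big(V_\pi(\rho\otimes O_0\otimes\rho\otimes O_0)\big)=O\!\left(\frac{1}{d}\right)\cdot(\text{sum of traces}).
\]
Every permutation trace is a product of traces of words in $\rho$ and $O_0$. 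The identity permutation and any permutation with a fixed point on an $O_0$ slot vanish, because $\operatorname{Tr}O_0=0$. The surviving traces are bounded by $\|O_0\|_2^2$ (for example $\operatorname{Tr}(O_0^2)$, $\operatorname{Tr}(\rho O_0)^2$, $\operatorname{Tr}(\rho O_0\rho O_0)\le\|O_0\|_2^2$). This yields an $O(\|O_0\|_2^2/d)$ contribution. For $b\neq b'$ I would handle all pairs at once by writing $\sum_{b\neq b'}\ket{bb}\!\bra{bb}\otimes\ket{b'b'}\!\bra{b'b'}=\Delta-\sum_b\ket b\!\bra b^{\otimes4}$, where $\Delta=D\otimes D$ and $D=\sum_b\ket{bb}\!\bra{bb}$. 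The cleanest route is then to expand $D=\frac{1}{d}\sum_{Z\in\mathcal Z}Z\otimes Z$, or alternatively to use the exact Weingarten formula for $t=4$ applied to $\Delta$.

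\textbf{Main obstacle.} The hard step is the off-diagonal part. There the leading Weingarten terms reproduce $\operatorname{Tr}(\rho O_0)^2$ exactly, which cancels the subtracted mean squared. What remains are $1/d$ corrections, and I must track $\operatorname{Wg}(\cdot,d)$ at the $1/d^5$ order relative to the $(d+1)^2d^2$ prefactor. I would organise this by grouping permutations according to their cycle type with respect to the pairing $\{1,2\},\{3,4\}$, and verify term by term that each residual trace is at most $\|O_0\|_2^2$ with a coefficient whose total is at most about $4/d$. Crude Cauchy--Schwarz bounds on the individual traces should suffice, with the constant $4$ absorbing the diagonal piece as well. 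Finally, I substitute both bounds into Eq.~\eqref{app:eq:multishot_variance_decomposition} to obtain Eq.~\eqref{app:eq:exact_four_design_multishot_variance}.
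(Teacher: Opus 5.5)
Your shot-term argument is the same as the paper's: the Huang et al.\ third-moment identity plus $\operatorname{Tr}(\rho O_0^2)\le\|O_0\|_2^2$. For the floor term, the paper also evaluates the exact Haar fourth moment (it cites the identity of Anshu et al.) and then bounds it crudely, so your route is right in spirit. The gap is in your bounding plan, which cannot produce the constant $4$.

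\textbf{Why the split fails.} You bound the $b=b'$ block separately as an ``$O(\|O_0\|_2^2/d)$'' piece to be absorbed into the $4$, and then bound the $b\neq b'$ residual term by term. The diagonal block equals $\frac{d+1}{(d+2)(d+3)}\sum_{\pi\in S_4}\operatorname{Tr}[V_\pi(\rho\otimes O_0\otimes\rho\otimes O_0)]$. Take $\rho=O=\ket\psi\!\bra\psi$. All fourteen surviving traces are then $1-O(1/d)$, so this block alone is about $14\|O_0\|_2^2/d$. Yet a direct Dirichlet computation for the same example gives exactly $V_{\mathrm{floor}}=4(d-1)/((d+2)(d+3))$. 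Hence the $b\neq b'$ part minus $\operatorname{Tr}(\rho O_0)^2$ is about $-10\|O_0\|_2^2/d$. That is a genuinely negative contribution, and no term-by-term upper bound on the residual traces can produce it. A second, smaller problem: expanding $\operatorname{Wg}(\cdot,d)$ only to relative order $1/d$ gives leading coefficients, not an inequality valid for every $d$.

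\textbf{How to close it.} Do not split. Evaluate the full average exactly, with
\[
\Delta=\sum_{b,b'}\ket b\!\bra b^{\otimes2}\otimes\ket{b'}\!\bra{b'}^{\otimes2}.
\]
For $\pi\in K:=\{e,(12),(34),(12)(34)\}$ one has $V_\pi\Delta=\Delta V_\pi=\Delta$. So the coefficient of $V_\pi$ is constant on the three double cosets $K$, $(13)(24)K$ and $S_4\setminus H$, where $H=K\cup(13)(24)K$. The conditions $\operatorname{Tr}(V_\sigma^\dagger\Delta)=d^2$ for $\sigma\in K$ and $=d$ otherwise, imposed at one representative per double coset, give
\[
\mathbb E_U\bigl[U^{\dagger\otimes4}\Delta\,U^{\otimes4}\bigr]=\frac{(d^2+4d+2)\sum_{\pi\in K}V_\pi+(d+2)\sum_{\pi\in H\setminus K}V_\pi-\sum_{\pi\notin H}V_\pi}{d(d+1)(d+2)(d+3)}.
\]
Contracting with $\rho\otimes O_0\otimes\rho\otimes O_0$, multiplying by $(d+1)^2$, and subtracting $\operatorname{Tr}(\rho O_0)^2$ gives
\[
\begin{aligned}
d(d+2)(d+3)\,V_{\mathrm{floor}}={}&(d^2+3d+4)\operatorname{Tr}(\rho O_0)^2+(d+1)(d+2)\operatorname{Tr}(\rho^2)\|O_0\|_2^2+2(d+1)^2\operatorname{Tr}(\rho^2O_0^2)\\
&-(d+1)\bigl[\|O_0\|_2^2+4\operatorname{Tr}(\rho O_0^2)+2\operatorname{Tr}(\rho O_0\rho O_0)\bigr].
\end{aligned}
\]
This reproduces $4(d-1)/((d+2)(d+3))$ in the example above and vanishes at $\rho=I/d$. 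Now crude bounds do suffice:
\begin{itemize}
\item drop $-4(d+1)\operatorname{Tr}(\rho O_0^2)-2(d+1)\operatorname{Tr}(\rho O_0\rho O_0)\le0$;
\item keep $-(d+1)\|O_0\|_2^2$;
\item use $\operatorname{Tr}(\rho O_0)^2\le\|O_0\|_2^2$, $\operatorname{Tr}\rho^2\le1$ and $\operatorname{Tr}(\rho^2O_0^2)\le\|O_0\|_2^2$.
\end{itemize}
This yields $\frac{4d^2+9d+7}{d(d+2)(d+3)}\|O_0\|_2^2\le\frac4d\|O_0\|_2^2$, which is exactly the paper's intermediate bound.
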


\begin{proof}
    The trace part of $O$ contributes only a deterministic constant to
    $\hat{o}$, so it suffices to consider the traceless observable $O_0$.
    Since an exact unitary $4$-design is also an exact unitary $2$-design,
    its shadow channel agrees with the Haar shadow channel and
    \begin{equation}
        \mathcal{M}_{\mathcal E}^{-1}(O_0)
        =
        (d+1)O_0.
        \label{app:eq:four_design_inverse_on_traceless_observable}
    \end{equation}

    Applying the Haar third-moment identity to a single measurement outcome~\cite{huang2020predicting} gives
    \begin{align}
        V_{\mathrm{shot}}
        &=
        \frac{d+1}{d+2}
        \left(
            \lVert O_0\rVert_2^2
            +
            2\operatorname{Tr}(\rho O_0^2)
        \right)
        -
        \operatorname{Tr}(\rho O_0)^2
        \leq
        3\lVert O_0\rVert_2^2.
        \label{app:eq:exact_four_design_single_shot_bound}
    \end{align}
    Here we used $\operatorname{Tr}(\rho O_0^2)\leq\lVert O_0\rVert_2^2$. For two outcomes obtained using the same unitary, the Haar fourth-moment identity~\cite{anshu2022distributed} gives
    \begin{equation}
        V_{\mathrm{floor}}
        \leq
        \frac{4d^2+9d+7}
        {d(d+2)(d+3)}
        \lVert O_0\rVert_2^2
        \leq
        \frac{4}{d}
        \lVert O_0\rVert_2^2.
        \label{app:eq:exact_four_design_floor_bound}
    \end{equation}
    The first inequality follows from $\operatorname{Tr}(\rho^2)\leq1$, $\operatorname{Tr}(\rho^2O_0^2)\leq\lVert O_0\rVert_2^2$, and $\operatorname{Tr}(\rho O_0)^2 \leq\operatorname{Tr}(\rho^2)\lVert O_0\rVert_2^2$. Substituting these bounds into Eq.~\eqref{app:eq:multishot_variance_decomposition} proves the claim.
\end{proof}

\begin{fact}[Variance of the purity of a Haar-random state~\cite{scott2003entangling}]\label{appx:fact:purity_var}
    Let $\mathcal{H}=\mathcal{H}_A\otimes\mathcal{H}_B$ with $d_A=d_B=\sqrt{d}$, and let $\rho=U\ket{0}\!\bra{0}U^\dagger$, where $U\sim\operatorname{Haar}(\mathrm{U}(d))$. Let $\rho_A=\operatorname{Tr}_B(\rho)$ and $p:=\operatorname{Tr}(\rho_A^2)$. Then
    \begin{align}
        \operatorname{Var}_{U}[p]
        &=
        \frac{2(d-1)^2}
        {(d+1)^2(d+2)(d+3)}.
    \end{align}
\end{fact}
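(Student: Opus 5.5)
The plan is to reduce the claim to Haar moments of a random pure state and evaluate them with Schur--Weyl duality. Set $\psi=U\ket{0}$, $d_A=d_B=\sqrt d$, and let $\mathbb S_A$ denote the swap of the two copies of $\mathcal H_A$. The swap trick gives
\begin{equation}
    p=\operatorname{Tr}\!\left[(\ket{\psi}\!\bra{\psi})^{\otimes2}(\mathbb S_A\otimes I_B)\right],
\end{equation}
and the same trick on four copies gives
\begin{equation}
    p^2=\operatorname{Tr}\!\left[(\ket{\psi}\!\bra{\psi})^{\otimes4}\,(V^A_\sigma\otimes I_B)\right],
    \qquad
    \sigma=(12)(34).
\end{equation}
Since $\psi$ is Haar random, I will use $\mathbb E[(\ket{\psi}\!\bra{\psi})^{\otimes t}]=\sum_{\pi\in S_t}V_\pi/\bigl(d(d+1)\cdots(d+t-1)\bigr)$, with $V_\pi=V^A_\pi\otimes V^B_\pi$, for $t=2$ and $t=4$. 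Two traces are needed. First, $\operatorname{Tr}(V^B_\pi)=d_B^{c(\pi)}$, where $c(\pi)$ is the number of cycles of $\pi$. Second, $\operatorname{Tr}(V^A_\pi V^A_\sigma)=d_A^{c(\pi\sigma)}$. Hence
\begin{equation}
    \mathbb E[p]=\frac{d_Ad_B(d_A+d_B)}{d(d+1)}=\frac{d_A+d_B}{d+1},
    \qquad
    \mathbb E[p^2]=\frac{1}{d(d+1)(d+2)(d+3)}\sum_{\pi\in S_4}d_A^{\,c(\pi\sigma)}d_B^{\,c(\pi)}.
\end{equation}

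The main work is the sum over the $24$ permutations of $S_4$. I will group $\pi$ by the pair $(c(\pi),c(\pi\sigma))$, using that right multiplication by $\sigma$ is a bijection of $S_4$. There is one term with $(4,2)$ (the identity) and one with $(2,4)$ ($\pi=\sigma$). The transpositions $(12)$ and $(34)$ give $(3,3)$. The other four transpositions give $(3,1)$, because $\pi\sigma$ is then a $4$-cycle. Symmetrically, the four $4$-cycles $\pi$ with $\pi\sigma$ a transposition give $(1,3)$. The remaining $12$ permutations (the $3$-cycles, the other two double transpositions and the remaining $4$-cycles) must fall into $(2,2)$, $(1,1)$, $(2,2)$ and similar classes, and I will enumerate them explicitly. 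Consistency checks are that the multiplicities total $24$ and that the polynomial is symmetric under $d_A\leftrightarrow d_B$ after the relabeling $\pi\mapsto\pi\sigma$. The expected result is
\begin{equation}
    \sum_{\pi}d_A^{c(\pi\sigma)}d_B^{c(\pi)}
    =
    d_A^2d_B^2\bigl(d_A^2+d_B^2+2d_Ad_B+8\bigr)+8d_Ad_B(d_A^2+d_B^2+1)-\dots,
\end{equation}
but I will take the coefficients from the careful count rather than trust this sketch.

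With $d_A=d_B=\sqrt d$, every monomial becomes a power of $\sqrt d$. I then subtract $\mathbb E[p]^2=4d/(d+1)^2$ and place everything over the common denominator $(d+1)^2(d+2)(d+3)$. Odd powers of $\sqrt d$ should cancel in pairs, leaving a numerator that should equal $2(d-1)^2$. As checks, I will use $d=1$, where $p\equiv1$ and the variance vanishes, consistent with the factor $(d-1)^2$, and the asymptotic value $\sim 2/d^2$. I expect the permutation enumeration to be the only error-prone step, so I will do the count twice: once directly, and once via cycle-type classes of $\pi\sigma$.
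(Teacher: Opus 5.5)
Your argument is correct. The paper gives no derivation of this Fact; it imports the formula from Scott--Caves. You derive it from the permutation-operator form of the Haar pure-state moments $\mathbb E[(\ket{\psi}\!\bra{\psi})^{\otimes t}]$. Your route gives an independent, self-contained check, and because you keep $d_A$ and $d_B$ separate until the end, it also yields the unequal-dimension formula. The citation buys brevity.

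Two corrections to your sketch.

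First, your guessed polynomial is wrong, and it contradicts your own class counts. The four transpositions in class $(3,1)$ and the four $4$-cycles in class $(1,3)$ give coefficient $4$ on $d_Ad_B(d_A^2+d_B^2)$, not $8$. The remaining twelve permutations split as follows:
\begin{itemize}
\item Ten fall in class $(2,2)$: all eight $3$-cycles, plus $(13)(24)$ and $(14)(23)$. For a $3$-cycle $\pi$, the product $\pi\sigma$ is even and can be neither $e$ nor a double transposition, so it is a $3$-cycle.
\item Two fall in class $(1,1)$: the $4$-cycles $(1324)$ and $(1423)$, whose square is $\sigma$.
\end{itemize}
Hence
\[
\sum_{\pi\in S_4}d_A^{c(\pi\sigma)}d_B^{c(\pi)}
=
d_A^2d_B^2(d_A+d_B)^2+10d_A^2d_B^2+4d_Ad_B(d_A^2+d_B^2)+2d_Ad_B .
\]

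Second, no odd powers of $\sqrt d$ ever appear, so nothing needs to cancel. Since $\sigma$ is even, $c(\pi)\equiv c(\pi\sigma)\pmod 2$, so every monomial has even total degree.

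With $d_A=d_B=\sqrt d$ the sum equals $4d^3+18d^2+2d$. Therefore $\mathbb E[p^2]=(4d^2+18d+2)/\bigl((d+1)(d+2)(d+3)\bigr)$. Subtracting $\mathbb E[p]^2=4d/(d+1)^2$ over the common denominator gives the numerator $(4d^2+18d+2)(d+1)-4d(d+2)(d+3)=2(d-1)^2$, as claimed.
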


\begin{lemma}[Haar second moment across two adjacent blocks]
\label{lem:two_block_haar_average}
Let $k$ be even and let $D=2^k$. Consider
$\mathcal H=\mathcal H_1\otimes\mathcal H_2
\otimes\mathcal H_3\otimes\mathcal H_4$, where
$\dim(\mathcal H_i)=\sqrt D$.
Let $A,B\in\mathcal L(\mathcal H_2\otimes\mathcal H_3)$, and let
\[
V_{12}\sim\operatorname{Haar}(\mathcal H_1\otimes\mathcal H_2),
\qquad
V_{34}\sim\operatorname{Haar}(\mathcal H_3\otimes\mathcal H_4)
\]
be independent Haar-random unitaries.
For fixed $b_{12},b_{34}\in\{0,1\}^k$, define
\begin{equation}
\rho_2
:=
\operatorname{Tr}_1\!\left(
V_{12}|b_{12}\rangle\langle b_{12}|V_{12}^{\dagger}
\right),
\qquad
\rho_3
:=
\operatorname{Tr}_4\!\left(
V_{34}|b_{34}\rangle\langle b_{34}|V_{34}^{\dagger}
\right).
\end{equation}
Then
\begin{align}
&\mathbb E_{V_{12},V_{34}}
\left[
\operatorname{Tr}\!\left(A(\rho_2\otimes\rho_3)\right)
\operatorname{Tr}\!\left(B(\rho_2\otimes\rho_3)\right)
\right]
\nonumber\\
&=
\frac{1}{D^2(D+1)^2}
\Big(
D^2\operatorname{Tr}(A)\operatorname{Tr}(B)
+D^{3/2}
\operatorname{Tr}\!\left(
\operatorname{Tr}_3(A)\operatorname{Tr}_3(B)
\right)
\nonumber\\
&\hspace{35mm}
+D^{3/2}
\operatorname{Tr}\!\left(
\operatorname{Tr}_2(A)\operatorname{Tr}_2(B)
\right)
+D\operatorname{Tr}(AB)
\Big).
\label{eq:two_block_haar_average}
\end{align}
\end{lemma}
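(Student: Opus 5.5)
The argument uses the independence of $V_{12}$ and $V_{34}$ together with the exact second-moment formula for the reduced state of a Haar-random pure state. Because $A$ and $B$ act on $\mathcal H_2\otimes\mathcal H_3$, we can write
\begin{equation}
\operatorname{Tr}\!\left(A(\rho_2\otimes\rho_3)\right)\operatorname{Tr}\!\left(B(\rho_2\otimes\rho_3)\right)
=\operatorname{Tr}\!\left[(A\otimes B)\,(\rho_2\otimes\rho_3)^{\otimes 2}\right],
\end{equation}
where the two copies of $\mathcal H_2\otimes\mathcal H_3$ are reordered so that copy one carries $A$ and copy two carries $B$. Since $\rho_2$ depends only on $V_{12}$ and $\rho_3$ only on $V_{34}$, the expectation factorizes:
\begin{equation}
\mathbb E\big[(\rho_2\otimes\rho_3)^{\otimes2}\big]=\mathbb E_{V_{12}}[\rho_2^{\otimes2}]\otimes\mathbb E_{V_{34}}[\rho_3^{\otimes2}],
\end{equation}
up to this reordering of tensor factors.

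Each factor follows from the Haar second moment. For a Haar-random pure state on $\mathcal H_1\otimes\mathcal H_2$ of dimension $D$,
\begin{equation}
\mathbb E\big[(V|b\rangle\langle b|V^\dagger)^{\otimes2}\big]=\frac{I+F}{D(D+1)},
\end{equation}
where $F$ is the swap on the two copies of $\mathcal H_1\otimes\mathcal H_2$. The swap factorizes as $F=F_1\otimes F_2$. Taking the partial trace over both copies of $\mathcal H_1$ uses $\operatorname{Tr}(I_1)=D^{1/2}$ on each copy, so $\operatorname{Tr}_{1,1'}(I)=D$, and $\operatorname{Tr}(F_1)=D^{1/2}$. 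This gives
\begin{equation}
\mathbb E[\rho_2^{\otimes2}]=\frac{D\,I_{22'}+D^{1/2}F_{2}}{D(D+1)},
\end{equation}
and the same expression holds for $\rho_3$.

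Multiplying the two factors produces the common prefactor $1/(D^2(D+1)^2)$ and four terms: $D^2\, I\otimes I$, $D^{3/2}\, F_2\otimes I_3$, $D^{3/2}\, I_2\otimes F_3$, and $D\, F_2\otimes F_3$. We then contract each against $A\otimes B$ using the partial-swap identity
\begin{equation}
\operatorname{Tr}[(A\otimes B)(F_S\otimes I_{\bar S})]=\operatorname{Tr}\big(\operatorname{Tr}_{\bar S}(A)\operatorname{Tr}_{\bar S}(B)\big).
\end{equation}
The identity term gives $\operatorname{Tr}A\operatorname{Tr}B$. The term with $F_2$ gives $\operatorname{Tr}(\operatorname{Tr}_3A\,\operatorname{Tr}_3B)$, and the term with $F_3$ gives $\operatorname{Tr}(\operatorname{Tr}_2A\,\operatorname{Tr}_2B)$. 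The full swap $F_2\otimes F_3$ gives $\operatorname{Tr}(AB)$.

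We would check the partial-swap identity with product operators and extend it by linearity. The main things to track carefully are the bookkeeping of which swap sits on which half-block, and the powers of $D^{1/2}$ coming from the traced-out identities.
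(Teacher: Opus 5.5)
Your proposal is correct and follows essentially the same route as the paper. You use the Haar second moment $(I+\mathrm{SWAP})/(D(D+1))$, trace out the two copies of the outer half-block to get $(D\,I+D^{1/2}\,\mathrm{SWAP})/(D(D+1))$ (identical to the paper's $\Phi_{aa'}$), factorize by independence of $V_{12}$ and $V_{34}$, and contract the four resulting terms against $A\otimes B$ with the partial-swap identity.
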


\begin{proof}
By Haar invariance, $V_{12}|b_{12}\rangle$ and
$V_{34}|b_{34}\rangle$ are independent Haar-random unit vectors, and
their distributions do not depend on the fixed basis labels.
For a Haar-random unit vector on a $D$-dimensional space,
\begin{equation}
\mathbb E_V
\left[
\left(
V|b\rangle\langle b|V^\dagger
\right)^{\otimes 2}
\right]
=
\frac{I+\mathrm{SWAP}}{D(D+1)}.
\end{equation}
Let primed labels denote the second tensor copy. Taking the partial
trace over the two copies of the outer subsystem gives, for
$a\in\{2,3\}$,
\begin{equation}
\mathbb E\!\left[\rho_a\otimes\rho_{a'}\right]
=
\Phi_{aa'}
:=
\frac{
\sqrt D\,I_{aa'}+\mathrm{SWAP}_{aa'}
}{
\sqrt D(D+1)
}.
\label{eq:reduced_haar_second_moment}
\end{equation}
Indeed, the identity term gives $D I_{aa'}$, while the swap term gives
$\sqrt D\,\mathrm{SWAP}_{aa'}$ after tracing out the two outer copies.

Independence of $V_{12}$ and $V_{34}$ now yields
\begin{equation}
\mathbb E_{V_{12},V_{34}}
\left[
\operatorname{Tr}\!\left(A(\rho_2\otimes\rho_3)\right)
\operatorname{Tr}\!\left(B(\rho_2\otimes\rho_3)\right)
\right]
=
\operatorname{Tr}\!\left[
(A_{23}\otimes B_{2'3'})
(\Phi_{22'}\otimes\Phi_{33'})
\right].
\end{equation}
The product of the two second moments is
\begin{align}
\Phi_{22'}\otimes\Phi_{33'}
=
\frac{1}{D^2(D+1)^2}
\Big(
&D^2 I_{22'}\otimes I_{33'}
+D^{3/2}\mathrm{SWAP}_{22'}\otimes I_{33'}
\nonumber\\
&+D^{3/2}I_{22'}\otimes\mathrm{SWAP}_{33'}
+D\,\mathrm{SWAP}_{22'}\otimes\mathrm{SWAP}_{33'}
\Big).
\end{align}
Finally,
\begin{align}
    &\operatorname{Tr}\!\left[
    (A\otimes B)
    (\mathrm{SWAP}_{22'}\otimes I_{33'})
    \right]
    =
    \operatorname{Tr}\!\left(
    \operatorname{Tr}_3(A)\operatorname{Tr}_3(B)
    \right),\\
    &\operatorname{Tr}\!\left[
    (A\otimes B)
    (I_{22'}\otimes\mathrm{SWAP}_{33'})
    \right]
    =
    \operatorname{Tr}\!\left(
    \operatorname{Tr}_2(A)\operatorname{Tr}_2(B)
    \right),\\
    &\operatorname{Tr}\!\left[
    (A\otimes B)
    (\mathrm{SWAP}_{22'}\otimes\mathrm{SWAP}_{33'})
    \right]
    =
    \operatorname{Tr}(AB).
\end{align}
The identity term equals
$\operatorname{Tr}(A)\operatorname{Tr}(B)$, which proves
Eq.~\eqref{eq:two_block_haar_average}.
\end{proof}

\begin{theorem}[Multi-shot depth lower bound]
\label{app:thm:multishot_depth_lower_bound}
    Consider the two-layer block unitary ensemble on $n$ qubits, where each $k$-qubit block unitary is sampled independently from $\operatorname{Haar}(k)$, together with its exact inverse shadow channel. For $k\leq n/4$, there exist a state $\rho$ and an observable $O$ such that
    \begin{equation}
        \frac{
            V_{\mathrm{floor}}^{2\mathrm L}
        }{
            V_{\mathrm{floor}}^{\mathrm{Haar}}
        }
        =
        \Omega
        \left(
            2^{n-3k}
        \right),
        \label{app:eq:multishot_floor_separation}
    \end{equation}
    where $V_{\mathrm{floor}}^{2\mathrm L}$ is the variance floor of the two-layer ensemble and $V_{\mathrm{floor}}^{\mathrm{Haar}}$ is that of a global exact unitary $4$-design for the same $\rho$ and $O$. Consequently, reproducing the global exact unitary $4$-design variance scaling uniformly over all states and observables requires $k=\Omega(n)$.
\end{theorem}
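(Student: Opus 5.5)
The plan is to take the simplest state--observable pair for which the Haar floor is as small as possible, namely
\[
\rho=\ket{0}\!\bra{0}^{\otimes n},\qquad O=\ket{0}\!\bra{0}^{\otimes n}.
\]
This gives $\lVert O_0\rVert_2^2=1-1/d$ and $\operatorname{Tr}(O\rho)=1$. Lemma~\ref{app:lem:exact_four_design_multishot_variance} then gives $V_{\mathrm{floor}}^{\mathrm{Haar}}\le 4/d$. It therefore suffices to show $V_{\mathrm{floor}}^{2\mathrm L}=\Omega(D^{-3})$ with $D=2^k$. The ratio is then $\Omega(d/D^3)=\Omega(2^{n-3k})$, and demanding that it be $O(1)$ forces $k=\Omega(n)$.

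The first step is to write the variance floor as $V_{\mathrm{floor}}^{2\mathrm L}=\operatorname{Var}_U[X(U)]$, where
\[
X(U)=\sum_b \bra b U\rho U^\dagger\ket b\,\operatorname{Tr}\!\bigl(\mathcal M^{-1}(O)\,U^\dagger\ket b\!\bra bU\bigr).
\]
Here Fact~\ref{app:fact:dual} moves $\mathcal M^{-1}$ onto $O$, and exact unbiasedness gives $\mathbb E_U[X]=\operatorname{Tr}(O\rho)=1$. The block-Haar ensemble is Pauli invariant, so by Fact~\ref{app:fact:m_P(U)} the channel $\mathcal M^{-1}$ is Pauli diagonal with coefficients $m_P^{-1}$. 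These coefficients depend only on the block support pattern, through transfer matrices analogous to $\mathbb K_0,\mathbb K_1$ but with Haar local weights. Expanding $O=\prod_j (I+Z_j)/2$ block by block, I aim to rewrite $X$ as a contraction over second-layer blocks. In this form, the dependence on a chosen first-layer block pair enters only through the reduced states $\rho_2,\rho_3$ on the half-blocks $R_i$ and $L_{i+1}$ produced by the first layer, exactly as in Lemma~\ref{lem:two_block_haar_average}.

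The second step localizes the fluctuation, following the product-variance idea of Eq.~\eqref{app:eq:product_variance_overview}. I use the law of total variance: fix all block unitaries except one first-layer block $V_{12}$ and one adjacent first-layer block $V_{34}$ feeding a common second-layer block. Then
\[
\operatorname{Var}[X]\ \ge\ \mathbb E\bigl[\operatorname{Var}(X\mid \text{rest})\bigr].
\]
Conditioned on the rest, $X$ should factor as $X=X_{\mathrm{loc}}\cdot Y$. Here $Y$ is independent of $(V_{12},V_{34})$ and satisfies $\mathbb E[Y]=1$, so $\mathbb E[Y^2]\ge 1$, while $X_{\mathrm{loc}}$ is a quadratic form $\operatorname{Tr}(A(\rho_2\otimes\rho_3))\operatorname{Tr}(B(\rho_2\otimes\rho_3))$ in the local reduced states. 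Lemma~\ref{lem:two_block_haar_average} computes $\mathbb E[X_{\mathrm{loc}}]$ and the needed second moments. The leading fluctuation should be governed by the purity $\operatorname{Tr}(\rho_2^2)$ of a half-block marginal of a Haar-random $k$-qubit state. Its variance is given by Fact~\ref{appx:fact:purity_var} and is of order $D^{-2}$. After multiplying by the inverse-visibility normalization and dividing by the $D^{-1/2}$ factors coming from the two half-blocks, I expect a net $\Omega(D^{-3})$.

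The main obstacle is the bookkeeping in this second step. I must check that the exact inverse visibilities do not cancel the purity fluctuation, since the mean is pinned to $1$ and only the variance part survives. I must also track the powers of $D$ precisely enough to reach $D^{-3}$ rather than a weaker power. If the exact factorization $X=X_{\mathrm{loc}}Y$ fails because of the periodic coupling through second-layer blocks, I would instead lower bound $\operatorname{Var}(X\mid\text{rest})$ directly by $\operatorname{Var}_{V_{12}}$ of the single term linear in $\operatorname{Tr}(\rho_2^2)$. The other terms would be controlled by Cauchy--Schwarz, and the assumption $k\le n/4$ guarantees that the remaining region is nonempty and decoupled.
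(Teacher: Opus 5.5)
The reduction itself is fine: for $\rho=O=\ket{0}\!\bra{0}^{\otimes n}$ one has $V_{\mathrm{floor}}^{\mathrm{Haar}}\le 4/d$, so $V_{\mathrm{floor}}^{2\mathrm L}=\Omega(D^{-3})$ would give the ratio. The gap is the step meant to prove that lower bound. The conditional factorization $X=X_{\mathrm{loc}}\cdot Y$ is false for your choice of $\rho$ and $O$, for two reasons.

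\textbf{$\mathcal M^{-1}(O)$ is not a product operator.} Here $\mathcal M^{-1}(O)=d^{-1}\sum_{P\in\mathcal Z}m_P^{-1}P$. The $m_P$ coincide with the Clifford ones, since only local second moments enter. They factorize only across blocks where $P$ is the identity, via the rank-one reset $\mathbb K_0=st^T$ of Eq.~\eqref{app:eq:visibility_rank_one_reset}. Adjacent active blocks are genuinely coupled ($\alpha_2\neq\alpha_1^2$). Moreover, $\mathcal M^{-1}(O)$ contains $Z^{\otimes n}$ with coefficient $d^{-1}[\operatorname{Tr}(\mathbb K_1^m)]^{-1}$, which couples all blocks.

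\textbf{Nothing cancels the straddling gates.} Your $\rho$ is pure on every half-block, so the conditioned second-layer gates around your pair remain. $X$ then depends on $V_{12}$ through the whole entangled output on $L_iR_i$, with $L_i$ measured after the gate on $(R_{i-1},L_i)$. It also depends on $V_{12}$ through $U_1\mathcal M^{-1}(O)U_1^\dagger$, not merely through a half-block marginal $\rho_2$.

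So Lemma~\ref{lem:two_block_haar_average} does not apply as you describe. Even granting the product form, the lemma gives only the mean of $X_{\mathrm{loc}}$. Its variance over $(V_{12},V_{34})$ is a fourth-moment quantity the lemma does not supply. Your fallback (keep one purity term, control the rest by Cauchy--Schwarz) needs a proof that the remainder's conditional variance is a constant factor smaller than the main term's. That is exactly the missing quantitative content, and the $D^{-3}$ exponent is power counting rather than a derivation.

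The missing idea is to build buffers into $\rho$ and $O$. The paper takes $O=\Pi_A\otimes I_k\otimes\Pi_B\otimes I_k$ and $\rho=\Pi_A\otimes I_k/D\otimes\Pi_B\otimes I_k/D$. This has three consequences:
\begin{enumerate}
\item The reset structure gives $\mathcal M^{-1}(O)=\widetilde\Pi_A\otimes I_k\otimes\widetilde\Pi_B\otimes I_k$, and the straddling gates cancel.
\item $g_U=X_{\mathsf{eA}}X_{\mathsf{eB}}$ holds exactly with independent unit-mean factors, so Eq.~\eqref{app:eq:product_variance_overview} gives $\operatorname{Var}[g_U]\ge\operatorname{Var}[X_{\mathsf{eA}}]$.
\item Inside $\mathsf{eA}$, the flanking identities let one partial-trace the back-propagated snapshot of the two second-layer gates $V$.
\end{enumerate}
The law of total variance is then used in the opposite direction from yours, $\operatorname{Var}[X_{\mathsf{eA}}]\ge\operatorname{Var}_W\bigl[\mathbb E_V[X_{\mathsf{eA}}\mid W]\bigr]$, with the first-layer gate $W$ as the fluctuating variable. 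Lemma~\ref{lem:two_block_haar_average} gives the exact affine form $\mathbb E_V[X_{\mathsf{eA}}\mid W]=\frac{2\sqrt D(D+1)}{3D+1}p_W+\mathrm{const}$. Fact~\ref{appx:fact:purity_var} supplies $\operatorname{Var}_W[p_W]$, giving $V_{\mathrm{floor}}^{2\mathrm L}>1/(7D)$.

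The price is $\lVert O_0\rVert_2^2\approx D^2$, so $V_{\mathrm{floor}}^{\mathrm{Haar}}\le 4D^{2-m}$ and the ratio is $D^{m-3}/28$. Your bound for the all-zeros pair is plausibly true. Proving it, however, would require controlling the full ring contraction weighted by exact inverse visibilities, which the proposal does not do.
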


\begin{proof}
    Set $D=2^k$ and $m=n/k$, so that $m\geq4$. We prove the claimed separation by taking every block gate independently Haar random. For a fixed measurement unitary $U$, define
    \begin{equation}
        g_U
        :=
        \sum_b
        \operatorname{Tr}
        \left(
            U\rho U^\dagger\ket b\!\bra b
        \right)
        \operatorname{Tr}
        \left(
            O\mathcal M^{-1}
            \left(
                U^\dagger\ket b\!\bra bU
            \right)
        \right)
        \label{app:eq:g_U}
    \end{equation}
    Equation~\eqref{app:eq:variance_floor} gives $V_{\mathrm{floor}}^{2\mathrm L}=\operatorname{Var}_U[g_U]$. Let $\Pi_A=\ket{0^k}\!\bra{0^k}$ on the first block and let $\Pi_B:=\bigotimes_{j=3}^{m-1}\ket{0^k}\!\bra{0^k}$ on blocks $3,\ldots,m-1$. Define
    \begin{align}
        \rho
        &=
        \Pi_A\otimes
        \frac{I_k}{D}\otimes
        \Pi_B\otimes
        \frac{I_k}{D},
        \nonumber\\
        O
        &=
        \Pi_A\otimes I_k\otimes
        \Pi_B\otimes I_k.
        \label{app:eq:multishot_counterexample_state_observable}
    \end{align}
    Write the $j$th first-layer block as $L_jR_j$. The two identity blocks ensure that the light cones of $\mathsf A=L_1R_1$ and $\mathsf B=L_3R_3\cdots L_{m-1}R_{m-1}$ remain disjoint under the action of the two-layer block unitary $U$. We denote the corresponding extended regions by
    \begin{equation}
        \mathsf{eA}
        :=
        R_m\,\mathsf A\,L_2,
        \qquad
        \mathsf{eB}
        :=
        R_2\,\mathsf B\,L_m.
        \label{app:eq:extended_region_decomposition}
    \end{equation}
    Thus, the relevant support pattern separates schematically as
    \begin{equation}
        \mathsf A\,(L_2R_2)\,\mathsf B\,(L_m R_m)
        \xrightarrow{U} 
        \mathsf{eA}\,\mathsf{eB}.
    \end{equation}
    
    We first show that $\mathcal{M}^{-1}(O)$ factorizes across $\mathsf{eA}$ and $\mathsf{eB}$.
    Write the Pauli expansions $\Pi_A=\sum_Pa_PP$ and $\Pi_B=\sum_Qb_QQ$, and let $\mathbb K(P)$ and $\mathbb K(Q)$ denote the products of transfer matrices determined by the visibility type vectors of $P$ and $Q$. The rank-one decomposition of $\mathbb{K}_0$ gives
    \begin{align}
        m_{P\otimes I_k\otimes Q\otimes I_k}
        &=
        \operatorname{Tr}
        \left(
            \mathbb K_0\mathbb K(P)
            \mathbb K_0\mathbb K(Q)
        \right)
        \nonumber\\
        &=
        \operatorname{Tr}
        \left(
            \mathbb K_0\mathbb K(P)
        \right)
        \operatorname{Tr}
        \left(
            \mathbb K_0\mathbb K(Q)
        \right)\\
        &=
        m_{\mathsf{eA}}(P)m_{\mathsf{eB}}(Q),
        \label{app:eq:counterexample_visibility_factorization}
    \end{align}
    where $m_{\mathsf{eA}}(P)=\operatorname{Tr}(\mathbb K_0\mathbb K(P))$ and $m_{\mathsf{eB}}(Q)=\operatorname{Tr}(\mathbb K_0\mathbb K(Q))$.
    Therefore,
    \begin{equation}
        \mathcal M^{-1}(O)
        =
        \widetilde{\Pi}_A\otimes I_k
        \otimes\widetilde{\Pi}_B\otimes I_k,
        \label{app:eq:counterexample_inverse_factorization}
    \end{equation}
    where
    \begin{equation}
        \widetilde{\Pi}_A
        :=
        \sum_P
        \frac{a_P}
        {m_{\mathsf{eA}}(P)}P,
        \qquad
        \widetilde{\Pi}_B
        :=
        \sum_Q
        \frac{b_Q}
        {m_{\mathsf{eB}}(Q)}Q.
    \end{equation}
    Both $\rho$ and $\mathcal M^{-1}(O)$ retain the separation of $\mathsf A$ and $\mathsf B$ by the intervening identity blocks. Hence, $g_U$ factorizes into a product of independent random variables:
    \begin{align}
        g_U 
        &= 
        \sum_b
        \operatorname{Tr}(U\rho U^\dagger\ket b\!\bra b)
        \operatorname{Tr}\left(\mathcal M^{-1}(O) U^\dagger\ket b\!\bra bU\right),\\
        &=\sum_b
        \raisebox{-1.5cm}{\includegraphics[height=3cm]{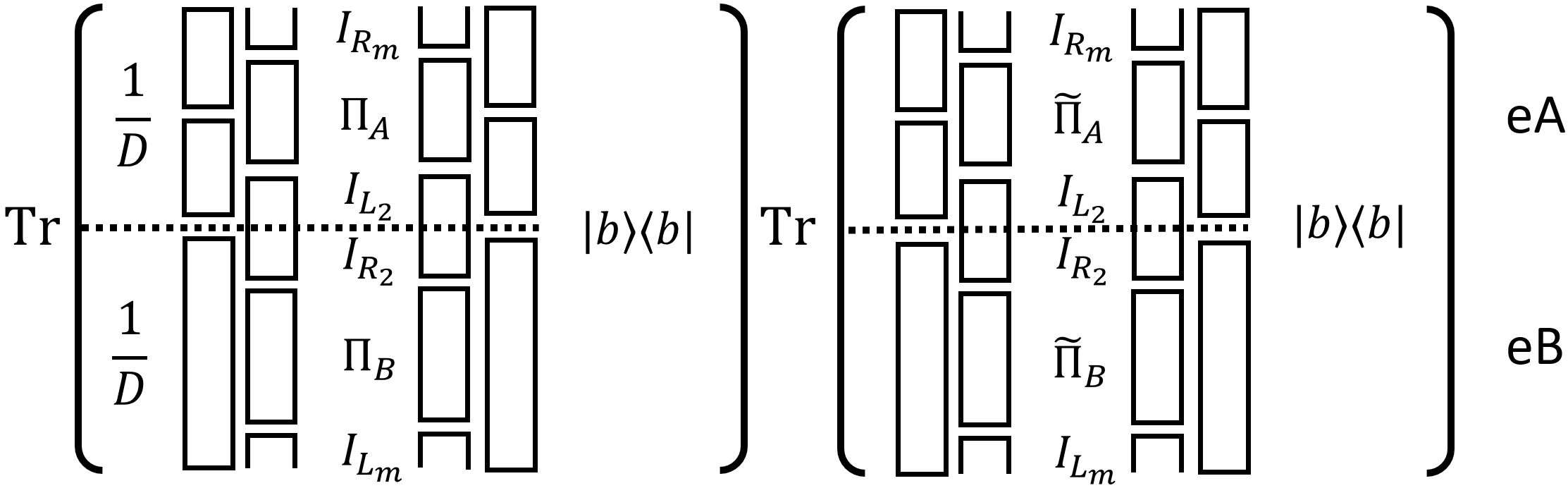}}\\
        &=\sum_{b_{\texttt{eA}}, b_{\texttt{eB}}}
        \raisebox{-1.5cm}{\includegraphics[height=3cm]{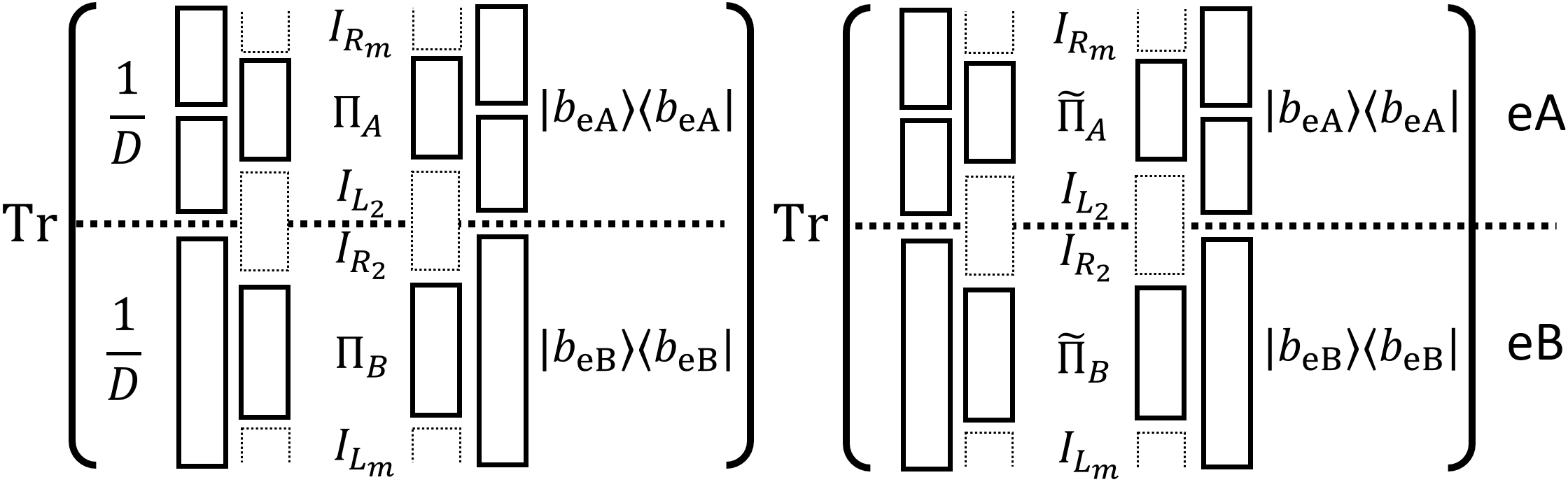}}\\
        &=\sum_{b_{\texttt{eA}}, b_{\texttt{eB}}}
        \raisebox{-1.5cm}{\includegraphics[height=3cm]{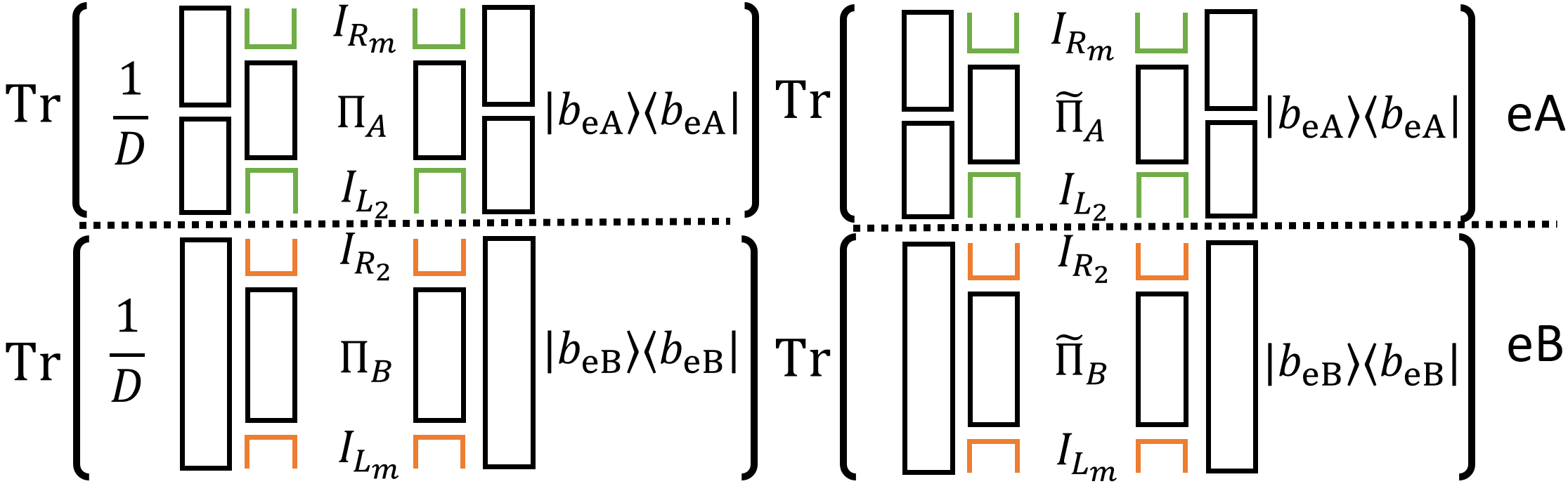}}\label{app:eq:orange_green}\\
        &=
        \sum_{b_{\texttt{eA}}}
        \operatorname{Tr}(U_{\texttt{eA}} \rho_{\texttt{eA}}U_{\texttt{eA}}^{\dagger} \ket{b_{\texttt{eA}}}\!\bra{b_{\texttt{eA}}})
        \operatorname{Tr}(\mathcal{M}_{\texttt{eA}}^{-1}(O_{\texttt{eA}})U_{\texttt{eA}}^{\dagger} \ket{b_{\texttt{eA}}}\!\bra{b_{\texttt{eA}}}U_{\texttt{eA}})\\
        &\nonumber \qquad \times
        \sum_{b_{\texttt{eB}}}
        \operatorname{Tr}(U_{\texttt{eB}} \rho_{\texttt{eB}}U_{\texttt{eB}}^{\dagger} \ket{b_{\texttt{eB}}}\!\bra{b_{\texttt{eB}}})
        \operatorname{Tr}(\mathcal{M}_{\texttt{eB}}^{-1}(O_{\texttt{eB}})U_{\texttt{eB}}^{\dagger} \ket{b_{\texttt{eB}}}\!\bra{b_{\texttt{eB}}}U_{\texttt{eB}})
        \\
        &=
        X_{\texttt{eA}} X_{\texttt{eB}},
    \end{align}
    where, for $R\in\{\mathsf{eA},\mathsf{eB}\}$, $\mathcal M_R$ is the shadow channel of the periodic two-layer unitary ensemble on $R$, and
    \begin{align}
        &X_{R}
        =
        \sum_{b_{R}}
        \operatorname{Tr}(U_{R} \rho_{R}U_{R}^{\dagger} \ket{b_{R}}\!\bra{b_{R}})
        \operatorname{Tr}(\mathcal{M}_{R}^{-1}(O_{R})U_{R}^{\dagger} \ket{b_{R}}\!\bra{b_{R}}U_{R}),\\
        &\rho_{\texttt{eA}}=\frac{1}{D}O_{\texttt{eA}} = \frac{1}{D}I_{R_m}\otimes \Pi_A \otimes I_{L_2},\\
        &\rho_{\texttt{eB}}=\frac{1}{D}O_{\texttt{eB}} = \frac{1}{D}I_{R_2}\otimes \Pi_B \otimes I_{L_m}.
    \end{align}
    The unitary blocks connecting $\mathsf{eA}$ and $\mathsf{eB}$ cancel because the adjacent state and observable factors are identities.
    Independent auxiliary random unitary blocks can therefore be inserted to impose periodic boundary conditions on the two regions, shown as the green and orange blocks in Eq.~\eqref{app:eq:orange_green}. Each part is then described by its local shadow channel, $\mathcal M_{\mathsf{eA}}$ or $\mathcal M_{\mathsf{eB}}$. Thus, $X_{\mathsf{eA}}$ and $X_{\mathsf{eB}}$ are the local unbiased shadow estimators on the two regions. 
    Although this auxiliary circuit differs from the original two-layer circuit, it preserves the estimator statistics for the chosen $\rho$ and $O$ and makes the two random variables independent. Their expectations are therefore
    \begin{align}
        &\mathbb E_{U_{\texttt{eA}}}[X_{\mathsf{eA}}]
        =\operatorname{Tr}(O_{\texttt{eA}}\rho_{\texttt{eA}})=1
        ,\\
        &\mathbb E_{U_{\texttt{eB}}}[X_{\mathsf{eB}}]
        =\operatorname{Tr}(O_{\texttt{eB}}\rho_{\texttt{eB}})=1.
    \end{align}
    Hence
    \begin{align}
        V_{\mathrm{floor}}^{2\mathrm L}
        &=
        \operatorname{Var}
        \left[
            X_{\mathsf{eA}}X_{\mathsf{eB}}
        \right]
        \nonumber\\
        &=
        \operatorname{Var}[X_{\mathsf{eA}}]
        +
        \operatorname{Var}[X_{\mathsf{eB}}]
        +
        \operatorname{Var}[X_{\mathsf{eA}}]
        \operatorname{Var}[X_{\mathsf{eB}}]
        \nonumber\\
        &\geq
        \operatorname{Var}[X_{\mathsf{eA}}].
        \label{app:eq:variance_floor_extended_region_reduction}
    \end{align}

    It remains to analyze the smaller region $\mathsf{eA}$. Let $U_{\texttt{eA}}=VW$ where $W$ denote the first-layer Haar gate on $\mathsf A$, and $V$ denote
    the two adjacent second-layer Haar gates. Define
    \begin{align}
        p_W
        &:=
        \operatorname{Tr}
        \left[
            \left(
                \operatorname{Tr}_{R_1}\!(W\Pi_{\texttt{A}}W^\dagger)
            \right)^2
        \right].
        \label{app:eq:local_reduced_purity}
    \end{align}
    Then Lemma~\ref{lem:two_block_haar_average} gives
    \begin{align}
        \mathbb E_V[ X_{\mathsf{eA}}\mid W]
        &=
        \mathbb{E}_V 
        \raisebox{-0.68cm}{\includegraphics[height=1.5cm]{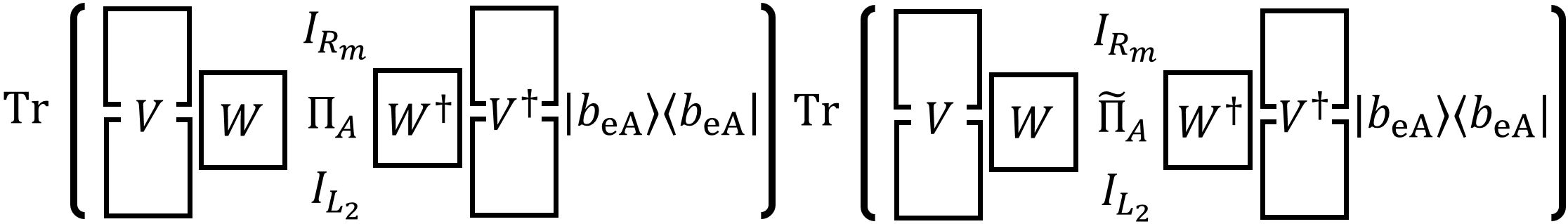}}
        \\
        &=
        \frac{2\sqrt D(D+1)}{3D+1}p_W
        +
        (\mathrm{const.}).
        \label{app:eq:conditional_extended_region_mean}
    \end{align}
    From Fact~\ref{appx:fact:purity_var}, 
    \begin{equation}
        \operatorname{Var}_W[p_W]=\frac{2(D-1)^2}{(D+1)^2(D+2)(D+3)}.
    \end{equation}
    
    Then the law of total variance and Eq.~\eqref{app:eq:conditional_extended_region_mean} imply
    \begin{align}
        V_{\mathrm{floor}}^{2\mathrm L}
        &\geq
        \operatorname{Var}_{V,W}[X_{\mathsf{eA}}]\\
        &=
        \operatorname{Var}_W[\mathbb{E}_{V}[X_{\mathsf{eA}} | W]]
        +
        \mathbb{E}_W[\operatorname{Var}_V[X_{\mathsf{eA}}|W]]\\
        &\geq
        \operatorname{Var}_W[\mathbb{E}_{V}[X_{\mathsf{eA}} | W]]\\
        &=
        \left(
            \frac{
                2\sqrt D(D+1)
            }{
                3D+1
            }
        \right)^2
        \operatorname{Var}_W[p_W]\\
        &=
        \left(
            \frac{
                2\sqrt D(D+1)
            }{
                3D+1
            }
        \right)^2
        \frac{
            2(D-1)^2
        }{
            (D+1)^2(D+2)(D+3)
        }\\
        &=
        \frac{
            8D(D-1)^2
        }{
            (3D+1)^2(D+2)(D+3)
        }
        >
        \frac{1}{7D}
        \label{app:eq:two_layer_floor_lower_bound}
    \end{align}
    where the last inequality holds for $D\geq4$. We finally compare this lower bound with the global exact unitary $4$-design value. Since $O$ is a rank-$D^2$ projector and the total Hilbert-space dimension is $D^m$, $\lVert O_0\rVert_2^2=D^2-D^{4-m}\leq D^2$. Lemma~\ref{app:lem:exact_four_design_multishot_variance} gives
    \begin{equation}
        V_{\mathrm{floor}}^{\mathrm{Haar}}
        \leq
        \frac{4}{D^m}
        \lVert O_0\rVert_2^2
        \leq
        4D^{2-m}.
        \label{app:eq:exact_design_counterexample_floor}
    \end{equation}
    Combining Eqs.~\eqref{app:eq:two_layer_floor_lower_bound} and~\eqref{app:eq:exact_design_counterexample_floor}, we obtain
    \begin{align}
        \frac{
            V_{\mathrm{floor}}^{2\mathrm L}
        }{
            V_{\mathrm{floor}}^{\mathrm{Haar}}
        }
        &>
        \frac{
            (7D)^{-1}
        }{
            4D^{2-m}
        }
        =
        \frac{D^{m-3}}{28}\\
        &=
        \frac{1}{28}
        2^{n-3k}
        =
        \Omega
        \left(
            2^{n-3k}
        \right).
    \end{align}
    Thus, matching the variance scaling of a global exact unitary $4$-design requires a block size $k=\Omega(n)$.
\end{proof}

\section{Proof of Theorem~\ref{main:thm:3}}
Let $\mathcal E_L^{\mathrm{all}}$ denote the ensemble of depth-$L$ all-to-all random two-qubit circuits on $n$ qubits. Each layer contains independent Haar-random two-qubit gates acting on an arbitrary set of disjoint qubit pairs. The interaction pattern can be fixed in advance and need not cover every qubit in each layer; we assume only that every qubit participates in at least one gate over the full circuit. We take the two-qubit gates in $\mathrm{SU}(4)$, since global phases do not affect the measurement statistics.

The counterexample in the previous section uses the two-layer structure to express $g_U$ in Eq.~\eqref{app:eq:g_U} as a product of two independent random variables. This separation cannot be assumed for all-to-all circuits, where the interaction light cone of a single qubit can cover the entire system within depth $O(\log n)$. We instead consider the event $E$ that every gate acting on a fixed qubit is close to the identity. Conditioned on the event $E$, the circuit is close to one in which that qubit is separated from the remaining system. We show that $\Pr(E)\geq \exp[-O(L\log L)]$, which gives a lower bound on the variance floor. First, we use the following fact about the Haar measure.

\begin{fact}[\cite{oszmaniec2021epsilon}]\label{app:fact:haar_small_ball}
    There exists a constant $a>0$ such that, for
    $U\sim\operatorname{Haar}(SU(4))$ and $0<R\leq 1$,
    \begin{equation}
        \Pr_U\!\left[
            \|U-I\|_\infty\leq R
        \right]
        \geq
        aR^{15}.
    \end{equation}
\end{fact}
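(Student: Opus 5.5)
The idea is to pull the event back to the Lie algebra through the exponential map, and compare Haar measure with Lebesgue measure on a small ball there. Let $\mathfrak{su}(4)$ be the $15$-dimensional real vector space of traceless Hermitian $4\times4$ matrices. Let $B_R:=\{H\in\mathfrak{su}(4):\|H\|_\infty\le R\}$. This is a convex body, homogeneous of degree one in $R$, so $\mathrm{vol}(B_R)=R^{15}\,\mathrm{vol}(B_1)$ with $\mathrm{vol}(B_1)>0$.

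\emph{Step 1: containment.} For $H\in B_R$, diagonalize $H$ to get $\|e^{iH}-I\|_\infty=\max_j|e^{i\lambda_j}-1|\le\max_j|\lambda_j|\le R$, using $|e^{i\theta}-1|\le|\theta|$. Hence
\begin{equation}
\{\|U-I\|_\infty\le R\}\supseteq\exp(iB_R),
\end{equation}
and it suffices to lower bound the Haar measure of $\exp(iB_R)$.

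\emph{Step 2: exponential coordinates.} On $B_1$ all eigenvalues lie in $[-1,1]$, so $\exp(i\,\cdot)$ is injective there, since the spectral logarithm with branch in $(-\pi,\pi)$ inverts it. It is also a diffeomorphism onto its image, as no eigenvalue differences reach $2\pi$. The pullback of the normalized Haar measure in these coordinates has the standard density
\begin{equation}
J(H)=c_0\prod_{i<j}\left(\frac{\sin((\lambda_i-\lambda_j)/2)}{(\lambda_i-\lambda_j)/2}\right)^{2},
\end{equation}
where $c_0>0$ is a normalizing constant and $\lambda_i$ are the eigenvalues of $H$. This follows from the differential of $\exp$, $\frac{1-e^{-\mathrm{ad}}}{\mathrm{ad}}$, whose determinant over the root spaces gives the sinc factors. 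On $B_1$ we have $|\lambda_i-\lambda_j|\le2$, so every factor is at least $\mathrm{sinc}(1)^2>0$. Therefore $J\ge c:=c_0\,\mathrm{sinc}(1)^{12}>0$ uniformly on $B_1$.

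\emph{Step 3: conclusion.} Combining the two steps, for $0<R\le1$,
\begin{equation}
\Pr\big[\|U-I\|_\infty\le R\big]\ge\int_{B_R}J\ge c\,\mathrm{vol}(B_1)\,R^{15},
\end{equation}
so the statement holds with $a=c\,\mathrm{vol}(B_1)$.

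The only nonroutine ingredient is Step 2, the Jacobian of the exponential map. If one prefers to avoid the explicit formula, there is a softer route. By smoothness and the inverse function theorem, the Haar density in exponential coordinates is continuous and positive at $H=0$. Hence it is bounded below by some $c>0$ on a small ball $B_\delta$, which covers $R\le\delta$. For $\delta<R\le1$ one can use monotonicity: the probability is at least its value at $\delta$, which is at least $a'\delta^{15}\ge a'\delta^{15}R^{15}$. Shrinking the constant then gives the claim for all $R\le1$.
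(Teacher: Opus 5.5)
Your proof is correct, but it does not follow the paper. The paper gives no argument and simply imports the bound from Oszmaniec, Sawicki and Horodecki. There it is a special case of general estimates on the Haar volume of small operator-norm balls in $SU(d)$, of the form $(cR)^{d^2-1}$.

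You instead prove the $d=4$ case from scratch:
\begin{itemize}
\item the containment $\exp(iB_R)\subseteq\{\|U-I\|_\infty\le R\}$;
\item injectivity of $H\mapsto e^{iH}$ on $B_1$ via the principal spectral logarithm, which is what makes $\mu(\exp(iB_R))=\int_{B_R}J$ an equality rather than only an upper bound from the area formula;
\item the uniform bound $J\ge c_0\,\mathrm{sinc}(1)^{12}$ on $B_1$, from the determinant of $(1-e^{-\mathrm{ad}})/\mathrm{ad}$ over the six pairs of root spaces, using $|\lambda_i-\lambda_j|\le 2<2\pi$;
\item the homogeneity $\mathrm{vol}(B_R)=R^{15}\,\mathrm{vol}(B_1)$.
\end{itemize}
Your fallback route, via continuity and positivity of the density at $H=0$ plus monotonicity in $R$, is also valid.

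The cited result gives a statement for all $d$ with explicit constants. Your route gives a self-contained proof that needs only the standard Jacobian of the exponential map, at the price of an unspecified constant. That constant is all the paper needs: in the all-to-all depth lower bound, $a$ is absorbed into $c_1$, and only the exponent $15=\dim SU(4)$ matters.
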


\begin{theorem}[Multi-shot depth lower bound]
\label{app:thm:all_to_all_depth_lower_bound}
    Consider the ancilla-free all-to-all random two-qubit circuit ensemble $\mathcal E_L^{\mathrm{all}}$ on $n$ qubits with depth $L$.  There exists a state and a Pauli observable such that
    \begin{equation}
        \frac
        {V_{\mathrm{floor}}^{\mathrm{all},L}}
        {V_{\mathrm{floor}}^{\mathrm{Haar}}}
        \geq
        d\left(\frac{c_1}{L}\right)^{c_2L}
    \end{equation}
    for constants $c_1,c_2>0$. Consequently, matching the global exact unitary $4$-design variance floor within a constant factor uniformly over all states and observables requires $L\log L=\Omega(n)$, and hence $L=\Omega(n/\log n)$.
\end{theorem}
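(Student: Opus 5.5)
Fix a qubit $q$ and choose $\rho=\ket{0}\!\bra{0}_q\otimes I/2^{n-1}$ and $O=Z_q$. Then $\|O_0\|_2^2=d$, and Lemma~\ref{app:lem:exact_four_design_multishot_variance} gives $V_{\mathrm{floor}}^{\mathrm{Haar}}\le 4$. It therefore suffices to show $V_{\mathrm{floor}}^{\mathrm{all},L}\ge d\,(c_1/L)^{c_2L}$.

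The ensemble is Pauli invariant, since Haar two-qubit gates absorb local Paulis. Hence $\mathcal M(Z_q)=m_{Z_q}Z_q$ with $m_{Z_q}=\mathbb E_U[m_{Z_q}(U)]$ (Fact~\ref{app:fact:m_P(U)}). Because the rest of $\rho$ is maximally mixed, a direct computation gives
\[
g_U=\sum_b \frac{1+\bra b UZ_qU^\dagger\ket b}{d}\,\frac{\bra b UZ_qU^\dagger\ket b}{m_{Z_q}}=\frac{m_{Z_q}(U)}{m_{Z_q}},
\]
where the linear term vanishes because $\operatorname{Tr}(Z_q)=0$. Thus $V_{\mathrm{floor}}=\operatorname{Var}_U[m_{Z_q}(U)]/m_{Z_q}^2\ge \Pr(E)\,(\mathbb E[m_{Z_q}(U)\mid E]-m_{Z_q})^2/m_{Z_q}^2$ for any event $E$.

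Let $E$ be the event that every gate touching $q$, of which there are at most $L$, lies within $R$ of the identity in operator norm. By Fact~\ref{app:fact:haar_small_ball} and independence of the gates, $\Pr(E)\ge (aR^{15})^L$. On $E$, replace these gates by $I$ to obtain a circuit $U'$ that acts trivially on $q$. Then $m_{Z_q}(U')=1$, and by the triangle inequality $\|U-U'\|_\infty\le LR$. Lemma~\ref{app:lem:m_P_L_bound} then gives $m_{Z_q}(U)\ge 1-4LR$ on $E$. Choosing $R=1/(8L)$ makes the conditional mean at least $1/2$.

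The remaining step, which is the crux, is the upper bound on $m_{Z_q}$. We need $m_{Z_q}\le 1/4$ for all $L\ge1$, which holds as soon as $q$ meets at least one Haar gate. Consider the last gate acting on $q$, on qubits $(q,q')$. After it, $UZ_qU^\dagger$ is conjugated by later gates that do not touch $q$. The average of $m_{Z_q}(U)$ over that Haar gate equals the average of $\frac1d\sum_b\bra b U_{\mathrm{rest}}(VPV^\dagger)U_{\mathrm{rest}}^\dagger\ket b^2$, where $VPV^\dagger$ is a Haar conjugate of $Z$ on the pair. This average is at most the two-qubit Haar value $\mathbb E\,\frac14\sum_{b}\bra b V Z V^\dagger\ket b^2=1/5$. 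The bound follows because, after twirling, $VZ_qV^\dagger$ has second moment $\frac{1}{15}\sum_{P\neq I}P\otimes P$ on the pair, and each Pauli contributes a visibility of at most $1$ after the later circuit. One must check that later gates cannot raise the weight-$\le2$ Pauli visibilities beyond $1$, which is immediate since $m_P(U)\le1$.

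This yields $V_{\mathrm{floor}}\ge (a(8L)^{-15})^L\cdot(1/2-1/5)^2\cdot 25$, so the ratio is at least $\tfrac{25\cdot 0.09}{4}(a(8L)^{-15})^L$. To reach the stated form with a factor $d$, the observable must instead have Frobenius norm $O(1)$ while the floor stays large. Replace $O$ by $Z_q/\sqrt d$ and keep $\rho$. Then $\|O_0\|_2^2=1$, the Haar floor becomes $\le 4/d$, and $V_{\mathrm{floor}}^{\mathrm{all},L}$ scales by $1/d$. The ratio is therefore $\Omega((c_1/L)^{c_2L})$ times $d/d$, which is not yet enough.

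Better is to take $\rho=\ket0\!\bra0_q\otimes\sigma$ with $\sigma$ pure on the rest and $O=Z_q\otimes\Pi$, so that $\|O_0\|_2^2=O(1)$ and the Haar floor is $O(1/d)$. Conditional on $E$, the fluctuation from the $q$-factor, obtained by the same Lipschitz argument applied through the factorization of $U'$, persists at order $1$. This gives the ratio $d\,(c_1/L)^{c_2L}$. Checking that the remaining register contributes a mean-one factor, as in the product-variance argument of Theorem~\ref{app:thm:multishot_depth_lower_bound}, is where I expect the main difficulty. Setting the ratio to $O(1)$ then forces $c_2L\log(L/c_1)\ge n\ln 2-O(1)$, i.e.\ $L=\Omega(n/\log n)$.
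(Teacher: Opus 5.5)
There is a genuine gap, and it lies on the Haar side, not the all-to-all side.

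\textbf{The missing factor $d$.} You bound $V_{\mathrm{floor}}^{\mathrm{Haar}}\le 4$ using Lemma~\ref{app:lem:exact_four_design_multishot_variance}. That lemma replaces $\operatorname{Tr}(\rho^2)$ and related quantities by their worst-case values. Your state $\rho=(I+Z_q)/d$ has purity $2/d$, so the bound is loose by a factor $d$.

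Compute the Haar floor directly instead. For a global $4$-design, $g_U=\frac{d+1}{d}\sum_b x_b^2$ with $x_b=\bra{b}UZ_qU^{\dagger}\ket{b}$. The Haar moments are $\mathbb E x_b^4=\frac{3}{(d+1)(d+3)}$ and, for $b\neq c$, $\mathbb E x_b^2x_c^2=\frac{1}{(d-1)(d+3)}$. These give $V_{\mathrm{floor}}^{\mathrm{Haar}}=\frac{2(d+2)}{d(d+3)}\le 2/d$, which is exactly what the paper uses. With this value, the first half of your argument already gives a ratio of at least $\frac{9d}{8}\bigl(a(8L)^{-15}\bigr)^L$. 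The theorem then follows with no change of observable.

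By contrast, the reduced goal you set, $V_{\mathrm{floor}}^{\mathrm{all},L}\ge d(c_1/L)^{c_2L}$, is false for this pair. Since $0\le m_{Z_q}(U)\le 1$, you have $V_{\mathrm{floor}}^{\mathrm{all},L}\le \mathbb E[m_{Z_q}(U)^2]/m_{Z_q}^2\le 1/m_{Z_q}\le 2/\Pr(E)$, which is $O(1)$ at constant $L$.

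Your replacement construction ($\sigma$ pure, $O=Z_q\otimes\Pi$) is left unproven, and the step you flag is not routine. In the all-to-all model there is no exact light-cone separation. So neither $\mathcal M^{-1}(O)$ nor $g_U$ factorizes into independent mean-one pieces as in Theorem~\ref{app:thm:multishot_depth_lower_bound}; the event $E$ only approximately decouples $q$. As written, the proposal does not establish the factor $d$.

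\textbf{The bound $m_{Z_q}\le 1/5$.} The claim is true, but your justification does not give it.

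First, you must twirl over the \emph{first} gate (in time order) acting on $q$. Only then is the twirled operator $Z_q$ itself. The later circuit $U'$ may act on $q$ again, so ``later gates do not touch $q$'' is the wrong picture.

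Second, after twirling you get $\mathbb E_V\, m_{Z_q}(U)=\frac{1}{15}\sum_{P\neq I}m_P(U')$ over the fifteen nonidentity Paulis on the pair. Bounding each $m_P(U')\le 1$ only yields $\le 1$. What you need is a purity bound. Let $\sigma_b$ be the reduced state on the pair of $U'^{\dagger}\ket{b}$. Then $\sum_{P\neq I}\operatorname{Tr}(P\sigma_b)^2=4\operatorname{Tr}(\sigma_b^2)-1\le 3$, which gives $1/5$.

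The paper uses the single-qubit version of this. Isotropy of the gate on qubit $1$ gives $m_{X_1}=m_{Y_1}=m_{Z_1}$, and $2\operatorname{Tr}(\sigma_{U,b}^2)-1\le 1$ then gives $m_{Z_1}\le 1/3$.

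With these two repairs your argument coincides with the paper's proof. The paper uses the same state, observable, small-ball event and Lipschitz bound, with $R=1/(16L)$ and threshold $3/4$.
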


\begin{proof}
    Define
    \begin{equation}
        P=Z_1,
        \qquad
        \rho=\frac{I+P}{d}.
    \end{equation}
    Since every qubit is acted on by an independent Haar-random two-qubit gate, $\mathcal E_L^{\mathrm{all}}$ is Pauli invariant. By Fact~\ref{app:fact:m_P(U)},
    \begin{equation}
        \mathcal M_{\mathcal E_L^{\mathrm{all}}}(P)
        =
        m_P P,
        \qquad
        m_P
        =
        \mathbb E_U[m_P(U)],
        \qquad
        m_P(U)
        =
        \frac{1}{d}
        \sum_b
        \bra{b}UPU^\dagger\ket{b}^2.
    \end{equation}
    
    For each $U$ and $b$, define the reduced state $\sigma_{U,b}:=\operatorname{Tr}_{2,\ldots,n}\left(U^\dagger\ket{b}\!\bra{b}U\right)$.
    By the isotropy of the Haar-random gate acting on the first qubit, $m_{X_1}=m_{Y_1}=m_{Z_1}$. Therefore,
    \begin{align}
        m_P
        &=
        \frac{1}{d}
        \mathbb E_U
        \sum_b
        \operatorname{Tr}(\sigma_{U,b} Z)^2\\
        &=
        \frac{1}{3d}
        \mathbb E_U
        \sum_b
        \bigl(
        \operatorname{Tr}(\sigma_{U,b} X)^2+\operatorname{Tr}(\sigma_{U,b} Y)^2+\operatorname{Tr}(\sigma_{U,b} Z)^2
        \bigr)\\
        &=
        \frac{1}{3d}
        \mathbb E_U
        \sum_b
        \left(
        2\operatorname{Tr}(\sigma_{U,b}^2)-1
        \right)\leq
        \frac13 \label{app:eq:mean_visibility_upper_bound},
    \end{align}
    
        Let $E$ be the event that all two-qubit gates $\{U_i\}_i$ acting on the
    first qubit satisfy
    \begin{equation}
        \|U_i-I\|_\infty
        \leq
        R.
    \end{equation}
    There are at most $L$ such gates. Their independence and
    Fact~\ref{app:fact:haar_small_ball} imply
    \begin{equation}
        \Pr(E)
        \geq
        \left(aR^{15}\right)^L.
    \end{equation}
    
    Conditioned on the event $E$, replace every gate acting on the first qubit by the identity and denote the circuit obtained in this way by $U_0$. The circuits $U$ and $U_0$ differ at at most $L$ gates, and the operator-norm difference at each gate is at most $R$. No gate in $U_0$ acts on the first qubit, so $m_P(U_0)=1\{Z_1 \in \pm \mathcal Z\}=1$. Telescoping the circuit products gives
    \begin{equation}
        \|U-U_0\|_\infty
        \leq
        LR.
    \end{equation}
    
    By Lemma~\ref{app:lem:m_P_L_bound}, taking $R=1/(16L)$ gives,
    conditioned on the event $E$,
    \begin{align}
        m_P(U)
        &\geq
        m_P(U_0)-|m_P(U)-m_P(U_0)|
        \\
        &\geq
        m_P(U_0)-4\|U-U_0\|_\infty
        \\
        &\geq
        1-4LR
        =
        \frac{3}{4} \label{app:eq:conditional_visibility_lower_bound}.
    \end{align}
        For the chosen state and observable, Eq.~\eqref{app:eq:variance_floor}
    and $\sum_b\bra{b}UPU^\dagger\ket{b}=0$ give
    \begin{align}
        V_{\mathrm{floor}}^{\mathrm{all},L}
        &=
        \operatorname{Var}_{U}
        \left[
        \sum_{b} \operatorname{Tr}(U\rho U^{\dagger}\ket{b}\!\bra{b})\operatorname{Tr}(P\mathcal{M}^{-1}(U^{\dagger}\ket{b}\!\bra{b}U)) 
        \right]
        \\
        &=
        \operatorname{Var}_U
        \left[
            \frac{1}{dm_P}
            \sum_b
            \left(
                1+\bra{b}UPU^\dagger\ket{b}
            \right)
            \bra{b}UPU^\dagger\ket{b}
        \right]
        \\
        &=
        m_P^{-2}\operatorname{Var}_U[m_P(U)].
    \end{align}
    By Eq.~\eqref{app:eq:mean_visibility_upper_bound}, $m_P\leq1/3$, while Eq.~\eqref{app:eq:conditional_visibility_lower_bound} gives $m_P(U)\geq3/4$ on the event $E$. Therefore,
    \begin{align}
        V_{\mathrm{floor}}^{\mathrm{all},L}
        &=
        m_P^{-2}
        \mathbb E_U
        \left[
            \left(m_P(U)-m_P\right)^2
        \right]
        \\
        &\geq
        m_P^{-2}
        \Pr(E)\,
        \mathbb E_U\!
        \left[
            \left(m_P(U)-m_P\right)^2 \mid E \,
        \right]
        \\
        &\geq
        \Pr(E)m_P^{-2}
        \left(
            \frac{3}{4}-m_P
        \right)^2
        \\
        &\geq
        \frac{25}{16}\Pr(E)
        \geq
        \frac{25}{16}
        \left(
            \frac{a^{1/15}}{16L}
        \right)^{15L}.
    \end{align}
    For the same state and observable, the Haar fourth-moment
    identity~\cite{anshu2022distributed} gives
    \begin{equation}
        V_{\mathrm{floor}}^{\mathrm{Haar}}
        =
        \frac{2(d+2)}{d(d+3)}
        \leq
        \frac{2}{d}.
    \end{equation}
    Consequently,
    \begin{align}
        \frac{
            V_{\mathrm{floor}}^{\mathrm{all},L}
        }{
            V_{\mathrm{floor}}^{\mathrm{Haar}}
        }
        &\geq
        \frac{25d}{32}
        \left(
            \frac{a^{1/15}}{16L}
        \right)^{15L}
        \geq
        d\left(\frac{c_1}{L}\right)^{c_2L},
    \end{align}
    for a sufficiently small constant $c_1>0$ and $c_2=15$. Thus, the two variance floors can differ by at most a constant factor only if $L\log L=\Omega(n)$, which implies $L=\Omega(n/\log n)$.
\end{proof}

\section{Learning guarantee with approximate strong unitary designs}
\label{app:sec:strong_design_learning}

For a unitary ensemble $\mathcal E$, its $t$-th moment channel is
\begin{equation}
    \Phi_{\mathcal E}^{(t)}(X)
    :=
    \mathbb E_{U\sim\mathcal E}
    \left[
        U^{\otimes t}
        X
        U^{\dagger,\otimes t}
    \right].
\end{equation}
This standard moment channel fixes the same orientation of the unitary $U$ in all $t$ copies. To include moments with different orientations, let $p$ and $q$ be nonnegative integers satisfying $p+q=t$ and define
\begin{equation}
    \Phi_{\mathcal E}^{(p,q)}(X)
    :=
    \mathbb E_{U\sim\mathcal E}
    \left[
        \left(
            U^{\otimes p}\otimes U^{*,\otimes q}
        \right)
        X
        \left(
            U^{\dagger,\otimes p}\otimes U^{T,\otimes q}
        \right)
    \right].
    \label{app:eq:mixed_moment_channel}
\end{equation}
We call $\mathcal E$ an exact strong unitary $t$-design if $\Phi_{\mathcal E}^{(p,q)}=\Phi_{\mathrm H}^{(p,q)}$ for every $p+q=t$. The case $q=0$ reduces to the ordinary $t$-th moment channel.

There are several ways to define an approximate strong design~\cite{schuster2025strong}. In additive error, one requires
\begin{equation}
    \left\|
        \Phi_{\mathcal E}^{(p,q)}
        -
        \Phi_{\mathrm H}^{(p,q)}
    \right\|_{\diamond}
    \leq
    \epsilon_{\mathrm{des}}
\end{equation}
for every $p+q=t$. This gives an additive bound on the output of the mixed moment channel. Measurable error instead requires the averaged output of every experiment making at most $t$ queries chosen from $U$, $U^\dagger$, $U^T$, and $U^*$ to be within $\epsilon_{\mathrm{des}}$ in trace norm of the corresponding Haar output.

In this work, we use relative error. An ensemble $\mathcal E$ is an $\epsilon_{\mathrm{des}}$-approximate strong unitary $t$-design in relative error if, for every $p+q=t$,
\begin{equation}
    (1-\epsilon_{\mathrm{des}})
    \Phi_{\mathrm H}^{(p,q)}
    \preceq
    \Phi_{\mathcal E}^{(p,q)}
    \preceq
    (1+\epsilon_{\mathrm{des}})
    \Phi_{\mathrm H}^{(p,q)}.
    \label{app:eq:relative_strong_design}
\end{equation}
Here, $\Phi_1\preceq\Phi_2$ means that $\Phi_2-\Phi_1$ is completely positive. This condition gives multiplicative control whenever the mixed moment channel is contracted with positive inputs and positive measurement operators. 

The relative error approximate strong unitary design condition has a substantially larger depth cost. Ordinary relative error approximate unitary $t$-designs can be implemented without ancillary qubits at depth
\begin{equation}
    \mathrm{Approximate~designs:}\qquad
    \mathrm{depth}
    =
    O\!\left(
    t\,\operatorname{poly}\log t\,
    \log(n/\epsilon_{\mathrm{des}})
    \right).
\end{equation}
For approximate strong unitary $t$-designs in relative error, 1D local random circuits achieve
\begin{equation}
    \mathrm{Approximate~strong~designs:}\qquad
    \mathrm{depth}
    =
    O\!\left(
    \log^7(t)
    \left(
    nt+\log(1/\epsilon_{\mathrm{des}})
    \right)
    \right),
\end{equation}
for $\epsilon_{\mathrm{des}}\geq 2t^2/2^n$~\cite{schuster2025strong}. Conversely, for $\epsilon_{\mathrm{des}}<1/4$, every 1D circuit satisfying Eq.~\eqref{app:eq:relative_strong_design} has depth $\Omega(n+t/\log(nt))$, even with ancillary qubits. Thus, for fixed $t$ and constant $\epsilon_{\mathrm{des}}$, the required depth is $\Theta(n)$ in 1D~\cite{schuster2025strong}.

\subsection{Variance comparison for ensemble-independent post-processing}

We first assume that the post-processing map $\mathcal C$ is independent of the unitary ensemble. The same map is therefore used for both $\mathcal E$ and the Haar ensemble. For a measurement outcome $\mathbf{b}=\{b_i\}_{i=1}^m$, consider the real estimator
\begin{align}
    \hat o(U,\mathbf{b})
    &:=
    \operatorname{Tr}\!\left[
        O\,
        \mathcal C\!\left(
            U^{(p,q)}BU^{(p,q),\dagger}
        \right)
    \right]
    \nonumber\\
    &=
    \operatorname{Tr}\!\left[
        A\,
        U^{(p,q)}BU^{(p,q),\dagger}
    \right],
    \label{app:eq:ensemble_independent_estimator}
\end{align}
where $A:=\mathcal C^\dagger(O)$ and $U^{(p,q)}:=U^{\otimes p}\otimes U^{*,\otimes q}$. The matrix $B$ may depend on the measurement outcomes $\{b_i\}_{i=1}^m$.

We use the vectorization convention $\langle\!\langle X|Y\rangle\!\rangle=\operatorname{Tr}(X^\dagger Y)$ and define
\begin{equation}
    R(U)
    :=
    U^{(p,q)}
    \otimes
    U^{(p,q),*}.
    \label{app:eq:vectorized_unitary_action}
\end{equation}
Then, $|U^{(p,q)}BU^{(p,q),\dagger}\rangle\!\rangle=R(U)|B\rangle\!\rangle$ and $\hat o=\langle\!\langle A^\dagger|R(U)|B\rangle\!\rangle$. To apply the relative error comparison directly to the variance $\operatorname{Var}[\hat o]$, rather than only to the second moment $\mathbb E[\hat o^2]$, we introduce the normalized vectorized identity
\begin{equation}
    |e\rangle\!\rangle
    :=
    \frac{|I\rangle\!\rangle}
    {\sqrt{\operatorname{Tr}(I)}} ,
    \qquad
    R(U)|e\rangle\!\rangle
    =
    |e\rangle\!\rangle.
    \label{app:eq:invariant_identity_vector}
\end{equation}
Here, $I$ is the identity acting on the same space as $B$. Introduce a two-dimensional auxiliary system with orthonormal states $|0\rangle$ and $|1\rangle$, and set
\begin{align}
    |\widetilde B\rangle\!\rangle
    &:=
    |B\rangle\!\rangle\otimes|0\rangle
    +
    |e\rangle\!\rangle\otimes|1\rangle,
    \nonumber\\
    |\widetilde A_c\rangle\!\rangle
    &:=
    |A^\dagger\rangle\!\rangle\otimes|0\rangle
    -
    c|e\rangle\!\rangle\otimes|1\rangle.
    \label{app:eq:augmented_vectors}
\end{align}
This auxiliary label is used only in the proof and is not an ancillary qubit in the measurement circuit. Orthogonality of $|0\rangle$ and $|1\rangle$ gives
\begin{equation}
    \langle\!\langle\widetilde A_c|
    \left(R(U)\otimes I_2\right)
    |\widetilde B\rangle\!\rangle
    =
    \hat o(U,\mathbf{b})-c.
    \label{app:eq:centered_estimator_amplitude}
\end{equation}

For $\nu\in\{\mathcal E,\mathrm H\}$, define the squared loss
\begin{equation}
    L_\nu(c)
    :=
    \mathbb E_{U\sim\nu}
    \sum_{b_1 \dots b_m}
    \prod_{i=1}^m \langle b_i|U\rho U^\dagger|b_i\rangle
    \bigl(\hat o(U,\mathbf{b})-c\bigr)^2.
    \label{app:eq:squared_loss}
\end{equation}
Using Eq.~\eqref{app:eq:centered_estimator_amplitude}, each term in this average can be written as
\begin{align}
    &\prod_{i=1}^m \langle b_i|U\rho U^\dagger|b_i\rangle
    \left|
        \langle\!\langle\widetilde A_c|
        \left(R(U)\otimes I_2\right)
        |\widetilde B\rangle\!\rangle
    \right|^2
    \nonumber\\
    &=
    \operatorname{Tr}\!\Big[
        \left(
            \bigotimes_{i=1}^m |b_i\rangle\!\langle b_i|
            \otimes
            |\widetilde A_c\rangle\!\rangle
            \langle\!\langle\widetilde A_c|
        \right)
        \left(
            U^{\otimes m}\otimes R(U)\otimes I_2
        \right)
        \left(
            \rho^{\otimes m}
            \otimes
            |\widetilde B\rangle\!\rangle
            \langle\!\langle\widetilde B|
        \right)
        \left(
            U^{\dagger,\otimes m}\otimes R(U)^\dagger\otimes I_2
        \right)
    \Big].
    \label{app:eq:positive_squared_loss}
\end{align}
Both
\begin{equation}
    \rho^{\otimes m}\otimes
    |\widetilde B\rangle\!\rangle
    \langle\!\langle\widetilde B|
    \quad\text{and}\quad
    \bigotimes_{i=1}^m |b_i\rangle\!\langle b_i|\otimes
    |\widetilde A_c\rangle\!\rangle
    \langle\!\langle\widetilde A_c|
\end{equation}
are positive semidefinite. Up to a reordering of tensor factors, $R(U)$ contains $p+q$ copies of $U$ and $p+q$ copies of $U^*$. Including the copy of $U^{\otimes m}$ from the Born probability, Eq.~\eqref{app:eq:positive_squared_loss} is governed by the mixed moment channel $\Phi_\nu^{(p+q+m,p+q)}$. Thus, the relative CP-order bounds for $\Phi_{\mathcal E}^{(p+q+m,p+q)}$ give, for every real $c$,
\begin{equation}
    (1-\epsilon_{\mathrm{des}})L_{\mathrm H}(c)
    \leq
    L_{\mathcal E}(c)
    \leq
    (1+\epsilon_{\mathrm{des}})L_{\mathrm H}(c).
    \label{app:eq:relative_squared_loss}
\end{equation}

We next relate the squared loss to the variance.

\begin{lemma}[Variance as the minimum squared loss]\label{appx:lem:c}
    Let $Z$ be a real random variable with a finite second moment. Then
    \begin{equation}
        \operatorname{argmin}_{c\in\mathbb R}
        \mathbb E[(Z-c)^2]
        =
        \mathbb E[Z].
    \end{equation}
\end{lemma}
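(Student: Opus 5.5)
The plan is to expand the squared loss around the mean and read off the minimizer directly. Write $\mu:=\mathbb E[Z]$, which is finite because $Z$ has a finite second moment. For any $c\in\mathbb R$, insert $\pm\mu$ and expand:
\begin{equation}
    \mathbb E[(Z-c)^2]
    =
    \mathbb E[(Z-\mu)^2]
    +2(\mu-c)\,\mathbb E[Z-\mu]
    +(\mu-c)^2 .
\end{equation}
The finite second moment ensures that every term here is finite, so linearity of expectation applies.

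The cross term vanishes because $\mathbb E[Z-\mu]=0$. This gives the identity
\begin{equation}
    \mathbb E[(Z-c)^2]=\operatorname{Var}[Z]+(\mu-c)^2 .
\end{equation}
The first term does not depend on $c$. The second term is nonnegative and equals zero exactly when $c=\mu$. Hence $c=\mathbb E[Z]$ is the unique minimizer, and the minimum value of the loss is $\operatorname{Var}[Z]$.

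No step here is a genuine obstacle; the only point needing care is integrability, which the finite-second-moment hypothesis handles. The identity is worth recording because it is what the surrounding argument actually uses. Take $\min_c$ in the two-sided bound $(1-\epsilon_{\mathrm{des}})L_{\mathrm H}(c)\le L_{\mathcal E}(c)\le(1+\epsilon_{\mathrm{des}})L_{\mathrm H}(c)$, where each $L_\nu(c)$ is a squared loss of the form above. By this lemma, each minimum equals the corresponding variance, and the inequalities pass to the minima. This yields $(1-\epsilon_{\mathrm{des}})\operatorname{Var}_{\mathrm H}\le\operatorname{Var}_{\mathcal E}\le(1+\epsilon_{\mathrm{des}})\operatorname{Var}_{\mathrm H}$.
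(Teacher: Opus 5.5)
Your proof is correct and takes the same approach as the paper: both expand $\mathbb E[(Z-c)^2]$ around $\mathbb E[Z]$ to obtain $\operatorname{Var}(Z)+(c-\mathbb E[Z])^2$ and read off the unique minimizer. Your remark that the pointwise two-sided bound passes to the minima (valid since $\epsilon_{\mathrm{des}}<1$) is equivalent to the paper's step of evaluating $L_{\mathcal E}$ at $c_{\mathrm H}$ and $L_{\mathrm H}$ at $c_{\mathcal E}$.
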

\begin{proof}
    Expanding around the mean gives
    \begin{align}
        \mathbb E[(Z-c)^2]
        &=
        \mathbb E\!\left[
            \bigl(
                Z-\mathbb E[Z]
                +
                \mathbb E[Z]-c
            \bigr)^2
        \right]
        \nonumber\\
        &=
        \operatorname{Var}(Z)
        +
        \bigl(c-\mathbb E[Z]\bigr)^2,
    \end{align}
    where its minimum occurs at $c=\mathbb E[Z]$.
\end{proof}

Define $c_{\mathcal E}:=\mathbb E_{U\sim\mathcal E, \mathbf{b}}[\hat o],c_{\mathrm H}:=\mathbb E_{U\sim\mathrm H, \mathbf{b}}[\hat o]$. Lemma~\ref{appx:lem:c} gives
\begin{equation}
    \operatorname{Var}_{\mathcal E}(\hat o)
    =
    L_{\mathcal E}(c_{\mathcal E}),
    \qquad
    \operatorname{Var}_{\mathrm H}(\hat o)
    =
    L_{\mathrm H}(c_{\mathrm H}).
\end{equation}
Since $c_{\mathcal E}$ minimizes $L_{\mathcal E}$, we may evaluate it at $c_{\mathrm H}$ and then apply Eq.~\eqref{app:eq:relative_squared_loss}:
\begin{align}
    \operatorname{Var}_{\mathcal E}(\hat o)
    &=
    L_{\mathcal E}(c_{\mathcal E})
    \nonumber\\
    &\leq
    L_{\mathcal E}(c_{\mathrm H})
    \nonumber\\
    &\leq
    (1+\epsilon_{\mathrm{des}})
    L_{\mathrm H}(c_{\mathrm H})
    \nonumber\\
    &=
    (1+\epsilon_{\mathrm{des}})
    \operatorname{Var}_{\mathrm H}(\hat o).
    \label{app:eq:relative_variance_upper}
\end{align}
To obtain the lower bound, we first apply Eq.~\eqref{app:eq:relative_squared_loss} at $c=c_{\mathcal E}$. Since $c_{\mathrm H}$ minimizes $L_{\mathrm H}$, we then have
\begin{align}
    \operatorname{Var}_{\mathcal E}(\hat o)
    &=
    L_{\mathcal E}(c_{\mathcal E})
    \nonumber\\
    &\geq
    (1-\epsilon_{\mathrm{des}})
    L_{\mathrm H}(c_{\mathcal E})
    \nonumber\\
    &\geq
    (1-\epsilon_{\mathrm{des}})
    L_{\mathrm H}(c_{\mathrm H})
    \nonumber\\
    &=
    (1-\epsilon_{\mathrm{des}})
    \operatorname{Var}_{\mathrm H}(\hat o).
    \label{app:eq:relative_variance_lower}
\end{align}
Combining the two bounds gives
\begin{equation}
    (1-\epsilon_{\mathrm{des}})
    \operatorname{Var}_{\mathrm H}(\hat o)
    \leq
    \operatorname{Var}_{\mathcal E}(\hat o)
    \leq
    (1+\epsilon_{\mathrm{des}})
    \operatorname{Var}_{\mathrm H}(\hat o).
    \label{app:eq:relative_variance_bound}
\end{equation}

\subsection{Ensemble-dependent post-processing}

We now allow the post-processing map $\mathcal C_{\mathcal E}$ to depend on the unitary ensemble. Once $\mathcal E$ is fixed, $\mathcal C_{\mathcal E}$ is independent of the individual unitary $U$ and outcome $b$. The upper bound in the relative variance comparison of Eq.~\eqref{app:eq:relative_variance_bound} gives
\begin{equation}
    \operatorname{Var}_{\mathcal E}(\hat o_{\mathcal E})
    \leq
    (1+\epsilon_{\mathrm{des}})
    \operatorname{Var}_{\mathrm H}(\hat o_{\mathcal E}),
    \label{app:eq:ensemble_dependent_variance_upper}
\end{equation}
where $\hat o_{\mathcal E}=\operatorname{Tr}(O\mathcal C_{\mathcal E}(U^{(p,q)}BU^{(p,q),\dagger}))$.
The Haar-averaged variance on the right-hand side uses $\mathcal C_{\mathcal E}$ rather than $\mathcal C_{\mathrm H}$. It therefore remains to compare $\operatorname{Var}_{\mathrm H}(\hat o_{\mathcal E})$ with $\operatorname{Var}_{\mathrm H}(\hat{o}_{\mathrm H})$. A common choice for $\mathcal C_{\mathcal{E}}$ in quantum state learning is the inverse shadow channel $\mathcal M_{\mathcal E}^{-1}$, which gives an unbiased estimator and satisfies the following bound.

\begin{lemma}[Inverse shadow post-processing]\label{app:lem:inverse_shadow_postprocessing}
    Let $\mathcal E$ be an $\epsilon_{\mathrm{des}}$-approximate strong unitary 2-design with $\epsilon_{\mathrm{des}}<1$ and assume that its distribution is invariant under both left and right multiplication by Pauli operators. Then, for every traceless Hermitian observable $O_0=\sum_{P\neq I}o_P P$,
    \begin{equation}
        \left\|
        \mathcal M_{\mathcal E}^{-1}(O_0)
        \right\|_2^2
        \leq
        \frac{1}{(1-\epsilon_{\mathrm{des}})^2}
        \left\|
        \mathcal M_{\mathrm H}^{-1}(O_0)
        \right\|_2^2.
    \end{equation}
\end{lemma}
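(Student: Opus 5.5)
The plan is to reduce the claim to a Pauli-by-Pauli comparison of visibilities. The key observation is that each visibility is the expectation of a mixed $(1,1)$ moment of $U$, evaluated on positive semidefinite input and output operators. The relative CP-order condition can therefore be applied to it directly.

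\textbf{Step 1 (diagonal form).} Because $\mathcal E$ is invariant under left and right multiplication by Pauli operators, Fact~\ref{app:fact:m_P(U)} applies. It gives $\mathcal M_{\mathcal E}(P)=m_P P$ with
\[
m_P=\mathbb E_{U\sim\mathcal E}[m_P(U)],\qquad m_P(U)=\tfrac1d\sum_b\bra b UPU^\dagger\ket b^2 .
\]
For the Haar ensemble the same fact gives $m_P^{\mathrm H}=1/(d+1)$ for every $P\neq I$. Once we know $m_P>0$ (Step 3), this yields
\[
\|\mathcal M_{\mathcal E}^{-1}(O_0)\|_2^2=d\sum_{P\neq I}\frac{o_P^2}{m_P^2},\qquad
\|\mathcal M_{\mathrm H}^{-1}(O_0)\|_2^2=d\sum_{P\neq I}o_P^2(d+1)^2 .
\]
It therefore suffices to prove $m_P\ge(1-\epsilon_{\mathrm{des}})\,m_P^{\mathrm H}$ for every $P\neq I$.

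\textbf{Step 2 (positive mixed-moment representation).} We use the vectorization convention of this appendix, $\operatorname{vec}(UPU^\dagger)=(U\otimes U^*)|P\rangle\!\rangle$. Since $\bra b UPU^\dagger\ket b$ is real,
\[
\bra b UPU^\dagger\ket b^2=\bigl|\langle\!\langle bb|(U\otimes U^*)|P\rangle\!\rangle\bigr|^2
=\operatorname{Tr}\!\Big[|bb\rangle\!\rangle\langle\!\langle bb|\,(U\otimes U^*)\,|P\rangle\!\rangle\langle\!\langle P|\,(U\otimes U^*)^\dagger\Big].
\]
Here $|bb\rangle\!\rangle$ denotes $\operatorname{vec}(\ket b\!\bra b)$. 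Averaging over $U$ gives
\[
m_P=\frac1d\sum_b\operatorname{Tr}\!\Big[|bb\rangle\!\rangle\langle\!\langle bb|\;\Phi^{(1,1)}_{\mathcal E}\big(|P\rangle\!\rangle\langle\!\langle P|\big)\Big].
\]
Both the input $|P\rangle\!\rangle\langle\!\langle P|$ and the test operators $|bb\rangle\!\rangle\langle\!\langle bb|$ are positive semidefinite, and $(1,1)$ is a strong $2$-design moment since $p+q=2$.

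\textbf{Step 3 (apply relative error).} The lower bound in Eq.~\eqref{app:eq:relative_strong_design} with $(p,q)=(1,1)$ says $\Phi^{(1,1)}_{\mathcal E}-(1-\epsilon_{\mathrm{des}})\Phi^{(1,1)}_{\mathrm H}$ is completely positive, hence positive. Applying it to the PSD input and pairing with the PSD test operators gives
\[
m_P\ge(1-\epsilon_{\mathrm{des}})\,m_P^{\mathrm H}=\frac{1-\epsilon_{\mathrm{des}}}{d+1}>0 .
\]
Since $\epsilon_{\mathrm{des}}<1$, every $m_P$ with $P\neq I$ is positive, and $m_I=1$, so $\mathcal M_{\mathcal E}$ is invertible. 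Substituting into Step 1 term by term gives
\[
\|\mathcal M_{\mathcal E}^{-1}(O_0)\|_2^2\le(1-\epsilon_{\mathrm{des}})^{-2}\,\|\mathcal M_{\mathrm H}^{-1}(O_0)\|_2^2 .
\]

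\textbf{Main obstacle.} The only delicate point is Step 2. The naive expression $\operatorname{Tr}[(\ket b\!\bra b)^{\otimes2}U^{\otimes2}P^{\otimes2}U^{\dagger\otimes2}]$ uses the ordinary second moment, but $P^{\otimes2}$ is not positive, so the CP order yields no inequality from it. Passing to the $U\otimes U^*$ (mixed, strong-design) moment makes the input the rank-one projector $|P\rangle\!\rangle\langle\!\langle P|$, which is what makes the relative-error bound usable. We will also check that the vectorization conventions match, so that the channel appearing is exactly $\Phi^{(1,1)}$ and not its conjugate, but either orientation is covered by the strong-design assumption.
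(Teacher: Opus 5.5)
Your proof is correct and follows essentially the same route as the paper. Both arguments write $m_P$ as a pairing of $\Phi^{(1,1)}_{\mathcal E}(|P\rangle\!\rangle\langle\!\langle P|)$ with positive semidefinite test operators, apply the lower relative-error bound to get $m_P\geq(1-\epsilon_{\mathrm{des}})/(d+1)$, and then compare the Pauli-diagonal inverses term by term; the only differences are that the paper uses Pauli invariance to collapse the sum over $b$ to the single outcome $b=0$ (you bound each term of the sum directly), and that you explicitly note the resulting invertibility of $\mathcal M_{\mathcal E}$, which the paper leaves implicit.
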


\begin{proof}
    For $P\neq I$, by~\cite{bu2024classical} and~\cite{schuster2025strong},
    \begin{align}
        m_P
        &=
        \frac{1}{d}\sum_{b\in\{0,1\}^n}
        \mathbb E_{U\sim\mathcal E}
        \left[
            \operatorname{Tr}\!\left(
                PU^\dagger\ket b\!\bra b U
            \right)^2
        \right]\\
        &=
        \mathbb E_{U\sim\mathcal E}
        \left[
        \operatorname{Tr}\!\left(
        PU^\dagger\ket 0\!\bra 0 U
        \right)^2
        \right]\\
        &=
        \langle\!\langle 0|
        \Phi_{\mathcal E}^{(1,1)}
        \!\left(|P\rangle\!\rangle\langle\!\langle P|
        \right)
        |0\rangle\!\rangle\label{app:eq:m_P_strong}.
    \end{align}
    Here, $|0\rangle\!\rangle:=\ket 0\otimes\ket 0$ denotes the vectorization of $\ket 0\!\bra 0$. Applying the relative error condition to Eq.~\eqref{app:eq:m_P_strong} implies
    \begin{equation}
        (1-\epsilon_{\mathrm{des}})m_P^{\mathrm H}\leq m_P\leq (1+\epsilon_{\mathrm{des}})m_P^{\mathrm H},
    \end{equation}
    where $m_P^{\mathrm{H}}=(d+1)^{-1}$. Since $m_{P}^{\mathrm{H}}/m_P\le (1-\epsilon_{\mathrm {des}})^{-1}$,
    \begin{align}
        \left\|
        \mathcal M_{\mathcal E}^{-1}(O_0)
        \right\|_2^2
        &=
        \sum_{P\neq I} \frac{o_P^2}{m_P^2}2^n
        \le 
        \max_P (m_{P}^{\mathrm{H}}/m_P)^2\left\|
        \mathcal M_{\mathrm H}^{-1}(O_0)
        \right\|_2^2\\
        &\le
        \frac{1}{(1-\epsilon_{\mathrm{des}})^2}
        \left\|
        \mathcal M_{\mathrm H}^{-1}(O_0)
        \right\|_2^2.
    \end{align}
\end{proof}

Under the Pauli-invariance assumptions of Lemma~\ref{app:lem:inverse_shadow_postprocessing}, relative-error approximate strong unitary $3$- or $4$-designs suffice to recover the corresponding exact-design bounds for the applications below, up to the factors specified below.

\paragraph{Single-shot observable estimation.}
For the exact shadow estimator, self-adjointness of the inverse shadow channel gives
\begin{equation}
    \hat o_{\mathcal{E}}
    =
    \operatorname{Tr}\!\left(
    \mathcal M_{\mathcal E}^{-1}(O)
    U^\dagger\ket b\!\bra bU
    \right).
\end{equation}
The Haar third-moment identity~\cite{huang2020predicting} gives
\begin{align}
    \operatorname{Var}_{\mathrm H}(\hat o_{\mathcal{E}})
    \leq
    \frac{3}{(d+1)(d+2)}
    \left\|
    \mathcal M_{\mathcal E}^{-1}(O_0)
    \right\|_2^2.
    \label{app:eq:single_shot_general_frobenius_bound}
\end{align}
Combining Eq.~\eqref{app:eq:ensemble_dependent_variance_upper} with Lemma~\ref{app:lem:inverse_shadow_postprocessing} and $\mathcal M_{\mathrm H}^{-1}(O_0)=(d+1)O_0$ gives
\begin{align}
    \operatorname{Var}_{\mathcal E}(\hat o_{\mathcal{E}})
    &\leq
    (1+\epsilon_{\mathrm{des}})\operatorname{Var}_{\mathrm H}(\hat o_{\mathcal{E}})
    \\
    &\leq
    \frac{3(1+\epsilon_{\mathrm{des}})}{(d+1)(d+2)}
    \left\|
    \mathcal M_{\mathcal E}^{-1}(O_0)
    \right\|_2^2
    \nonumber\\
    &\leq
    \frac{3(1+\epsilon_{\mathrm{des}})}
    {(d+1)(d+2)}
    \frac{1}{(1-\epsilon_{\mathrm{des}})^2}
    \|\mathcal M_{\mathrm H}^{-1}(O_0)\|_2^2
    \nonumber\\
    &\leq
    \frac{3(1+\epsilon_{\mathrm{des}})}
    {(1-\epsilon_{\mathrm{des}})^2}
    \frac{d+1}{d+2}
    \|O_0\|_2^2
    \nonumber\\
    &\leq
    \frac{3(1+\epsilon_{\mathrm{des}})}
    {(1-\epsilon_{\mathrm{des}})^2}
    \|O_0\|_2^2.
    \label{app:eq:single_shot_ensemble_frobenius_bound}
\end{align}
Thus, this recovers the exact global unitary $3$-design bound up to the factor $(1+\epsilon_{\mathrm{des}})/(1-\epsilon_{\mathrm{des}})^2$.

\paragraph{Multi-shot observable estimation.}
When each unitary is reused for $N_S$ measurement shots, we use the multi-shot shadow estimator
\begin{equation}
    \hat o_{\mathcal E}
    =
    \operatorname{Tr}(O\hat{\rho}_{\mathrm{mul}})
    =
    \frac{1}{N_U N_S}
    \sum_{i=1}^{N_U}\sum_{j=1}^{N_S}
    \operatorname{Tr}\!\left[
        O\mathcal M_{\mathcal E}^{-1}\!\left(
            U_i^\dagger
            \ket{b_{i,j}}\!\bra{b_{i,j}}
            U_i
        \right)
    \right],
\end{equation}
where $\hat{\rho}_{\mathrm{mul}}$ is defined in Eq.~\eqref{app:eq:multishot_shadow_estimator} with $\mathcal M^{-1}=\mathcal M_{\mathcal E}^{-1}$. The Haar variance with this fixed inverse shadow channel takes the form
\begin{equation}
    \operatorname{Var}_{\mathrm H}(\hat o_{\mathcal{E}})
    =
    \frac{1}{N_U}
    \left(
    \frac{
    V_{\mathrm{shot},\mathrm H}
    \bigl(\mathcal M_{\mathcal E}^{-1}(O_0)\bigr)
    }{
    N_S
    }
    +
    \frac{N_S-1}{N_S}
    V_{\mathrm{floor},\mathrm H}
    \bigl(\mathcal M_{\mathcal E}^{-1}(O_0)\bigr)
    \right).
    \label{app:eq:multishot_general_postprocessing}
\end{equation}
The third- and fourth-order Haar identities give the following bounds for the shot and floor terms, with the corresponding variance estimates governed by $\Phi_{\mathcal E}^{(2,1)}(\cdot)$ and $\Phi_{\mathcal E}^{(2,2)}(\cdot)$, respectively:
\begin{align}
    V_{\mathrm{shot},\mathrm H}
    \bigl(\mathcal M_{\mathcal E}^{-1}(O_0)\bigr)
    &\leq
    \frac{3}{(d+1)(d+2)}
    \left\|
    \mathcal M_{\mathcal E}^{-1}(O_0)
    \right\|_2^2,
    \nonumber\\
    V_{\mathrm{floor},\mathrm H}
    \bigl(\mathcal M_{\mathcal E}^{-1}(O_0)\bigr)
    &\leq
    \frac{
    4d^2+9d+7
    }{
    d(d+1)^2(d+2)(d+3)
    }
    \left\|
    \mathcal M_{\mathcal E}^{-1}(O_0)
    \right\|_2^2.
    \label{app:eq:multishot_general_frobenius_bounds}
\end{align}
Combining these bounds with
Eq.~\eqref{app:eq:ensemble_dependent_variance_upper} and
Lemma~\ref{app:lem:inverse_shadow_postprocessing} gives
\begin{align}
    \operatorname{Var}_{\mathcal E}(\hat o_{\mathcal{E}})
    &\leq
    \frac{1+\epsilon_{\mathrm{des}}}
    {(1-\epsilon_{\mathrm{des}})^2N_U}
    \left(
    \frac{3}{N_S}
    +
    \frac{4(N_S-1)}{dN_S}
    \right)
    \|O_0\|_2^2.
    \label{app:eq:multishot_ensemble_frobenius_bound}
\end{align}
Thus, both terms retain the exact unitary $4$-design scaling up to the
factor $(1+\epsilon_{\mathrm{des}})/(1-\epsilon_{\mathrm{des}})^2$.

\paragraph{Fourth moment of a randomized matrix element.} The following fourth-moment estimate is the key ingredient used in Ref.~\cite{brandao2020fast} to lower-bound the expected $\ell_1$ distance between measurement outcome distributions via Berger's inequality. Let $X$ be traceless and Hermitian, and let $b$ be fixed. The fourth pure state moment gives
\begin{align}
    \mathbb E_{U\sim\mathrm H}
    \left[
        \operatorname{Tr}\!\left(
            XU^\dagger\ket b\!\bra bU
        \right)^4
    \right]
    &=
    \frac{
        3\operatorname{Tr}(X^2)^2
        +
        6\operatorname{Tr}(X^4)
    }{
        d(d+1)(d+2)(d+3)
    }
    \nonumber\\
    &\leq
    \frac{
        9
    }{
        d(d+1)(d+2)(d+3)
    }
    \|X\|_2^4.
    \label{app:eq:fourth_moment_frobenius_bound}
\end{align}
The last inequality follows from $\operatorname{Tr}(X^4)\leq\operatorname{Tr}(X^2)^2=\|X\|_2^4$. Here, no post-processing is applied, so the post-processing
map is the identity and is therefore ensemble independent. Consequently, a relative error approximate strong unitary $4$-design satisfies
\begin{align}
    \mathbb E_{U\sim\mathcal E}
    \left[
        \operatorname{Tr}\!\left(
            XU^\dagger\ket b\!\bra bU
        \right)^4
    \right]\leq
    (1+\epsilon_{\mathrm{des}})
    \frac{
        9
    }{
        d(d+1)(d+2)(d+3)
    }
    \|X\|_2^4.
    \label{app:eq:fourth_moment_ensemble_frobenius_bound}
\end{align}

\section{State-learning applications of the Frobenius-norm variance bound}
\label{app:sec:state_learning_applications}

We now apply Theorem~\ref{main:thm:1} to setting-efficient quantum state tomography, multiparameter quantum metrology, stabilizer structure learning, and unbiased purity estimation. This bound allows the corresponding global Clifford guarantees to be retained with shallow measurements. The observable and state used in each reduction depend on the learning task. We denote those to which Theorem~\ref{main:thm:1} is applied by $O_{\mathrm{thm}}$ and $\rho_{\mathrm{thm}}$, respectively. In some cases, $\rho_{\mathrm{thm}}$ differs from the state being learned.

\subsection{Setting-efficient quantum state tomography}
\label{app:sec:setting_efficient_qst}

Given an unknown state $\rho$ with $\operatorname{rank}(\rho)\leq r$, quantum state tomography aims to construct an estimator $\hat{\rho}$ satisfying $\|\hat{\rho}-\rho\|_1\leq\epsilon$. We use $N_U$ independently sampled measurement bases and perform $N_S$ shots in each basis, so that the total number of samples is $T=N_UN_S$. Any non-adaptive single-copy protocol requires $T=\Omega(dr^2/\epsilon^2)$ samples in the worst case~\cite{Lowe2025lower}. The number of measurement settings is a separate resource.
A rank-$r$ state has $2dr-r^2-1$ real parameters, while one orthonormal basis gives at most $d-1$ independent probabilities. Thus, even without statistical noise, dimension counting gives $N_U\geq(2dr-r^2-1)/(d-1)=\Omega(r)$.
Reducing $N_U$ is also useful experimentally, since each change of basis can require a new circuit compilation.

Ref.~\cite{guctua2020fast} achieved the near-optimal total sample complexity $\widetilde{O}(dr^2/\epsilon^2)$ using $N_U=d$ unitary $2$-design bases. Ref.~\cite{brandao2020fast} reduced the dependence of $N_U$ on $d$ to polylogarithmic scaling using global Clifford measurements or a global exact unitary $4$-design, but obtained total sample complexities $\widetilde{O}(dr^4/\epsilon^4)$ and $\widetilde{O}(dr^2/\epsilon^4)$, respectively.
Ref.~\cite{cho2025sample} achieved $\widetilde{O}(dr^2/\epsilon^2)$ samples using shallow two-layer Clifford measurements, but sampled a fresh basis for every copy, so that $N_U=T$ and $N_S=1$.
Here, we retain the near-optimal total sample complexity while making $N_U$ depend only polylogarithmically on $d$.

For a Hermitian matrix $A$, let $\mathrm{proj}_r(A)$ denote a best rank-$r$ approximation of $A$ in operator norm, equivalently obtained by retaining the $r$ eigenvalues of largest absolute value. Using the spectral clipping defined in Eq.~\eqref{app:eq:clip_definition}, we consider
\begin{equation}\label{app:eq:setting_efficient_qst_estimator}
    \hat{\rho}_{r,K}
    =
    \mathrm{proj}_{r}
    \left(
        \frac{1}{N_U}
        \sum_{i=1}^{N_U}
        \operatorname{clip}_K
        \left(
            \frac{1}{N_S}
            \sum_{j=1}^{N_S}
            \mathcal{M}^{-1}
            \left(
                U_i^\dagger
                \ket{b_{i,j}}\!\bra{b_{i,j}}
                U_i
            \right)
        \right)
    \right).
\end{equation}
The clipping is applied after averaging the $N_S$ shots from one basis. The $N_U$ matrices entering the outer average are therefore independent.

\begin{fact}[\cite{cho2025sample}]
\label{app:fact:first_term}
    For the two-layer Clifford ensemble under the block size condition
    of Theorem~\ref{main:thm:1}, and for every state $\rho$,
    \begin{align}
        \left\|
            \mathbb{E}_{U,b}
            \left[
                \mathcal{M}^{-1}
                \left(
                    U^\dagger\ket b\!\bra bU
                \right)^2
            \right]
        \right\|_\infty
        &=
        \left\|
            \frac{1}{d^2}\sum_{P,Q}
            \frac{\tau(P,Q)}{m_P m_Q} 
            \operatorname{Tr}(\rho PQ)PQ
        \right\|_\infty
        \\
        &=O(d).
        \label{app:eq:qst_single_shot_second_moment}
    \end{align}
\end{fact}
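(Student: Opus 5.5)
The plan is to reduce the statement to the machinery already built for Theorem~\ref{appx:thm:1}, viewing the second-moment operator as $\Gamma$ evaluated at an operator-valued "observable". First, by Fact~\ref{app:fact:dual} and the Pauli diagonal form $\mathcal M^{-1}(P)=m_P^{-1}P$, I expand $\hat\rho=\mathcal M^{-1}(U^\dagger\ket b\!\bra bU)=d^{-1}\sum_P m_P^{-1}\operatorname{Tr}(PU^\dagger\ket b\!\bra bU)P$. Squaring and averaging with Born weights $\operatorname{Tr}(U\rho U^\dagger\ket b\!\bra b)$ gives the first equality in Eq.~\eqref{app:eq:qst_single_shot_second_moment}, once I note that $\mathbb E_U\sum_b \bra b UPU^\dagger\ket b\bra bUQU^\dagger\ket b\bra bU\rho U^\dagger\ket b=\tau(P,Q)\operatorname{Tr}(\rho PQ)$. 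This uses the same Clifford computation as Eq.~\eqref{app:eq:cliff_Ord_variance}, since $UPU^\dagger$ and $UQU^\dagger$ are signed $Z$-strings when visible.

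For the norm bound, I test against an arbitrary pure state $\ket\phi$. Then
\begin{equation}
\bra\phi\mathbb E[\hat\rho^2]\ket\phi=\mathbb E_{U,b}\bigl\|\hat\rho\ket\phi\bigr\|^2 .
\end{equation}
Because $\hat\rho$ is Hermitian, I would rather bound this through a variational identity, $\langle\phi|\hat\rho^2|\phi\rangle=\max_{\psi}|\langle\psi|\hat\rho|\phi\rangle|^2$. That is awkward, so instead I use the decomposition with $X=\ket\phi\!\bra\psi$ and the polarization trick. Writing $O=\ket\phi\!\bra\phi$ would only give $\operatorname{Tr}(O\hat\rho)^2$, not $\bra\phi\hat\rho^2\ket\phi$, so polarization is not directly enough.

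The cleaner route is to rerun the proof of Theorem~\ref{appx:thm:1} with the pair $(\rho,\ket\phi\!\bra\phi)$ exchanged. The quantity $\bra\phi\mathbb E[\hat\rho^2]\ket\phi=d^{-2}\sum_{P,Q}\frac{\tau(P,Q)}{m_Pm_Q}\operatorname{Tr}(\rho PQ)\bra\phi PQ\ket\phi$ is exactly $\operatorname{Tr}(\phi\,\Gamma(\rho'))$-type, i.e. the expression $\sum_{P,Q}o_Po_Q\frac{\tau}{m_Pm_Q}PQ$ from Eq.~\eqref{app:eq:G}, with coefficients $o_P=\operatorname{Tr}(\rho P)/d$ and evaluated in the state $\ket\phi$ (using $\operatorname{Tr}(\rho PQ)$ in place of $o_Po_Q$ requires one more step). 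Concretely, I expand $\rho PQ$ and $PQ$ symmetrically: $\operatorname{Tr}(\rho PQ)\bra\phi PQ\ket\phi=\operatorname{Tr}\bigl((\rho\otimes\phi)(P\otimes P)(Q\otimes Q)\bigr)$ on a doubled system. On the doubled $2n$-qubit system, the doubled Paulis $P\otimes P$ have the same block support vectors and the same pair types as $(P,Q)$, since the E/C/A relations are preserved under doubling. Hence the transfer-matrix contraction is identical, and the whole chain from Corollary~\ref{app:cor:kappa_ord} through Eq.~\eqref{app:eq:entrywise_to_quadratic_form} applies verbatim to the operator $\sum_P \frac{P\otimes P}{d}$ restricted to the support classes. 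The Frobenius norm of this operator is $\|d^{-1}\sum_P P\otimes P\|_2^2=d^{-2}\cdot d^2\cdot d^2=d^2$, and it contains the identity term, which can be handled separately. The ordered-product bound, Corollary~\ref{app:cor:ordered_product_bound}, then controls $\operatorname{Tr}((\rho\otimes\phi)\,\mathrm{Ord}_\kappa)$ by $\|\mathrm{Ord}_\kappa\|_\infty$.

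I expect this gives $O(e^{45/C})\cdot d^2/d=O(d)$ after accounting for the $d^{-2}$ normalization and the one factor of $d$ absorbed by the swap structure. That bookkeeping is the step I expect to be the main obstacle. The uniform entry bound Eq.~\eqref{app:eq:uniform_exact_K_entry_bound} is reused unchanged, so the difficulty lies in correctly tracking the dimension factors $F(\boldsymbol u,\boldsymbol v,\boldsymbol\kappa)$ when the observable is the doubled swap-like operator rather than an honest $n$-qubit $O_0$. If that fails, I fall back on citing \cite{cho2025sample}, where this bound is proven directly with the same MPS representation of $\tau$.
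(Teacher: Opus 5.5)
Your first equality is fine: for Clifford $U$, $\bra bUPU^\dagger\ket b\bra bUQU^\dagger\ket b$ is nonzero only when both Paulis are visible, and then $\sum_b$ gives $\tau(P,Q)\operatorname{Tr}(\rho PQ)$. The $O(d)$ bound, however, is not established, and the doubling route has two problems.

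First, it relies on a false claim. The map $P\mapsto P\otimes P$ does not preserve pair types, because $(Q\otimes Q)(P\otimes P)=(P\otimes P)(Q\otimes Q)$ for every pair. An anticommuting (type \texttt{A}) pair therefore becomes commuting. In Lemma~\ref{app:lem:local_ordered_product_decomposition} applied to doubled operators, the \texttt{fwd} and \texttt{bwd} operators then coincide, and an \texttt{A} block receives $\mathbb T_{\texttt C}$ instead of $\mathbb T_{\texttt A}$. To repair this you would have to reverse the order in only one copy, producing $Q_iP_i\otimes P_iQ_i$.

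Second, and more seriously, even after that repair the chain through Eq.~\eqref{app:eq:entrywise_to_quadratic_form} outputs $O(\|\widetilde O_0\|_2^2)$ with $\widetilde O=d^{-1}\sum_PP\otimes P=\mathrm{SWAP}$, i.e.\ $O(d^2)$. The $d^{-2}$ normalization is already inside $o_Po_Q$, so no extra factor of $d$ remains to be absorbed. The loss is built into Lemma~\ref{app:lem:mixed_order_product}, which bounds operator norms by Frobenius norms. For example, on one block $D^{-2}\sum_{P,Q}QP\otimes PQ=D\,\mathrm{SWAP}$ has operator norm $D$, while the lemma only gives $D^2$. So the ``bookkeeping'' you defer is the entire content of the claim, and falling back on the citation is not a proof. (The paper itself does not prove this Fact; it imports it from Ref.~\cite{cho2025sample}.)

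The idea you abandoned already works if you replace the maximum over $\psi$ by a sum over an orthonormal basis $\{\ket j\}$. Then $\bra\phi\mathbb E[\hat\rho^2]\ket\phi=\mathbb E\|\hat\rho\ket\phi\|^2=\sum_{j=1}^d\mathbb E|\bra j\hat\rho\ket\phi|^2$. Write $X_j=\ket\phi\!\bra j=A_j+iB_j$ with $A_j,B_j$ Hermitian. Since $\hat\rho$ is Hermitian, $|\operatorname{Tr}(X_j\hat\rho)|^2=\operatorname{Tr}(A_j\hat\rho)^2+\operatorname{Tr}(B_j\hat\rho)^2$. Applying Theorem~\ref{appx:thm:1} to $A_j$ and to $B_j$ gives $\mathbb E|\bra j\hat\rho\ket\phi|^2\le C_F(\|A_j\|_2^2+\|B_j\|_2^2)+|\operatorname{Tr}(X_j\rho)|^2=C_F+|\bra j\rho\ket\phi|^2$. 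Summing over $j$ yields $\bra\phi\mathbb E[\hat\rho^2]\ket\phi\le C_Fd+\bra\phi\rho^2\ket\phi\le C_Fd+1$. Because $\mathbb E[\hat\rho^2]\succeq0$, this bounds its operator norm by $O(d)$. This derives the Fact from Theorem~\ref{appx:thm:1}, whose proof does not use it, in the same spirit as Lemma~\ref{app:lem:qst_same_setting_second_moment}.
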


\begin{lemma}[Rank-$r$ spectral truncation~\cite{guctua2020fast}]\label{appx:lem:projr}
    Let $A,B$ be Hermitian matrices with $\operatorname{rank}(B)\leq r$.
    Let $\mathrm{proj}_{r}(A)$ denote the rank-$r$ projection of $A$, obtained by $\mathrm{proj}_{r}(A) = \mathrm{argmin}_{\mathrm{rank}(X)\leq r}\|A-X\|_{\infty}$. Then
    \begin{equation}
        \|\mathrm{proj}_{r}(A)-B\|_1
        \leq
        4r\|A-B\|_{\infty}.
    \end{equation}
\end{lemma}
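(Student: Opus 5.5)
The plan is the standard Eckart--Young-type argument, carried out via the triangle inequality and a rank count. Set $X:=\mathrm{proj}_r(A)$ and $E:=A-B$, and write $X-B=(X-A)+(A-B)$.

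\textbf{Step 1: bound $\|X-B\|_\infty$.} By definition, $X$ minimizes $\|A-Y\|_\infty$ over all $Y$ of rank at most $r$. Since $B$ itself has rank at most $r$, this minimality gives $\|A-X\|_\infty\le\|A-B\|_\infty=\|E\|_\infty$. The triangle inequality then yields
\begin{equation*}
\|X-B\|_\infty\le\|X-A\|_\infty+\|A-B\|_\infty\le 2\|E\|_\infty .
\end{equation*}
If the rank-$r$ projection is defined instead by keeping the $r$ eigenvalues of largest absolute value, the same bound follows from Weyl's inequality. The discarded eigenvalues of $A$ satisfy $|\lambda_{r+1}(A)|\le\|E\|_\infty$, because $B$ has at most $r$ nonzero eigenvalues and eigenvalues move by at most $\|E\|_\infty$ under a perturbation of that norm.

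\textbf{Step 2: convert to trace norm.} Both $X$ and $B$ have rank at most $r$, so $\operatorname{rank}(X-B)\le 2r$. For any matrix $M$ of rank at most $2r$, summing its at most $2r$ nonzero singular values gives $\|M\|_1\le 2r\|M\|_\infty$. Applying this to $M=X-B$ and combining with Step 1,
\begin{equation*}
\|X-B\|_1\le 2r\cdot 2\|E\|_\infty=4r\|A-B\|_\infty .
\end{equation*}

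\textbf{Potential obstacle.} The only delicate point is the $\operatorname{argmin}$ definition. The minimizer need not be unique, and the spectral-truncation description must actually attain the minimum. For Hermitian $A$, Eckart--Young in operator norm shows that truncating to the $r$ largest $|\lambda_i|$ achieves the optimal error $|\lambda_{r+1}(A)|$. Step 1, however, uses only the optimality inequality $\|A-X\|_\infty\le\|A-B\|_\infty$. So any choice of minimizer works, and no uniqueness is needed.
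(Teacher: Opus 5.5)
Your proposal is correct and matches the paper's proof: the optimality of $\mathrm{proj}_r(A)$ against the rank-$\le r$ competitor $B$, together with the triangle inequality, gives $\|\mathrm{proj}_r(A)-B\|_\infty\le 2\|A-B\|_\infty$. The rank bound $\operatorname{rank}(\mathrm{proj}_r(A)-B)\le 2r$ then converts this to $4r\|A-B\|_\infty$ in trace norm. Your additional Weyl-inequality remark for the eigenvalue-truncation definition is also valid, but the argument does not need it.
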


\begin{proof}
    Since $\mathrm{proj}_{r}(A)$ is a best rank-$r$ approximation of $A$ in operator norm and $\operatorname{rank}(B)\leq r$, $\|A-\mathrm{proj}_{r}(A)\|_{\infty}\leq\|A-B\|_{\infty}$.
    Hence,
    \begin{equation}
        \|\mathrm{proj}_{r}(A)-B\|_{\infty}
        \leq
        \|\mathrm{proj}_{r}(A)-A\|_{\infty}+\|A-B\|_{\infty}
        \leq
        2\|A-B\|_{\infty}.
    \end{equation}
    Since $\operatorname{rank}(\mathrm{proj}_{r}(A)-B)\leq\operatorname{rank}(\mathrm{proj}_{r}(A))+\operatorname{rank}(B)\leq 2r$,
    \begin{align}
        \|\mathrm{proj}_{r}(A)-B\|_1
        &\leq
        \operatorname{rank}(\mathrm{proj}_{r}(A)-B)\|\mathrm{proj}_{r}(A)-B\|_{\infty}\\
        &\leq
        2r\|\mathrm{proj}_{r}(A)-B\|_{\infty}\\
        &\leq
        4r\|A-B\|_{\infty}.
    \end{align}
\end{proof}

\begin{fact}[Clifford ensemble identities~\cite{cho2025shallow}]
\label{app:fact:clifford_ensemble}
    Let $U$ be a Clifford unitary and define
    $(U,P,b):=\operatorname{Tr}(UPU^\dagger\ket{b}\!\bra{b})$.
    Then, for any Pauli operators $P$ and $\{P_i\}_i$,
    \begin{align}
        (U,P,b)
        &=
        (U,P,b)\,
        \mathbf{1}\!\left\{
            UPU^\dagger\in\pm\mathcal Z
        \right\},\\
        \sum_b\prod_i(U,P_i,b)
        &=
        \operatorname{Tr}\!\left(\prod_i P_i\right)
        \prod_i
        \mathbf{1}\!\left\{
            UP_iU^\dagger\in\pm\mathcal Z
        \right\}.
    \end{align}
\end{fact}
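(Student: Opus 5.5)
The two identities in Fact~\ref{app:fact:clifford_ensemble} follow from the fact that Clifford conjugation maps Pauli operators to signed Pauli operators. Let $U$ be Clifford and $P$ a Pauli operator. Then $UPU^\dagger=\pm P'$ for some Pauli $P'$, and $(U,P,b)=\pm\bra b P'\ket b$.

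For the first identity, note that $\bra b P'\ket b$ vanishes unless $P'$ is diagonal in the computational basis. Diagonality means $P'$ has no $X$ or $Y$ factor, i.e.\ $P'\in\mathcal Z$. So whenever $UPU^\dagger\notin\pm\mathcal Z$, we get $(U,P,b)=0$. Multiplying by the indicator therefore changes nothing, which gives the first identity.

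For the second identity, apply the first identity to each factor. The product $\prod_i(U,P_i,b)$ is then $\prod_i(U,P_i,b)$ times $\prod_i\mathbf 1\{UP_iU^\dagger\in\pm\mathcal Z\}$. It remains to evaluate the sum on the event where every $Q_i:=UP_iU^\dagger$ lies in $\pm\mathcal Z$.

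On that event, all $Q_i$ are diagonal in the basis $\{\ket b\}$. Hence $\bra b Q_i\ket b$ is just the $b$-th diagonal entry of $Q_i$, and the product of these entries is the $b$-th diagonal entry of the product. This gives
\[
\prod_i\bra bQ_i\ket b=\bra b\prod_iQ_i\ket b .
\]
Summing over $b$ yields
\[
\operatorname{Tr}\Bigl(\prod_iQ_i\Bigr)=\operatorname{Tr}\Bigl(U\prod_iP_iU^\dagger\Bigr)=\operatorname{Tr}\Bigl(\prod_iP_i\Bigr)
\]
by unitary invariance of the trace. On the complement event some factor vanishes, so both sides are zero.

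The order of the product must be kept identical on both sides, since Pauli operators need not commute. This causes no trouble: on the event where all $Q_i$ are diagonal they commute, and the ordered product $\prod_i P_i$ is mapped by conjugation to the same ordered product of the $Q_i$.
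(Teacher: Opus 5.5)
Your argument is correct and complete. The paper gives no proof of this statement; it cites it from earlier work. Your route is the standard one: Clifford conjugation sends Paulis to signed Paulis, a Pauli string outside $\pm\mathcal Z$ has vanishing diagonal, diagonal matrices multiply entrywise on the diagonal, and $\operatorname{Tr}\bigl(U(\prod_i P_i)U^\dagger\bigr)=\operatorname{Tr}(\prod_i P_i)$.

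One step is worth stating explicitly. Your claim ``vanishes unless diagonal'' uses the structure of Pauli strings and is false for general matrices. The reason it holds here is the factorization
\[
\langle b|P'|b\rangle=\prod_j\langle b_j|P'_j|b_j\rangle ,
\]
which is zero as soon as some factor $P'_j$ is $X$ or $Y$.
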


\begin{lemma}
\label{app:lem:qst_same_setting_second_moment}
    Let $b$ and $c$ be conditionally independent outcomes obtained from the same unitary $U$. Under the assumptions of
    Theorem~\ref{appx:thm:1},
    \begin{equation}
        \left\|
            \mathbb{E}_{U,b,c}
            \left[
                \mathcal{M}^{-1}
                \left(
                    U^\dagger\ket b\!\bra bU
                \right)
                \mathcal{M}^{-1}
                \left(
                    U^\dagger\ket c\!\bra cU
                \right)
            \right]
        \right\|_\infty
        =
        O(1).
        \label{app:eq:qst_same_setting_second_moment}
    \end{equation}
\end{lemma}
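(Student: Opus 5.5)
We expand both snapshots in the Pauli basis and then reduce the matrix to the same Pauli-pair structure that controls $\Gamma(O_0)$ in the proof of Theorem~\ref{appx:thm:1}.

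\textbf{Step 1 (Pauli expansion).} Writing $U^\dagger\ket b\!\bra bU=\frac1d\sum_P (U,P,b)P$ and applying $\mathcal M^{-1}(P)=m_P^{-1}P$, the matrix in Eq.~\eqref{app:eq:qst_same_setting_second_moment} equals
\begin{equation}
    \frac{1}{d^2}\sum_{P,Q}\frac{1}{m_Pm_Q}\,
    \mathbb E_U\Big[\Big(\sum_b p(b\mid U)(U,P,b)\Big)\Big(\sum_c p(c\mid U)(U,Q,c)\Big)\Big]PQ,
\end{equation}
where $p(b\mid U)=\frac1d\sum_R \operatorname{Tr}(\rho R)(U,R,b)$.

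\textbf{Step 2 (inner sums).} Fact~\ref{app:fact:clifford_ensemble} evaluates each inner sum. For example, $\sum_b p(b\mid U)(U,P,b)=\operatorname{Tr}(\rho P)\mathbf 1\{UPU^\dagger\in\pm\mathcal Z\}$ up to the sign fixed by $U$. The signs cancel in the product because the same $U$ appears in both factors. The expression therefore collapses to
\begin{equation}
    \frac{1}{d^2}\sum_{P,Q}\frac{\tau(P,Q)}{m_Pm_Q}\operatorname{Tr}(\rho P)\operatorname{Tr}(\rho Q)\,PQ
    =\mathbb E_{U}\Big[\widetilde\rho_U\,\widetilde\rho_U\Big].
\end{equation}
Here $\widetilde\rho_U:=\mathcal M^{-1}(\mathcal D_U(\rho))$, and $\mathcal D_U$ is the dephasing in the basis $U$. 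This is exactly $G(O)$-type structure with coefficients $o_P=\operatorname{Tr}(\rho P)/d$, i.e.\ the observable is $\rho$ itself. The operator $\Gamma(O_0)$ in Eq.~\eqref{app:eq:prediction_operator} has $\operatorname{Tr}(\rho PQ)$ in place of the product $\operatorname{Tr}(\rho P)\operatorname{Tr}(\rho Q)$. The present sum instead reads $\sum_{P,Q} o_Po_Q\,\tau(P,Q)\,PQ/(m_Pm_Q)$, which is precisely $\sum_{\boldsymbol u,\boldsymbol v}G_{\boldsymbol u\boldsymbol v}(\rho)/(m(\boldsymbol u)m(\boldsymbol v))$ with $O_0$ replaced by $\rho$. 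The identity component $P=I$ is retained; it is harmless since $m_I=1$.

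\textbf{Step 3 (apply the bound from Theorem~\ref{appx:thm:1}).} The proof of Theorem~\ref{appx:thm:1} bounded $\|\sum_{\boldsymbol u,\boldsymbol v}G_{\boldsymbol u\boldsymbol v}(X)/(m(\boldsymbol u)m(\boldsymbol v))\|_\infty$ through Eq.~\eqref{app:eq:gamma_quadratic} and the quadratic-form estimate Eq.~\eqref{app:eq:entrywise_to_quadratic_form}. That argument never used tracelessness of $X$: the support decomposition, Corollary~\ref{app:cor:ordered_product_bound}, the entrywise bounds, and the Hamming-matrix estimate hold for any Hermitian $X$, with $\boldsymbol u=0^m$ allowed. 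Applying it with $X=\rho$ gives a bound of $8e^{45/C}\|\rho\|_2^2\le 8e^{45/C}=O(1)$.

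\textbf{Main obstacle.} The delicate step is Step 2: verifying that the inner sums produce exactly the indicator product whose expectation is $\tau(P,Q)$, with consistent signs. This requires checking, via Fact~\ref{app:fact:clifford_ensemble}, that $(U,P,b)(U,R,b)$ summed over $b$ forces $R=P$ up to the stabilizer-group structure. If cross terms with $R\ne P$ survive, the expression becomes $\sum_{P,Q,R,S}$ with $\operatorname{Tr}(\rho R)\operatorname{Tr}(\rho S)$ and indicators on $\{P,R\}$ and $\{Q,S\}$. In that case we would fall back on a Cauchy--Schwarz argument in operator order:
\begin{equation}
    \mathbb E[\widetilde\rho_U\widetilde\rho_U]\preceq \mathbb E[\widetilde\rho_U^2].
\end{equation}
We would then bound $\mathbb E[\widetilde\rho_U^2]$ by Jensen as $\preceq\mathbb E_{U,b}[\hat\rho^2]$-type quantities weighted by $\operatorname{Tr}(\rho^2)$. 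The target is an $O(\|\rho\|_2^2)=O(1)$ bound, in contrast to the $O(d)$ bound of Fact~\ref{app:fact:first_term}, and we would again reduce it to the $\Gamma$-machinery with $\rho$ playing the role of the observable.
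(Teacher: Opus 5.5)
Your proposal is correct. Steps 1--2 reproduce the paper's computation, and the difference lies in how you bound the resulting operator
$\frac{1}{d^2}\sum_{P,Q}\operatorname{Tr}(\rho P)\operatorname{Tr}(\rho Q)\,\tau(P,Q)\,m_P^{-1}m_Q^{-1}PQ$.

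\textbf{The paper's route.} It uses Theorem~\ref{appx:thm:1} only through its statement. The operator is positive semidefinite (it equals $\mathbb E_U[\widetilde\rho_U^2]$ in your notation), so its norm is $\langle v|\cdot|v\rangle$ for a top eigenvector $\ket v$. That scalar is exactly the second moment $\mathbb E[\operatorname{Tr}(\rho\hat\rho)^2]$ for observable $O_{\mathrm{thm}}=\rho$ and input state $\rho_{\mathrm{thm}}=\ket v\!\bra v$. It is therefore at most $C_F\|\rho-I/d\|_2^2+\operatorname{Tr}(\rho\ket v\!\bra v)^2$, and the identity component of $\rho$ is absorbed by the squared-mean term.

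\textbf{Your route.} You reopen the proof and use the stronger internal estimate $\|\Gamma(X)\|_\infty\le 8e^{45/C}\|X\|_2^2$ with the non-traceless choice $X=\rho$. This is legitimate. Tracelessness entered only through $O_{0^m}=0$, and the entrywise bound \eqref{app:eq:uniform_envelope_entry_bound} is proved for every $\boldsymbol u,\boldsymbol v$. That includes $\boldsymbol u=\boldsymbol v=0^m$ (Case 1-3), $\{\boldsymbol u,\boldsymbol v\}=\{0^m,1^m\}$ (Case 2-3), and $\boldsymbol u=0^m\neq\boldsymbol v$ (pure-mismatch open components in Case 1-1). Cite this coverage rather than $m_I=1$ when you say the identity component is harmless.

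\textbf{Comparison.} Your route needs no positivity and gives $8e^{45/C}\operatorname{Tr}(\rho^2)$ directly. The paper's route depends only on the variance statement of Theorem~\ref{appx:thm:1}, not on its proof internals.

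\textbf{Two corrections.} First, in Step 2 there is no sign at all. You have $\sum_b p(b\mid U)(U,P,b)=\operatorname{Tr}\bigl(U\rho U^\dagger\,\mathcal D(UPU^\dagger)\bigr)=\operatorname{Tr}(\rho P)\,\mathbf 1\{UPU^\dagger\in\pm\mathcal Z\}$ exactly, because the cross terms carry $\operatorname{Tr}(RP)=d\,\delta_{RP}$. This also dissolves your ``main obstacle.'' Your stated reason that $U$-dependent signs would cancel is not valid for $Q\neq P$, so it is fortunate it is not needed. Second, the Jensen fallback would not reach the target. $\mathbb E[\widetilde\rho_U^2]\preceq\mathbb E_{U,b}[\hat\rho^2]$ only yields the $O(d)$ bound of Fact~\ref{app:fact:first_term}, not $O(1)$.
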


\begin{proof}
    By Fact~\ref{app:fact:clifford_ensemble} and Eq.~\eqref{app:eq:correlated_visibility},
    \begin{align}
        &\mathbb{E}_{U,b,c}
        \left[
            \mathcal{M}^{-1}
            \left(
                U^\dagger\ket b\!\bra bU
            \right)
            \mathcal{M}^{-1}
            \left(
                U^\dagger\ket c\!\bra cU
            \right)
        \right] \label{appx:eq:qst2}\\
        &=
        \mathbb{E}_{U}\sum_{b,c}
        \operatorname{Tr}(\rho U^{\dagger} \ket b \!\bra b U)
        \operatorname{Tr}(\rho U^{\dagger} \ket c \!\bra c U)
        \mathcal{M}^{-1}(U^\dagger\ket b\!\bra b U)
        \mathcal{M}^{-1}(U^\dagger\ket c\!\bra c U)
        \\
        &=
        \mathbb{E}_{U}\sum_{b,c}\sum_{P,Q,R,S}
        \frac{\operatorname{Tr}(\rho P)}{d}
        \frac{\operatorname{Tr}(\rho Q)}{d}
        \frac{m_R^{-1}}{d}
        \frac{m_S^{-1}}{d}
        (U,P,b)(U,R,b)
        (U,Q,c)(U,S,c)RS\\
        &=
        \sum_{P,Q}
        \frac{\operatorname{Tr}(\rho P)}{d}
        \frac{\operatorname{Tr}(\rho Q)}{d}
        m_P^{-1}m_Q^{-1}
        \tau(P,Q)
        PQ.
    \end{align}
    Let $\ket v$ be a normalized eigenvector associated with the largest eigenvalue of the positive semidefinite operator in Eq.~\eqref{appx:eq:qst2}. Then
    \begin{align}
        &\left\|
            \mathbb{E}_{U,b,c}
            \left[
                \mathcal{M}^{-1}
                \left(
                    U^\dagger\ket b\!\bra bU
                \right)
                \mathcal{M}^{-1}
                \left(
                    U^\dagger\ket c\!\bra cU
                \right)
            \right]
        \right\|_\infty\\
        &=
        \frac{1}{d^2}
        \sum_{P,Q}
        \operatorname{Tr}(\rho P)
        \operatorname{Tr}(\rho Q)
        m_P^{-1}m_Q^{-1}
        \tau(P,Q)
        \operatorname{Tr}
        \left(
            \ket v\!\bra vPQ
        \right).
        \label{appx:eq:qst1}
    \end{align}
    The right-hand side is the second moment for the observable $O_{\mathrm{thm}}=\rho$ and
    the input state $\rho_{\mathrm{thm}}=\ket v\!\bra v$. Hence Theorem~\ref{appx:thm:1} gives
    \begin{align}
        &\left\|
            \mathbb{E}_{U,b,c}
            \left[
                \mathcal{M}^{-1}
                \left(
                    U^\dagger\ket b\!\bra bU
                \right)
                \mathcal{M}^{-1}
                \left(
                    U^\dagger\ket c\!\bra cU
                \right)
            \right]
        \right\|_\infty\\
        &\leq
        C_F
        \left\|
            \rho-\frac{I}{d}
        \right\|_2^2
        +
        \operatorname{Tr}
        \left(
            \rho\ket v\!\bra v
        \right)^2
        =
        O(1),
    \end{align}
    where $C_F$ is the constant in Theorem~\ref{appx:thm:1}.
\end{proof}

\begin{corollary}[Same-basis second moment]\label{appx:cor:qst1}
    Let $U$ be sampled from the two-layer Clifford ensemble, and let $b_1,\ldots,b_{N_S}$ be independent outcomes obtained by measuring $U\rho U^\dagger$ in the computational basis. Define
    \begin{equation}
        \hat{\sigma}
        :=
        \frac{1}{N_S}
        \sum_{j=1}^{N_S}
        \mathcal{M}^{-1}
        \left(
            U^\dagger\ket{b_j}\!\bra{b_j}U
        \right).
    \end{equation}
    Under the assumptions of Theorem~\ref{appx:thm:1}, choosing $N_S=d$
    gives
    \begin{equation}
        \left\|
            \mathbb{E}\hat{\sigma}^2
        \right\|_\infty
        =
        O(1).
    \end{equation}
\end{corollary}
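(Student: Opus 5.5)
Expand $\hat\sigma^2$ into diagonal and off-diagonal terms:
\begin{equation}
    \hat\sigma^2
    =
    \frac{1}{N_S^2}\sum_{j}\mathcal M^{-1}\!\left(U^\dagger\ket{b_j}\!\bra{b_j}U\right)^2
    +
    \frac{1}{N_S^2}\sum_{j\neq l}
    \mathcal M^{-1}\!\left(U^\dagger\ket{b_j}\!\bra{b_j}U\right)
    \mathcal M^{-1}\!\left(U^\dagger\ket{b_l}\!\bra{b_l}U\right).
\end{equation}
Take expectations over $U$ and the outcomes. Each diagonal term has the distribution of a single-shot snapshot squared. There are $N_S$ such terms, so the first sum contributes
\begin{equation}
    \frac{1}{N_S}\,
    \mathbb E_{U,b}\!\left[\mathcal M^{-1}\!\left(U^\dagger\ket b\!\bra bU\right)^2\right].
\end{equation}
For $j\neq l$, the outcomes $b_j,b_l$ are conditionally independent given the same $U$. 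Each off-diagonal term therefore equals the operator bounded in Lemma~\ref{app:lem:qst_same_setting_second_moment}. There are $N_S(N_S-1)$ such terms, so the second sum contributes $(1-1/N_S)$ times that operator.

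Apply the triangle inequality for $\|\cdot\|_\infty$. Fact~\ref{app:fact:first_term} bounds the first contribution by $O(d)/N_S$. Lemma~\ref{app:lem:qst_same_setting_second_moment} bounds the second by $O(1)$. With $N_S=d$ this gives $\|\mathbb E\hat\sigma^2\|_\infty\le O(d)/d+O(1)=O(1)$. The same estimate holds for any $N_S\ge d$.

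The only point needing care is the exchangeability bookkeeping. Both the diagonal and off-diagonal expectations must be taken jointly over $U$, so that Fact~\ref{app:fact:first_term} and the lemma apply verbatim. Conditioning on $U$ and bounding each piece separately, by contrast, would not yield these uniform bounds. There is no genuine obstacle beyond this step, since the heavy lifting (Theorem~\ref{appx:thm:1} applied to $O_{\mathrm{thm}}=\rho$) is already contained in Lemma~\ref{app:lem:qst_same_setting_second_moment}.
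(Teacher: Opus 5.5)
Your proposal is correct and is essentially the paper's own proof. The paper likewise splits $\mathbb{E}\hat\sigma^2$ into $\frac{1}{N_S}\mathbb{E}_{U,b}\bigl[\mathcal{M}^{-1}(U^\dagger\ket b\!\bra bU)^2\bigr]$ plus $\frac{N_S-1}{N_S}$ times the same-basis cross term, then bounds the two pieces by Fact~\ref{app:fact:first_term} and Lemma~\ref{app:lem:qst_same_setting_second_moment} to obtain $O(d/N_S+1)=O(1)$ at $N_S=d$.
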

\begin{proof}
    By Fact~\ref{app:fact:first_term} and Lemma~\ref{app:lem:qst_same_setting_second_moment},
    \begin{align}
        \left\|\mathbb{E} \hat \sigma^2\right\|_{\infty}
        &=
        \left\|
        \frac{1}{N_S} 
        \mathbb{E}_{U,b}  
        \mathcal{M}^{-1}(U^\dagger\ket b\!\bra bU)^2 
        +
        \frac{N_S-1}{N_S}
        \mathbb{E}_{U,b,c}  
        \mathcal{M}^{-1}(U^\dagger\ket b\!\bra b U)
        \mathcal{M}^{-1}(U^\dagger\ket c\!\bra c U)
        \right\|_{\infty}\\
        &=
        O\left(
        \frac{d}{N_S}+1
        \right)=O(1).
    \end{align}
\end{proof}

\begin{theorem}[Setting-efficient quantum state tomography]
\label{app:thm:setting_efficient_qst}
    Consider the two-layer Clifford ensemble under the assumptions of Theorem~\ref{main:thm:1}. For every state $\rho$ with $\operatorname{rank}(\rho)\leq r$ and every $0<\epsilon,\delta<1$, the estimator in Eq.~\eqref{app:eq:setting_efficient_qst_estimator}, with $K=\Theta(r/\epsilon)$, satisfies $\|\hat{\rho}_{r,K}-\rho\|_1\leq\epsilon$ with probability at least $1-\delta$ using
    \begin{equation}
        N_U
        =
        O\left(
            \frac{r^2\log(d/\delta)}{\epsilon^2}
        \right),
        \qquad
        N_S=d,
        \qquad
        T
        =
        O\left(
            \frac{dr^2\log(d/\delta)}{\epsilon^2}
        \right).
        \label{app:eq:setting_efficient_qst_complexity}
    \end{equation}
\end{theorem}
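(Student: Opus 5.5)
The plan is to apply Lemma~\ref{appx:lem:clip} to the $N_U$ independent per-basis averages
\[
\hat\sigma_i=\frac{1}{N_S}\sum_{j=1}^{N_S}\mathcal M^{-1}\!\left(U_i^\dagger\ket{b_{i,j}}\!\bra{b_{i,j}}U_i\right),
\]
and then convert the resulting operator-norm bound into a trace-norm bound with Lemma~\ref{appx:lem:projr}.

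First, each $\hat\sigma_i$ is Hermitian and unbiased: because $\mathcal M^{-1}$ is the exact inverse of the two-layer shadow channel, $\mathbb E[\hat\sigma_i]=\rho$. Different $i$ use independently sampled unitaries and outcomes, so the $\hat\sigma_i$ are i.i.d. copies of the random matrix $\hat\sigma$ in Corollary~\ref{appx:cor:qst1}. With $N_S=d$, that corollary gives the variance parameter $\sigma^2=\|\mathbb E\hat\sigma^2\|_\infty=O(1)$. The corollary rests on two inputs: Fact~\ref{app:fact:first_term}, whose $O(d)$ single-shot term is divided by $N_S=d$, and Lemma~\ref{app:lem:qst_same_setting_second_moment}, where Theorem~\ref{appx:thm:1} is applied with $O_{\mathrm{thm}}=\rho$ and $\rho_{\mathrm{thm}}=\ket v\!\bra v$.

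Next, Lemma~\ref{appx:lem:clip} with $T=N_U$ gives, with probability at least $1-\delta$,
\[
\Big\|\frac{1}{N_U}\sum_i\operatorname{clip}_K(\hat\sigma_i)-\rho\Big\|_\infty
=O\!\left(\frac{1}{K}+\sqrt{\frac{\log(d/\delta)}{N_U}}+\frac{K\log(d/\delta)}{N_U}\right).
\]
We choose $K=c_1 r/\epsilon$, so the clipping bias is at most a small multiple of $\epsilon/r$. We then take $N_U=c_2 r^2\log(d/\delta)/\epsilon^2$. This makes the square-root term $O(\epsilon/r)$. The last term becomes $O\big((r/\epsilon)\cdot \epsilon^2/r^2\big)=O(\epsilon/r)$, since the $\log(d/\delta)$ factors cancel. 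Choosing the constants $c_1,c_2$ appropriately makes the total operator-norm error at most $\epsilon/(4r)$.

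Finally, $\operatorname{rank}(\rho)\le r$, so Lemma~\ref{appx:lem:projr} gives $\|\hat\rho_{r,K}-\rho\|_1\le 4r\cdot\epsilon/(4r)=\epsilon$. The total sample count is $T=N_UN_S=O(dr^2\log(d/\delta)/\epsilon^2)$.

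The main obstacle is the constant-order variance parameter, which is precisely what Corollary~\ref{appx:cor:qst1} supplies; everything downstream is parameter bookkeeping. One point needs care here. Clipping is applied to the whole per-basis average $\hat\sigma_i$, not to individual shots, so the independence and second-moment hypotheses of Lemma~\ref{appx:lem:clip} hold for these $N_U$ matrices, even though shots within a basis are correlated through the shared $U_i$.
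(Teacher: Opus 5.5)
Your proposal is correct and matches the paper's proof step for step. Both use per-basis averages with $N_S=d$ and Corollary~\ref{appx:cor:qst1} to get an $O(1)$ variance parameter, then Lemma~\ref{appx:lem:clip} with $K=\Theta(r/\epsilon)$ and $N_U=\Theta(r^2\log(d/\delta)/\epsilon^2)$ to reach operator-norm error $\epsilon/(4r)$, and finally Lemma~\ref{appx:lem:projr} to convert this to trace-norm error $\epsilon$.
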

\begin{proof}
    For each $i=1,\ldots,N_U$, sample a measurement basis $U_i$ and measure $U_i\rho U_i^\dagger$ independently $N_S$ times, obtaining outcomes $\{b_{i,j}\}_{j=1}^{N_S}$. 
    Define
    \begin{align}
    \hat{\rho}
        &:=
        \frac{1}{N_U}
        \sum_{i=1}^{N_U}
        \hat{\rho}_i,\\
        \hat{\rho}_i
        &:=
        \frac{1}{N_S}
        \sum_{j=1}^{N_S}
        \mathcal{M}^{-1}
        \left(
            U_i^\dagger
            \ket{b_{i,j}}\!\bra{b_{i,j}}
            U_i
        \right).
    \end{align}
    Corollary~\ref{appx:cor:qst1} gives $\|\mathbb{E}\hat{\rho}_i^2\|_\infty=O(1)$. 
    Therefore, Lemma~\ref{appx:lem:clip} implies that, with probability at least $1-\delta$,
    \begin{equation}
        \left\|
            \frac{1}{N_U}
            \sum_{i=1}^{N_U}
            \operatorname{clip}_K(\hat{\rho}_i)
            -
            \rho
        \right\|_\infty
        =
        O\left(
            \frac{1}{K}
            +
            \sqrt{
                \frac{\log(d/\delta)}{N_U}
            }
            +
            \frac{K\log(d/\delta)}{N_U}
        \right).
    \end{equation}
    Choose $K=\Theta(1/\epsilon_{\mathrm{op}})$ and $N_U=\Theta(\log(d/\delta)/\epsilon_{\mathrm{op}}^2)$. 
    Then
    \begin{equation}
        \left\|
            \frac{1}{N_U}
            \sum_{i=1}^{N_U}
            \operatorname{clip}_K(\hat{\rho}_i)
            -
            \rho
        \right\|_\infty
        \leq
        \epsilon_{\mathrm{op}}.
    \end{equation}
    Taking $\epsilon_{\mathrm{op}}=\epsilon/(4r)$ and applying Lemma~\ref{appx:lem:projr} gives
    \begin{equation}
        \Pr\left[
            \|\hat{\rho}_{r,K}-\rho\|_1
            \leq
            \epsilon
        \right]
        \geq
        1-\delta.
    \end{equation}
    These choices give $K=\Theta(r/\epsilon)$, $N_U=O(r^2\log(d/\delta)/\epsilon^2)$, $N_S=d$, and $T=O(dr^2\log(d/\delta)/\epsilon^2)$.
\end{proof}

\subsection{Multiparameter quantum metrology}
\label{app:sec:multiparameter_quantum_metrology}
Quantum metrology studies the estimation of unknown parameters encoded in a quantum state. For a single parameter, the quantum Cramér--Rao bound can be attained by a suitable measurement. In the multiparameter setting, however, measurements that are optimal for different parameters need not be compatible, so the quantum Fisher information cannot in general be attained simultaneously in all parameter directions. Finding an optimal measurement can also require detailed knowledge of the state. Randomized measurements provide a state-independent alternative to such optimized measurements. Ref.~\cite{zhou2026randomized} showed that randomized measurements based on unitary $3$-designs are near-optimal for pure states and well-conditioned approximately low-rank states, and weakly near-optimal for well-conditioned full-parameter mixed states. More recent works obtained shallow randomized measurements for pure-state metrology~\cite{du2026complexity,mao2026near}. Here, we show that the mixed-state guarantees of Ref.~\cite{zhou2026randomized} can also be obtained using the shallow two-layer Clifford ensemble. We first introduce the definitions and relations needed for the proof.

Let $\rho_{\theta}$ be a $d$-dimensional quantum state parameterized by $\theta=(\theta_1,\ldots,\theta_m)$, and let $\mathbb M=\{M_x\}_x$ be a single-copy POVM. 
We write $p_x=\operatorname{Tr}(\rho_\theta M_x)$ and $\partial_i=\partial/\partial\theta_i$. An estimator $\hat\theta$ is locally unbiased at $\theta=\theta^0$ if
\begin{align}
    &\sum_x\hat\theta(x)p_x|_{\theta=\theta^0}=\theta^0,\\
    &\partial_i\sum_x\hat\theta_j(x)p_x|_{\theta=\theta^0}=\delta_{ij}.
\end{align}
Its mean squared error matrix (MSEM) $V$, the classical Fisher information matrix (CFIM) $I$, and the quantum Fisher information matrix (QFIM) $J$ are defined by
\begin{align}
    V(\mathbb M,\hat\theta,\rho_\theta)_{ij}
    &:=
    \sum_x
    \bigl(\hat\theta_i(x)-\theta_i\bigr)
    \bigl(\hat\theta_j(x)-\theta_j\bigr)
    p_x,
    \nonumber\\
    I(\mathbb M,\rho_\theta)_{ij}
    &:=
    \sum_{x:p_x\neq0}
    \frac{
        (\partial_i p_x)(\partial_j p_x)
    }{
        p_x
    },
    \nonumber\\
    J(\rho_\theta)_{ij}
    &:=
    \frac{1}{2}
    \operatorname{Tr}\!\left(
        \rho_\theta\{L_i,L_j\}
    \right),
\end{align}
where the symmetric logarithmic derivative (SLD) $L_i$ is defined as follows:
\begin{equation}
    \partial_i\rho_\theta
    =
    \frac{1}{2}
    \left(
        L_i\rho_\theta+\rho_\theta L_i
    \right).
\end{equation}
We choose the canonical SLDs satisfying $\Pi_0L_i\Pi_0=0$, where $\Pi_0$ projects onto the zero-eigenvalue eigenspace of $\rho_\theta$. We suppress the arguments of $V$, $I$, and $J$ when they are clear.
Following Ref.~\cite{zhou2026randomized}, we assume throughout
this subsection that $I\succ0$ at the local estimation point,
which also implies $J\succ0$.
Every locally unbiased estimator satisfies
\begin{equation}
    V \succeq I^{-1} \succeq J^{-1}.
    \label{app:eq:metrology_crb}
\end{equation}

For multiparameter mixed-state estimation, the QCRB is generally not attainable even when collective measurements over multiple copies are allowed, because the optimal measurements for different parameter directions can be incompatible. 
The Holevo Cramér--Rao bound (HCRB) accounts for this incompatibility and gives the asymptotically attainable precision limit under collective measurements~\cite{yang2019attaining}. Since the randomized measurements considered here act individually on single copies, neither the QCRB nor the HCRB is in general the appropriate benchmark for their optimality. 
Existing studies~\cite{zhou2026randomized} have therefore adopted a different comparison, and we follow this approach by comparing them with the best precision achievable by arbitrary single-copy measurements.

For a positive semidefinite cost matrix $W$, the weighted mean squared error is $\operatorname{Tr}(WV)$. Define the Fisher Cramér--Rao bound as $C_{\mathrm F}(W):=\operatorname{Tr}(WJ^{-1})$ and denote the HCRB by $C_{\mathrm H}(W)$. The optimal weighted mean squared error over single-copy measurements is
\begin{equation}
    C(W)
    :=
    \min_{(\mathbb M,\hat\theta):\mathrm{locally\ unbiased}}
    \operatorname{Tr}(WV)
    =
    \min_{\mathbb M}
    \operatorname{Tr}(WI^{-1}).
\end{equation}
For every locally unbiased estimator based on a single-copy measurement,
\begin{equation}
    C_{\mathrm F}(W)
    \leq
    C_{\mathrm H}(W)
    \leq
    C(W)
    \leq
    \operatorname{Tr}(WV).
\end{equation}
Moreover, $C_{\mathrm H}(W)\leq2C_{\mathrm F}(W)$. In general, $C(W)$ can be strictly larger than $C_{\mathrm H}(W)$, whereas for pure states $C(W)=C_{\mathrm H}(W)$~\cite{zhou2026randomized}. 
Thus, $C(W)$ provides the relevant optimal benchmark when measurements are restricted to individual copies.

Following the definition in Ref.~\cite{zhou2026randomized}, a family of POVMs is \emph{near-optimal} if there exists a constant $c\geq1$, independent of the state and POVM, such that
\begin{equation}
    I
    \succeq
    \frac{1}{c}J.
    \label{app:eq:metrology_near_optimality}
\end{equation}
For general mixed states, a single-copy POVM satisfying this condition may not exist.
A family of POVMs is \emph{weakly near-optimal} if there exist positive constants $c_1$ and $c_2$, independent of the state and POVM, such that
\begin{align}
    \operatorname{Tr}(JI^{-1})
    &\leq
    c_1 C(J),
    \label{app:eq:metrology_weak_near_optimality_cost}\\
    I
    &\succeq
    c_2
    \frac{
        \operatorname{Tr}(J^{-1}I)
    }{
        m
    }
    J.
    \label{app:eq:metrology_weak_near_optimality_matrix}
\end{align}
The first condition compares the weighted error for $W=J$ with the optimal single-copy measurement. The second prevents the Fisher information from being concentrated in only a few parameter directions.

To prove these optimality guarantees, Ref.~\cite{zhou2026randomized} constructs an explicit locally unbiased estimator from classical shadows~\cite{huang2020predicting}.
Let $\hat{\rho}$ denote an unbiased shadow estimator satisfying $\mathbb{E}[\hat{\rho}]=\rho_{\theta}$. At a local point $\theta=\theta^0$, Hermitian operators $\{X_i\}_{i=1}^m$, called \emph{deviation observables}, are chosen to satisfy
\begin{equation}
    \left.
    \operatorname{Tr}(\rho_{\theta}X_i)
    \right|_{\theta=\theta^0}
    =0,
    \qquad
    \left.
    \operatorname{Tr}\!\left(
        (\partial_i\rho_{\theta})X_j
    \right)
    \right|_{\theta=\theta^0}
    =
    \delta_{ij}.
    \label{app:eq:deviation_observable}
\end{equation}
The corresponding \emph{local shadow estimator} is
\begin{equation}
    \hat{\theta}_i(\theta^0)
    =
    \theta_i^0
    +
    \operatorname{Tr}\!\left(
        \left.X_i\right|_{\theta=\theta^0}\hat{\rho}
    \right).
    \label{app:eq:local_shadow_estimator}
\end{equation}
Since $\mathbb{E}[\hat{\rho}]=\rho_{\theta}$, Eq.~\eqref{app:eq:deviation_observable} directly implies that $\hat{\theta}$ is locally unbiased at $\theta^0$. 

The deviation observables can be chosen differently depending on the family of states. For pure states and the full-parameter mixed-state settings considered below, Ref.~\cite{zhou2026randomized} uses
\begin{equation}
    X_i
    =
    \sum_j
    (J^{-1})_{ij}L_j.
    \label{app:eq:deviation_observable_sld}
\end{equation}
Indeed, $\operatorname{Tr}(\rho_{\theta}L_j)=0$ and $\operatorname{Tr}((\partial_i\rho_{\theta})L_j)=J_{ij}$, so Eq.~\eqref{app:eq:deviation_observable} follows immediately. For approximately low-rank states, a different choice of deviation observables is used to remove the contribution from the small eigenvalue subspace, as described below.

The role of the local shadow estimator is to upper bound the performance of the measurement without directly evaluating its CFIM. The MSEM $V$ depends on the choice of estimator $\hat \theta$, whereas $I$ depends only on the POVM $\mathbb{M}$. 
Nevertheless, since the local shadow estimator is locally unbiased, Eq.~\eqref{app:eq:metrology_crb} gives $I^{-1}\preceq V$.
Consequently, an upper bound on $V$ provides a lower bound on $I$. In particular, $V\preceq cJ^{-1}$ implies $I\succeq J/c$, while for the weak near-optimality condition,
\begin{equation}
    C(J)
    \leq
    \operatorname{Tr}(JI^{-1})
    \leq
    \operatorname{Tr}(JV).
    \label{app:eq:local_shadow_cost}
\end{equation}
Thus, for full-parameter mixed states, it is sufficient to show that $\operatorname{Tr}(JV)$ matches the optimal single-copy measurement cost $C(J)$ up to a constant factor. 

\paragraph{\textbf{Pure states.}} 
We first consider a parameterized pure state $\rho_\theta=\ket{\psi_\theta}\!\bra{\psi_\theta}$. For pure states, the local shadow estimator defined above admits a particularly simple structure, and the following fact holds 
\begin{fact}[\cite{zhou2026randomized}]
\label{app:fact:pure_sld}
    For a pure state $\rho_\theta$, the SLDs satisfy
    \begin{equation}
        \operatorname{Tr}(L_iL_j)=2J_{ij}.
    \end{equation}
\end{fact}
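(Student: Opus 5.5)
The plan is to use the eigendecomposition of $\rho_\theta$ together with the defining equation of the SLD and the canonical gauge $\Pi_0L_i\Pi_0=0$. Write $\rho_\theta=\ket{\psi}\!\bra{\psi}$ and $\Pi_0=I-\ket{\psi}\!\bra{\psi}$. Since $\operatorname{Tr}\rho_\theta=1$ and $\rho_\theta^2=\rho_\theta$, differentiating gives two constraints: $\operatorname{Tr}(\partial_i\rho_\theta)=0$, and $\partial_i\rho_\theta=\rho_\theta(\partial_i\rho_\theta)+(\partial_i\rho_\theta)\rho_\theta$.

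From the second identity I read off the block structure $\rho_\theta(\partial_i\rho_\theta)\rho_\theta=0$ and $\Pi_0(\partial_i\rho_\theta)\Pi_0=0$. Comparing with $\partial_i\rho_\theta=\tfrac12(L_i\rho_\theta+\rho_\theta L_i)$ block by block suggests the candidate $L_i=2\partial_i\rho_\theta$. For this candidate, $\tfrac12(L_i\rho+\rho L_i)=\partial_i\rho\,\rho+\rho\,\partial_i\rho=\partial_i\rho$, so it solves the SLD equation. It also satisfies the gauge, since $\Pi_0L_i\Pi_0=2\Pi_0(\partial_i\rho)\Pi_0=0$.

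Uniqueness within the gauge is the step needing the most care. Two SLDs can differ only by a Hermitian $\Delta$ with $\Delta\rho+\rho\Delta=0$. Sandwiching this by $\ket{\psi}$ gives $\bra{\psi}\Delta\ket{\psi}=0$. Sandwiching by $\Pi_0$ on one side and $\ket{\psi}$ on the other gives $\Pi_0\Delta\ket{\psi}=0$. The gauge kills the $\Pi_0\Delta\Pi_0$ block, so $\Delta=0$. Hence the canonical SLD is exactly $L_i=2\partial_i\rho_\theta$.

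The identity then follows by direct computation. We have $\operatorname{Tr}(L_iL_j)=4\operatorname{Tr}(\partial_i\rho\,\partial_j\rho)$. On the other side, $J_{ij}=\tfrac12\operatorname{Tr}(\rho\{L_i,L_j\})=2\operatorname{Tr}(\rho\,\partial_i\rho\,\partial_j\rho+\rho\,\partial_j\rho\,\partial_i\rho)$. Substituting $\partial_i\rho=\rho\,\partial_i\rho+\partial_i\rho\,\rho$ into $\operatorname{Tr}(\partial_i\rho\,\partial_j\rho)$ and using cyclicity gives $\operatorname{Tr}(\partial_i\rho\,\partial_j\rho)=\operatorname{Tr}(\rho\,\partial_i\rho\,\partial_j\rho)+\operatorname{Tr}(\rho\,\partial_j\rho\,\partial_i\rho)$. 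Therefore $\operatorname{Tr}(L_iL_j)=4\cdot J_{ij}/2=2J_{ij}$, as claimed. No step is a real obstacle; the only subtle point is the gauge-fixed uniqueness argument.
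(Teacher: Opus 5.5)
Your proof is correct and follows the argument the paper relies on. The paper cites this fact from \cite{zhou2026randomized}, and in the proof of Theorem~\ref{app:thm:pure_state_metrology} it simply takes $L_i=2\partial_i\rho_\theta$; the identity then follows from $\partial_i\rho_\theta=\rho_\theta\,\partial_i\rho_\theta+\partial_i\rho_\theta\,\rho_\theta$ exactly as you compute. Your gauge-fixed uniqueness step is a useful addition that the paper leaves implicit: adding any Hermitian $\Pi_0X\Pi_0$ to $L_i$ preserves the SLD equation and $J$ but can change $\operatorname{Tr}(L_iL_j)$, so the identity genuinely requires the canonical choice $\Pi_0L_i\Pi_0=0$.
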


\begin{theorem}[Near-optimality of Pure states]
\label{app:thm:pure_state_metrology}
    For every parameterized pure state $\rho_\theta=\ket{\psi_\theta}\!\bra{\psi_\theta}$, the POVM $\mathbb{M}_{2\mathrm L}$ induced by the two-layer Clifford ensemble under the block size condition of Theorem~\ref{main:thm:1} satisfies, for some constant $c\geq1$,
    \begin{equation}
        I(\mathbb{M}_{2\mathrm L},\rho_\theta)
        \succeq
        \frac{1}{c}
        J(\rho_\theta).
    \end{equation}
    Hence, $\mathbb{M}_{2\mathrm L}$ is near-optimal for multiparameter pure-state estimation.
\end{theorem}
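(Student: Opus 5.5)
We will follow the local-shadow-estimator route described just above the statement. At a fixed point $\theta^0$ we take the deviation observables $X_i=\sum_j(J^{-1})_{ij}L_j$ from Eq.~\eqref{app:eq:deviation_observable_sld}. We then form the locally unbiased estimator $\hat\theta_i=\theta_i^0+\operatorname{Tr}(X_i\hat\rho)$, where $\hat\rho$ is the exact-inverse single-shot shadow of the two-layer Clifford ensemble. Since $I^{-1}\preceq V$ by Eq.~\eqref{app:eq:metrology_crb}, it suffices to show $V\preceq cJ^{-1}$; this gives $I\succeq J/c$. Because $V$ is the covariance matrix of the unbiased estimator at $\theta^0$, the bound $V\preceq cJ^{-1}$ is equivalent to a scalar statement. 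For every real vector $w$ we need
\[
\operatorname{Var}\bigl[\operatorname{Tr}(O_w\hat\rho)\bigr]\le c\,w^TJ^{-1}w,
\qquad
O_w:=\sum_i w_iX_i .
\]

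We bound this variance using Theorem~\ref{appx:thm:1}, applied with state $\rho_{\mathrm{thm}}=\rho_{\theta^0}$ and observable $O_{\mathrm{thm}}=O_w$. This gives $\operatorname{Var}\le C_F\|(O_w)_0\|_2^2\le C_F\|O_w\|_2^2$, since removing the trace part only decreases the Frobenius norm. Next we compute $\|O_w\|_2^2$ using Fact~\ref{app:fact:pure_sld}:
\[
\|O_w\|_2^2=\sum_{i,j,k,l}w_i(J^{-1})_{ij}\operatorname{Tr}(L_jL_k)(J^{-1})_{kl}w_l
=2\,w^TJ^{-1}JJ^{-1}w=2\,w^TJ^{-1}w .
\]
We use here that the $L_j$ are Hermitian, so $\operatorname{Tr}(L_jL_k)$ is real and symmetric. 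Hence $V\preceq 2C_FJ^{-1}$, and we may take $c=\max\{1,2C_F\}$ with $C_F=8e^{45/C}$. This gives near-optimality.

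The remaining points to check are minor, and we do not expect a real obstacle. First, the estimator must be locally unbiased. The value condition follows from $\operatorname{Tr}(\rho L_j)=0$, and the derivative condition follows from $\operatorname{Tr}(\partial_i\rho\,L_j)=J_{ij}$, because $\mathbb E[\hat\rho]=\rho_\theta$ exactly. Second, the MSEM of a locally unbiased estimator evaluated at $\theta^0$ must equal the covariance; it does, since the mean equals $\theta^0$ there. Third, the standing assumption $I\succ0$ must hold so that $I^{-1}$ is meaningful. The one subtle point is that Theorem~\ref{appx:thm:1} requires a Hermitian observable and holds uniformly over states. This is exactly why it applies to the arbitrary linear combinations $O_w$ of SLDs, and that uniformity is the whole content of the argument.
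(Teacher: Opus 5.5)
Your proposal is correct and follows essentially the same route as the paper: the local shadow estimator with the SLD-based deviation observables, Theorem~\ref{appx:thm:1} applied to $X_{\boldsymbol a}=\sum_i a_iX_i$, and Fact~\ref{app:fact:pure_sld} to obtain $V\preceq 2C_FJ^{-1}$ and hence $c=2C_F$. The only cosmetic difference is how the trace part of $X_{\boldsymbol a}$ is removed. The paper notes that $\operatorname{Tr}(X_{\boldsymbol a})=0$ because $L_i=2\partial_i\rho_\theta$ for pure states. You instead use $\|(O_w)_0\|_2\le\|O_w\|_2$, which works equally well.
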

\begin{proof}
    Let $C_F>1$ be the constant in Theorem~\ref{appx:thm:1} such that
    \begin{equation}
        \operatorname{Var}\!\left[
            \operatorname{Tr}(O\hat{\rho})
        \right]
        \leq
        C_F\|O_0\|_2^2
    \end{equation}
    for every Hermitian observable $O$. For an arbitrary vector $\boldsymbol{a}\in\mathbb{R}^m$, define $X_{\boldsymbol{a}}:=\sum_i a_iX_i$. For a pure state, we can take $L_i=2\partial_i\rho_\theta$. Hence, $\operatorname{Tr}(L_i)=2\partial_i\operatorname{Tr}(\rho_\theta)=0$, and therefore $\operatorname{Tr}(X_i)=0$ for every $i$. Thus, $\operatorname{Tr}(X_{\boldsymbol{a}})=0$. Since the local shadow estimator is locally unbiased at $\theta^0$,
    \begin{align}
        \boldsymbol{a}^TV\boldsymbol{a}
        &=
        \operatorname{Var}\!\left[
            \sum_i a_i\hat{\theta}_i
        \right]
        =
        \operatorname{Var}\!\left[
            \operatorname{Tr}(X_{\boldsymbol{a}}\hat{\rho})
        \right]\\
        &\leq
        C_F\|X_{\boldsymbol{a}}\|_2^2,
    \end{align}
    where the last inequality follows from Theorem~\ref{appx:thm:1}.

    Using $X_i=\sum_j(J^{-1})_{ij}L_j$ and Fact~\ref{app:fact:pure_sld},
    \begin{align}
        \|X_{\boldsymbol{a}}\|_2^2
        &=
        \sum_{i,j,k,\ell}
        a_i a_j
        (J^{-1})_{ik}
        (J^{-1})_{j\ell}
        \operatorname{Tr}(L_kL_\ell)\\
        &=
        2\boldsymbol{a}^TJ^{-1}\boldsymbol{a}.
    \end{align}
    Therefore,
    \begin{equation}
        \boldsymbol{a}^TV\boldsymbol{a}
        \leq
        2C_F \boldsymbol{a}^TJ^{-1}\boldsymbol{a}
    \end{equation}
    for every $\boldsymbol{a}\in\mathbb{R}^m$, and hence
    $V\preceq2C_FJ^{-1}$.
    Since $I^{-1}\preceq V$,
    \begin{equation}
        I\succeq\frac{1}{2C_F}J.
    \end{equation}
    Thus, $\mathbb{M}_{2\mathrm L}$ is near-optimal with
    $c=2C_F$.
\end{proof}

\paragraph{\textbf{Approximately low-rank states.}}
The pure-state result extends to mixed states whose quantum Fisher information is largely preserved in a low-dimensional subspace with eigenvalues bounded away from zero. This property is formalized by approximately low-rank well-conditioned states \cite{zhou2026randomized}. Theorem~\ref{appx:thm:1} then gives a constant-factor comparison between the CFIM and QFIM for this class.

\begin{definition}[Approximately low-rank well-conditioned state~\cite{zhou2026randomized}]
    Let $\rho=\sum_a\lambda_a|\psi_a\rangle\langle\psi_a|$, and fix $\mu,c\in(0,1]$. Let $\Pi$ project onto the eigenvectors with $\lambda_a\ge\mu$, and set $\Pi^\perp=\mathbb{I}-\Pi$ and $p=\operatorname{Tr}(\rho\Pi^\perp)$. For $p>0$, let $J^\perp$ be the QFIM of $\Pi^\perp\rho\Pi^\perp/p$, and let $J^p$ be the Fisher information matrix of the Bernoulli distribution $\{p,1-p\}$, so $J^p_{ij}=(\partial_i p)(\partial_j p)/(p(1-p))$. The model is called $(\mu,c)$-approximately low-rank
    well-conditioned if 
    \begin{equation}
        \Pi\neq0,\qquad
        J-pJ^\perp-J^p \succeq cJ .
        \label{eq:approximately_low_rank_condition}
    \end{equation}
\end{definition}

\begin{fact}[Modified SLD bound~\cite{zhou2026randomized}]
Let $L_i$ be the SLDs of $\rho$, and define
\begin{equation}
    \widetilde L_i
    :=
    L_i-\Pi^\perp L_i\Pi^\perp
    +\frac{\partial_i p}{1-p}\Pi .
    \label{eq:modified_sld}
\end{equation}
Then $\operatorname{Tr}(\rho\widetilde L_i)=0$. The matrix
$\widetilde J$ is defined by
\begin{equation}
    \widetilde J_{ij}
    :=
    \operatorname{Tr}(\widetilde L_i\partial_j\rho)
    =
    J_{ij}-pJ^\perp_{ij}-J^p_{ij}.
    \label{eq:modified_qfim}
\end{equation}
Hence $\widetilde J\succeq cJ$. Moreover, defining
$\widetilde K_{ij}:=
\operatorname{Tr}(\widetilde L_i\widetilde L_j)$, one has
\begin{equation}
    \widetilde K\preceq\frac{5}{\mu}J .
    \label{eq:modified_sld_frobenius}
\end{equation}
\end{fact}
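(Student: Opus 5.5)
We prove the three claims in turn, working at the fixed point $\theta^0$ in the eigenbasis $\{|\psi_a\rangle\}$ of $\rho$. Write $\rho=\rho_\Pi+\rho_\perp$ with $\rho_\Pi=\Pi\rho\Pi$ and $\rho_\perp=\Pi^\perp\rho\Pi^\perp$; the cross blocks vanish because $\Pi$ is spectral. We assume $\Pi$ varies smoothly near $\theta^0$, i.e.\ no eigenvalue crosses $\mu$ there. Because $\Pi$ is a projector commuting with $\rho$, we have $\Pi(\partial_i\Pi)\Pi=\Pi^\perp(\partial_i\Pi)\Pi^\perp=0$, and hence $\operatorname{Tr}(\rho\,\partial_i\Pi)=0$. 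This gives
\begin{equation}
\partial_i p=\operatorname{Tr}(\Pi^\perp\partial_i\rho),\qquad \operatorname{Tr}(\Pi\partial_i\rho)=-\partial_i p .
\end{equation}
The canonical SLD has entries $(L_i)_{ab}=2(\partial_i\rho)_{ab}/(\lambda_a+\lambda_b)$ whenever $\lambda_a+\lambda_b>0$, and it vanishes on $\ker\rho$.

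\textbf{Zero mean.} Compress the SLD equation to the $\Pi^\perp$ block. This gives $\Pi^\perp\partial_i\rho\Pi^\perp=\tfrac12\{L_i^\perp,\rho_\perp\}$ with $L_i^\perp:=\Pi^\perp L_i\Pi^\perp$. Taking traces yields $\operatorname{Tr}(\rho L_i^\perp)=\partial_i p$. Using $\operatorname{Tr}(\rho L_i)=\operatorname{Tr}\partial_i\rho=0$ and $\operatorname{Tr}(\rho\Pi)=1-p$, we get
\begin{equation}
\operatorname{Tr}(\rho\widetilde L_i)=0-\partial_i p+\partial_i p=0 .
\end{equation}

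\textbf{Identity for $\widetilde J$.} Set $\sigma=\rho_\perp/p$ and let $L^\sigma_i$ be its canonical SLD on $\operatorname{supp}\sigma$. Write $\Pi^\perp\partial_i\rho\Pi^\perp=p\,\partial_i\sigma+(\partial_i p)\sigma$ and compare with the anticommutator form above. This should give
\begin{equation}
L_i^\perp=L_i^\sigma+\frac{\partial_i p}{p}\,\Pi^\perp_{\mathrm{supp}} .
\end{equation}
Here $\Pi^\perp_{\mathrm{supp}}$ denotes the projector onto the support of $\rho_\perp$. We must check that the kernel of $\rho_\perp$ contributes nothing; the canonical choice $\Pi_0L_i\Pi_0=0$ should handle this. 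Then, using $\operatorname{Tr}(\sigma L^\sigma_i)=0$ and $\operatorname{Tr}\partial_j\sigma=0$,
\begin{equation}
\operatorname{Tr}(L_i^\perp\partial_j\rho)=pJ^\perp_{ij}+\frac{\partial_i p\,\partial_j p}{p}.
\end{equation}
The last piece is
\begin{equation}
\frac{\partial_i p}{1-p}\operatorname{Tr}(\Pi\partial_j\rho)=-\frac{\partial_i p\,\partial_j p}{1-p}.
\end{equation}
Summing the three contributions gives $J_{ij}-pJ^\perp_{ij}-\partial_ip\partial_jp\bigl(\tfrac1p+\tfrac1{1-p}\bigr)=J_{ij}-pJ^\perp_{ij}-J^p_{ij}$. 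Combined with the approximately low-rank well-conditioned assumption, this yields $\widetilde J\succeq cJ$.

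\textbf{Frobenius bound.} Fix $\boldsymbol a\in\mathbb R^m$ and set $L_{\boldsymbol a}=\sum_ia_iL_i$, $\widetilde L_{\boldsymbol a}=X+\gamma\Pi$ with $X=L_{\boldsymbol a}-\Pi^\perp L_{\boldsymbol a}\Pi^\perp$ and $\gamma=\partial_{\boldsymbol a}p/(1-p)$.

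1. Every entry $(a,b)$ retained in $X$ has at least one index in $\Pi$, so $\lambda_a+\lambda_b\ge\mu$. Hence
\begin{equation}
\|X\|_2^2\le\frac1\mu\sum_{ab}(\lambda_a+\lambda_b)|(L_{\boldsymbol a})_{ab}|^2=\frac2\mu\operatorname{Tr}(\rho L_{\boldsymbol a}^2)=\frac2\mu\,\boldsymbol a^TJ\boldsymbol a .
\end{equation}

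2. For the identity term, $\operatorname{Tr}\Pi\le(1-p)/\mu$, so
\begin{equation}
\gamma^2\operatorname{Tr}\Pi\le\frac{(\partial_{\boldsymbol a}p)^2}{\mu(1-p)}\le\frac1\mu\,\boldsymbol a^TJ^p\boldsymbol a\le\frac1\mu\,\boldsymbol a^TJ\boldsymbol a .
\end{equation}
The last inequality is the monotonicity $J^p\preceq J$, since $J^p$ is the classical Fisher information of the two-outcome measurement $\{\Pi,\Pi^\perp\}$.

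3. The cross term is $2\gamma\operatorname{Tr}(\Pi L_{\boldsymbol a})$. Bounding it by Cauchy--Schwarz alone gives the constant $6/\mu$.

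The main obstacle is step 3: reducing the constant from $6$ to the stated $5$. I would first try to compute $\operatorname{Tr}(\Pi L_{\boldsymbol a})$ exactly from the SLD relation, e.g.\ via $\operatorname{Tr}(\rho_\Pi L_{\boldsymbol a})=-\partial_{\boldsymbol a}p$ combined with the eigenvalue weighting $\lambda_a\ge\mu$. Failing that, I would apply a weighted Cauchy--Schwarz inequality that places the cross term against the diagonal $\Pi$-block of $\|X\|_2^2$ rather than against all of $\|X\|_2^2$. The remaining technical care is the derivative-of-projector bookkeeping, which relies on the spectral-gap smoothness assumption.
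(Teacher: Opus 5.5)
Your proposal does not prove the last claim, $\widetilde K\preceq\frac5\mu J$. Steps 1--2 plus Cauchy--Schwarz against all of $\|X\|_2$ only reach $(1+\sqrt2)^2/\mu\approx5.83/\mu$ (your $6/\mu$), and step 3 is left as a list of things you would try. The paper gives no proof to compare against, since the Fact is quoted from Ref.~\cite{zhou2026randomized}, so your argument has to stand on its own. Both of your suggested routes close the gap in a line.

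Pairing the cross term only with the diagonal $\Pi$-block gives $|\operatorname{Tr}(\Pi L_{\boldsymbol a})|^2\le\operatorname{Tr}\Pi\,\sum_{a\in\Pi}(L_{\boldsymbol a})_{aa}^2$. Moreover, $\sum_{a\in\Pi}(L_{\boldsymbol a})_{aa}^2\le\mu^{-1}\sum_{a\in\Pi}\lambda_a(L_{\boldsymbol a})_{aa}^2\le\mu^{-1}\boldsymbol a^TJ\boldsymbol a$. Combined with your step 2, the cross term is at most $2\mu^{-1}\boldsymbol a^TJ\boldsymbol a$, and $2+2+1=5$.

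The exact route is sharper. The diagonal of the SLD equation gives $\lambda_a(L_{\boldsymbol a})_{aa}=(\partial_{\boldsymbol a}\rho)_{aa}$, so $\sum_{a\in\Pi}\lambda_a(L_{\boldsymbol a})_{aa}=\operatorname{Tr}(\Pi\,\partial_{\boldsymbol a}\rho)=-\partial_{\boldsymbol a}p$. Completing the square,
\[
\sum_{a\in\Pi}\bigl((L_{\boldsymbol a})_{aa}+\gamma\bigr)^2\le\frac1\mu\sum_{a\in\Pi}\lambda_a\bigl((L_{\boldsymbol a})_{aa}+\gamma\bigr)^2=\frac1\mu\Bigl(\sum_{a\in\Pi}\lambda_a(L_{\boldsymbol a})_{aa}^2-\frac{(\partial_{\boldsymbol a}p)^2}{1-p}\Bigr).
\]
Weight the off-diagonal $\Pi\Pi$ entries by $(\lambda_a+\lambda_b)/(2\mu)\ge1$ and the $\Pi\Pi^\perp$ entries by $(\lambda_a+\lambda_b)/\mu\ge1$. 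Comparing with $\boldsymbol a^TJ\boldsymbol a=\tfrac12\sum_{a,b}(\lambda_a+\lambda_b)|(L_{\boldsymbol a})_{ab}|^2$ then yields $\widetilde K\preceq\frac2\mu J$, which implies the stated bound.

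Second, your setup for $J^\perp$ is inconsistent. You assume $\Pi$ moves smoothly with $\theta$. However, $\Pi^\perp\partial_i\rho\,\Pi^\perp=p\,\partial_i\sigma+(\partial_ip)\sigma$ holds only if $\sigma_\theta=\Pi^\perp\rho_\theta\Pi^\perp/p_\theta$ is built with $\Pi$ frozen at $\theta^0$. With the moving spectral projector, $\partial_i\sigma$ acquires $\Pi$--$\Pi^\perp$ blocks, $J^\perp$ picks up extra positive terms, and the identity for $\widetilde J$ fails.

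For a concrete counterexample, take $\rho_\theta=e^{-i\theta Y/2}\mathrm{diag}(1-p,p)\,e^{i\theta Y/2}$ with $0<p<\mu\le1-p$. Then $\partial p=0$, $\widetilde L=L$, and $\widetilde J=J=(1-2p)^2$. But the moving $\sigma_\theta$ is a rotating pure state with $J^\perp=1$, so $J-pJ^\perp-J^p=(1-2p)^2-p\neq\widetilde J$.

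So drop the gap assumption and freeze $\Pi$. Then $\partial_ip=\operatorname{Tr}(\Pi^\perp\partial_i\rho)$ is immediate, $J^p\preceq J$ is monotonicity for the fixed POVM $\{\Pi,\Pi^\perp\}$, and your computation of $\widetilde J$ goes through. One small step there also needs a word. The term $(\partial_ip)\operatorname{Tr}(\Pi^\perp_{\mathrm{supp}}\partial_j\sigma)$ vanishes because the diagonal entries of $\partial_j\sigma$ on $\ker\sigma$ are zero by positivity, not merely because $\operatorname{Tr}\partial_j\sigma=0$.
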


\begin{theorem}[Near-optimality of approximately low-rank states]\label{thm:approximately_low_rank_metrology}
    Let $\mathcal E$ be the measurement ensemble of Theorem~\ref{appx:thm:1} under the same circuit assumptions. For every $(\mu,c)$-approximately low-rank well-conditioned model with $J\succ0$, its CFIM satisfies
    \begin{equation}
        I\succeq\frac{\mu c^2}{5C_{\rm F}}J,
    \end{equation}
    where $C_F$ is the constant in Theorem~\ref{appx:thm:1}.
\end{theorem}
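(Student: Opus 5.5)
The plan is to mirror the pure-state argument (Theorem~\ref{app:thm:pure_state_metrology}), replacing the SLDs with the modified SLDs $\widetilde L_i$.

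\textbf{Deviation observables.} Define
\begin{equation}
X_i:=\sum_j(\widetilde J^{-1})_{ij}\widetilde L_j .
\end{equation}
This requires $\widetilde J$ to be invertible, which holds because $\widetilde J\succeq cJ\succ0$. Using $\operatorname{Tr}(\rho\widetilde L_j)=0$ and $\operatorname{Tr}((\partial_i\rho)\widetilde L_j)=\widetilde J_{ji}$, the conditions in Eq.~\eqref{app:eq:deviation_observable} become $\operatorname{Tr}(\rho X_i)=0$ and $\operatorname{Tr}((\partial_i\rho)X_j)=\sum_k(\widetilde J^{-1})_{jk}\widetilde J_{ki}=\delta_{ij}$. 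Symmetry of $\widetilde J$ is not needed here, since only $\widetilde J\widetilde J^{-1}=I$ enters. The local shadow estimator built from the unbiased two-layer shadow $\hat\rho$ is therefore locally unbiased, and Eq.~\eqref{app:eq:metrology_crb} gives $I^{-1}\preceq V$.

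\textbf{Variance bound.} For $\boldsymbol a\in\mathbb R^m$, set $X_{\boldsymbol a}=\sum_ia_iX_i$. Theorem~\ref{appx:thm:1}, together with $\|(X_{\boldsymbol a})_0\|_2\le\|X_{\boldsymbol a}\|_2$, gives
\begin{equation}
\boldsymbol a^TV\boldsymbol a=\operatorname{Var}[\operatorname{Tr}(X_{\boldsymbol a}\hat\rho)]\le C_F\|X_{\boldsymbol a}\|_2^2 .
\end{equation}
With $\boldsymbol b:=\widetilde J^{-1}\boldsymbol a$ (taking $\widetilde J$ symmetric; otherwise transpose appropriately), we have $\|X_{\boldsymbol a}\|_2^2=\boldsymbol b^T\widetilde K\boldsymbol b$. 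The bound in Eq.~\eqref{eq:modified_sld_frobenius} then gives
\begin{equation}
\|X_{\boldsymbol a}\|_2^2\le\tfrac{5}{\mu}\,\boldsymbol a^T\widetilde J^{-1}J\widetilde J^{-1}\boldsymbol a .
\end{equation}

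\textbf{Matrix inequality.} It remains to show $\widetilde J^{-1}J\widetilde J^{-1}\preceq c^{-2}J^{-1}$. From $\widetilde J\succeq cJ$ and operator monotonicity of inversion, $\widetilde J^{-1}\preceq c^{-1}J^{-1}$. Conjugating by $J^{1/2}$, the claim is equivalent to $(J^{1/2}\widetilde J^{-1}J^{1/2})^2\preceq c^{-2}I$. This holds because $M:=J^{1/2}\widetilde J^{-1}J^{1/2}$ is positive with $M\preceq c^{-1}I$, so $M^2\preceq c^{-2}I$.

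This step is the main obstacle: it needs $\widetilde J$ symmetric and positive definite. Symmetry follows from the explicit formula $\widetilde J=J-pJ^\perp-J^p$, since each of $J$, $J^\perp$, $J^p$ is symmetric.

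\textbf{Conclusion.} Combining the steps gives $V\preceq\frac{5C_F}{\mu c^2}J^{-1}$. Since $I^{-1}\preceq V$, inverting yields $I\succeq\frac{\mu c^2}{5C_F}J$, which is the claimed bound.
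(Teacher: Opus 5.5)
Your proposal is correct and follows the paper's proof essentially step by step: the same deviation observables $X_i=\sum_j(\widetilde J^{-1})_{ij}\widetilde L_j$, the same application of Theorem~\ref{appx:thm:1} to $X_{\boldsymbol a}$, and the same use of $\widetilde K\preceq\frac{5}{\mu}J$ followed by $I^{-1}\preceq V$. The only difference is cosmetic and lies in the final matrix inequality: the paper conjugates $J\preceq c^{-1}\widetilde J$ by $\widetilde J^{-1}$ to get $\widetilde J^{-1}J\widetilde J^{-1}\preceq c^{-1}\widetilde J^{-1}\preceq c^{-2}J^{-1}$ directly, whereas you go through $M=J^{1/2}\widetilde J^{-1}J^{1/2}$, and both routes are valid.
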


\begin{proof}
    Define
    \begin{equation}
        X_i
        :=
        \sum_j(\widetilde J^{-1})_{ij}\widetilde L_j .
        \label{eq:approximately_low_rank_deviation}
    \end{equation}
    Then $\operatorname{Tr}(\rho X_i)=0$ and $\operatorname{Tr}(X_i\partial_j\rho)=\delta_{ij}$. Thus the local shadow estimator $\widehat\theta_i=(\theta_0)_i+ \operatorname{Tr}(X_i\widehat\rho_{\mathcal E})$ is locally unbiased. For $\boldsymbol{a}\in\mathbb{R}^m$, let $X_{\boldsymbol{a}}=\sum_i a_iX_i$. Theorem~\ref{appx:thm:1} applied to $X_{\boldsymbol{a}}$ gives
    \begin{align}
        \boldsymbol{a}^{T}V\boldsymbol{a}
        &=
        \operatorname{Var}\!\left[
            \operatorname{Tr}(X_{\boldsymbol{a}}\widehat\rho_{\mathcal E})
        \right] \\
        &\le
        C_{\rm F}\lVert X_{\boldsymbol{a},0}\rVert_2^2
        \le
        C_{\rm F}\lVert X_{\boldsymbol{a}}\rVert_2^2 \\
        &=
        C_{\rm F}
        \boldsymbol{a}^{T}
        \widetilde J^{-1}\widetilde K\widetilde J^{-1}
        \boldsymbol{a},
    \end{align}
    where $X_{\boldsymbol{a},0}$ denotes the traceless part of $X_{\boldsymbol{a}}$. Since this holds for every $\boldsymbol{a}$,
    \begin{align}
        V
        &\preceq
        C_{\rm F}
        \widetilde J^{-1}\widetilde K\widetilde J^{-1} \\
        &\preceq
        \frac{5C_{\rm F}}{\mu}
        \widetilde J^{-1}J\widetilde J^{-1} \\
        &\preceq
        \frac{5C_{\rm F}}{\mu c}\widetilde J^{-1}
        \preceq
        \frac{5C_{\rm F}}{\mu c^2}J^{-1}.
    \end{align}
    Since $I^{-1}\preceq V$,
    \begin{equation}
        I\succeq\frac{\mu c^2}{5C_{\rm F}}J,
    \end{equation}
    as desired.
\end{proof}

\paragraph{\textbf{Full-parameter mixed states.}}

Let $\rho_\theta=\sum_{a=1}^r\lambda_a |\psi_a\rangle\langle\psi_a|$ have rank $r$. Let $\kappa$ be a uniform upper bound on $\lambda_{\max}/\lambda_{\min}^+$, where $\lambda_{\min}^+$ is the smallest positive eigenvalue. 
The value $\kappa$ measures how uneven the nonzero eigenvalues are. Since $\lambda_{\max}\geq1/r$, one has $\lambda_{\min}^+\geq1/(r\kappa)$. The bound below shows that the two-layer measurement loses at most a factor proportional to $r\kappa$. 
For models with the maximal number of unknown parameters, the Gill--Massar (GM) bound converts this estimate into weak near-optimality.

\begin{fact}[SLD bound~\cite{zhou2026randomized}]
\label{app:fact:full_parameter_sld}
    Let $L_i$ be the SLDs and define
    $K_{ij}:=\operatorname{Tr}(L_iL_j)$. For every rank-$r$ state
    with condition number at most $\kappa$,
    \begin{equation}
        K\preceq2r\kappa J.
        \label{app:eq:full_parameter_sld}
    \end{equation}
\end{fact}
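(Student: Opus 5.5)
The plan is to diagonalize $\rho_\theta$ and compare the two quadratic forms $\boldsymbol{a}^TK\boldsymbol{a}$ and $\boldsymbol{a}^TJ\boldsymbol{a}$ entry by entry in the eigenbasis.

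\textbf{Step 1 (scalarize).} Fix $\boldsymbol{a}\in\mathbb R^m$ and set $L_{\boldsymbol{a}}:=\sum_i a_iL_i$. This operator is Hermitian. By linearity,
\begin{equation}
    \boldsymbol{a}^TK\boldsymbol{a}=\operatorname{Tr}(L_{\boldsymbol{a}}^2),
    \qquad
    \boldsymbol{a}^TJ\boldsymbol{a}=\operatorname{Tr}(\rho_\theta L_{\boldsymbol{a}}^2).
\end{equation}
It therefore suffices to prove $\operatorname{Tr}(L_{\boldsymbol{a}}^2)\le 2r\kappa\,\operatorname{Tr}(\rho_\theta L_{\boldsymbol{a}}^2)$ for every $\boldsymbol{a}$.

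\textbf{Step 2 (eigenbasis expansion).} Write $\rho_\theta=\sum_a\lambda_a|\psi_a\rangle\langle\psi_a|$ over a full orthonormal basis, with $\lambda_a=0$ off the support. Let $\ell_{ab}:=\langle\psi_a|L_{\boldsymbol{a}}|\psi_b\rangle$. Then
\begin{equation}
    \operatorname{Tr}(L_{\boldsymbol{a}}^2)=\sum_{a,b}|\ell_{ab}|^2,
    \qquad
    \operatorname{Tr}(\rho_\theta L_{\boldsymbol{a}}^2)=\sum_{a,b}\lambda_a|\ell_{ab}|^2=\sum_{a,b}\frac{\lambda_a+\lambda_b}{2}|\ell_{ab}|^2,
\end{equation}
where the last equality symmetrizes using $|\ell_{ab}|=|\ell_{ba}|$.

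\textbf{Step 3 (use the canonical SLD).} The canonical choice $\Pi_0L_i\Pi_0=0$ gives $\ell_{ab}=0$ whenever $\lambda_a=\lambda_b=0$. Every surviving term therefore has at least one of $\lambda_a,\lambda_b\ge\lambda_{\min}^+$, so $(\lambda_a+\lambda_b)/2\ge\lambda_{\min}^+/2$ on every term that contributes.

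\textbf{Step 4 (use the rank and condition number).} Because $\operatorname{Tr}\rho_\theta=1$ and there are $r$ nonzero eigenvalues, $\lambda_{\max}\ge 1/r$. Combined with $\lambda_{\max}/\lambda_{\min}^+\le\kappa$, this gives $\lambda_{\min}^+\ge 1/(r\kappa)$. Substituting into Step 3 yields
\begin{equation}
    \operatorname{Tr}(L_{\boldsymbol{a}}^2)\le 2r\kappa\,\operatorname{Tr}(\rho_\theta L_{\boldsymbol{a}}^2)
\end{equation}
for every $\boldsymbol{a}$, which is exactly $K\preceq 2r\kappa J$.

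The only point that needs care is Step 3. Without the canonical normalization on the kernel block, the terms with $\lambda_a=\lambda_b=0$ would contribute to $K$ but not to $J$, and the bound would fail. This is precisely why the paper fixes $\Pi_0L_i\Pi_0=0$.
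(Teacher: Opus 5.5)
Your proof is correct. The paper does not prove this statement but imports it from Zhou et al. Your eigenbasis argument is the standard route to the bound: you show $\operatorname{Tr}(\rho_\theta L_{\boldsymbol a}^2)=\sum_{a,b}\tfrac{\lambda_a+\lambda_b}{2}|\ell_{ab}|^2\ge\tfrac{\lambda_{\min}^+}{2}\operatorname{Tr}(L_{\boldsymbol a}^2)$ using the canonical normalization $\Pi_0L_i\Pi_0=0$, and then apply $\lambda_{\min}^+\ge 1/(r\kappa)$, which the paper itself records just before the statement.
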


\begin{theorem}[Rank-$r$ mixed-state bound]
\label{app:thm:full_parameter_mixed_metrology}
    Let $\rho_\theta$ be a smooth rank-$r$ model with $J\succ0$.
    Under the same circuit assumptions as Theorem~\ref{appx:thm:1},
    \begin{align}
        I
        &\succeq
        \frac{1}{2C_Fr\kappa}J,
        \label{app:eq:full_parameter_fisher}\\
        \operatorname{Tr}(JI^{-1})
        &\leq
        2C_Fr\kappa m,
        \label{app:eq:full_parameter_cost}
    \end{align}
    where $C_F$ is the constant in Theorem~\ref{appx:thm:1}.
\end{theorem}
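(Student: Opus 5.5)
The argument runs in parallel to the pure-state and approximately-low-rank cases: build a locally unbiased local shadow estimator, bound its MSEM $V$ with Theorem~\ref{appx:thm:1}, and then use $I^{-1}\preceq V$.

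\textbf{Step 1: the estimator.} At the local point $\theta^0$, take the deviation observables $X_i=\sum_j(J^{-1})_{ij}L_j$ from Eq.~\eqref{app:eq:deviation_observable_sld}. These satisfy Eq.~\eqref{app:eq:deviation_observable}, since $\operatorname{Tr}(\rho_\theta L_j)=0$ and $\operatorname{Tr}((\partial_i\rho_\theta)L_j)=J_{ij}$. The local shadow estimator $\hat\theta_i=\theta^0_i+\operatorname{Tr}(X_i\hat\rho)$ built from the exact two-layer inverse shadow channel is therefore locally unbiased.

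\textbf{Step 2: bounding $V$.} For an arbitrary $\boldsymbol a\in\mathbb R^m$, set $X_{\boldsymbol a}=\sum_i a_iX_i$. Theorem~\ref{appx:thm:1} gives
\[
\boldsymbol a^TV\boldsymbol a=\operatorname{Var}[\operatorname{Tr}(X_{\boldsymbol a}\hat\rho)]\le C_F\|X_{\boldsymbol a,0}\|_2^2\le C_F\|X_{\boldsymbol a}\|_2^2 .
\]
The last step holds because removing the trace part can only decrease the Frobenius norm, so we do not need the SLDs to be traceless (for mixed states they need not be). Expanding the norm gives $\|X_{\boldsymbol a}\|_2^2=\boldsymbol a^TJ^{-1}KJ^{-1}\boldsymbol a$ with $K_{ij}=\operatorname{Tr}(L_iL_j)$. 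Fact~\ref{app:fact:full_parameter_sld} gives $K\preceq 2r\kappa J$, and since congruence by the symmetric matrix $J^{-1}$ preserves the Loewner order, we get $J^{-1}KJ^{-1}\preceq 2r\kappa J^{-1}$. As this holds for every $\boldsymbol a$,
\[
V\preceq 2C_Fr\kappa J^{-1}.
\]

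\textbf{Step 3: conclusions.} Eq.~\eqref{app:eq:metrology_crb} gives $I^{-1}\preceq V\preceq 2C_Fr\kappa J^{-1}$. Matrix inversion is operator-monotone decreasing on positive definite matrices, and $I\succ0$ by the standing assumption, so $I\succeq J/(2C_Fr\kappa)$. This is Eq.~\eqref{app:eq:full_parameter_fisher}.

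For Eq.~\eqref{app:eq:full_parameter_cost}, take the trace against $J$:
\[
\operatorname{Tr}(JI^{-1})\le\operatorname{Tr}(JV)\le 2C_Fr\kappa\operatorname{Tr}(JJ^{-1})=2C_Fr\kappa m .
\]
The first inequality uses $J\succeq0$; equivalently, one can apply Eq.~\eqref{app:eq:local_shadow_cost}.

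\textbf{Main obstacle.} The only delicate point is legitimately invoking Theorem~\ref{appx:thm:1} for $X_{\boldsymbol a}$. This is fine: the theorem applies to every Hermitian observable and every state, with the bound depending only on $\|O_0\|_2$. The remaining steps are order manipulations, and the rank and conditioning dependence enters entirely through Fact~\ref{app:fact:full_parameter_sld}.
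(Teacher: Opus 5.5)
Your proposal is correct and follows the paper's proof essentially step for step: the SLD-based deviation observables, Theorem~\ref{appx:thm:1} applied to each $X_{\boldsymbol a}$, Fact~\ref{app:fact:full_parameter_sld} to obtain $V\preceq 2C_Fr\kappa J^{-1}$, and then $I^{-1}\preceq V$ followed by taking the trace against $J$. The justifications you add are exactly the steps the paper leaves implicit: dropping the trace part of $X_{\boldsymbol a}$, congruence by $J^{-1}$, and operator monotonicity of inversion using $I\succ0$.
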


\begin{proof}
    Use the deviation observables in Eq.~\eqref{app:eq:deviation_observable_sld}. For $\boldsymbol{a}\in\mathbb R^m$, set $X_{\boldsymbol{a}}:=\sum_i a_iX_i$. 
    Applying Theorem~\ref{main:thm:1} to every $X_{\boldsymbol{a}}$ gives
    \begin{align}
        \boldsymbol{a}^TV\boldsymbol{a}
        &\leq
        C_F\|(X_{\boldsymbol{a}})_0\|_2^2
        \leq
        C_F\|X_{\boldsymbol{a}}\|_2^2\\
        &=
        C_F\boldsymbol{a}^TJ^{-1}KJ^{-1}\boldsymbol{a}
        \leq
        2C_Fr\kappa\,\boldsymbol{a}^TJ^{-1}\boldsymbol{a},
    \end{align}
    where $(X_{\boldsymbol{a}})_0$ is the traceless part of $X_{\boldsymbol{a}}$. Since this holds for every $\boldsymbol{a}\in\mathbb{R}^m$, $V\preceq2C_Fr\kappa J^{-1}$. The Cramér--Rao inequality $I^{-1}\preceq V$ proves Eq.~\eqref{app:eq:full_parameter_fisher}. 
    Taking the trace after multiplication by $J$ proves Eq.~\eqref{app:eq:full_parameter_cost}.
\end{proof}

Combining Theorem~\ref{app:thm:full_parameter_mixed_metrology} with the Gill--Massar bounds proved for the three maximal-parameter mixed-state models in Ref.~\cite{zhou2026randomized} shows that both weak near-optimality conditions 
\eqref{app:eq:metrology_weak_near_optimality_cost} and
\eqref{app:eq:metrology_weak_near_optimality_matrix} are satisfied.
Thus, for bounded $\kappa$, $\mathbb M_{2\mathrm L}$ is weakly near-optimal for all three models, with $c_1=2C_F\kappa$ and $c_2=(2C_F\kappa)^{-1}$.

\subsection{Stabilizer structure learning}
\label{app:sec:stabilizer_structure_learning}

Stabilizer structure learning identifies Pauli operators that act as symmetries of a quantum state. This information is useful in quantum state tomography and in the study of quantum many-body systems \cite{grewal2025efficient,chia2024efficient}. States with $n-t$ independent Pauli symmetries include $t$-doped stabilizer states and ground states of stabilizer Hamiltonians \cite{leone2024learning,brell2011toric,kitaev2006anyons}.

\begin{definition}[Weyl representation]
    For $n$ qubits, let $V:=\mathbb F_2^{2n}$ and write
    $a=(a_x,a_z)\in V$. The Weyl operator associated with $a$ is
    \begin{equation}
        W_a
        :=
        i^{a_x\cdot a_z}
        \bigotimes_{j=1}^{n}
        X_j^{(a_x)_j}Z_j^{(a_z)_j},
        \label{app:eq:weyl_operator}
    \end{equation}
    where $a_x\cdot a_z$ in the phase is evaluated over the integers.
    Ignoring global phases, this identifies $V$ with $\{I,X,Y,Z\}^{\otimes n}$, and Pauli multiplication corresponds to addition in $V$. 
    A Clifford circuit $C$ induces a linear map $a\mapsto C(a)$ satisfying $CW_aC^\dagger=\pm W_{C(a)}$. The computational $Z$ subspace is $\mathcal Z:=\{(0^n,z):z\in\mathbb F_2^n\}$. For $a=(a_x,a_z)$ and $b=(b_x,b_z)$, define the symplectic inner product by 
    \begin{equation}
        [a,b]:=a_x\cdot b_z+a_z\cdot b_x\pmod 2.
    \end{equation}
    The operators $W_a$ and $W_b$ commute exactly when $[a,b]=0$. A subspace is \emph{isotropic} if this inner product vanishes for every pair of its elements. An isotropic subspace of dimension $n$ is \emph{Lagrangian}; in particular, $\mathcal Z$ is Lagrangian. For a state $\rho$, its Pauli symmetry subspace is
    \begin{equation}
        S:=\mathrm{Weyl}(\rho)
        :=
        \left\{
            a\in V:
            \operatorname{Tr}(W_a\rho)^2=1
        \right\}.
        \label{app:eq:weyl_symmetry_subspace}
    \end{equation}
    The subspace $S$ is isotropic, and $\dim S$ is called the stabilizer dimension of $\rho$.
\end{definition}

A single-copy learning protocol has two steps. A Clifford circuit first maps some elements of $S$ into $\mathcal Z$. Computational difference sampling~\cite{grewal2025efficient} then identifies these Pauli $Z$-strings, which are mapped back using the known Clifford circuit. For stabilizer states, $O(n)$ computational difference samples suffice to recover the visible subgroup $C(S) \cap \mathcal{Z}$. Thus, the main technical question is whether a new independent generator becomes visible with sufficiently large probability.

\begin{fact}[\cite{grewal2025efficient, chia2024efficient}]\label{app:fact:known_stabilizer_visibility}
    Let $\dim S=n-t$, and let $T\subset S$ be a codimension-one subspace. Global
    random Clifford measurements satisfy
    \begin{equation}
        \Pr_C\!\left[
            C(S\setminus T)\cap\mathcal Z\neq\emptyset
        \right]
        \ge
        \frac{1}{2^{t+1}+1}
        =
        \Omega(2^{-t}).
        \label{app:eq:stabilizer_visibility_target}
    \end{equation} 
\end{fact}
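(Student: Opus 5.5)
The plan is a second-moment (Paley--Zygmund) argument on the number of visible elements of the coset $S\setminus T$, using the global Clifford visibility $m_P$ and correlated visibility $\tau(P,Q;n)$ in Eq.~\eqref{app:eq:global_cliff_Ord_correlated_visibility}. Define
\begin{equation*}
N:=\bigl|\{a\in S\setminus T:\ C(a)\in\mathcal Z\}\bigr|,\qquad M:=|S\setminus T|=2^{n-t-1}.
\end{equation*}
Then $\Pr_C[C(S\setminus T)\cap\mathcal Z\neq\emptyset]=\Pr[N>0]$. Since $S\setminus T$ is a coset of $T$ in $S$ with $T\neq S$, it does not contain $0$, so every $a\in S\setminus T$ is a non-identity Pauli.

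\emph{First moment.} Each non-identity Pauli is mapped into $\pm\mathcal Z$ with probability $1/(d+1)$. Hence $\mathbb E[N]=M/(d+1)$.

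\emph{Second moment.} Expand $\mathbb E[N^2]=\sum_{a,b\in S\setminus T}\tau(W_a,W_b)$.
\begin{itemize}
\item The $M$ diagonal terms each equal $1/(d+1)$.
\item For the $M(M-1)$ off-diagonal terms, $S$ is isotropic, so $W_a$ and $W_b$ commute. They are distinct and non-identity, since $a+b\neq0$. Each such term therefore equals $2/((d+1)(d+2))$, with no anticommuting terms to worry about.
\end{itemize}
This gives
\begin{equation*}
\mathbb E[N^2]=\frac{M}{d+1}\Bigl(1+\frac{2(M-1)}{d+2}\Bigr).
\end{equation*}

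\emph{Paley--Zygmund.} Applying it,
\begin{equation*}
\Pr[N>0]\ \ge\ \frac{\mathbb E[N]^2}{\mathbb E[N^2]}
=\frac{M}{(d+1)\bigl(1+\tfrac{2(M-1)}{d+2}\bigr)}
\ \ge\ \frac{M}{d+2M-1}.
\end{equation*}
Substituting $M=2^{n-t-1}$ and $d=2^n$ gives
\begin{equation*}
\Pr[N>0]\ \ge\ \frac{2^{n-t-1}}{2^n+2^{n-t}-1}\ \ge\ \frac{1}{2^{t+1}+2},
\end{equation*}
which already yields the asserted $\Omega(2^{-t})$ scaling.

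The main obstacle is the exact constant $1/(2^{t+1}+1)$ rather than $1/(2^{t+1}+2)$. To recover it, I would exploit linear structure. $C(S)\cap\mathcal Z$ is a subspace, and $C(S\setminus T)\cap\mathcal Z$ is either empty or a coset of $C(T)\cap\mathcal Z$. So $N$ takes values in $\{0\}\cup\{2^j\}$, and conditioning on $\dim(C(T)\cap\mathcal Z)$ should tighten the Cauchy--Schwarz step.

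If that refinement resists, I would fall back on the transitivity argument of~\cite{grewal2025efficient,chia2024efficient}. That argument uses that uniform Cliffords act transitively on isotropic subspaces of a given dimension, and it reduces the question to counting Lagrangians $\mathcal L\supseteq T'$ that meet a given coset nontrivially.

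Only the $\Omega(2^{-t})$ form is used in the main text, and the second-moment route is the one that transfers to the two-layer ensemble via Theorem~\ref{main:thm:1}. I would therefore present that route as the primary proof.
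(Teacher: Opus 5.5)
Your moment computation is correct, and Paley--Zygmund gives exactly
\[
\Pr[N>0]\ \ge\ \frac{\mathbb E[N]^2}{\mathbb E[N^2]}=\frac{M(d+2)}{(d+1)(d+2M)}\ \ge\ \frac{1}{2^{t+1}+2}.
\]
This proves the $\Omega(2^{-t})$ form. It is also how the paper argues for its two-layer analogue (Theorem~\ref{app:thm:worst_case_stabilizer_visibility}, with $C_F$ in place of the exact global-Clifford second moment); the Fact itself is only cited there.

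However, your argument does not prove the stated constant, and the shortfall is intrinsic to the method, not slack in your algebra. The bound $\frac{M(d+2)}{(d+1)(d+2M)}\ge\frac{1}{2^{t+1}+1}$ fails for every $t<n-1$. For example, at $n=2$, $t=0$ you get $3/10$, while the true probability is $1/3$. Here $N\in\{0,1,2\}$ is not two-valued, so Paley--Zygmund is strictly lossy.

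Your proposed repair does not close this. Conditioned on $j=\dim(C(T)\cap\mathcal Z)$, $N\in\{0,2^j\}$ is two-valued, so the conditional bound is an identity. But evaluating it requires the conditional law of $j$, which is the entire difficulty. The natural Jensen step applied to the exact identity $\Pr[N>0]=\mathbb E\bigl[N\,2^{-j}\bigr]$, namely $\mathbb E[2^{-j}\mid C(a)\in\mathcal Z]\ge 1/\mathbb E[2^{j}\mid C(a)\in\mathcal Z]$, returns exactly $\frac{M(d+2)}{(d+1)(d+2M)}$ again. The transitivity fallback is named but not carried out.

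The missing idea is an exact computation by symplectic reduction.
\begin{enumerate}
\item Let $\mathcal L:=C^{-1}(\mathcal Z)$. This is a uniformly random Lagrangian subspace of $\mathbb F_2^{2n}$, because a uniform Clifford induces a uniform element of $\mathrm{Sp}(2n,\mathbb F_2)$.
\item Since $S$ is isotropic, $S\subseteq T^\perp$. Set $W:=T^\perp/T$, a symplectic space of dimension $2n-2\dim T=2(t+1)$, and $\mathcal L_T:=\bigl((\mathcal L\cap T^\perp)+T\bigr)/T$, which is Lagrangian in $W$ by a dimension count.
\item Writing $S\setminus T=a+T$, one checks directly that $(a+T)\cap\mathcal L\neq\emptyset$ if and only if $\bar a\in\mathcal L_T$, where $\bar a\neq0$.
\item The stabilizer of $T$ in $\mathrm{Sp}(2n,\mathbb F_2)$ surjects onto $\mathrm{Sp}(W)$: extend any symplectic map of $W$ by the identity on a hyperbolic complement of $T^\perp$. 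The map $\mathcal L\mapsto\mathcal L_T$ is equivariant, so $\mathcal L_T$ is uniform over Lagrangians of $W$.
\end{enumerate}
Hence the probability is exactly $(2^{t+1}-1)/(4^{t+1}-1)=1/(2^{t+1}+1)$, with equality. Your bound differs only by $+2$ versus $+1$ in the denominator, which is harmless for every use in the paper, but as written it proves a weaker statement than the Fact.
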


The following observation connects a visible symmetry to the measured bit string.

\begin{lemma}[Deterministic symmetry outcomes]
    \label{app:lem:deterministic_weyl_outcome}
    Let $a\in\mathrm{Weyl}(\rho)$. Then
    \begin{equation}
        W_a\rho
        =
        \operatorname{Tr}(W_a\rho)\rho.
        \label{app:eq:weyl_eigenstate_relation}
    \end{equation}
    Moreover, suppose that $C(a)\in\mathcal Z$ and that $b\in \{0,1\}^{n}$ is a measurement outcome obtained by measuring $C\rho C^\dagger$ in the computational basis. Then
    \begin{equation}
        \bra b CW_aC^\dagger\ket b
        =
        \operatorname{Tr}(W_a\rho).
        \label{app:eq:visible_weyl_outcome}
    \end{equation}
\end{lemma}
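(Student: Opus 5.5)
The plan is to derive both claims from the two facts that define membership in $\mathrm{Weyl}(\rho)$: $W_a$ is a Hermitian unitary with eigenvalues $\pm1$, and $|\operatorname{Tr}(W_a\rho)|=1$. Write $s:=\operatorname{Tr}(W_a\rho)\in\{\pm1\}$, and let $\Pi_\pm=(I\pm W_a)/2$ be the spectral projectors of $W_a$. Then
\begin{equation}
    s=\operatorname{Tr}(\Pi_+\rho)-\operatorname{Tr}(\Pi_-\rho),
    \qquad
    \operatorname{Tr}(\Pi_+\rho)+\operatorname{Tr}(\Pi_-\rho)=1.
\end{equation}
Because both traces are nonnegative, these two relations force $\operatorname{Tr}(\Pi_{-s}\rho)=0$.

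The first step turns this trace condition into an operator identity. Since $\rho\succeq0$, we have $\operatorname{Tr}(\Pi_{-s}\rho\Pi_{-s})=0$ with $\Pi_{-s}\rho\Pi_{-s}\succeq0$, so $\Pi_{-s}\rho\Pi_{-s}=0$. Writing this as $(\rho^{1/2}\Pi_{-s})^\dagger(\rho^{1/2}\Pi_{-s})=0$ gives $\rho^{1/2}\Pi_{-s}=0$, hence $\Pi_{-s}\rho=\rho\Pi_{-s}=0$. Therefore $\rho=\Pi_s\rho$, and so $W_a\rho=W_a\Pi_s\rho=s\Pi_s\rho=s\rho$. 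This is Eq.~\eqref{app:eq:weyl_eigenstate_relation}.

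The second step handles the measurement. Let $\rho':=C\rho C^\dagger$. Conjugating the identity from the first step gives $CW_aC^\dagger\rho'=s\rho'$. By the Weyl representation, $CW_aC^\dagger=\pm W_{C(a)}$, and since $C(a)\in\mathcal Z$ this operator is a signed $Z$-string, so it is diagonal in the computational basis: $CW_aC^\dagger\ket b=\epsilon_b\ket b$ with $\epsilon_b\in\{\pm1\}$. Any outcome $b$ that actually occurs satisfies $\bra b\rho'\ket b>0$. Taking the $\ket b$ diagonal entry of $CW_aC^\dagger\rho'=s\rho'$ gives $\epsilon_b\bra b\rho'\ket b=s\bra b\rho'\ket b$, hence $\epsilon_b=s$. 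This yields $\bra bCW_aC^\dagger\ket b=\epsilon_b=\operatorname{Tr}(W_a\rho)$, which is Eq.~\eqref{app:eq:visible_weyl_outcome}.

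The only delicate point is moving from the scalar condition $\operatorname{Tr}(\Pi_{-s}\rho)=0$ to the operator statement $\Pi_{-s}\rho=0$, which is handled by the positivity argument in the first step. The second claim holds only for outcomes with nonzero probability; this caveat should be stated explicitly, since it is precisely what "obtained by measuring" means.
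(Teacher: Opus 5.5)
Your proposal is correct and follows the paper's proof: show that $\rho$ lies in the $\operatorname{Tr}(W_a\rho)$-eigenspace of $W_a$, conjugate by $C$, use that $CW_aC^\dagger$ is diagonal in the computational basis, and cancel the nonzero Born probability $\bra b C\rho C^\dagger\ket b$. Your projector-and-positivity argument supplies the detail behind the paper's one-line assertion that $\rho$ is supported on that eigenspace.
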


\begin{proof}
    The condition $a\in\mathrm{Weyl}(\rho)$ gives $\operatorname{Tr}(W_a\rho)=\pm1$. Since $W_a$ is a Hermitian unitary, $\rho$ is supported on the eigenspace with this eigenvalue, which proves Eq.~\eqref{app:eq:weyl_eigenstate_relation}. 
    If $C(a)\in\mathcal Z$ and $b \sim C\rho C^{\dagger}$, 
    \begin{align}
        \operatorname{Tr}(W_a\rho)\bra bC\rho C^{\dagger} \ket b
        &=\bra bC W_a\rho C^{\dagger} \ket b\\
        &=
        \bra b CW_aC^{\dagger} C\rho C^{\dagger} \ket b\\
        &=
        \bra b CW_aC^{\dagger}\ket b \bra bC\rho C^{\dagger} \ket b.
    \end{align}
    The first equality follows by conjugating Eq.~\eqref{app:eq:weyl_eigenstate_relation} by $C$ and taking the expectation value in $\ket b$. Because $b$ is sampled from $C\rho C^{\dagger}$, $\bra bC\rho C^{\dagger} \ket b \neq 0$. Thus,
    \begin{equation}
        \operatorname{Tr}(W_a\rho)=\bra b CW_aC^{\dagger}\ket b.
    \end{equation}
\end{proof}

\begin{theorem}[Worst-case stabilizer visibility]\label{app:thm:worst_case_stabilizer_visibility}
    Let $\rho$ be an $n$-qubit state with $S=\mathrm{Weyl}(\rho)$ and $\dim S=n-t$, and let $T\subset S$ have codimension one. Let $C$ be sampled from the two-layer Clifford ensemble under the assumptions of Theorem~\ref{appx:thm:1}. 
    Then
    \begin{equation}
        \Pr_C\!\left[
            C(S\setminus T)\cap\mathcal Z\neq\emptyset
        \right]
        \geq
        \frac{1}{1+2^{t+1}C_F}
        =
        \Omega(2^{-t}),
        \label{app:eq:two_layer_stabilizer_visibility}
    \end{equation}
    where $C_F$ is the constant in Theorem~\ref{appx:thm:1}.
\end{theorem}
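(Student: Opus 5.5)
We prove the bound with a second-moment (Paley--Zygmund) argument applied to an auxiliary shadow estimator, as the main text suggests. The random variable to control is the number of newly visible symmetries,
\[
N(C):=\bigl|\{a\in S\setminus T:\ C(a)\in\mathcal Z\}\bigr|,
\]
since $\Pr_C[N>0]\ge \mathbb E[N]^2/\mathbb E[N^2]$. Rather than bounding these moments directly, we encode them in a shadow estimator.

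Choose the observable $O_{\mathrm{thm}}:=\sum_{a\in S\setminus T}\operatorname{Tr}(W_a\rho)\,W_a$. Because $S$ is isotropic and every $a\in S$ has $\operatorname{Tr}(W_a\rho)=\pm1$, this operator is Hermitian and traceless. Since $|S\setminus T|=2^{n-t-1}$, we have $\|O_{\mathrm{thm}}\|_2^2=d\cdot 2^{n-t-1}$. Take the state to be the learned state itself, $\rho_{\mathrm{thm}}=\rho$. The estimator of Eq.~\eqref{app:eq:pauli_estimator} then reads
\[
\hat o=\frac1d\sum_{a\in S\setminus T}\operatorname{Tr}(W_a\rho)\,m_a^{-1}\,\bra b CW_aC^\dagger\ket b .
\]
By Fact~\ref{app:fact:clifford_ensemble}, each term vanishes unless $C(a)\in\mathcal Z$. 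When $C(a)\in\mathcal Z$, Lemma~\ref{app:lem:deterministic_weyl_outcome} gives $\bra bCW_aC^\dagger\ket b=\operatorname{Tr}(W_a\rho)$ almost surely, so the term equals $m_a^{-1}/d$. Hence
\[
\hat o=\frac1d\sum_{a\in S\setminus T}m_a^{-1}\mathbf 1\{C(a)\in\mathcal Z\}
\]
is a nonnegative function of $C$ alone. It is strictly positive exactly on the event $E$ that $C(S\setminus T)\cap\mathcal Z\neq\emptyset$.

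The mean follows from unbiasedness: $\mathbb E\hat o=\operatorname{Tr}(O_{\mathrm{thm}}\rho)=|S\setminus T|=2^{n-t-1}$. Theorem~\ref{appx:thm:1} supplies the second moment,
\[
\mathbb E\hat o^2\le (\mathbb E\hat o)^2+C_F\,d\,2^{n-t-1}.
\]
Cauchy--Schwarz (Paley--Zygmund) then gives
\[
\Pr[E]\ge\frac{(\mathbb E\hat o)^2}{\mathbb E\hat o^2}
\ge\frac{4^{n-t-1}}{4^{n-t-1}+C_F 2^n 2^{n-t-1}}
=\frac{1}{1+C_F2^{t+1}},
\]
which is exactly Eq.~\eqref{app:eq:two_layer_stabilizer_visibility}.

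The step most likely to need care is the collapse of $\hat o$ to a deterministic function of $C$. Two checks are required. First, the sign conventions: $CW_aC^\dagger=\pm W_{C(a)}$, and $\bra b CW_aC^\dagger\ket b$ must be interpreted with the same sign that enters $\operatorname{Tr}(W_a\rho)$; Lemma~\ref{app:lem:deterministic_weyl_outcome} handles this, so we should phrase everything via $\bra b CW_aC^\dagger\ket b$ and never via $W_{C(a)}$. Second, no cancellation may occur between terms. This holds automatically because every surviving term equals $m_a^{-1}/d>0$. Finally, the identity part of $O_{\mathrm{thm}}$ is absent since $0\in T$, so $O_0=O_{\mathrm{thm}}$ and the Frobenius bound applies without modification.
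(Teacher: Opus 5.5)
Your argument is correct and is the paper's proof. You use the same auxiliary observable; the paper normalizes it by $1/d$, which is immaterial because Paley--Zygmund is scale-invariant. You also use the same collapse of $\operatorname{Tr}(O\hat\rho)$ to the nonnegative quantity $\sum_a m_a^{-1}\mathbf 1\{C(a)\in\mathcal Z\}$ via Lemma~\ref{app:lem:deterministic_weyl_outcome}, and the same Paley--Zygmund step with the variance bound of Theorem~\ref{appx:thm:1}.

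One small slip: with your unnormalized $O_{\mathrm{thm}}$ you have $\operatorname{Tr}(O_{\mathrm{thm}}W_a)=d\operatorname{Tr}(W_a\rho)$, so your two displayed formulas for $\hat o$ should not carry the prefactor $1/d$. Your mean $2^{n-t-1}$ and variance bound $C_F\,d\,2^{n-t-1}$ are the correct ones for your normalization, so the final ratio $1/(1+2^{t+1}C_F)$ is unaffected.
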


\begin{proof}
    Define
    \begin{align}
        O_{S,T}
        &:=
        \frac{1}{d}
        \sum_{a\in S\setminus T}
        \operatorname{Tr}(W_a\rho)W_a,
        \label{app:eq:stabilizer_witness}\\
        Y
        &:=
        \operatorname{Tr}(O_{S,T}\hat\rho),
        \label{app:eq:stabilizer_random_variable}
    \end{align}
    where $\hat\rho=\mathcal{M}^{-1}(C^{\dagger} \ket b \! \bra b C)$ is the exact shadow snapshot associated with $C$ and its measurement outcome $b$.
    Since $|S\setminus T|=2^{n-t-1}$ and $\operatorname{Tr}(W_a\rho)^2=1$,
    \begin{equation}
        \operatorname{Tr}(O_{S,T}\rho)
        =
        \lVert O_{S,T}\rVert_2^2
        =
        \frac{|S\setminus T|}{d}
        =
        2^{-t-1}.
        \label{app:eq:stabilizer_witness_moments}
    \end{equation}
    The exact inverse shadow channel is diagonal in the Pauli basis. Lemma~\ref{app:lem:deterministic_weyl_outcome} therefore gives
    \begin{align}
        Y
        &=
        \frac{1}{d}
        \sum_{a\in S\setminus T}
        \operatorname{Tr}(W_a\rho)
        \operatorname{Tr}(W_a\mathcal{M}^{-1}(C^{\dagger} \ket b \! \bra b C))\\
        &=
        \frac{1}{d}
        \sum_{a\in S\setminus T}
        m_a^{-1}
        \operatorname{Tr}(W_a\rho)
        \operatorname{Tr}(W_aC^{\dagger} \ket b \! \bra b C)\\
        &=
        \frac{1}{d}
        \sum_{a\in S\setminus T}
        m_a^{-1}
        \bra b CW_aC^{\dagger} \ket b^2\\
        &=
        \frac{1}{d}
        \sum_{a\in S\setminus T}
        \underbrace{m_a^{-1}}_{>0}
        \underbrace{
            \mathbf 1\!\left\{C(a)\in\mathcal Z\right\}
        }_{\in\{0,1\}}
        \geq 0,
        \label{app:eq:stabilizer_positive_sum}
    \end{align}
    where $m_{a}$ is the visibility of $W_a$. Then
    \begin{equation}
        Y>0
        \Longleftrightarrow
        C(S\setminus T)\cap\mathcal Z\neq\emptyset.
        \label{app:eq:stabilizer_positive_event}
    \end{equation}
    Since $m_a^{-1}>0$, Eq.~\eqref{app:eq:stabilizer_positive_sum} is strictly positive if and only if $\mathbf 1\!\left\{C(a)\in\mathcal Z\right\}=1$ for some $a\in S\setminus T$.
    This is equivalent to $C(S\setminus T)\cap\mathcal Z\neq\emptyset$. Eq.~\eqref{app:eq:visibility} and Theorem~\ref{appx:thm:1} imply
    \begin{align}
        \mathbb E Y
        &=\frac{1}{d}
        \sum_{a\in S\setminus T}
        m_{a}^{-1}m_a
        =
        2^{-t-1},\\
        \operatorname{Var}(Y)
        &\leq
        C_F\lVert O_{S,T}\rVert_2^2
        =
        C_F2^{-t-1}.
        \label{app:eq:stabilizer_y_moments}
    \end{align}
    Since $Y\geq0$, the Paley--Zygmund inequality gives
    \begin{align}
        \Pr(Y>0)
        &\geq
        \frac{(\mathbb EY)^2}{\mathbb EY^2}\geq
        \frac{2^{-2t-2}}
        {C_F2^{-t-1}+2^{-2t-2}}\\
        &=
        \frac{1}{1+2^{t+1}C_F}=\Omega(2^{-t}).
        \label{app:eq:stabilizer_paley_zygmund}
    \end{align}
    Equation~\eqref{app:eq:stabilizer_positive_event} completes the proof. 
\end{proof}

\begin{definition}[Computational difference
sampling~\cite{grewal2025efficient,cho2026single}]
\label{app:def:computational_difference_sampling}
    Measure two independent copies of an $n$-qubit state
    $\rho$ in the computational basis, obtaining outcomes
    $b_1,b_2\in\mathbb F_2^n$.
    A computational difference sample is
    \begin{equation}
        x=b_1\oplus b_2,
    \end{equation}
    where $\oplus$ denotes bitwise addition modulo two.
    Each sample uses two copies of $\rho$, measured
    separately.
\end{definition}

\begin{fact}[{\cite[
    proof of Lemma~4.21(b)]{chen2025stabilizer}}]
\label{app:fact:conditional_success_bound}
    Let $X_1,\ldots,X_m$ be $\{0,1\}$-valued random
    variables such that, for some $p\in(0,1]$, $\Pr[X_i=1\mid X_1,\ldots,X_{i-1}]\geq p$ almost surely for every $i$.
    For any nonnegative integer $r$ and $\delta\in(0,1)$,
    if $m\geq\frac{2}{p}\left(r+\ln(1/\delta)\right)$, then
    \begin{equation}
        \Pr\!\left[\sum_{i=1}^{m}X_i\geq r\right]
        \geq1-\delta.
    \end{equation}
\end{fact}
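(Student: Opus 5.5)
The plan is to prove this with an exponential (Chernoff-type) supermartingale argument for the lower tail of $S_m:=\sum_{i=1}^m X_i$. The point is that only the conditional lower bound $\Pr[X_i=1\mid X_1,\ldots,X_{i-1}]\ge p$ is available, not independence. The case $r=0$ is trivial, so assume $r\ge1$. Write $\mathcal F_{i-1}$ for the $\sigma$-algebra generated by $X_1,\ldots,X_{i-1}$, and fix a parameter $\lambda>0$ to be chosen later.

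The first step is a one-step conditional bound. Since $X_i\in\{0,1\}$,
\begin{equation}
    \mathbb E\!\left[e^{-\lambda X_i}\mid\mathcal F_{i-1}\right]
    =1-(1-e^{-\lambda})\Pr[X_i=1\mid\mathcal F_{i-1}]
    \le 1-p(1-e^{-\lambda})
    \le \exp\!\left(-p(1-e^{-\lambda})\right)
\end{equation}
almost surely. The second step iterates this by the tower property, conditioning successively on $\mathcal F_{m-1},\mathcal F_{m-2},\ldots$. Each time we pull out the $\mathcal F_{i-1}$-measurable factor $e^{-\lambda S_{i-1}}$, which gives
\begin{equation}
    \mathbb E\!\left[e^{-\lambda S_m}\right]\le \exp\!\left(-mp(1-e^{-\lambda})\right).
\end{equation}
Equivalently, $e^{-\lambda S_k+kp(1-e^{-\lambda})}$ is a supermartingale.

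The third step applies Markov's inequality to $e^{-\lambda S_m}$:
\begin{equation}
    \Pr[S_m<r]\le\Pr\!\left[e^{-\lambda S_m}\ge e^{-\lambda r}\right]\le \exp\!\left(\lambda r-mp(1-e^{-\lambda})\right).
\end{equation}
Choose $\lambda=\ln 2$, so that $1-e^{-\lambda}=1/2$. The hypothesis $m\ge\frac{2}{p}(r+\ln(1/\delta))$ gives $mp/2\ge r+\ln(1/\delta)$. Hence the exponent is at most $r\ln2-r-\ln(1/\delta)\le-\ln(1/\delta)$, because $\ln 2<1$. Therefore $\Pr[S_m<r]\le\delta$, which is the claim.

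The only delicate point is the second step. The bound must be justified through conditioning rather than independence, and it uses the almost-sure conditional bound at every stage. Once the supermartingale inequality is in place, the remainder is a direct numeric choice of $\lambda$. The constant $2$ in the hypothesis is exactly what the choice $\lambda=\ln2$ produces, with room to spare.
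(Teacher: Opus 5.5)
Your proof is correct and complete. The one-step bound $\mathbb E[e^{-\lambda X_i}\mid\mathcal F_{i-1}]\le 1-p(1-e^{-\lambda})$ uses exactly the almost-sure conditional hypothesis. The iteration is legitimate because $e^{-\lambda S_{i-1}}$ is $\mathcal F_{i-1}$-measurable. With $\lambda=\ln 2$, Markov's inequality gives $\Pr[S_m<r]\le\exp\bigl(-r(1-\ln 2)-\ln(1/\delta)\bigr)\le\delta$.

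The paper does not prove this fact itself; it imports it from the proof of Lemma~4.21(b) of the cited work, so there is no in-paper argument to match. A common alternative is to couple $(X_i)$ with i.i.d.\ Bernoulli$(p)$ variables so that $S_m$ stochastically dominates $\mathrm{Bin}(m,p)$. One then applies the lower-tail Chernoff bound $\Pr[\mathrm{Bin}(m,p)<r]\le\exp\bigl(-(\mu-r)^2/(2\mu)\bigr)$ with $\mu=mp$. The hypothesis $\mu\ge 2r+2\ln(1/\delta)$ gives $(\mu-r)^2\ge(\mu/2+\ln(1/\delta))^2\ge 2\mu\ln(1/\delta)$, hence the same conclusion.

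That route reuses an off-the-shelf binomial tail bound but requires constructing the coupling, using auxiliary uniform variables thresholded at the conditional probabilities. Your exponential-supermartingale argument avoids the coupling. It also works unchanged for any filtration to which the $X_i$ are adapted and under which the conditional lower bound holds. In particular, in Corollary~\ref{app:cor:worst_case_weyl_support_recovery} one could condition directly on $C_1,\ldots,C_{i-1}$. The paper's extra step of passing to conditioning on $X_1,\ldots,X_{i-1}$ would then be unnecessary, though that step is valid by the tower property.
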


\begin{corollary}[Worst-case coverage of the Weyl support]
\label{app:cor:worst_case_weyl_support_recovery}
    Let $S=\mathrm{Weyl}(\rho)$ with $\dim S=n-t$, and let
    $C_1,\ldots,C_m$ be sampled independently from the
    two-layer Clifford ensemble under the assumptions of
    Theorem~\ref{appx:thm:1}.
    For $\delta\in(0,1)$, suppose that $m\geq2\left(1+2^{t+1}C_F\right)(n-t+\ln(1/\delta))$, where $C_F$ is the constant in Theorem~\ref{appx:thm:1}. Then, with probability at least $1-\delta$,
    \begin{equation}
        \sum_{i=1}^{m}
        \left(
            S\cap C_i^\dagger(\mathcal Z)
        \right)
        =
        S.
        \label{app:eq:worst_case_weyl_support_recovery}
    \end{equation}
\end{corollary}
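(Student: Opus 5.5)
The plan is a sequential-coverage argument driven by Theorem~\ref{app:thm:worst_case_stabilizer_visibility}, finished with the conditional success bound of Fact~\ref{app:fact:conditional_success_bound}. Define the running visible subspace $S_i:=\sum_{j\le i}\bigl(S\cap C_j^\dagger(\mathcal Z)\bigr)$, with $S_0=\{0\}$. This is a nested chain of subspaces of $S$. Since $\dim S=n-t$, the chain can strictly increase at most $n-t$ times. Equation~\eqref{app:eq:worst_case_weyl_support_recovery} is exactly the event $S_m=S$.

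For each $i$, let $X_i$ be the indicator that either $S_{i-1}=S$ already, or $C_i(S\setminus T_{i-1})\cap\mathcal Z\neq\emptyset$. Here $T_{i-1}$ is a codimension-one subspace of $S$ containing $S_{i-1}$, chosen as a deterministic function of $S_{i-1}$ whenever $S_{i-1}\subsetneq S$.

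If $C_i(a)\in\mathcal Z$ for some $a\in S\setminus T_{i-1}$, then $a\in S\cap C_i^\dagger(\mathcal Z)$. Because $a\notin T_{i-1}\supseteq S_{i-1}$, we get $\dim S_i\ge\dim S_{i-1}+1$. So every success with $S_{i-1}\neq S$ strictly increases the dimension. Consequently, on the event $\sum_{i=1}^m X_i\ge n-t$, either the chain already reached $S$ or it made $n-t$ strict increases. In both cases $S_m=S$.

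The conditional probability bound goes as follows. Condition on $C_1,\dots,C_{i-1}$; this fixes $S_{i-1}$ and hence $T_{i-1}$. The circuit $C_i$ is independent of the past and drawn from the two-layer ensemble. Theorem~\ref{app:thm:worst_case_stabilizer_visibility}, applied to the fixed pair $(S,T_{i-1})$, gives
\[
\Pr[X_i=1\mid C_1,\ldots,C_{i-1}]\ge p:=\frac{1}{1+2^{t+1}C_F}.
\]
In the trivial case $S_{i-1}=S$ this holds with probability one. Applying Fact~\ref{app:fact:conditional_success_bound} with $r=n-t$ and the stated $m\ge \frac{2}{p}(n-t+\ln(1/\delta))$ yields $\sum_i X_i\ge n-t$ with probability at least $1-\delta$, which proves the corollary.

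The main subtlety is the adaptive choice of $T_{i-1}$. The visibility theorem is a statement for a fixed $T$, so it must be applied conditionally. This is valid because $T_{i-1}$ is measurable with respect to past circuits and $C_i$ is independent of them. One should also note that codimension-one $T\supseteq S_{i-1}$ exists whenever $S_{i-1}\subsetneq S$, which is elementary linear algebra over $\mathbb F_2$. No estimation of computational difference samples is needed here, since the statement concerns only coverage of $S$ by the visible subspaces.
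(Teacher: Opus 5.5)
Your proposal is correct and follows the paper's proof essentially step for step: a nested chain of visible subspaces, a codimension-one $T\supseteq S_{i-1}$ chosen as a function of the past circuits, Theorem~\ref{app:thm:worst_case_stabilizer_visibility} applied conditionally on $C_1,\ldots,C_{i-1}$, and finally Fact~\ref{app:fact:conditional_success_bound} with $r=n-t$. The differences are cosmetic. You define $X_i$ as the indicator of the visibility event itself, whereas the paper uses the indicator of a strict dimension increase. You also leave implicit a step the paper states explicitly: since $X_1,\ldots,X_{i-1}$ are functions of $C_1,\ldots,C_{i-1}$, the bound conditioned on the circuits implies the bound conditioned on $X_1,\ldots,X_{i-1}$, which is the form the Fact requires.
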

\begin{proof}
    Set
    \begin{equation}
        p:=\frac{1}{1+2^{t+1}C_F},
        \qquad
        T_0:=\{0\},
        \qquad
        T_i:=\sum_{j=1}^{i}
        \left(S\cap C_j^\dagger(\mathcal Z)\right).
    \end{equation}
    Condition on $C_1,\ldots,C_{i-1}$.
    If $T_{i-1}\neq S$, choose a codimension-one
    subspace $H\subset S$ containing $T_{i-1}$.
    By the independence of $C_i$ and
    Theorem~\ref{app:thm:worst_case_stabilizer_visibility},
    \begin{equation}
        \Pr\!\left[
            C_i(S\setminus H)\cap\mathcal Z
            \neq\emptyset
            \,\middle|\,
            C_1,\ldots,C_{i-1}
        \right]
        \geq p.
    \end{equation}
    This event adds an element outside $H$ to $T_i$, and hence $\dim T_i>\dim T_{i-1}$. Define
    \begin{equation}
        X_i:=
        \begin{cases}
            1, & T_{i-1}=S
                 \text{ or }\dim T_i>\dim T_{i-1},\\
            0, & \text{otherwise}.
        \end{cases}
    \end{equation}
    Thus,
    $\Pr[X_i=1\mid C_1,\ldots,C_{i-1}]\geq p$.
    Since $X_1,\ldots,X_{i-1}$ are determined by
    $C_1,\ldots,C_{i-1}$, it follows that
    \begin{equation}
        \Pr[X_i=1\mid X_1,\ldots,X_{i-1}]
        \geq p
    \end{equation}
    almost surely. If $T_m\neq S$, each $X_i=1$ corresponds to a strict increase in dimension, so
    \begin{equation}
        \sum_{i=1}^{m}X_i
        \leq\dim T_m<n-t.
    \end{equation}
    Applying Fact~\ref{app:fact:conditional_success_bound} with $r=n-t$ and the assumed lower bound on $m$ therefore gives
    \begin{equation}
        \Pr[T_m=S]
        \geq
        \Pr\!\left[\sum_{i=1}^{m}X_i\geq n-t\right]
        \geq1-\delta.
    \end{equation}
\end{proof}

\begin{fact}[Reconstruction in a fixed Clifford
basis~\cite{grewal2025efficient,cho2026single}]
\label{app:fact:cds_reconstruction}
    Let $\rho$ be an $n$-qubit state and $C$ a Clifford circuit. For $\epsilon,\delta\in(0,1)$, a reconstruction procedure using $N_{\mathrm{CDS}}= O\!\left((n+\log(1/\delta))/\epsilon\right)$ independent computational difference samples generated by measuring $C\rho C^\dagger$ in the computational basis returns a subspace $\widehat S_C\subseteq C^\dagger(\mathcal Z)$. With probability at least $1-\delta$, this subspace satisfies
    \begin{equation}
        \frac{1}{|\widehat S_C|}
        \sum_{a\in\widehat S_C}
        \operatorname{Tr}(\rho W_a)^2
        \geq1-\epsilon,
        \qquad
        \widehat S_C
        \supseteq
        \mathrm{Weyl}(\rho)\cap C^\dagger(\mathcal Z).
    \end{equation}
\end{fact}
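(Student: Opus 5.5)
The plan is to write down the reconstruction procedure explicitly and analyze it in the Fourier domain on $\mathbb F_2^n$, working in the rotated frame $\sigma:=C\rho C^\dagger$. The procedure collects difference samples $x_1,\ldots,x_N$ from $\sigma$. It outputs the annihilator $A_N:=\{s\in\mathbb F_2^n: s\cdot x_i=0\ \forall i\le N\}$, identified with the $Z$-strings $Z^s\in\mathcal Z$, and maps it back through the known Clifford to obtain $\widehat S_C:=C^\dagger(A_N)\subseteq C^\dagger(\mathcal Z)$. Since $\operatorname{Tr}(\rho W_a)=\pm\operatorname{Tr}(\sigma Z^s)$ for $a=C^\dagger(s)$, both claimed properties can be checked for $\sigma$ and $A_N$ directly.

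\emph{Step 1 (the key identity).} Let $x=b_1\oplus b_2$, with $b_1,b_2$ i.i.d.\ from $p(b)=\langle b|\sigma|b\rangle$. I will show that for every $s$,
\begin{equation}
\mathbb E_x[(-1)^{s\cdot x}]=\Bigl(\sum_b p(b)(-1)^{s\cdot b}\Bigr)^2=\operatorname{Tr}(\sigma Z^s)^2 .
\end{equation}
Averaging this over a subspace $A\subseteq\mathbb F_2^n$ and using $\frac{1}{|A|}\sum_{s\in A}(-1)^{s\cdot x}=\mathbf 1\{x\in A^\perp\}$ gives
\begin{equation}
\frac{1}{|A|}\sum_{s\in A}\operatorname{Tr}(\sigma Z^s)^2=\Pr_x[x\in A^\perp].
\end{equation}

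\emph{Step 2 (containment).} Suppose $C(a)\in\mathcal Z$ and $a\in\mathrm{Weyl}(\rho)$, with $C(a)$ corresponding to $s$. Then $\operatorname{Tr}(\sigma Z^s)^2=1$, so $(-1)^{s\cdot x}=1$ almost surely. Hence $s\cdot x_i=0$ for every sample, so $s\in A_N$ deterministically. This gives $\widehat S_C\supseteq\mathrm{Weyl}(\rho)\cap C^\dagger(\mathcal Z)$ with probability one.

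\emph{Step 3 (quality, the main obstacle).} Call a subspace $A$ \emph{good} if its average $\Pr[x\in A^\perp]$ is at least $1-\epsilon$. A union bound over all candidate subspaces would cost $2^{O(n^2)}$, which is too expensive, so I will use a progress argument instead, modeled on the proof of Corollary~\ref{app:cor:worst_case_weyl_support_recovery}.

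First, goodness is monotone under shrinking. If $A'\subseteq A$, then $A'^\perp\supseteq A^\perp$, so $\Pr[x\in A'^\perp]\ge\Pr[x\in A^\perp]$. The annihilators $A_0=\mathbb F_2^n\supseteq A_1\supseteq\cdots$ shrink as samples arrive, so once some $A_i$ is good, every later one, including $A_N$, is good.

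Second, define $X_i=1$ if $A_{i-1}$ is good or $\dim A_i<\dim A_{i-1}$. If $A_{i-1}$ is bad, the new sample satisfies $x_i\notin A_{i-1}^\perp$ with conditional probability greater than $\epsilon$. In that event some $s\in A_{i-1}$ has $s\cdot x_i=1$, so the dimension drops. Hence $\Pr[X_i=1\mid X_1,\ldots,X_{i-1}]\ge\epsilon$ almost surely.

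Third, the dimension can drop at most $n$ times. So if $\sum_i X_i\ge n+1$, some $A_{i-1}$ must have been good, and by monotonicity $A_N$ is good.

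Finally, Fact~\ref{app:fact:conditional_success_bound} with $p=\epsilon$ and $r=n+1$ shows that $N=\frac{2}{\epsilon}(n+1+\ln(1/\delta))=O((n+\log(1/\delta))/\epsilon)$ samples suffice with probability at least $1-\delta$. Translating back through Step 1 gives $\frac{1}{|\widehat S_C|}\sum_{a\in\widehat S_C}\operatorname{Tr}(\rho W_a)^2\ge1-\epsilon$. Each sample uses two copies of $\rho$, as in Definition~\ref{app:def:computational_difference_sampling}.
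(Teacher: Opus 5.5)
Your proof is correct and is essentially the standard argument behind this fact. The paper itself does not reprove it; it defers to the cited works, whose argument combines the Fourier identity $\mathbb E_x[(-1)^{s\cdot x}]=\operatorname{Tr}(C\rho C^\dagger Z^s)^2$ (turning the subspace average into $\Pr_x[x\in A^\perp]$), deterministic inclusion of visible symmetries in the annihilator of the samples, and a dimension-progress argument closed by Fact~\ref{app:fact:conditional_success_bound}, which is the same template the paper uses in Corollary~\ref{app:cor:worst_case_weyl_support_recovery}.
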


\begin{fact}[Sum of heavy-weight subspaces~\cite{cho2026single}]
\label{app:fact:combining_stabilizer_subspaces}
    Let $\rho$ be an $n$-qubit state and let $\widehat S_1,\ldots,\widehat S_m$ be subspaces of $\mathbb F_2^{2n}$. Suppose that, for some $0<\epsilon<1/4$,
    $\frac{1}{|\widehat S_i|}
        \sum_{a\in\widehat S_i}
        \operatorname{Tr}(\rho W_a)^2
        \geq1-\frac{\epsilon}{2n}$ for every $i\in[m]$.
    Then $\widehat S:=\sum_{i=1}^{m}\widehat S_i$ is isotropic and satisfies
    \begin{equation}
        \frac{1}{|\widehat S|}
        \sum_{a\in\widehat S}
        \operatorname{Tr}(\rho W_a)^2
        \geq1-\epsilon.
    \end{equation}
\end{fact}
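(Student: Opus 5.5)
The plan is to \emph{linearize} the weight functional on subspaces using the doubled state. Write $\sigma:=\rho\otimes\rho^*$. Then $\operatorname{Tr}(\rho W_a)^2=\operatorname{Tr}(\rho W_a)\operatorname{Tr}(\rho^* W_a^*)=\operatorname{Tr}\!\left[\sigma\,(W_a\otimes W_a^*)\right]$, using that $\operatorname{Tr}(\rho W_a)$ is real.

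\begin{itemize}
\item The operators $W_a\otimes W_a^*$ pairwise commute for all $a\in\mathbb F_2^{2n}$, since the two commutation signs cancel.
\item Their phases cancel, so $a\mapsto W_a\otimes W_a^*$ is a genuine unitary representation of the additive group $\mathbb F_2^{2n}$.
\end{itemize}

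Hence, for every subspace $A$, the group average $P_A:=|A|^{-1}\sum_{a\in A}W_a\otimes W_a^*$ is an orthogonal projector. Moreover
\begin{equation}
    \frac{1}{|A|}\sum_{a\in A}\operatorname{Tr}(\rho W_a)^2=\operatorname{Tr}(\sigma P_A).
\end{equation}
Since $A+B$ is the image of $A\times B$ under addition, and that map pushes the uniform measure to the uniform measure, we get $P_{A+B}=P_AP_B$. All of these projectors commute.

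\textbf{Weight bound.} First reduce the number of subspaces. Since $\dim\widehat S\le 2n$, we can choose a subfamily $\widehat S_{i_1},\ldots,\widehat S_{i_{m'}}$ with $m'\le 2n$ and the same sum $\widehat S$: add the $\widehat S_i$ greedily and keep only those that strictly increase the dimension. Then $P_{\widehat S}=\prod_{j=1}^{m'}P_{\widehat S_{i_j}}$. For commuting projectors one has the operator inequality $I-\prod_jP_j\preceq\sum_j(I-P_j)$, which follows by induction from $I-PQ=(I-P)+P(I-Q)\preceq(I-P)+(I-Q)$. Taking the expectation in $\sigma$ gives
\begin{equation}
    1-\operatorname{Tr}(\sigma P_{\widehat S})\le\sum_{j=1}^{m'}\bigl(1-\operatorname{Tr}(\sigma P_{\widehat S_{i_j}})\bigr)\le 2n\cdot\frac{\epsilon}{2n}=\epsilon,
\end{equation}
which is the claimed weight bound.

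\textbf{Isotropy.} Argue by contradiction. Suppose $\widehat S$ is not isotropic; then some $b\in\widehat S$ has $[b,c]=1$ for some $c\in\widehat S$. The set $N:=\{c\in\widehat S:[b,c]=1\}$ is the complement of an index-two subspace, so $|N|=|\widehat S|/2$. The map $c\mapsto c+b$ is an involution of $N$ with no fixed points. Each pair $c,c+b$ satisfies $[c,c+b]=[c,b]=1$, so $W_c$ and $W_{c+b}$ anticommute. For anticommuting Hermitian Paulis, $\operatorname{Tr}(\rho W_c)^2+\operatorname{Tr}(\rho W_{c+b})^2\le 1$, because $(\alpha W_c+\beta W_{c+b})^2=(\alpha^2+\beta^2)I$. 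Therefore the elements of $N$ contribute average weight at most $1/2$, the remaining elements contribute at most $1$, and the overall average is at most $\tfrac12\cdot1+\tfrac12\cdot\tfrac12=\tfrac34$. This contradicts the weight bound $1-\epsilon>3/4$ just proved, so $\widehat S$ is isotropic.

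The main obstacle is the first step: recognizing that the heavy-weight functional is the expectation of a projector that is multiplicative under subspace sums. A direct triangle-inequality argument on the defects $1-|\operatorname{Tr}(\rho W_a)|$ loses a factor of $m$, which is too much. The remaining care points are the reduction to at most $2n$ subspaces, which is exactly why the hypothesis is stated as $\epsilon/(2n)$, and checking that the phases in $W_a\otimes W_a^*$ cancel so that $P_A$ is genuinely a projector.
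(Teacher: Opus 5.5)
Your proof is correct. The paper does not prove this statement; it imports it as a Fact from \cite{cho2026single} and uses it only in the stabilizer-learning corollary. So there is no in-paper argument to compare against, and your proposal supplies a complete, self-contained proof.

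The key steps hold up:
\begin{itemize}
\item Since $W_aW_b=\omega(a,b)W_{a+b}$ with $|\omega(a,b)|=1$, the phases in $(W_a\otimes W_a^*)(W_b\otimes W_b^*)$ cancel. Hence $a\mapsto W_a\otimes W_a^*$ is a genuine representation of $\mathbb F_2^{2n}$ by commuting Hermitian unitaries, and each $P_A$ is a projector.
\item $P_AP_B=P_{A+B}$ holds because addition $A\times B\to A+B$ is a surjective homomorphism whose fibers all have size $|A\cap B|=|A||B|/|A+B|$.
\item The greedy subfamily has the same sum, and the union bound $I-\prod_jP_j\preceq\sum_j(I-P_j)$ is valid for commuting projectors.
\item In the isotropy step you correctly use $[b,b]=0$ to see that $c\mapsto c+b$ preserves $N$ and has no fixed points. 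The anticommuting-pair bound then caps the average weight at $3/4<1-\epsilon$.
\end{itemize}

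What your linearization through $\rho\otimes\rho^*$ buys is that the weight bound holds for arbitrary mixed $\rho$ and arbitrary subspaces, before isotropy is known. It also avoids the factor-$m$ loss of a direct defect-by-defect argument.

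One small refinement: the factor $2n$ is not forced by your argument. Interleave the isotropy test with the greedy construction, writing $T_k$ for the sum of the first $k$ kept subspaces. If $T_k$ is isotropic, then $k\le\dim T_k\le n$. Your union bound then gives $T_{k+1}$ weight at least $1-(n+1)\epsilon/(2n)>3/4$, so $T_{k+1}$ is isotropic as well. By induction at most $n$ subspaces are kept, and the conclusion improves to $1-\epsilon/2$.
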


Combining these facts with Corollary~\ref{app:cor:worst_case_weyl_support_recovery} gives the following guarantee.

\begin{corollary}[Worst-case approximate stabilizer
group learning]
\label{app:cor:worst_case_stabilizer_learning}
    Let $\rho$ be an $n$-qubit state with $\dim\mathrm{Weyl}(\rho)=n-t$, and suppose that the two-layer Clifford ensemble satisfies the assumptions of Theorem~\ref{appx:thm:1}. For $0<\epsilon<1/4$ and any fixed $\delta\in(0,1)$, there is a single-copy protocol using $O\!\left(\frac{n^3 2^t}{\epsilon}\right)$ copies of $\rho$ that outputs, with probability at least $1-\delta$, an isotropic subspace $\widehat S$ satisfying
    \begin{equation}
        \widehat S\supseteq\mathrm{Weyl}(\rho),
        \qquad
        \frac{1}{|\widehat S|}
        \sum_{a\in\widehat S}
        \operatorname{Tr}(\rho W_a)^2
        \geq1-\epsilon.
    \end{equation}
\end{corollary}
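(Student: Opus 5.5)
The protocol runs in two stages: first sample enough two-layer Clifford bases to cover $\mathrm{Weyl}(\rho)$, then reconstruct the visible part of the symmetry group in each basis by computational difference sampling, and finally add the reconstructed subspaces. For coverage I will invoke Corollary~\ref{app:cor:worst_case_weyl_support_recovery} with failure probability $\delta/2$. It gives $m=2(1+2^{t+1}C_F)(n-t+\ln(2/\delta))=O(n2^t)$ independent circuits $C_1,\ldots,C_m$ such that, with probability at least $1-\delta/2$,
\[
\sum_i\bigl(S\cap C_i^\dagger(\mathcal Z)\bigr)=S,\qquad S=\mathrm{Weyl}(\rho).
\]
Here $C_F$ is the constant of Theorem~\ref{appx:thm:1}, and since $\delta$ is fixed it contributes only a constant.

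In each basis $C_i$ I will run the reconstruction of Fact~\ref{app:fact:cds_reconstruction} with accuracy $\epsilon'=\epsilon/(2n)$ and failure probability $\delta'=\delta/(2m)$. Each call then uses $N_{\mathrm{CDS}}=O\bigl((n+\log(m/\delta))\,n/\epsilon\bigr)=O(n^2/\epsilon)$ difference samples, and each sample costs two copies measured separately, so the procedure is single-copy. A union bound over the $m$ bases shows that with probability at least $1-\delta/2$ every output $\widehat S_{C_i}$ satisfies both
\[
\widehat S_{C_i}\supseteq S\cap C_i^\dagger(\mathcal Z)
\quad\text{and}\quad
\frac{1}{|\widehat S_{C_i}|}\sum_{a\in\widehat S_{C_i}}\operatorname{Tr}(\rho W_a)^2\ge 1-\frac{\epsilon}{2n}.
\]

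Set $\widehat S=\sum_i\widehat S_{C_i}$. On the intersection of the two good events, which has probability at least $1-\delta$, the containment gives $\widehat S\supseteq\sum_i\bigl(S\cap C_i^\dagger(\mathcal Z)\bigr)=S$. Fact~\ref{app:fact:combining_stabilizer_subspaces} applies because $\epsilon<1/4$, so $\widehat S$ is isotropic and has average squared weight at least $1-\epsilon$. The total copy count is $m\cdot 2N_{\mathrm{CDS}}=O(n2^t)\cdot O(n^2/\epsilon)=O(n^32^t/\epsilon)$, where the $\log m=O(\log n+t)$ term is absorbed under the standing regime $t=O(\log n)$, or simply dominated by $n$.

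The step that needs the most care is the coverage stage, and it is already settled by Theorem~\ref{app:thm:worst_case_stabilizer_visibility} together with Corollary~\ref{app:cor:worst_case_weyl_support_recovery}. What remains is bookkeeping: the Fact~\ref{app:fact:cds_reconstruction} guarantee is stated for a fixed $C$, so it is applied conditionally on each sampled circuit using fresh copies, and the union bound is then taken over the $m$ calls.
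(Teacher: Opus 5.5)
Your proposal is correct and follows essentially the same argument as the paper: coverage via Corollary~\ref{app:cor:worst_case_weyl_support_recovery} at failure probability $\delta/2$, then Fact~\ref{app:fact:cds_reconstruction} in each basis with accuracy $\epsilon/(2n)$ and failure probability $\delta/(2m)$ plus a union bound, then Fact~\ref{app:fact:combining_stabilizer_subspaces}. On the $\log m$ term, you do not need the regime $t=O(\log n)$: since $t\le n$, you have $\log m=O(\log n+t)=O(n)$, so the per-basis cost is $O(n^2/\epsilon)$ unconditionally.
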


\begin{proof}
    By Corollary~\ref{app:cor:worst_case_weyl_support_recovery}, choosing $m=O(n2^t)$ bases ensures
    \begin{equation}
        \sum_{i=1}^{m}
        \left(
            \mathrm{Weyl}(\rho)
            \cap C_i^\dagger(\mathcal Z)
        \right)
        =
        \mathrm{Weyl}(\rho)
    \end{equation}
    with probability at least $1-\delta/2$. In each basis, apply Fact~\ref{app:fact:cds_reconstruction} with error $\epsilon/(2n)$ and failure probability $\delta/(2m)$.
    This uses $O\!\left(\frac{n\bigl(n+\log(2m/\delta)\bigr)}{\epsilon}\right)$ computational difference samples per basis. By a union bound, all reconstructed subspaces $\widehat S_i$ satisfy both guarantees in Fact~\ref{app:fact:cds_reconstruction} simultaneously with probability at least $1-\delta/2$. By Fact~\ref{app:fact:combining_stabilizer_subspaces} and Corollary~\ref{app:cor:worst_case_weyl_support_recovery}, $\widehat S:=\sum_i\widehat S_i$ satisfies, with probability at least $1-\delta$,
    \begin{align}
        &\widehat S \text{ is isotropic},\\
        &\widehat S\supseteq\mathrm{Weyl}(\rho),\\
        &\frac{1}{|\widehat S|}
        \sum_{a\in\widehat S}
        \operatorname{Tr}(\rho W_a)^2
        \geq1-\epsilon.
    \end{align}
    For fixed $\delta$, the $m=O(n2^t)$ bases each require
    $O(n^2/\epsilon)$ copies, giving
    $O(n^3 2^t/\epsilon)$ copies in total.
\end{proof}

\subsection{Unbiased purity estimation}
The purity $\operatorname{Tr}(\rho^2)$ quantifies the mixedness of a quantum state: it equals one for a pure state and is smaller than one for a mixed state. It directly determines the second R\'enyi entropy, $R_2(\rho)=-\log\operatorname{Tr}(\rho^2)$. When $\rho_A$ is the reduced state of a subsystem $A$ of a global pure state, $R_2(\rho_A)$ measures entanglement. It is widely used as an experimentally accessible proxy for the von Neumann entanglement entropy, which is generally harder to measure \cite{satzinger2021realizing,brydges2019probing}.

Standard methods for estimating purity, such as the SWAP test~\cite{ekert2002direct} and Bell sampling~\cite{hangleiter2024bell}, require coherent access to two copies. 
Single-copy randomized measurements avoid this requirement. Throughout this discussion, we consider additive error $0<\epsilon<1$ and a fixed failure probability. An exact unitary $4$-design~\cite{anshu2022distributed} gives an unbiased estimator
with the optimal single-copy sample complexity
\begin{equation}
    T=O\!\left(
        \max\left\{
            \frac{1}{\epsilon^2},
            \frac{\sqrt d}{\epsilon}
        \right\}
    \right).
\end{equation}
Global Clifford measurements also give an unbiased estimator.
For additive error $\epsilon$,
choosing $N_S=\Theta(\sqrt d)$ and $N_U=\Theta(1/\epsilon^2)$ gives the sample complexity~\cite{zheng2025distributed}
\begin{equation}
    T=N_U N_S=O\left(\frac{\sqrt d}{\epsilon^2}\right).
\end{equation}
Thus, global Clifford measurements retain the $\sqrt d$ dependence at fixed accuracy, but have a worse dependence on $\epsilon$ than the exact $4$-design bound. 

An $\epsilon_{\mathrm{des}}$-approximate unitary $4$-design can be implemented in 1D circuit depth $O(\log(n/\epsilon_{\mathrm{des}}))$~\cite{schuster2025random}. The corresponding purity estimator satisfies
\begin{align}
    \left|
        \mathbb E\hat p_2-\operatorname{Tr}(\rho^2)
    \right|
    &=
    O(\epsilon_{\mathrm{des}}),\\
    \operatorname{Var}[\hat p_2]
    &=
    O\left(
        \frac{1}{N_U}
        \left(
            \frac{d}{N_S^2}
            + \frac{1}{N_S}
            + \frac{1}{d}
            + \epsilon_{\mathrm{des}}
        \right)
    \right).
\end{align}
See Ref.~\cite{zheng2025distributed} for the explicit construction of this generally biased estimator.
To control both the bias and variance, it suffices to
choose $\epsilon_{\mathrm{des}}=O(\epsilon^2)$ and take
\begin{equation}
    N_U
    =
    \Theta\left(
        \max\left\{1,\frac{1}{d\epsilon^2}\right\}
    \right),
    \qquad
    N_S
    =
    \Theta\left(
        \min\left\{d,\frac{\sqrt d}{\epsilon}\right\}
    \right).
\end{equation}
This recovers the optimal sample complexity
\begin{equation}
    T=N_U N_S
    =
    O\left(
        \max\left\{
            \frac{1}{\epsilon^2},
            \frac{\sqrt d}{\epsilon}
        \right\}
    \right)\label{app:eq:pur_4design},
\end{equation}
with circuit depth $O(\log (n/\epsilon))$.

Below, we show that two-layer Clifford measurements achieve the same $T=O(\sqrt d/\epsilon^2)$ guarantee as global Clifford measurements. Although this sample complexity has a worse dependence on $\epsilon$ than the bound in Eq.~\eqref{app:eq:pur_4design}, our estimator is unbiased and uses circuits of depth $O(\log n)$ independent of the target error. Higher precision can therefore be achieved by increasing the number of measurements without increasing the circuit depth.


\begin{fact}[Multi-shot purity estimation~\cite{cho2025shallow}]
\label{app:fact:multishot_purity}
    Let $\mathcal E\subseteq\mathrm{Cl}(n)$ be a Clifford ensemble whose shadow channel $\mathcal M_{\mathcal E}$ is invertible.
    Sample $U_1,\ldots,U_{N_U}$ independently from $\mathcal E$. For each
    $i=1,\ldots,N_U$, apply $U_i$ to $N_S\geq2$ independent copies of
    $\rho$ and measure them in the computational basis, obtaining outcomes
    $\{b_{i,j}\}_{j=1}^{N_S}$. Define
    \begin{align}
        &\hat p_2
        =
        \frac{1}{N_U}
        \sum_{i=1}^{N_U}\hat p_{2,i},\\
        &\hat p_{2,i}
        =
        \binom{N_S}{2}^{-1}
        \sum_{j>k}
        \operatorname{Tr}\!\left(
            U_i^\dagger\ket{b_{i,j}}\!\bra{b_{i,j}}U_i\,
            \mathcal M^{-1}
            \!\left(
                U_i^\dagger\ket{b_{i,k}}\!\bra{b_{i,k}}U_i
            \right)
        \right).
    \end{align}
    Then $\mathbb E[\hat p_2]=\operatorname{Tr}(\rho^2)$,
    and
    \begin{align}
        \operatorname{Var}(\hat p_2)
        &=
        \frac{1}{N_U}
        \bigg[
            \frac{(N_S-2)(N_S-3)}{N_S(N_S-1)}V_1(\rho)
            +
            \frac{4(N_S-2)}{N_S(N_S-1)}V_2(\rho)
            +
            \frac{2}{N_S(N_S-1)}V_3(\rho)
            -
            \operatorname{Tr}(\rho^2)^2
        \bigg]\\
        &\leq
        \frac{1}{N_U}
        \left(
            V_1(\rho)
            +
            \frac{4}{N_S}V_2(\rho)
            +
            \frac{2}{(N_S-1)^2}V_3(\rho)
        \right).
    \end{align}
    The three quantities $V_1$, $V_2$, and $V_3$ are defined by
    \begin{align}
        V_1(\rho)
        &:=
        \frac{1}{d^2}
        \sum_{P,Q}
        \operatorname{Tr}(\rho P)^2
        \operatorname{Tr}(\rho Q)^2
        \tau(P,Q)/(m_P m_Q),\\
        V_2(\rho)
        &:=
        \frac{1}{d^2}
        \sum_{P,Q}
        \operatorname{Tr}(\rho P)
        \operatorname{Tr}(\rho Q)
        \operatorname{Tr}(\rho PQ)
        \tau(P,Q)/(m_P m_Q),\\
        V_3(\rho)
        &:=
        \frac{1}{d^2}
        \sum_{P,Q}
        \operatorname{Tr}(\rho PQ)^2
        \tau(P,Q)/(m_P m_Q).
    \end{align}
    Moreover, the Cauchy--Schwarz inequality gives
    \begin{equation}
        V_1(\rho)
        \leq
        V_2(\rho)
        \leq
        V_3(\rho).
    \end{equation}
\end{fact}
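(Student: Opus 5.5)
The plan is to reduce every moment to the Pauli expansion of the snapshot, using Fact~\ref{app:fact:clifford_ensemble} together with conditional independence of the shots given $U_i$. By Fact~\ref{app:fact:dual} and $\mathcal M^{-1}(P)=m_P^{-1}P$, each pair term can be written as
\begin{equation}
    \operatorname{Tr}\!\left(U^\dagger\ket{b_j}\!\bra{b_j}U\,\mathcal M^{-1}(U^\dagger\ket{b_k}\!\bra{b_k}U)\right)
    =\frac1d\sum_P m_P^{-1}(U,P,b_j)(U,P,b_k).
\end{equation}

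\textbf{Unbiasedness.} We average over $b_j$ given $U$. The key identity is
\begin{equation}
    \sum_b \bra b U\rho U^\dagger\ket b\,(U,P,b)=\operatorname{Tr}(\rho P)\,\mathbf 1\{UPU^\dagger\in\pm\mathcal Z\}.
\end{equation}
It follows by expanding $\rho$ in Paulis and applying the second identity of Fact~\ref{app:fact:clifford_ensemble} to products of two Paulis; the sign of $UPU^\dagger$ cancels because it appears squared. Since $b_j$ and $b_k$ are independent given $U$, the conditional mean of a pair term is
\begin{equation}
    \frac1d\sum_P m_P^{-1}\operatorname{Tr}(\rho P)^2\,\mathbf 1\{UPU^\dagger\in\pm\mathcal Z\}.
\end{equation}
Averaging over $U$ replaces the indicator by $m_P$, which gives $\frac1d\sum_P\operatorname{Tr}(\rho P)^2=\operatorname{Tr}(\rho^2)$.

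\textbf{Variance formula.} Since the $\hat p_{2,i}$ are i.i.d., it suffices to compute $\mathbb E[\hat p_{2,i}^2]$. We treat $\hat p_{2,i}$ as a U-statistic and sort ordered pairs of index pairs $\{j,k\},\{j',k'\}$ by their overlap. Disjoint pairs occur with fraction $(N_S-2)(N_S-3)/(N_S(N_S-1))$, pairs sharing one index with fraction $4(N_S-2)/(N_S(N_S-1))$, and identical pairs with fraction $2/(N_S(N_S-1))$.

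For each overlap type we expand both factors as a double Pauli sum over $P$ and $Q$, average over the distinct outcomes independently given $U$, and average over $U$ last.
\begin{itemize}
    \item For a distinct outcome we use the one-Pauli identity above.
    \item For a shared outcome we use $\sum_b p(b)(U,P,b)(U,Q,b)=\operatorname{Tr}(\rho PQ)\,\mathbf 1_P\mathbf 1_Q$, where $\mathbf 1_P:=\mathbf 1\{UPU^\dagger\in\pm\mathcal Z\}$. This is again Fact~\ref{app:fact:clifford_ensemble}, up to signs that cancel in the full expression.
    \item After averaging over $U$, the product $\mathbf 1_P\mathbf 1_Q$ becomes $\tau(P,Q)$ from Eq.~\eqref{app:eq:correlated_visibility}.
\end{itemize}
Zero, one and two shared indices then produce exactly $V_1$, $V_2$ and $V_3$. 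Subtracting $\operatorname{Tr}(\rho^2)^2$ and dividing by $N_U$ yields the displayed equality.

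\textbf{Upper bound.} We drop $-\operatorname{Tr}(\rho^2)^2$ and use three coefficient bounds: $(N_S-2)(N_S-3)/(N_S(N_S-1))\le1$, $4(N_S-2)/(N_S(N_S-1))\le 4/N_S$, and $2/(N_S(N_S-1))\le 2/(N_S-1)^2$. These replacements are valid only once we know $V_1,V_2,V_3\ge0$, which we obtain next.

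\textbf{Nonnegativity and ordering.} This is the main obstacle: the $V_i$ must be rewritten as averages of squares. For fixed $U$, set $X_b:=\frac1d\sum_P m_P^{-1}\operatorname{Tr}(\rho P)(U,P,b)$. The shared-index identity should give
\begin{equation}
    V_1=\mathbb E_U\Big[\big(\textstyle\sum_b p(b)X_b\big)^2\Big],\qquad V_2=\mathbb E_U\Big[\textstyle\sum_b p(b)X_b^2\Big].
\end{equation}
With these, $V_1\le V_2$ is Jensen (Cauchy--Schwarz) in $b$. For $V_3$, we would similarly write it as $\mathbb E_{U}\sum_b p(b)\,\|Y_b\|^2$ for a suitable operator-valued $Y_b$, chosen so that $X_b$ is one matrix element of $Y_b$ weighted by $\rho$. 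Then a second Cauchy--Schwarz step in Hilbert--Schmidt form gives $V_2\le V_3$. The concrete candidate is $Y_b=\rho^{1/2}\mathcal M^{-1}(U^\dagger\ket b\!\bra bU)$, using $\operatorname{Tr}(\rho PQ)=\operatorname{Tr}(\rho^{1/2}P\,Q\rho^{1/2})$. I expect matching the sign conventions of $(U,P,b)$ in this representation to be the only delicate point.
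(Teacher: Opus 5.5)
Your proposal is correct and is essentially the intended argument behind this cited fact: Pauli expansion combined with Fact~\ref{app:fact:clifford_ensemble}, overlap counting for the U-statistic, and Cauchy--Schwarz via $V_2=\mathbb E_{U,b}[\operatorname{Tr}(\rho\hat\rho)^2]$ and $V_3=\operatorname{Tr}(\rho\,\mathbb E_{U,b}[\hat\rho^2])=\mathbb E_{U,b}\|\rho^{1/2}\hat\rho\|_2^2$, which are exactly the representations the paper itself uses in the proof of Theorem~\ref{app:thm:unbiased_purity}. The sign issue you flag is not actually delicate: whenever $UPU^\dagger,UQU^\dagger\in\pm\mathcal Z$ one has $\bra b UPU^\dagger\ket b\bra b UQU^\dagger\ket b=\bra b UPQU^\dagger\ket b$ with $UPQU^\dagger$ diagonal, so $\sum_b p(b)(U,P,b)(U,Q,b)=\operatorname{Tr}(\rho PQ)\,\mathbf 1_P\mathbf 1_Q$ holds exactly, and your candidate $Y_b=\rho^{1/2}\mathcal M^{-1}(U^\dagger\ket b\!\bra bU)$ then gives $V_2\le V_3$ directly.
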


\begin{theorem}[Unbiased multi-shot purity estimation]
\label{app:thm:unbiased_purity}
    Consider the two-layer Clifford ensemble $\mathcal E_{2\mathrm L}$ with block size $k$ satisfying $k2^{k/2}=\Omega(n)$. For every $0<\epsilon,\delta<1$, choosing
    \begin{equation}
        N_U
        =
        O\left(
            \frac{1}{\delta\epsilon^2}
        \right),
        \qquad
        N_S
        =
        \sqrt d,
        \qquad
        T
        =
        O\left(
            \frac{\sqrt d}{\delta\epsilon^2}
        \right)
        \label{app:eq:purity_sample_complexity}
    \end{equation}
    gives
    \begin{equation}
        \Pr\left[
            \left|
                \hat p_2-\operatorname{Tr}(\rho^2)
            \right|
            >
            \epsilon
        \right]
        \leq
        \delta.
    \end{equation}
\end{theorem}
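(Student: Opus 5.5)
The estimator is unbiased by Fact~\ref{app:fact:multishot_purity}. Once the variance is bounded, Chebyshev's inequality finishes the proof, since $\Pr[|\hat p_2-\operatorname{Tr}(\rho^2)|>\epsilon]\le \operatorname{Var}(\hat p_2)/\epsilon^2$. So the plan is to show that, for $N_S=\sqrt d$,
\begin{equation*}
V_1(\rho)+\frac{4}{N_S}V_2(\rho)+\frac{2}{(N_S-1)^2}V_3(\rho)=O(1).
\end{equation*}
This gives $\operatorname{Var}(\hat p_2)=O(1/N_U)$, and $N_U=O(1/(\delta\epsilon^2))$ then suffices. The three terms are bounded separately, each by recognizing it as a single-shot second moment to which Theorem~\ref{appx:thm:1} or Fact~\ref{app:fact:first_term} applies.

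\emph{Bounding $V_1$.} We have
\begin{equation*}
V_1=\frac{1}{d^2}\sum_{P,Q}\operatorname{Tr}(\rho P)^2\operatorname{Tr}(\rho Q)^2\,\frac{\tau(P,Q)}{m_Pm_Q}.
\end{equation*}
This is the second moment $\mathbb{E}[\hat o^2]$ of the single-shot estimator for the observable $O_{\mathrm{thm}}=\frac1d\sum_P \operatorname{Tr}(\rho P)^2 P$ on the input state $\rho_{\mathrm{thm}}=I/d$. For that input only the terms with $PQ=I$ survive in Eq.~\eqref{app:eq:cliff_Ord_variance}, so the expression would have to be arranged to match. A cleaner route is to compare with Lemma~\ref{app:lem:qst_same_setting_second_moment}: $V_1$ is the trace-weighted analogue of that quantity. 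Since all coefficients $\tau/(m_Pm_Q)$ are nonnegative, I would write
\begin{equation*}
V_1=\mathbb{E}_{U,b,c}\bigl[\operatorname{Tr}(\rho\hat\rho_b)\operatorname{Tr}(\rho\hat\rho_c)\bigr]
\end{equation*}
for two shots $b,c$ from the same $U$. By Cauchy--Schwarz this is at most $\mathbb{E}_{U,b}[\operatorname{Tr}(\rho\hat\rho)^2]$, i.e.\ the single-shot second moment with observable $\rho$ and state $\rho$. Theorem~\ref{appx:thm:1} bounds it by
\begin{equation*}
C_F\|\rho-I/d\|_2^2+\operatorname{Tr}(\rho^2)^2=O(1).
\end{equation*}
It still needs checking that these expectations reproduce exactly the $\tau$-weighted sum; this is done with Fact~\ref{app:fact:clifford_ensemble}.

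\emph{Bounding $V_2$.} Writing $\operatorname{Tr}(\rho PQ)$ as an expectation against $\rho$, $V_2$ is exactly the quantity in Eq.~\eqref{appx:eq:qst1} with $\ket v\!\bra v$ replaced by $\rho$. Equivalently, it is the single-shot second moment $\operatorname{Tr}(\rho\,\Gamma(\rho))$ for the observable $O_{\mathrm{thm}}=\rho$ and state $\rho_{\mathrm{thm}}=\rho$. Theorem~\ref{appx:thm:1} together with Eq.~\eqref{app:eq:variance_prediction_bound}, applied as in Lemma~\ref{app:lem:qst_same_setting_second_moment}, gives $V_2\le C_F+1=O(1)$. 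The same argument then also yields $V_1\le V_2=O(1)$, using the stated chain $V_1\le V_2\le V_3$.

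\emph{Bounding $V_3$ (the main obstacle).} We have
\begin{equation*}
V_3=\frac1{d^2}\sum_{P,Q}\operatorname{Tr}(\rho PQ)^2\,\frac{\tau(P,Q)}{m_Pm_Q}.
\end{equation*}
Here I would use $\operatorname{Tr}(\rho PQ)^2\le \operatorname{Tr}(\rho PQ\rho (PQ)^\dagger)$, and then bound the sum by
\begin{equation*}
\operatorname{Tr}\Bigl(\rho\,\mathbb{E}_{U,b}\bigl[\mathcal M^{-1}(U^\dagger\ket b\!\bra bU)^{2}\bigr]\Bigr)\le \bigl\|\mathbb{E}\,\mathcal M^{-1}(\cdot)^2\bigr\|_\infty=O(d),
\end{equation*}
where the last step is Fact~\ref{app:fact:first_term}. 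The difficulty is that the square is not linear in the Pauli coefficients. An alternative is to express $V_3$ as $\mathbb{E}_{U,b}\bigl[\operatorname{Tr}(\rho\hat\rho_b\rho\hat\rho_b)\bigr]$-type moments (or to bound $\operatorname{Tr}(\rho PQ)^2$ by $\operatorname{Tr}(\rho^2)$-weighted terms via the Pauli twirl) and then use positivity of $\tau/(m_Pm_Q)$ together with Fact~\ref{app:fact:first_term}. Once $V_3=O(d)$ is established, choosing $N_S=\sqrt d$ gives $2V_3/(N_S-1)^2=O(1)$ and $4V_2/N_S=O(1)$, so $\operatorname{Var}(\hat p_2)=O(1/N_U)$. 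Chebyshev's inequality with $T=N_UN_S$ then yields Eq.~\eqref{app:eq:purity_sample_complexity}.
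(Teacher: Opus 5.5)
Your bounds on $V_1$ and $V_2$ are correct and agree with the paper. Theorem~\ref{appx:thm:1} gives $V_2=\mathbb E_{U,b}[\operatorname{Tr}(\rho\hat\rho)^2]=\operatorname{Var}[\operatorname{Tr}(\rho\hat\rho)]+\operatorname{Tr}(\rho^2)^2\le C_F+1$. Your identity $V_1=\mathbb E_{U,b,c}[\operatorname{Tr}(\rho\hat\rho_b)\operatorname{Tr}(\rho\hat\rho_c)]=\mathbb E_U[(\mathbb E_b[\operatorname{Tr}(\rho\hat\rho_b)\mid U])^2]$, followed by conditional Jensen, is a valid proof of $V_1\le V_2$.

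The gap is $V_3$, which you flag as the main obstacle but leave open. The inequality you propose, $\operatorname{Tr}(\rho PQ)^2\le\operatorname{Tr}(\rho PQ\rho(PQ)^\dagger)$, is false termwise. For $P=Q$ the left side is $\operatorname{Tr}(\rho)^2=1$, while the right side is $\operatorname{Tr}(\rho^2)$, which equals $1/d$ for $\rho=I/d$. These diagonal terms enter with the nonnegligible weights $\tau(P,P)/m_P^2=1/m_P$. Even setting that aside, the sum it produces, $\frac1{d^2}\sum_{P,Q}\frac{\tau(P,Q)}{m_Pm_Q}\operatorname{Tr}(\rho PQ\rho QP)$, is a single trace per term. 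It is not $\operatorname{Tr}(\rho\,\mathbb E[\hat\rho^2])$, which has a product of two traces per term. So the chain never actually reaches Fact~\ref{app:fact:first_term}, and the alternatives you list are not carried out.

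The missing observation is that no inequality is needed. In Fact~\ref{app:fact:first_term} the outcome $b$ is drawn with Born weight $\operatorname{Tr}(\rho U^\dagger\ket b\!\bra bU)$. By Fact~\ref{app:fact:clifford_ensemble}, this weight is exactly what produces the factor $\operatorname{Tr}(\rho PQ)$ in
\[
\mathbb E_{U,b}[\hat\rho^2]=\frac1{d^2}\sum_{P,Q}\frac{\tau(P,Q)}{m_Pm_Q}\operatorname{Tr}(\rho PQ)\,PQ .
\]
Tracing this against $\rho$ supplies the second factor. Since $\tau(P,Q)=0$ unless $P$ and $Q$ commute, each surviving $PQ$ is Hermitian and $\operatorname{Tr}(\rho PQ)$ is real. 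Hence, exactly,
\[
V_3=\operatorname{Tr}(\rho\,\mathbb E[\hat\rho^2])\le\|\rho\|_1\|\mathbb E[\hat\rho^2]\|_\infty=O(d),
\]
which is the paper's one-line argument. Your worry that the square is not linear in the Pauli coefficients disappears because the two factors come from different places: one from the state generating the outcomes, and one from testing against $\rho$. With this identity in place, the rest of your plan goes through as in the paper: $N_S=\sqrt d$ gives $\operatorname{Var}(\hat p_2)=O(1/N_U)$, and Chebyshev finishes.
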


\begin{proof}
    Applying Eq.~\eqref{app:eq:cliff_Ord_variance} together with Theorem~\ref{main:thm:1} for $O_{\mathrm{thm}}=\rho_{\mathrm{thm}}=\rho$ gives
    \begin{align}
        V_2(\rho)
        &=
        \operatorname{Var}
        \left[
            \operatorname{Tr}(\rho\hat\rho)
        \right]
        +
        \bigl(\operatorname{Tr}(\rho^2)\bigr)^2\\
        &=O(\|\rho-I/d\|_{2}^2+1)=O(1).
    \end{align}
    Then $V_1(\rho)\leq V_2(\rho)$ gives $V_1(\rho)=O(1)$. By Fact~\ref{app:fact:first_term},
    \begin{align}
        V_3(\rho) 
        &=
        \frac{1}{d^2}
        \sum_{P,Q}
        \operatorname{Tr}(\rho PQ)^2
        \tau(P,Q)/(m_P m_Q)\\
        &=
        \operatorname{Tr}\left(
            \rho\,
            \mathbb E_{U,b}
            \left[
                \hat\rho^2
            \right]
        \right)\le
        \|\rho\|_1
        \left
        \|\mathbb E_{U,b}[\hat\rho^2]
        \right\|_{\infty}\\
        &=O(d)
    \end{align}
    Substituting these bounds into
    Fact~\ref{app:fact:multishot_purity} yields
    \begin{align}
        \operatorname{Var}[\hat p_2]
        &=
        O\left(
            \frac{1}{N_U}
            \left(
                1
                +
                \frac{1}{N_S}
                +
                \frac{d}{(N_S-1)^2}
            \right)
        \right)
        \nonumber\\
        &=
        O\left(
            \frac{1}{N_U}
            \left(
                1+\frac{d}{N_S^2}
            \right)
        \right).
    \end{align}
    For $N_S=\sqrt d$, this becomes
    $\operatorname{Var}[\hat p_2]=O(1/N_U)$.
    Chebyshev's inequality then gives
    \begin{equation}
        \Pr\left[
            \left|
                \hat p_2-\operatorname{Tr}(\rho^2)
            \right|
            >
            \epsilon
        \right]
        =
        O\left(
            \frac{1}{N_U\epsilon^2}
        \right).
    \end{equation}
    Taking $N_U=O(1/(\delta\epsilon^2))$ proves the claim.
\end{proof}

\end{document}